\numberwithin{equation}{section}
\theoremstyle{definition}
\newtheorem{thm}{Theorem}[section]
\newtheorem{cor}[thm]{Corollary}
\newtheorem{lem}[thm]{Lemma}
\newtheorem{prop}[thm]{Proposition}
\newtheorem{assumption}[thm]{Assumption}
\newtheorem{defin}[thm]{Definition}
\newtheorem{rem}[thm]{Remark}
\newtheorem{rems}[thm]{Remarks}
\theoremstyle{remark}
\newcommand{\B}{\mathscr{B}}
\newcommand{\s}{\mathbb{S}}
\newcommand{\HH}{{\rm H}}
\newcommand{\Lp}{\textsf{L}\!}
\newcommand{\Ln}{\mathcal{L}}
\newcommand{\eps}{\varepsilon}
\newcommand{\N}{\mathbb{N}}
\newcommand{\R}{\mathbb{R}}
\newcommand{\J}{\mathcal{J}}
\newcommand{\sn}{\mathcal{S}}
\newcommand{\Z}{\mathbb{Z}}
\newcommand{\rt}{{\rm curl}\, }
\newcommand{\C}{\mathbb{C}}
\newcommand{\cc}{\mathfrak{c}}
\newcommand{\D}{\mathscr{D}}
\newcommand{\G}{\mathcal{G}}
\newcommand{\K}{\mathscr{K}}
\newcommand{\X}{\mathcal{X}}
\newcommand{\A}{\mathpzc{A}}
\newcommand{\E}{\mathcal{E}}
\newcommand{\V}{{\rm U}}
\newcommand{\NN}{\mathcal{N}}
\newcommand{\po}{{\rm P_{0}}}
\newcommand{\pom}{{\rm P^{\scriptscriptstyle -}_{\!0}}}
\newcommand{\pop}{{\rm P^{\scriptscriptstyle +}_{\!0}}}
\newcommand{\g}{\mathfrak{g}}
\newcommand{\Q}{\mathcal{Q}}
\newcommand{\p}{\mathscr{R}\, }
\newcommand{\pd}{\partial}
\newcommand{\LL}{\mathcal{\big\langle}}
\newcommand{\RR}{\mathcal{\big\rangle}}
\newcommand{\id}{\mathds{1}}
\newcommand{\bb}{\mathfrak{b}}
\newcommand{\hk}{\mathfrak{h}}
\newcommand{\hs}{\rm{hs}}
\newcommand{\w}{{\rm w}}
\newcommand{\x}{\langle x \rangle }
\newcommand{\Rm}{R_{\scriptscriptstyle-}\!}
\newcommand{\Rp}{R_{\scriptscriptstyle+}\!}
\newcommand{\Pm}{P_{\!\scriptscriptstyle-}}
\newcommand{\Pp}{P_{\!\scriptscriptstyle+}}
\newcommand{\m}{{\scriptscriptstyle-}}
\newcommand{\pp}{{\scriptscriptstyle+}}
\newcommand{\ppm}{{\scriptscriptstyle\pm}}
\newcommand{\psim}{\psi^{\scriptscriptstyle-}}
\newcommand{\psip}{\psi^{\scriptscriptstyle+}}
\newcommand{\Psim}{\Psi^{\scriptscriptstyle-}}
\newcommand{\Psip}{\Psi^{\scriptscriptstyle+}}
\newcommand{\nup}{\nu^{\scriptscriptstyle+}}
\newcommand{\mup}{\mu^{\scriptscriptstyle+}}
\newcommand{\Phim}{\Phi^{\scriptscriptstyle-}}
\newcommand{\Phip}{\Phi^{\scriptscriptstyle+}}
\newcommand{\vp}{{\rm W}_{\!\scriptscriptstyle+}}
\newcommand{\vm}{{\rm W}_{\!\scriptscriptstyle-}}
\newcommand{\vpm}{{\rm W}_{\!\scriptscriptstyle\pm}}
\newcommand{\wth}{\widehat  }
\newcommand{\alp}{{\alpha'}}
\newcommand{\rig}{\big\rangle}
\newcommand{\lef}{\big\langle}
\DeclareMathAlphabet{\mathpzc}{OT1}{pzc}{m}{it}
\DeclareFontFamily{OT1}{pzc}{}
\DeclareFontShape{OT1}{pzc}{m}{it}{<-> s * [1.15] pzcmi7t}{}
\DeclareMathAlphabet{\mathpzc}{OT1}{pzc}{m}{it}
\title[Spectral properties of Pauli and Dirac operators  in dimension two]{Spectral properties and time decay of the wave functions of Pauli and Dirac operators in dimension two}
\author {Hynek Kova\v{r}\'{\i}k}
\address {Hynek Kova\v{r}\'{\i}k, DICATAM, Sezione di Matematica, Universit\`a degli studi di Brescia, Italy}
\email {hynek.kovarik@unibs.it}
\begin{document}

\begin{abstract}
We consider two-dimensional Pauli and Dirac operators with a polynomially vanishing magnetic field. The main results of the paper provide resolvent expansions of these operators 
in the vicinity of their thresholds. It is proved that the nature of these expansions is fully determined by the 
flux of the magnetic field.  The most important novelty of the proof is a comparison between the spatial asymptotics of the zero modes obtained in two different manners. 
The result of this matching allows to  compute explicitly all the singular terms in the associated resolvent expansions. 

As an application we show how the magnetic field influences the time decay of the associated wave-functions quantifying thereby the paramagnetic and diamagnetic effects of the spin.

\medskip

\noindent {\bf Keywords:} Pauli operator, Dirac operator, spin, magnetic field, resolvent expansion. 

\medskip

\noindent {\bf MSC 2020:}  35Q41, 35P05

\end{abstract}


\maketitle
{\hypersetup{linkcolor=black}
\setcounter{tocdepth}{1}
\tableofcontents}

\section{\bf Introduction and outline of the paper}
\label{sec-intro}
\subsection{The set up} Let $B:\R^2\to\R$ be a magnetic field and let $A:\R^2\to\R^2$  be the associated vector potential satisfying $\rt A=B$. We consider the  
Pauli operator 
\begin{equation} \label{pauli}
P(A)=  \begin{pmatrix}
P_\m(A)& 0 \\
 0  &  P_\pp(A)
\end{pmatrix}\,
\qquad  \text{with} \quad P_\ppm(A) : = (i\nabla +A)^2 \pm B,
\end{equation} 
and the Dirac operator
\begin{equation*} 
\qquad \  D_m(A) = \begin{pmatrix}
m& \D(A) \\
 \D(A)^*  &   -m
\end{pmatrix}  
\qquad \ \text{with} \quad
\D(A):= -i\pd_1 -A_1 -\pd_2 +iA_2 ,
\end{equation*}
where $m\geq 0$ is a constant.  The operators $P(A)$ and $D_m(A)$ act in $\Lp^2(\R^2;\C^2)$, and  are essentially  self-adjoint on $C_0^\infty(\R^2;\C^2)$ under mild regularity assumptions on $B$. In physics the Pauli and Dirac operators represent the non-relativistic and relativistic quantum mechanical Hamiltonians of a spin $1/2$ particle confined to a plane and interacting with the magnetic field $(0,0,B)$. The component $P_\m(A)$ in \eqref{pauli} denotes the restriction of the Pauli operator on the spin-up subspace, while $P_\pp(A)$ stands for the restriction on the spin-down subspace. We refer e.g.~to \cite{th2} for further background.

The main objects of our interest in the first part of the paper are the resolvents of $P(A)$ and $D_m(A)$ and their asymptotic expansions, in a suitable topology, near the respective thresholds of the essential spectrum. In the second part of the paper we study certain consequences of these expansions, such as local decay in time of the wave-functions.  We will primarily focus on the Pauli operator and subsequently apply the obtained results to the Dirac operator. 

It has been known since the landmark paper by Jensen and Kato  \cite{JK} that the asymptotic behavior of the weighted resolvent of a Schr\"odinger operator  depends on the spectral nature of the threshold, see also \cite{jen, jn,jn2, mu, wa}. Jensen and Kato showed that the structure of the expansion depends on whether the threshold is a {\em regular point}  (neither an eigenvalue nor a resonance), or an {\em exceptional point} (a resonance and/or an eigenvalue).  Accordingly, the resolvent expansion  is regular in the former case, and singular in the latter. It is also shown in \cite{JK} that generically the threshold of a Schr\"odinger operator is a regular point (in any dimension).  

The situation changes completely when a magnetic field is introduced and its interaction with spin is taken into account. For the threshold of the Pauli operator, which is zero, is {\em always an exceptional point} for $P(A)$. Indeed, the Aharonov-Casher theorem  \cite{ac, cfks,ev,rsh,shi} shows that if $B\in \Lp^1(\R^2)$, and if the (normalized) flux 
\begin{equation} \label{flux} 
\alpha = \frac{1}{2\pi} \int_{\R^2} B(x)\, dx
\end{equation}
satisfies $|\alpha|>1$, then zero is an eigenvalue of $P_\m(A)$ (for $\alpha>0$) or of $P_\pp(A)$ (for $\alpha <0$). Moreover, a closer inspection of the proof of the Aharonov-Casher theorem  reveals that zero is, for {\em any} $\alpha$, also  a resonance of $P(A)$.  Either of $P_\m(A)$ (when $\alpha>0$), or of $P_\pp(A)$ (when $\alpha <0$), or of both $P_\pp(A)$ and $P_\m(A)$ (when $\alpha=0$), see Lemma \ref{lem-ah-cash}. Note that zero is a resonance of $P_\ppm(A)$ if the equation $P_\ppm(A) u = 0$ admits a solution  $u\in \Lp^\infty(\R^2)\setminus \Lp^2(\R^2)$. In the sequel we call such solution a {\em zero resonant state}. 

All these facts are well-known and easily deducible from the Aharonov-Casher theorem and its proof. However, so far nothing is known about the resolvent behavior of Pauli  and magnetic Dirac operators at threshold, nor about the long time behavior of their wave-functions.

In this paper we prove an expansion of $(P(A)-\lambda)^{-1}$ as $\lambda\to 0$ (in a suitable topology). The latter reveals that the presence of a magnetic field causes new   phenomena with a continuous range of possible behaviors. In fact, 
 as soon as $\alpha\neq 0$, then, depending on the sign of $\alpha$,  one of the operators $(P_\ppm(A) -\lambda)^{-1}$ remains bounded whereas the other one is singular and its expansion contains negative powers of $\lambda$ which depend on $\alpha$, see e.g.~equation \eqref{r-exp-intro} below.  

In dynamical terms this means that when $\alpha>0$, then the wave-functions with initial data from the spin-down subspace of the Pauli operator decay faster than $t^{-1}$, while those with initial data from the spin-up subspace decay slower than $t^{-1},$ and vice versa for $\alpha<0$. Our main results then quantify these effects  in terms of $\alpha$, cf.~equation \eqref{dia-para}.

Analogous results are established also for the Dirac operator. 


\subsection{Main results} 
Let us describe our main results more in detail. We start by introducing some necessary notation. Since the Pauli operator is diagonal, it suffices to analyze the components $P_\ppm(A)$ separately.  Following \cite{JK} we will consider their resolvents $(P_\ppm(A)-\lambda)^{-1}$  as operators between  weighted Sobolev spaces $\HH^{k,s}$ equipped with the norm 
\begin{equation} \label{hms-norm}
\| u\|_{\HH^{k,s}} = \|\, \x^s (1- \Delta)^{k/2}\, u\|_{\Lp^2(\R^2)}.  \qquad k\in\Z, \ s\in\R,
\end{equation}
where $\x = (1+|x|^2)^{1/2}$. By 
$$
\B(k,s;k',s') = \B(\HH^{k,s}; \HH^{k',s'}) 
$$
we denote the space of bounded linear operators from $\HH^{k,s}$ into $\HH^{k',s'},$ and for $k=0$ we use the shorthands 
$$
 \| u\|_{\HH^{0,s}} = \| u\|_{2,s} \qquad  \Lp^{2,s}\,  = \HH^{0,s}.
$$
Throughout the paper we will work under the following conditions on the magnetic field:

\begin{assumption} \label{ass-B}
The function $B:\R^2\to \R$ is continuous and satisfies $|B| \, \lesssim\,  \langle\, \cdot\, \rangle^{-\rho}$  some some $\rho>2$.
\end{assumption}
Obviously, we then have $|\alpha| < \infty$.  It will be convenient to decompose $\alpha$ as follows;  
\begin{equation} \label{alpha}
\alpha = n + \alp,  \qquad \text{where} \quad n= {\rm sign}(\alpha)  \max\{ k \in\N: k < |\alpha| \}\, , \quad \text{and}\quad  \alp = \alpha-n.
\end{equation} 
With this notation zero is an eigenvalue of $P(A)$ of multiplicity $|n|$, cf.~Lemma \ref{lem-ah-cash}. Later we will built a basis of the zero eigenspace of $P(A)$ is such a way that exactly one eigenfunction  
does not belong to $\Lp^1(\R^2)$, see equations \eqref{ef-hat} and \eqref{ef-hat-plus}.   Another important quantity associated to the magnetic field is the function  
\begin{equation} \label{superp}
h(x) = -\frac{1}{2\pi} \int_{\R^2} B(y) \log |x-y|\, dy
\end{equation}
which satisfies $-\Delta h= B$.

\subsubsection*{\bf The Pauli operator} 

It is easily verified that any magnetic field satisfying Assumption \ref{ass-B} admits a vector potential $A\in \Lp^\infty(\R^2;\R^2)$. The components $P_\ppm(A)$ of the Pauli operator are then defined as unique self-adjoint operators generated by the closed quadratic forms
$$
\int_{\R^2} |(i\nabla +A) u|^2\, dx\,  \pm \int_{\R^2} B |u|^2\, dx , \qquad u\in W^{1,2}(\R^2)\, ,
$$
and satisfy  
\begin{equation*} 
\sigma(P_\ppm(A)) = \sigma_{\rm es}(P_\ppm(A))= [0,\infty). 
\end{equation*} 

Furthermore, it follows from \cite[Cor.~6.5]{ahk} that under Assumption \ref{ass-B} the operators $P_\ppm(A)$ have no positive eigenvalues. In view of \cite{jmp, mps} this in turn implies that the limits
\begin{equation*}  
R_\ppm(\lambda,A) := \lim_{\eps\to 0+} (P_\ppm(A) -\lambda -i\eps)^{-1} 
\end{equation*} 
in $\B(0,s;0,-s)$ exist and are finite for any $\lambda\neq 0$ and any $s>1/2$.

One of the main results of this paper states that if $0 <\alpha\not\in\Z$, and if $\rho$ is large enough, then $R_\pp(\lambda, A)=\mathcal{O}(1)$ as $\lambda\to 0$  in $\B(-1,s;1,-s), s>3,$ whereas
\begin{equation}  \label{r-exp-intro}
R_\m(\lambda, A) = -\lambda^{-1} \,  \pom  + \frac{ z_0\, \lambda^{\alp-1}}{1+ z_1 \, \lambda^{\alp}}\  \LL \,\cdot\, , \psim \RR\, \psim\, 
+ \frac{ z_2\,  \lambda^{-\alp}}{1 +z_3\, \lambda^{1-\alp}}\,  \LL \,\cdot\, , \varphi^\m \RR\, \varphi^\m\,  + \mathcal{O}(1). \\[3pt]
\end{equation}
Here $z_j\in\C$ are constants,  $\pom$  is an orthogonal projection on the zero eigenspace of $P_\m(A)$,  $\psim$ is a zero eigenfunction of $P_\m(A)$, and $\varphi^\m$ is a zero resonant state of $P_\m(A)$.  Notice that $\alp\in(0,1)$. Obviously, 
if $\alpha<1$, then $\pom=\psim=0$.  The explicit formulas for the functions $\psim$ and $\varphi^\m$ are too complicated to state in detail here. 
However, for radial $B$ these formulas simplify and we get
$$
\psim(x)  = \ {\rm const}\,   (x_1+ix_2)^{n-1} \, e^{h(x)}   \quad  \text{and} \quad \varphi^\m(x) ={\rm const}\,    (x_1+ix_2)^n \, e^{h(x)}  \hskip1.5cm [\, \text{if} \ B \ \text{is radial} \, ]\\[3pt]
$$
with $h$ defined in \eqref{superp}. From equation \eqref{h-asymp} we easily deduce that $\psim\in \Lp^2(\R^2)$ while $\varphi^\m\in\Lp^\infty(\R^2)\setminus \Lp^2(\R^2)$.

Let us comment on the second term on the right hand side of \eqref{r-exp-intro}. The asymptotic matching of zero modes (i.e.~zero eigenfunctions and resonant states) implies that only  slowly decaying  eigenfunctions contribute to this term, namely those which do not belong to $\Lp^1(\R^2)$. But the Aharonov-Casher theorem, Lemma \ref{lem-ah-cash}, shows that modulo $\Lp^1(\R^2)$ there exists {\em at most one} linearly independent eigenfunction which satisfies such condition, see e.g.~$\psim$ above. Hence the second term in \eqref{r-exp-intro}, if present, is of {\em rank one independently of} $n$. This is in contrast with resolvent expansions of non-magnetic Schr\"odinger operators where the rank of the corresponding term typically depends on the rank of the zero eigenspace,  \cite[Thm.~6.5 \& Rem.~6.6]{JK}. 

Another important difference with respect to non-magnetic Schr\"odinger operators is the presence of the denominators in \eqref{r-exp-intro}. This leads to higher order corrections of the leading terms $ \lambda^{\alp-1}$ and $ \lambda^{-\alp}$. 
Notice however, that in case of half-integer flux, when $\alp= 1/2$, these denominators do not contribute to the singular part of $R_\m(\lambda, A)$, see also Remark \ref{rem-rhs}.  Explicit form of equation \eqref{r-exp-intro} is stated in Theorems \ref{thm-pauli-res} and \ref{thm-pauli-res-m}.

The situation becomes more complicated when $\alpha\in\Z$, since the Pauli operator then has two zero resonant states. Nevertheless,  $R_\m(\lambda, A)$ resp.~$R_\pp(\lambda, A)$  satisfies an asymptotic equation similar to \eqref{r-exp-intro}  with fractional powers of $\lambda$ replaced by factors including $\log\lambda$. For more precise statements of the results, and for the expansions in the case $\alpha <0$ see Theorems \ref{thm-pauli-res-int}  and \ref{thm-pauli-int-m}. The special case $\alpha=0$ is treated in Corollary \ref{cor-pauli-res-0}.

\subsubsection*{\bf The Dirac operator} The operator $D_m(A)$ is, under Assumption \ref{ass-B}, self-adjoint on $W^{1,2}(\R^2)$, and its spectrum is given by 
$$
\sigma(D_m(A)) = \sigma_{\rm es}(D_m(A))= (-\infty, -m] \cup  [m,\infty). \\[4pt]
$$
Hence if $m>0$, then there are two thresholds of the essential spectrum; $m$ and $-m$. 
Since Dirac and Pauli operators are related through the identity
\begin{equation}  \label{pauli-dirac}
(D_m(A)-\lambda) (D_m(A) +\lambda) =  \begin{pmatrix}
P_\m(A) +m^2-\lambda^2 & 0\\
0 &    P_\pp(A) +m^2-\lambda^2 
\end{pmatrix}  ,
\end{equation}
the resolvent expansion of $D_m(A)$ at $\pm m$ can be derived from the resolvent expansion of $P(A)$ at zero.  In doing so it turns out that if $m>0$, then 
the  expansion of $(D_m(A)-\lambda)^{-1}$ for $\lambda\to \pm m$ is qualitatively the same as the expansion of $(P_\ppm(A)-\lambda)^{-1}$  for $\lambda\to 0$, see
Section \ref{ssec-dirac-time}.

However, an interesting effect occurs in the case of the massless Dirac operator; i.e.~for $m=0$. When $\alpha$ is small enough, then the resolvent expansion of $D_0(A)$ at zero is regular; 
\begin{equation*} 
\qquad \qquad  (D_0(A)-\lambda)^{-1} =  \mathscr{K} +  \mathcal{O}(\lambda^{1-2|\alpha|} )+ \mathcal{O}(\lambda^{|\alpha|} ), \qquad  \hskip1cm \text{if} \ \ |\alpha|\leq 1/2 \, ,
\end{equation*} 
as $\lambda\to 0$. See Theorem \ref{thm-dirac-2} for details. In dynamical terms this effect is reflected by faster decay of the wave-functions, see Corollary \ref{cor-dirac-time}.

\subsubsection*{\bf Applications} Resolvent expansions of Schr\"odinger operators obtained in \cite{JK} have been applied in various context such as low energy scattering theory, analysis of the wave operators or in the study of threshold resonances, see \cite{JK,jn3, jss, yaj} and references therein. Similar applications of equation \eqref{r-exp-intro} and its version for integer values of $\alpha$ can be obtained for Pauli and Dirac operators as well. This will be studied elsewhere. 

In this paper we discuss yet another, thought related, consequences of the resolvent expansions.
First, in Section \ref{sec-tdecay} we show how the magnetic field accelerates the local decay in time of the wave-functions of the Pauli operator in one of the spin subspaces, and slows down
the time decay in the other spin subspace (depending on the sign of $\alpha$). Suppose for example that $0<\alpha\not\in\Z$. If $u,v\in\Lp^{2,s}$ with $s>3,$ and if $u$ is 
orthogonal to all zero eigenfunctions of $P_\m(A)$, then 
\begin{equation} \label{dia-para} 
\| e^{-it P_\m(A)}\, u \|_{\Lp^{2,-s}}  \, \sim \, t^{-1+\alp} \qquad \text{while} \qquad \| e^{-it P_\pp(A)}\, v \|_{\Lp^{2,-s}} \, \sim \,  t^{-1-\min\{\alp, 1-\alp\}}  \\[2pt]
\end{equation}
as $ t\to\infty$. If $u$ is not orthogonal to the zero eigenspace of $P_\m(A)$, then of course $e^{-it P_\m(A)}\, u$ does not decay at all. 
Equation \eqref{dia-para}  thus quantifies the {\em paramagnetic effect} in the spin-up subspace, and the {\em diamagnetic effect} in the spin-down subspace (for $\alpha >0$). We refer to Theorems \ref{thm-pauli-time-1}, \ref{thm-pauli-time-2} for more complete statements, and to Theorem \ref{thm-dirac-time-1} and Corollary \ref{cor-dirac-time} for analogous results on the wave-functions of the Dirac operator. 

\smallskip

Our second application concerns discrete eigenvalues of the perturbed Pauli operator $P(A)  - \eps V$ where $V$ is an electric field vanishing at infinity. Asymptotic expansions of these eigenvalues for $\eps\to 0$ in the case of radial magnetic and electric field were established, with variational methods, in \cite{fmv}, see also \cite{bcez}. In Section \ref{sec-weak}, using Theorems \ref{thm-pauli-res} and  \ref{thm-pauli-res-int}, we  extend  some of these results to general $B$ and $V$, see Proposition \ref{prop-weak-1}.

\subsection{Key ingredients of the proof} 
\label{ssec-keys}
To explain the essential ideas of the method that we use to prove our main results, let us suppose for the moment that $\alpha>0$ and consider the resolvent of  $P_\m(A)$.

\subsubsection*{\bf The reference operator}  As usual, we make use of resolvent equation and write 
\begin{equation} \label{res-intro}
R_\m(\lambda, A)=  \big(1+(H_0-\lambda)^{-1} (P_\m(A)-H_0) \big)^{-1} (H_0-\lambda)^{-1}\, ,
\end{equation}
provided $1+(H_0-\lambda)^{-1} (P_\m(A)-H_0)$ is invertible. Here $H_0$ is a reference operator to be specified below. To use equation \eqref{res-intro} in weighted $\Lp^2-$spaces, we have to make sure that the coefficients of the difference $P_\m(A)-H_0$ vanish fast enough at infinity. For example, in the proof of \eqref{r-exp-intro} we need 
these coefficients to decay as $o(|x|^{-6})$. We will comment later on the origin of this restriction. This means, however,  that we cannot choose $H_0=-\Delta$ as it is usually done in the absence of magnetic field, cf.~\cite{jen,JK,jn,mu}. For then we would have $P_\m(A)-H_0=  2 i A\cdot \nabla  +i \nabla\cdot A + |A|^2-B$. But the Stokes theorem implies
that if $\alpha\neq 0$, then $A$ cannot decay faster than $\mathcal{O}(|x|^{-1})$.

Hence in view of the Stokes theorem  the only admissible choice of a reference operator is $H_0=(i\nabla+A_0)^2$, with $A_0$ generating a magnetic field of total flux $\alpha$, i.e.~the same as $B$.
We put
\begin{equation}  \label{a0}
A_0(x) = \alpha\ \frac{(-x_2\, ,\,  x_1)}{|x|^2} \ \min\{1, |x|\} .
\end{equation}
This vector potential $A_0$ was used already in \cite{ko} in the study of magnetic Schr\"odinger operators. If we now make a suitable choice of the gauge in the Pauli operator, see the remark below, then
the coefficients of 
\begin{equation} \label{difference} 
P_\m(A)-H_0 =  2 i (A-A_0)\cdot \nabla  +i \nabla\cdot (A-A_0) + |A|^2-|A_0|^2 - B
\end{equation} 
will decay as $\mathcal{O}(|x|^{-\rho+1})$, where $\rho$ is the decay rate of $B$ in Assumption \ref{ass-B}. The additional advantage of this choice of $H_0$  is that its resolvent, contrary to the resolvent of $-\Delta$, has a regular expansion at zero. The disadvantage, on the other hand, is that the explicit expression for $(H_0-\lambda)^{-1}$ is 
rather complicated, which leads to a series of elementary, but  sometimes quite tiresome calculations, see Section \ref{sec-resol-h0} and Appendices \ref{sec-app-b}, \ref{sec-app-c}. We also need a much more precise expansion of $(H_0-\lambda)^{-1}$  at zero than the one established in \cite{ko}, namely up to order $\mathcal{O}(\lambda^2)$. This is provided by Propositions \ref{prop-exp} and \ref{prop-exp-int}. Since expansions to higher order require higher values of $s$, in our case $s>3$,  this dictates, in view of \eqref{res-intro}, the $o(|x|^{-6})$ decay condition on the coefficients of $P_\m(A)-H_0$. 

Note also that, in agreement with the above discussion, the case $\alpha=0$ is the only  one where one can use $-\Delta$ as a reference operator, and which is not covered by our method. In fact, the resolvent expansion for $\alpha=0$  follows from results obtained in \cite{mu} and \cite{fmv}, see Section \ref{ssec-zero-flux} for details.

\subsubsection*{\bf The gauge} The canonical choice of the gauge related to the Pauli operator is provided by the vector potential
\begin{equation} \label{gauge-pauli}
A_{h} = \big(\partial_{2} h \, ,\,  -\partial_{1} h \big)\, .
\end{equation}
Indeed, we have $\rt A_h =-\Delta h= B$. However, it turns out that for a general $B$ the difference $|A_h(x)-A_0(x)|$ decays as  $\mathcal{O}(|x|^{-2})$ and not faster, see Section \ref{sec-pauli}. As explained above this is not sufficient for the decay requirements on the coefficients in \eqref{difference}. \newline  We therefore introduce a gauge transformation determined by a scalar field $\chi$ such that the {\em new gauge}  $\A= A_h +\nabla \chi$ remains sufficiently close to $A_0$ at infinity. This is done in Proposition \ref{prop-gauge}.
All the results of the paper will be formulated with this choice of the vector potential. Note that for radial magnetic field  we can put $\chi=0$, see Section \ref{ssec-radial}.

\subsubsection*{\bf Spatial asymptotic of the zero modes} Once the gauge with required properties is fixed, we may proceed with the analysis 
of  the zero modes of $P_\m(\A)$, i.e.~of the bounded solutions of the equation $P_\m(\A) u=0$. 
{\em The central idea of the proof is to compare the spatial asymptotics of these solutions 
obtained in the two different ways}. First, from the result on the behavior of solutions of $(H_0+W)u=0$ for general first order operator $W$, see Corollary \ref{cor-null}. Second, from the Aharonov-Casher theorem, cf.~Lemma \ref{lem-ah-cash}. Such a comparison, when combined with the expansion of $(H_0-\lambda)^{-1}$,  yields explicit expressions for the leading terms of the quantities
\begin{equation}  \label{uv-matrix}
\LL (P_\m(\A) -H_0) \, u, \, v +(H_0-\lambda)^{-1} (P_\m(\A)-H_0)\, v\RR \qquad \lambda\to 0,
\end{equation}
where $u,v$ are zero modes of $P_\m(\A)$. As we shall see shortly, these quantities play a fundamental role in finding the expansion of $R_\m(\lambda, A)$.

\subsubsection*{\bf Expanding equation  \eqref{res-intro}}  We have to show that the operator 
 $1+(H_0-\lambda)^{-1} (P_\m(\A)-H_0)$ in \eqref{res-intro}  is invertible for $\lambda$ small enough and to expand its inverse. To do so we apply the Grushin method of enlarged systems based on the analysis of the associated Schur complement, see e.g.~\cite{grushin,sz}. The latter reduces the problem, very roughly speaking, to finding the inverse of an $N\times N$ $\lambda-$dependent matrix. To this end use the Feshbach formula. These techniques, in various combinations, are frequently used in the analysis of resolvent expansions, e.g.~\cite{jn,jn2,wa}.
 
In our case we have $N=|n|+1$ if $\alpha\not\in\Z$, and 
 $N=|n|+2$ if $\alpha\in\Z$. At this point the knowledge of the spatial asymptotics of the zero modes becomes crucial. Indeed, using the expansion of \eqref{uv-matrix} we are able to calculate the inverse of the matrix in question  explicitly up to the relevant order $\mathcal{O}(1)$. This result in combination with  \eqref{res-intro} then allows us to compute all the singular terms in the expansion of $R_\m(\lambda, \A)$, see Theorem \ref{thm-pauli-res}. For $\alpha \in (0,1)$  we go one step further and calculate also the constant term. This is done in Theorem \ref{thm-pauli-res2}.

\subsection{Remarks} We conclude this introduction with several observations.
\label{ssec-com}

{\bf Perturbed Pauli operator.} 
If an electric field is added to $P(A)$, then the situation changes, of course. The threshold  becomes generically a regular point, in which case one can apply the results of \cite{ko} to both components $P_\m(A)$ and $P_\pp(A)$.  The method developed in the present paper can be used to obtain (probably less explicit) resolvent expansion  of $P(A)$ even if the threshold is an exceptional point.

{\bf Dimension three.}  There is no analogue of the Aharonov-Casher theorem in dimension three.  Nevertheless, it is known that there exist magnetic fields $B\in \Lp^{3/2}(\R^3;\R^3)$ for which zero is an eigenvalue of $P(A)$, see e.g.~\cite{be,bvb,elt, fl,ly}. However, as proved in \cite{be},  those magnetic fields for which zero is not an eigenvalue of $P(A)$ form a dense set in $\Lp^{3/2}(\R^3;\R^3)$.  One may thus expect that the resolvent expansion will be generically regular, and consequently that the wave-functions will decay, locally,  as $t^{-3/2}$. The same decay should be generically observed for the massive Dirac operator. 

On the other hand, the expected decay rate of the wave-functions of the three-dimensional massless Dirac operator, in the absence of zero modes, is $t^{-1}$. This was in fact proved in \cite{df} under certain smallness assumptions on $A$ and $B$. In particular, the assumptions of \cite{df}  imply that $\|B\|_{3/2}$ must be small enough, see \cite[Thm.~1.6]{df}. Such a restriction is compatible with the above cited results on the existence of zero modes. Indeed, it is explained in \cite{fl} that as soon as $\|B\|_{3/2} <  3(\pi/2)^{\frac 43}$, then the operators $P(A)$ and $D_0(A)$ in $\R^3$ cannot have any zero modes.

{\bf $\Lp^1-\Lp^\infty$ estimates.} It would be of course of great interest to extend the results of the present paper to the $\Lp^1-\Lp^\infty$ dispersive setting. 
Note that even for magnetic Schr\"odinger operators, i.e.~for spinless particles, these estimates are known so far only  in $\R^2$ and only for Aharonov-Bohm type magnetic fields, where one can exploit an explicit knowledge of the propagator, see \cite{f3p-1,f3p-2, fgk}.  This is in stark contrast with the $\Lp^1-\Lp^\infty$ dispersive estimates for non-magnetic Schr\"odinger operators which are  very well understood by now, see e.g.~\cite{eg,gsch,jss,RS,schlag, sch2}. Hence the results obtained here might understood as  the first step towards magnetic dispersive estimates in dimension two.

 \medskip

{\bf Convention:}  Since the treatment of the cases $\alpha>0$ and $\alpha<0$ is completely analogous (with the roles of $P_\m(\A)$ and $P_\pp(\A)$ interchanged), for the sake of definiteness we  focus primarily  on the case of positive flux.  We therefore give full details of the proofs of the main results only for $\alpha>0$. 
 In Section \ref{sec-alpha-neg} we explain how the results have to be adjusted when $\alpha < 0$. 
 
As already mentioned above, the case $\alpha=0$ is exceptional and requires a separate analysis, cf.~Section \ref{ssec-zero-flux} .

\section{\bf Preliminary results} 
\label{sec-prelim}

In this section we study  the behavior of solutions to 
the equation $(H_0 +W) u=0$,  where $W$ is a first order differential operator with rapidly vanishing coefficients.  In particular, we prove
 Corollary \ref{cor-null} which provides spatial asymptotic of these solutions. In the subsequent sections we will apply these results with the choice $W=P_\ppm(A)-H_0$. 
 We start by analyzing the inverse of $H_0$.

\subsection{The operators $G_0$ and $\G_0$}
\label{sec-h0}
\noindent  Recall that the reference operator $H_0$ is given by 
\begin{equation} \label{H0-def}
H_0 = (i \nabla +A_0)^2 \qquad \text{in} \quad \Lp^2(\R^2),
\end{equation}
with $A_0$ defined in \eqref{a0}. Since $|A_0|\in \Lp^\infty(\R^2)$, the form domain of $H_0$ coincides with the Sobolev space $W^{1,2}(\R^2)$. We use the shorthands
\begin{equation*}
 \HH^{\, k,s+0} = \bigcup_{s<r}  \HH^{\, k,r} , \qquad  \HH^{\, k,s-0} = \bigcap_{r<s}  \HH^{\, k,r}
\end{equation*}

 As mentioned in Section \ref{ssec-keys}, the resolvent of $H_0$ has a regular expansion at zero. We denote by
 $G_0$ respectively $\G_0$  the formal inverse of $H_0$ for $\alpha\not\in\Z$ respectively  $\alpha\in\Z$ . Using the partial wave decomposition it was shown in \cite[Sec.~5]{ko} that  in the case of non-integer flux we have 
\begin{equation} \label{G0}
\qquad \qquad   G_0(x,y) \, =  \sum_{m\in\Z} \, G_{m,0}(r,t) \ e^{im(\theta-\theta')} \, , \hskip2cm [\alpha\not\in\Z]\, ,
\end{equation} 
where $x = r e^{i\theta}$ and $y= t e^{i\theta'}$. 
The above formula is to be interpreted as follows; for a given $u\in H^{-1,s},\, s>1$ it holds
$$
( G_0\, u)(r,\theta) =  \sum_{m\in\Z} e^{im \theta}\, \int_0^\infty \int_0^{2\pi}  G_{m,0}(r,t) \,  e^{-im \theta'}\, u(t,\theta')\, t\, dt d\theta.
$$
In order to identify the kernels $G_{m,0}(r,t)$ we define  
\begin{align}
v_{m}(0, r) &=  e^{-|\alpha| r}\, (2 |\alpha|\, r)^{|m|}\, M\Big(\frac 12+|m|-m\, {\rm sign}(\alpha), 1+2 |m|, \, 2|\alpha|\, r\Big) \label{vm}\\
u_m(0, r) &=  e^{-|\alpha| r}\, (2 |\alpha|\, r)^{|m|}\, U\Big(\frac 12+|m|-m\, {\rm sign}(\alpha), 1+2 |m|, \, 2|\alpha|\, r\Big) \  \label{um},
\end{align}
with $M(a,b,z)$ and $U(a,b,z)$ being the Kummer's  confluent hypergeometric functions. Following the notation of \cite{ko} we write
\begin{equation} \label{abm}
a_m = v_{m}(0, 1), \qquad a'_m = v_{m}'(0, 1), \qquad b_m =  u_{m}(0, 1), \qquad b'_m =  u'_{m}(0, 1)\, ,
\end{equation}
and
\begin{align} \label{delta-m} 
\delta_m  & = \frac{a_m' -|m-\alpha| \, a_m}{a_m' +|m-\alpha|\,  a_m}, \qquad \beta_m  = \frac{b_m' +|m-\alpha|\,  b_m}{a_m' +|m-\alpha|\,  a_m}\,  , \qquad   \gamma_m = \frac{1}{a_m' +|m-\alpha|\,  a_m}\, .
 \end{align}
 
Moreover, we introduce the functions
\begin{align}
g_m(r) & = 
\left\{
\begin{array}{l@{\quad}l}
2\, |m-\alpha|\,  \gamma_m\, v_m(0,r) & \qquad r \leq 1  , \\
&\\
r^{|m-\alpha|} - \delta_m\, r^{-|m-\alpha|}  & \qquad 1  < r ,
\end{array}
\right.  \label{g_m} \\[4pt]
h_m(t) & = \frac{1}{4\pi |m-\alpha|} 
\left\{
\begin{array}{l@{\quad}l}
 \frac{\Gamma(\frac 12 +|m|-m)\,  }{\gamma_m\,  \Gamma(1+2 |m|)}\ \big(u_m(0,t)- \beta_m \, v_m(0,t)\big)  & t \leq 1  , \\
&\\
t^{-|m-\alpha|}  & 1  < t . \label{h_m}
\end{array}
\right. 
\end{align} 

From \cite[Eq.~(5.28)]{ko} we then deduce that\\
\begin{equation}\label{G02}
G_{m,0}(r,t)  
=  g_{m}(r) \, h_m(t) \quad \text{if} \ \  r <  t , \qquad \text{and} \qquad G_{m,0}(r,t)  
=  g_{m}(t) \, h_m(r) \quad \text{if} \ \  t \leq  r\, .
\end{equation}

When $\alpha\in\Z$, then one has to modify only the contribution to \eqref{G0}  which comes from the channel $m=\alpha$. 
For this value of $m$ the function $g_m$  has to be replaced by 
\begin{equation} \label{g-alpha}
 \g_{\alpha}(r) =
  \frac{1}{a'_\alpha}
\left\{
\begin{array}{l@{\quad}l}
\ v_{\alpha}(0,r) & r \leq 1  , \\[3pt]
a_\alpha +a'_\alpha\, \log r& 1  < r .
\end{array}
\right. 
\end{equation} 
Then
\begin{equation} \label{G0-a}
\G_0(x,y)=  \sum_{m\neq \alpha} G_{m,0}(r,t)\, e^{im(\theta-\theta')}  +\, \G_{\alpha,0} (r,t) \, e^{i \alpha(\theta-\theta')}  \hskip2cm [\alpha\in\Z]\, ,
\end{equation} 
where the kernel of $ \G_{\alpha,0}$ reads 
\begin{equation} \label{G-alpha}
\begin{aligned}
\G_{\alpha,0}(r,t) 
& = \frac{ \Gamma(1/2)\, v_{\alpha}(0,  r) }{2\pi\, \Gamma(1+2\alpha)}\ \big( u_{\alpha}(0, t)-
\frac{b'_\alpha}{ a'_\alpha}  \ v_{\alpha}(0,  t) \big) \qquad \text{if} \  r< t\leq 1,  \\
\G_{\alpha,0}(r,t) & = \frac{\g_{\alpha}(r)}{2\pi} \qquad \qquad  \qquad \qquad \qquad \qquad    \qquad \qquad \, \ \text{if} \ \ \max\{1,r\} <  t,  
\end{aligned}
\end{equation}
see \cite[Eq.~(6.6)]{ko}.

\begin{lem} \label{lem-g0-inv}
Let $f\in C_0^\infty(\R^2)$. Then $G_0 H_0 f =f$ respectively $\G_0 H_0 f =f$ holds pointwise on $\R^2$.
\end{lem}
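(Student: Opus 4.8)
The plan is to diagonalize $H_0$ by rotational symmetry and verify the identity one angular channel at a time, where it reduces to the classical Sturm--Liouville Green's function computation. Since $A_0$ in \eqref{a0} is azimuthal with radial modulus, $H_0$ commutes with rotations. Expanding $f\in C_0^\infty(\R^2)$ as $f(r,\theta)=\sum_{m\in\Z}f_m(r)\,e^{im\theta}$, each $f_m$ is smooth, compactly supported in $r$, and satisfies $f_m(r)=O(r^{|m|})$ as $r\to0$. In the channel $m$ the operator $H_0$ acts as the formally self-adjoint (with respect to $r\,dr$) radial operator
\begin{equation*}
H_{m,0}=-\frac1r\frac{d}{dr}\Big(r\frac{d}{dr}\Big)+\frac{\big(m-\alpha\min\{1,r\}\big)^2}{r^2}.
\end{equation*}
Because the kernel \eqref{G0} is diagonal in $m$ and the angular integration in the definition of $G_0u$ extracts $2\pi$ times the $m$th Fourier coefficient of $u=H_0f$, namely $H_{m,0}f_m$, it suffices to prove for each fixed $m$ that
\begin{equation*}
2\pi\int_0^\infty G_{m,0}(r,t)\,(H_{m,0}f_m)(t)\,t\,dt=f_m(r).
\end{equation*}

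Next I would identify $g_m$ and $h_m$ in \eqref{g_m}--\eqref{h_m} as the two homogeneous solutions of $H_{m,0}u=0$. For $r>1$ the equation is the Euler equation with solutions $r^{\pm|m-\alpha|}$, while for $r\le1$ it reduces, via \eqref{vm}--\eqref{um}, to Kummer's equation with solutions $v_m(0,\cdot)$ (regular at the origin) and $u_m(0,\cdot)$; this is exactly the reduction carried out in \cite{ko}. Thus $g_m$ is the solution regular at $r=0$ and $h_m$ the one decaying like $r^{-|m-\alpha|}$ at infinity. The crucial point is that the constants $\delta_m,\beta_m,\gamma_m$ in \eqref{delta-m} are chosen precisely so that $g_m$ and $h_m$ are $C^1$ across $r=1$: a direct check using \eqref{delta-m} gives $1-\delta_m=2|m-\alpha|\gamma_m a_m$ and $1+\delta_m=2\gamma_m a'_m$, which are exactly the continuity conditions for the value and derivative of the interior and exterior pieces of $g_m$ at $r=1$, and $\beta_m$ plays the analogous role for $h_m$.

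With $g_m,h_m$ genuine homogeneous solutions, $G_{m,0}(r,\cdot)$ solves $H_{m,0}u=0$ on $(0,r)$ and on $(r,\infty)$ and is continuous at $t=r$. I would split the integral at $t=r$ and integrate by parts twice in each piece: the bulk terms vanish since $H_{m,0}g_m=H_{m,0}h_m=0$, and the boundary terms at $t=0,\infty$ vanish as well, using the compact support of $f_m$ at infinity and, near the origin, the regularity $g_m(t)\sim t^{|m|}$ (from $v_m(0,t)\sim t^{|m|}$) together with $f_m(t)=O(t^{|m|})$, which makes $t\,(g_mf_m'-g_m'f_m)\to0$. What survives is the jump across $t=r$, governed by the Wronskian $W(g_m,h_m)$. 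Since $rW(g_m,h_m)$ is constant, it may be evaluated on $r>1$ from $g_m=r^{|m-\alpha|}-\delta_m r^{-|m-\alpha|}$ and $h_m=\frac{1}{4\pi|m-\alpha|}r^{-|m-\alpha|}$, giving $rW(g_m,h_m)=-\tfrac{1}{2\pi}$; here the prefactor $\frac{1}{4\pi|m-\alpha|}$ in \eqref{h_m} is exactly what produces the clean value $-\tfrac1{2\pi}$. The jump therefore yields $\int_0^\infty G_{m,0}(r,t)(H_{m,0}f_m)(t)\,t\,dt=\tfrac{1}{2\pi}f_m(r)$, and the factor $2\pi$ from the angular integration restores $f_m(r)$, proving $G_0H_0f=f$ after summing over $m$.

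Finally, for $\alpha\in\Z$ the only modification is in the channel $m=\alpha$, where $|m-\alpha|=0$ and the two exterior solutions $r^{\pm|m-\alpha|}$ degenerate into $\{1,\log r\}$. Here $g_\alpha$ is replaced by $\g_\alpha$ from \eqref{g-alpha} and the kernel by $\G_{\alpha,0}$ in \eqref{G-alpha}, whose outer part is the constant-in-$t$ solution $\tfrac1{2\pi}$; a short computation gives the same normalization $rW=-\tfrac1{2\pi}$ on $r>1$, so the argument above applies verbatim in this channel and is unchanged for $m\neq\alpha$. I expect the main technical point to be precisely this bookkeeping --- checking that the constants $\gamma_m,\delta_m,\beta_m$ and the prefactors $\tfrac{1}{4\pi|m-\alpha|},\tfrac1{2\pi}$ conspire to give both the $C^1$ matching at $r=1$ and the universal Wronskian normalization --- together with the separate, though routine, treatment of the degenerate channel $m=\alpha$ and the endpoint behaviour at $r=0$, rather than any single hard estimate.
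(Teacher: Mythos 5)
Your proposal is correct and follows essentially the same route as the paper's proof: partial-wave decomposition, integration by parts against the channel kernels $G_{m,0}$ (and $\G_{\alpha,0}$ in the degenerate channel for integer flux), and reduction to the jump condition $r\,\Delta_m(r)=\tfrac1{2\pi}$ across $t=r$. The only cosmetic difference is that you propose to check the $C^1$ matching of $g_m$ and $h_m$ at $r=1$ and then evaluate the constant $r\,W(g_m,h_m)$ once on $r>1$, whereas the paper verifies the jump separately for $r>1$, $0<r<1$ (via the Kummer Wronskian \cite[Eq.~13.1.22]{as}) and $r=1$; the bookkeeping with $\gamma_m,\delta_m,\beta_m$ is the same in both versions.
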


\begin{proof}
We decompose $f$ into the Fourier series 
\begin{equation} \label{f-fourier} 
f(r,\theta) = \sum_{k\in\Z} \, e^{i k \theta}  f_k(r),
\end{equation} 
where
\begin{equation*}
f_k(r) = \frac{1}{2\pi}\,  \LL f, \, e^{ik(\cdot)}\RR_{\Lp^2(\s)} =  \frac{1}{2\pi} \int_0^{2\pi} e^{-ik\theta}\, f(r,\theta)\, d\theta.
\end{equation*}
Now let 
\begin{equation} \label{a0-polar}
a_0(r) = \alpha \quad \text{if}\quad r < 1, \qquad a_0(r) = \frac{\alpha}{r} \quad \text{if} \quad r\geq 1.
\end{equation}
Then, in polar coordinates we have $A_0(r,\theta) =  a_0(r)\, (-\sin\theta,\, \cos\theta)$, and an easy calculation shows that 
\begin{equation*} 
H_0 f (r,\theta) =  \sum_{k\in\Z} e^{i k \theta} \Big[ -f''_k(r)-\frac{1}{r} \, f_k'(r) +\Big(\frac kr-  a_0(r)\Big)^2\, f_k(r)\Big ]\, . 
\end{equation*} 
Let us first assume that $\alpha\not\in\Z$. Equation \eqref{G0} and integration by parts then give
\begin{align*}
G_0 H_0 f (r,\theta) &= 2\pi \sum_{m\in\Z} e^{i m\theta} \int_0^\infty \Big[ -t\, \pd_t^2 G_{m,0}(r,t)-\pd_t G_{m,0}(r,t) +t \Big(\frac m t-  a_0(t)\Big)^2\, G_{m,0}(r,t)\Big ]\, f_m(t) \,  dt \\
&\quad + 2\pi \sum_{m\in\Z} e^{i m\theta}\,  r \, \Delta_m(r)\, f_m(r),
\end{align*}
where
\begin{equation*} 
\Delta_m(r) := \lim_{t\to r-}  \pd_t G_{m,0}(r,t) - \lim_{t\to r+}  \pd_t G_{m,0}(r,t)\, .
\end{equation*}
From the definition of $G_{m,0}$ and from \cite[Eqs.~13.1.31-33]{as},  \cite[Eqs.~9.1.1]{as} we deduce that 
$$
 -t \, \pd_t^2 G_{m,0}(r,t)- \pd_t G_{m,0}(r,t) + \Big(\frac m t-  a_0(t)\Big)^2 t\,  G_{m,0}(r,t) = 0\qquad \forall\, r>0, \ \forall\, m\in\Z.
$$
Hence it suffices to show that 
\begin{equation} \label{sm-eq}
r\, \Delta_m(r) = \frac{1}{2\pi}  \qquad\forall\, r>0, \ \forall\, m\in\Z.
\end{equation}
If $r>1$, then \eqref{sm-eq} follows easily from \eqref{G02}. 
Suppose now that $r \in (0,1)$. Then, by \eqref{G02}, \eqref{vm} and \eqref{um},
\begin{align}
\Delta_m(r)  &=  \frac{\Gamma(\frac 12 +|m|-m)}{2\pi \Gamma(1+2 |m|)}\ \big(u_m(0,r)\, v'_m(0,r)- u'_m(0,r) \, v_m(0,t)\big) \label{sm-eq2} \\
& =  \frac{\Gamma(\frac 12 +|m|-m)\,  (2\alpha r)^{2|m|}}{2\pi e^{2\alpha r} \Gamma(1+2 |m|)}\, \mathcal{W}\!\left\{U\Big(\frac 12+|m|-m, 1+2 |m|, \, 2\alpha\, r\Big), M\Big(\frac 12+|m|-m, 1+2 |m|, \, 2\alpha\, r\Big)\right\}, \nonumber
\end{align}
where $\mathcal{W}\{ \cdot\, , \cdot\}$ denotes the Wronskian. To calculate the latter we use \cite[Eqs.~13.1.22]{as} which implies
\begin{equation} \label{wronski}
\mathcal{W}\Big\{U\Big(\frac 12+|m|-m, 1+2 |m|, \, 2\alpha\, r\Big), M\Big(\frac 12+|m|-m, 1+2 |m|, \, 2\alpha\, r\Big)\Big\} =   \frac{2\alpha\, e^{2\alpha r} \, \Gamma(1+2 |m|)}{(2\alpha r)^{2|m|+1}\Gamma(\frac 12+ |m|-m)},
\end{equation} 
and therefore also identity \eqref{sm-eq} for $0<r<1$. Finally, if $r=1$, then using \eqref{G02} and \eqref{delta-m} we verify that 
\begin{align*}
\Delta_m(1) = \frac{1}{2\pi}\big(\gamma_m\, v'_m(0,1) + |m-\alpha|\, \gamma_m\, v_m(0,1)\big) = \frac{1}{2\pi}\, .
\end{align*}
This proves \eqref{sm-eq} and the claim in the case of non-integer magnetic flux. The proof for integer magnetic flux follows the same lines; equation \eqref{G0-a} and the above calculations imply  that \eqref{sm-eq} holds also if $\alpha\in\Z$ provided $m\neq \alpha$. On the other hand, for $m=\alpha$ we deduce \eqref{sm-eq}  from \eqref{G-alpha} and \eqref{wronski}.
\end{proof}

The following results show how the functions from the image of $G_0$ behave at infinity. 

\begin{prop} \label{lem-null}
Let $0<\alpha\not\in\Z$. 
For any $f\in \Lp^{2, p}(\R^2)$ with $p>2$ one has $G_0 f \in \HH^{1,-1-0} \cap \Lp^\infty(\R^2)$. In particular, there exists $v\in \Lp^2(\R^2) \cap \Lp^\infty(\R^2)$ such that 
\begin{equation} \label{g0f}
G_0 f = u_0 +v, 
\end{equation}
where 
\begin{equation} \label{u0-0}
u_0(r,\theta)  = \frac{e^{i n\theta}\,  r^{-\alp} }{4\pi\alp} \ \big\langle f \, , \, g_n\,  e^{i n(\cdot)} \big\rangle + \, \frac{e^{i (n+1)\theta}\, r^{\alp-1}}{4\pi(1-\alp)} \  \big\langle f\, ,  \, g_{n+1}\,  e^{i (n+1)(\cdot)} \big\rangle
\end{equation}
for all $r>1$ and $\theta\in [0,2\pi]$.

Moreover, if $p>3$, then there exists $w\in \Lp^1(\R^2) \cap \Lp^\infty(\R^2)$ such that the function $v$ in \eqref{g0f} satisfies
\begin{equation} \label{v-asymp}
v(r,\theta) =  \frac{e^{i (n-1)\theta}\,  r^{-1-\alp} }{4\pi(1+\alp)} \ \big\langle f \, , \, g_{n-1}\,  e^{i (n-1)(\cdot)} \big\rangle+ \frac{e^{i (n+2)\theta}\,  r^{-2+\alp} }{4\pi(2-\alp)} \ \big\langle f \, , \, g_{n+2}\,  e^{i (n+2)(\cdot)} \big\rangle  + w(r,\theta)
\end{equation}
for all $r>1$ and $\theta\in [0,2\pi]$.
\end{prop}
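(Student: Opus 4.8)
The plan is to reduce everything to the explicit partial-wave representation \eqref{G0}--\eqref{G02} of the kernel of $G_0$, extract the slowly decaying angular channels by hand, and absorb the rest into the remainder. I would first Fourier-decompose $f(r,\theta)=\sum_{m\in\Z} e^{im\theta} f_m(r)$ as in \eqref{f-fourier}, so that by the definition of $G_0$,
\begin{equation*}
(G_0 f)(r,\theta)=\sum_{m\in\Z} e^{im\theta}\,(G_0 f)_m(r),\qquad (G_0 f)_m(r)=2\pi\int_0^\infty G_{m,0}(r,t)\, f_m(t)\, t\, dt .
\end{equation*}
For $r>1$ I insert \eqref{G02} and split the radial integral at $t=r$; writing $\int_0^r=\int_0^\infty-\int_r^\infty$ in the part carrying $h_m(r)$ produces the leading term
\begin{equation*}
2\pi\, h_m(r)\int_0^\infty g_m(t) f_m(t)\, t\, dt = h_m(r)\,\big\langle f,\, g_m\, e^{im(\cdot)}\big\rangle = \frac{r^{-|m-\alpha|}}{4\pi|m-\alpha|}\,\big\langle f,\, g_m\, e^{im(\cdot)}\big\rangle ,
\end{equation*}
where I used $h_m(r)=(4\pi|m-\alpha|)^{-1}r^{-|m-\alpha|}$ for $r>1$ from \eqref{h_m}, leaving two remainder integrals $h_m(r)\int_r^\infty g_m f_m\,t\,dt$ and $g_m(r)\int_r^\infty h_m f_m\,t\,dt$.

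Next I would sort the channels by decay rate. With $\alpha=n+\alp$ and $\alp\in(0,1)$ one computes $|m-\alpha|=\alp$ for $m=n$, $=1-\alp$ for $m=n+1$, $=1+\alp$ for $m=n-1$, $=2-\alp$ for $m=n+2$, and $|m-\alpha|>2$ for every other $m$. Thus the channels $m=n,n+1$ contribute the tails $r^{-\alp}$ and $r^{\alp-1}$ of rate $<1$, which are not in $\Lp^2$; collecting them reproduces exactly $u_0$ in \eqref{u0-0}. The channels $m=n-1,n+2$ contribute $r^{-1-\alp}$ and $r^{\alp-2}$ of rate in $(1,2)$, i.e.\ in $\Lp^2\setminus\Lp^1$, which are the explicit terms displayed in \eqref{v-asymp}. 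For each of these four channels the two remainder integrals are controlled by Cauchy--Schwarz using $g_m(t)\sim t^{|m-\alpha|}$, $h_m(t)\sim t^{-|m-\alpha|}$ and $f\in\Lp^{2,p}$; both are $\mathcal{O}(r^{1-p})$. Hence they lie in $\Lp^2$ as soon as $p>2$ (matching $v\in\Lp^2$) and in $\Lp^1$ as soon as $p>3$ (matching $w\in\Lp^1$); the same thresholds are needed for the coefficient integrals $\langle f, g_m e^{im(\cdot)}\rangle$ to converge, since this requires $|m-\alpha|<p-1$, i.e.\ $p>2$ for $m=n,n+1$ and $p>3$ for $m=n-1,n+2$. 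This is exactly why the two assertions carry the hypotheses $p>2$ and $p>3$.

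The genuinely hard part is the treatment of all the remaining channels $|m-\alpha|>2$ together with the global statement $G_0 f\in\HH^{1,-1-0}\cap\Lp^\infty$. The leading-term extraction above is no longer licit there, because once $|m-\alpha|\geq p-1$ the integral $\int_0^\infty g_m f_m\,t\,dt$ need not converge; instead I would keep the unsplit form $h_m(r)\int_0^r+g_m(r)\int_r^\infty$, bound $(G_0 f)_m(r)$ directly for $r>1$, and sum over $m$ using only the $\ell^2$-in-$m$ information contained in $f\in\Lp^{2,p}$. This forces one to track the behaviour as $|m|\to\infty$ of the constants $\gamma_m,\delta_m,\beta_m$ in \eqref{delta-m} and of the Kummer functions $v_m,u_m$ in \eqref{vm}--\eqref{um}, so as to prove convergence of the series simultaneously in $\Lp^2$, in $\Lp^1$ and, after differentiating the kernel representation, in the weighted norm $\HH^{1,-1-\eps}$ for every $\eps>0$. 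The $\Lp^\infty$ bound then follows by combining the pointwise estimates for $r>1$ with interior elliptic regularity of $G_0 f$ on $\{r\leq 2\}$ (where $H_0 G_0 f=f\in\Lp^2$ forces $G_0 f\in\HH^2_{\rm loc}\hookrightarrow C^0$). I expect these uniform-in-$m$ kernel estimates, and the attendant control of the series in several norms at once, to be the main obstacle; the explicit extraction of the four slow channels is, by comparison, routine once the leading-order formula above is in place.
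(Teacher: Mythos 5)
Your proposal follows essentially the same route as the paper's proof: angular Fourier decomposition of $f$, extraction of the four slow channels $m\in\{n-1,n,n+1,n+2\}$ from the explicit form of $G_{m,0}$, Cauchy--Schwarz control of the remainder integrals of size $\mathcal{O}(r^{1-p})$ (which is where the thresholds $p>2$ and $p>3$ enter, exactly as you say), and direct summation of all channels with $|m-\alpha|>2$ using uniform-in-$m$ kernel bounds. The only real difference is that the uniform estimate you single out as the main obstacle is not derived from scratch in the paper but imported from \cite[Eq.~(5.33)]{ko} (restated as \eqref{gm0-ub}), and that same bound also gives the local $\Lp^\infty$ control directly, with no need for the interior elliptic regularity argument you sketch.
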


\begin{proof}
We decompose $f$ into the Fourier series as in  \eqref{f-fourier}.  The Parseval identity gives
\begin{equation} \label{parseval-f}
\|f\|_{2,q}^2 = \sum_{m\in\Z} \int_0^\infty |f_m(t)|^2 (1+t^2)^{p} \, t\, dt\, ,
\end{equation}
By \cite[Eq.~(5.33)]{ko},
\begin{equation} \label{gm0-ub}
 | G_{m,0}(r,t)| \ \leq\  \frac{C_0}{|m-\alpha|}\, 
\left\{
\begin{array}{l@{\quad}l}
r^{|m|}\  t^{-|m|} &\quad r <t \leq 1  , \\
r^{|m-\alpha|}\  t^{-|m-\alpha|}  &  \quad 1 \leq r < t\\
r^{|m|}\  t^{-|m-\alpha|} 
 &\quad  r < 1  < t 
\end{array}
\right.
\end{equation}
holds  for all $0<r\leq t \leq 1$ and with a constant $C_0$ which depends on $\alpha$ but not on $m$. 
Hence $\langle\, f , g_n\, e^{in(\cdot)}\rangle$ and $\langle\, f, g_{n+1}\, e^{i(n+1)(\cdot)} \rangle$ are finite, see \eqref{g_m}.
We  split $G_0 f$ as
\begin{equation} \label{u123}
G_0 f =u_1+u_2 +u_3,
\end{equation}
where 
\begin{align*}
u_1(r,\theta)& = e^{i n\theta}\! \int_0^\infty G_{n,0}(r,t)\, f_n(t)\, t\, dt , \qquad  u_2(r,\theta)= e^{i(n+1)\theta}\! \int_0^\infty G_{n+1,0}(r,t)\, f_{n+1}(t)\, t\, dt ,
\end{align*}
and $u_3 = G_0 f-u_1-u_2$. Note that 
\begin{equation*}
\qquad u_3(r,\theta) = \sum_{m\neq n,n+1} e^{im \theta}\,  F_m(r)  \qquad \text{with} \quad
F_m(r) :=   \int_0^\infty G_{m,0}(r,t)\, f_m(t)\, t\, dt .
\end{equation*}
The Cauchy-Schwarz inequality thus gives
\begin{equation} \label{cs}
|F_m(r)|^2\  \leq \  z_m(r)  \int_0^\infty |f_m(t)|^2 (1+t^2)^{p} \, t\, dt ,
\end{equation}
where
$$
z_m(r) = \int_0^\infty G^2_{m,0}(r,t) (1+t^2)^{-p} \, t\, dt\, . 
$$
In view of \eqref{gm0-ub}  it follows that $z_m(r) \lesssim |m-\alpha|^{-2}$ for any $m\in\Z$ and any fixed $r$. This shows that 
\begin{equation} \label{g0f-infty}
u_j \in \Lp^\infty_{\rm\, loc}(\R^2), \qquad j=1,2,3.
\end{equation}
For $r >1$  we estimate $z_m(r)$ as follows 
\begin{align} \label{km-ub}
z_m(r) & =  \int_0^1 G^2_{m,0}(r,t) (1+t^2)^{-p} \, t\, dt\ +  \int_1^r G^2_{m,0}(r,t) (1+t^2)^{-p} \, t\, dt\ + \int_r^\infty G^2_{m,0}(r,t) (1+t^2)^{-p} \, t\, dt \nonumber\\
&\   \leq \frac{2\, C_0^2}{|m-\alpha|^2}\, (  r^{-2|m-\alpha|} + r^{-2p+2} )
\end{align}
where we have used \eqref{gm0-ub} again. 
Since $\inf_{m\neq n,n+1} |m-\alpha| >1$, we have
$$
\sup_{m\in \Z\setminus\{n,n+1\}} \int_0^\infty z_m(r)\, r\, dr < \infty \, ,
$$
and by  equations \eqref{cs} and \eqref{parseval-f},
\begin{equation} \label{u3-L2}
\|u_3\|_2^2  = \sum_{m\neq n,n+1} \int_0^\infty |F_m(r)|^2\, r\, dr \, \lesssim \,  \sum_{m\neq n,n+1} \int_0^\infty |f_m(t)|^2 (1+t^2)^{p} \, t\, dt\ \lesssim \ \|f\|_{2, p}^2. 
\end{equation}
By \eqref{g0f-infty} this implies that $u_3\in \Lp^2(\R^2)\cap\Lp^\infty(\R^2)$. Now consider the function $u_1(r,\theta)$. Using \eqref{G02} and \eqref{g_m} we get  
\begin{align*}
 u_1(r,\theta) & =  e^{in\theta} \ \frac{r^{-\alp}}{4\pi\alp} \int_0^\infty g_n(t) f_n(t) \, t\, dt   +e^{in\theta} \, v_1(r) , \qquad r>1,
\end{align*}
with 
$$
v_1(r) = \frac{r^{\alp}}{4\pi\alp} \int_r^\infty (t^{-\alp}-\delta_n\, r^{-2\alp} \, t^{-\alp})\, f_n(t)\, t\, dt  -  \frac{r^{-\alp}}{4\pi\alp} \int_r^\infty (t^{\alp}-\delta_n \, t^{-\alp})\, f_n(t)\, t\, dt\, .
$$
The Cauchy-Schwarz inequality gives 
\begin{align*}
|v_1(r) |^2 & \lesssim\  \Big( r^{-2\alp}\int_r^\infty t^{2\alp-2p+1}\, dt  +  r^{2\alp} \int_r^\infty t^{-2\alp-2p+1}\, dt \Big) \int_0^\infty |f_n(t)|^2 (1+t^2)^{p} \, t\, dt \, \lesssim\   r^{-2p +2} 
\end{align*}
for all $r>1$. In the same way we obtain 
$$
u_2(r,\theta) = e^{i (n+1)\theta} \, \frac{r^{\alp-1}}{4\pi(1-\alp)} \int_0^\infty g_{n+1}(t) f_{n+1}(t) \, t\, dt + \mathcal{O}(r^{-2p +2})  \qquad \forall\, r>1, 
$$
In view of \eqref{u3-L2} this proves \eqref{g0f} and \eqref{u0-0}.  To prove \eqref{v-asymp} we decompose  $u_3$ further as follows; 
\begin{equation} \label{u3-split}
u_3(r,\theta) =  e^{i(n-1)\theta}\! \int_0^\infty G_{n-1,0}(r,t)\, f_{n-1}(t)\, t\, dt  + e^{i(n+2)\theta}\! \int_0^\infty G_{n+2,0}(r,t)\, f_{n+2}(t)\, t\, dt  + u_4(r,\theta), 
\end{equation}
where
$$
u_4(r,\theta) =  \sum_{m\in \mathbb{I}_n} e^{im \theta}\,  F_m(r), \qquad  \mathbb{I}_n:= \Z\setminus\{n-1,n,n+1,n+2\}.
$$
Mimicking the above analysis of $u_1$ and $u_2$ shows that 
\begin{equation} \label{u3-split2}
\begin{aligned}
 \int_0^\infty G_{n-1,0}(r,t)\, f_{n-1}(t)\, t\, dt & =  \frac{  r^{-1-\alp} }{4\pi(1+\alp)} \ \big\langle f \, , \, g_{n-1}\,  e^{i (n-1)(\cdot)} \big\rangle + \mathcal{O}(r^{-2p +2})  \\
\int_0^\infty G_{n+2,0}(r,t)\, f_{n+2}(t)\, t\, dt  & =  \frac{  r^{-2+\alp} }{4\pi(2-\alp)} \ \big\langle f \, , \, g_{n+2}\,  e^{i (n+2)(\cdot)}  \big\rangle  +\mathcal{O}(r^{-2p +2})
\end{aligned}
\end{equation}
hold for all $r>1$ and $\theta\in[0,2\pi]$. As for $u_4$ we note that 
$$
\inf_{m\in\mathbb{I}_n} |m-\alpha| >2 .
$$
Hence from \eqref{km-ub} and \eqref{cs}  we deduce that 
\begin{align*}
\|u_4\|_{\Lp^1(\R^2)} & \lesssim \sum_{m\in \mathbb{I}_n} \int_0^\infty\! \sqrt{z_m(r)}\ r\, dr\,  \Big( \int_0^\infty |f_m(t)|^2 (1+t^2)^{p} \, t\, dt \Big)^{1/2} \ \lesssim \ \|f\|_{2,p}^2 +
\sum_{m\in \mathbb{I}_n} \big\|\sqrt{z_m}\, \big\|_{\Lp^1(\R^2)}^2 < \infty,
\end{align*}
since $p>3$. Equation \eqref{v-asymp} now follows from \eqref{u3-split2}.
\end{proof}

\begin{prop} \label{lem-nul-int}
Let $0\leq \alpha\in\Z$. 
For any $f\in \Lp^{2, p}(\R^2)$ with $p>2$ one has $G_0 f \in \HH^{1,-1-0} \cap \Lp^\infty(\R^2)$. Moreover,  
\begin{equation} \label{g0f-int}
G_0 f = \tilde u_0 +\tilde v,  
\end{equation}
where $\tilde v\in \Lp^2(\R^2) \cap \Lp^\infty(\R^2),$ and where 
\begin{equation} \label{u0-int}
\tilde u_0(r,\theta) = \frac{e^{i \alpha\theta} }{2\pi} \ \big\langle f \, , \, \g_\alpha\,  e^{i \alpha(\cdot)} \big\rangle\, + \, \frac{1}{4\pi\, r}\,  \Big[\,  e^{i (\alpha+1)\theta}\, \big\langle f\, ,  \, g_{\alpha+1}\,  e^{i (\alpha+1)(\cdot)} \big\rangle + e^{i (\alpha-1)\theta}  \big\langle f\, ,  \, g_{\alpha-1}\,  e^{i (\alpha-1)(\cdot)} \big\rangle \Big] 
\end{equation}
holds for all $r>1$ and $\theta\in [0,2\pi]$.

Moreover, if $p>3$, then there exists $\widetilde w\in \Lp^1(\R^2) \cap \Lp^\infty(\R^2)$ such that the function $\tilde v$ in \eqref{g0f-int} satisfies
\begin{equation} \label{tilde-v-asymp}
\tilde v(r,\theta) =  \frac{e^{i (\alpha-2)\theta}}{8\pi\,  r^2} \ \big\langle f \, , \, g_{\alpha-2}\,  e^{i (\alpha-2)(\cdot)} \big\rangle+ \frac{e^{i (\alpha+2)\theta}}{8\pi\,  r^2} \ \big\langle f \, , \, g_{\alpha+2}\,  e^{i (\alpha+2)(\cdot)} \big\rangle  + \widetilde w(r,\theta)
\end{equation}
for all $r>1$ and $\theta\in [0,2\pi]$.

\end{prop}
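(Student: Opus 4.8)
The plan is to follow the proof of Proposition \ref{lem-null} almost verbatim, isolating the single channel in which integer flux genuinely changes the picture. When $\alpha\in\Z$ the kernel of $\G_0$ differs from \eqref{G0} only in the channel $m=\alpha$, where $G_{\alpha,0}$ is replaced by $\G_{\alpha,0}$ from \eqref{G-alpha} and $g_\alpha$ by the logarithmically growing function $\g_\alpha$ of \eqref{g-alpha}; every channel $m\neq\alpha$ keeps the factorised form \eqref{G02} and the bound \eqref{gm0-ub}. Since $\alp=0$ here, the exponent $|m-\alpha|$ equals $0$ at $m=\alpha$, equals $1$ at $m=\alpha\pm1$, and equals $2$ at $m=\alpha\pm2$. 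Hence the three slowly varying channels that must be extracted for the leading expansion are $m=\alpha$ (producing a constant, i.e.\ non-decaying, contribution) and $m=\alpha\pm1$ (each producing an $r^{-1}$ contribution), while the finer expansion additionally requires $m=\alpha\pm2$ (each producing an $r^{-2}$ contribution).

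First I would record the kernel estimates. For $m\neq\alpha$ the bound \eqref{gm0-ub} is unchanged and the Cauchy--Schwarz estimates \eqref{cs}--\eqref{km-ub} carry over word for word. For the exceptional channel I would read off from \eqref{G-alpha} that, whenever $\max(r,t)>1$, the kernel reduces to $\G_{\alpha,0}(r,t)=\g_\alpha(\min(r,t))/2\pi$, together with the pointwise bound $|\G_{\alpha,0}(r,t)|\lesssim 1+\bigl|\log\min(r,t)\bigr|$ on that region. In particular $\g_\alpha$ grows only logarithmically, so the pairing $\langle f,\g_\alpha\,e^{i\alpha(\cdot)}\rangle$ is finite whenever $f\in\Lp^{2,p}$ with $p>2$: indeed $\int_0^\infty(1+|\log t|)^2(1+t^2)^{-p}\,t\,dt<\infty$ already for $p>1$, and Cauchy--Schwarz then bounds the pairing by $\|f\|_{2,p}$.

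Next comes the decomposition. Writing $\G_0 f$ as the sum of the three channels $m=\alpha,\alpha\pm1$ plus a remainder over $|m-\alpha|\ge2$, the remainder lies in $\Lp^2\cap\Lp^\infty$ by exactly the computation \eqref{u3-L2}, since $\inf_{|m-\alpha|\ge2}|m-\alpha|\ge2>1$. For the channel $m=\alpha$ I would split the $t$-integral at $t=r$; using $\G_{\alpha,0}(r,t)=\g_\alpha(\min(r,t))/2\pi$ one gets
\begin{equation*}
\int_0^\infty \G_{\alpha,0}(r,t)\,f_\alpha(t)\,t\,dt
=\frac{1}{2\pi}\int_0^r \g_\alpha(t)\,f_\alpha(t)\,t\,dt
+\frac{\g_\alpha(r)}{2\pi}\int_r^\infty f_\alpha(t)\,t\,dt,
\end{equation*}
whose first term converges, as $r\to\infty$, to the constant $\tfrac{1}{2\pi}\langle f,\g_\alpha e^{i\alpha(\cdot)}\rangle$ appearing as the first term of \eqref{u0-int}, while the two tails are $\mathcal{O}\bigl((\log r)\,r^{-p+1}\bigr)$. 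For $m=\alpha\pm1$ the analysis of $u_1,u_2$ in Proposition \ref{lem-null} transfers directly, the factor $h_{\alpha\pm1}(r)=r^{-1}/(4\pi)$ producing the two $r^{-1}$ terms of \eqref{u0-int}. Collecting the constant and the two $r^{-1}$ terms yields $\tilde u_0$; the subleading remainders, all of size $\mathcal{O}((\log r)r^{-p+1})$ or better, together with the $|m-\alpha|\ge2$ remainder, form $\tilde v\in\Lp^2\cap\Lp^\infty$, where one notes that $(\log r)r^{-p+1}\in\Lp^2(r\,dr)$ precisely because $p>2$.

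Finally, for the finer expansion under $p>3$, I would peel the channels $m=\alpha\pm2$ off $\tilde v$ exactly as in \eqref{u3-split}--\eqref{u3-split2}; since $|m-\alpha|=2$ the factor $h_{\alpha\pm2}(r)=r^{-2}/(8\pi)$ gives the two terms of \eqref{tilde-v-asymp}, and the residual sum over $|m-\alpha|\ge3$ lies in $\Lp^1\cap\Lp^\infty$ by the argument used for $u_4$, because $\inf_{|m-\alpha|\ge3}|m-\alpha|\ge3>2$. The membership $\G_0 f\in\HH^{1,-1-0}$ then follows as in Proposition \ref{lem-null} from $H_0\G_0 f=f$ (Lemma \ref{lem-g0-inv}) together with the weighted $\Lp^2$ bounds just obtained. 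The one genuinely new point — and the only place where care is required — is the logarithmic growth of $\g_\alpha$ in the channel $m=\alpha$: one must verify that this logarithm is always dominated by the polynomial decay supplied by $f$, so that the constant leading term is well defined and the tails land in $\Lp^2$ (respectively $\Lp^1$). This is exactly where the strict inequalities $p>2$ and $p>3$ are used.
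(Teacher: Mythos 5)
Your proof is correct and follows essentially the same route as the paper: the same splitting into the channels $m=\alpha$, $m=\alpha\pm 1$, $m=\alpha\pm 2$ and the tails $|m-\alpha|\ge 2$ (resp.\ $\ge 3$), with the non-exceptional channels handled exactly as in Proposition \ref{lem-null}. Your explicit treatment of the logarithmic channel $m=\alpha$ (splitting the $t$-integral at $t=r$ and checking that the $(\log r)\,r^{-p+1}$ tails land in $\Lp^2$, resp.\ $\Lp^1$, precisely for $p>2$, resp.\ $p>3$) is in fact more detailed than the paper's, which disposes of that channel with a one-line appeal to \eqref{G-alpha}.
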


\begin{proof}
If $\alpha\in\Z$, then we split
$$
G_0 f =  u_+ +u_- +  u_\alpha+ \tilde u_3 ,
$$
where 
\begin{align*}
u_\pm(r,\theta)& = e^{i(\alpha\pm 1) \theta}\! \int_0^\infty G_{\alpha\pm 1,0}(r,t)\, f_{\alpha\pm 1}(t)\, t\, dt , \qquad  u_\alpha(r,\theta)= e^{i\alpha \theta}\! \int_0^\infty \G_{\alpha,0}(r,t)\, f_{\alpha}(t)\, t\, dt ,
\end{align*}
and
\begin{equation}\label{u3-int}
\tilde u_3(r,\theta) =  \sum_{|m-\alpha| >1} e^{im \theta}\! \int_0^\infty G_{m,0}(r,t)\, f_m(t)\, t\, dt \, .
\end{equation}
Here we use the decomposition \eqref{f-fourier}. 
From the  analysis of the case $\alpha\not\in\Z$ it follows that $\tilde u_3\in \Lp^2(\R^2)\cap\Lp^\infty(\R^2),$ and that 
$$
u_\pm(r,\theta) = \frac{ \min\{1,r^{-1}\}}{4\pi}\, e^{i (\alpha\pm 1)\theta}\, \big\langle f\, ,  \, g_{\alpha \pm 1}\,  e^{i (\alpha\pm 1)(\cdot)} \big\rangle +  \mathcal{O}(r^{-2p +2})  \qquad \forall\, r>1, 
$$
Similarly, using equation \eqref{G-alpha} we find
 $$
 u_\alpha(r,\theta) =  \frac{e^{i \alpha\theta} }{2\pi} \ \big\langle f \, , \, \g_\alpha\,  e^{i\alpha (\cdot)} \big\rangle\,  +  \mathcal{O}(r^{-2p +2})  \qquad \forall\, r>1, 
 $$
 This proves \eqref{u0-int}. 
 To prove \eqref{tilde-v-asymp} we split $\tilde u_3$, with a slight abuse of notation,  as 
 \begin{align}	\label{u3tilde-split}
\tilde u_3(r,\theta) &= \sum_\pm  e^{i(\alpha\pm 2) \theta}\! \int_0^\infty G_{\alpha\pm 2,0}(r,t)\, f_{\alpha\pm 2}(t)\, t\, dt  +   \sum_{|m-\alpha|>2} e^{im \theta}\,  F_m(r) .
 \end{align}
 In the same way as in the proof of Proposition \ref{lem-null} it follows that
 $$
  \sum_{|m-\alpha|>2} e^{im \theta}\,  F_m(r) \in \Lp^1(\R^2), 
 $$
 and that
 \begin{equation} \label{u3tilde-split2}
 \int_0^\infty G_{\alpha\pm 2,0}(r,t)\, f_{\alpha\pm 2}(t)\, t\, dt =  \frac{1}{8 \pi r^2} \ \big\langle f \, , \, g_{\alpha\pm 2}\,  e^{i (\alpha\pm 2)(\cdot)} \big\rangle + \mathcal{O}(r^{-2p +2})  \, .
\end{equation}
Hence the claim.
\end{proof}


\subsection{Zero modes of the operator $H_0+W$}  Our next aim is to use Propositions \ref{lem-null} and \ref{lem-nul-int} 
to analyze the asymptotic behavior of solutions to the equation 
\begin{equation*}
(H_0+W)\, u = 0 .
\end{equation*}
Here $W$ is a first order differential operator of the type 
\begin{equation}\label{eq-W} 
W = 2i Z\cdot  \nabla +i \nabla \cdot Z  + U
\end{equation}
with the real-valued coefficients $Z: \R^2\to \R^2$ and $U:\R^2\to \R$. We suppose that the coefficients of $W$ satisfy

\begin{assumption} \label{ass-W}
Let  $W$  be given by \eqref{eq-W}, and let $|Z|  + |U| \lesssim \x^{-\tau}$  for some some $\tau>1$. 
\end{assumption}

Note that if $W$ satisfies Assumption \ref{ass-W} for $\tau> 2$, then   
\begin{equation} \label{W-op}
W: \HH^{1,-s} \to \Lp^{2,\tau-s} \,   
\end{equation}

\begin{defin} 
Let $\mathscr{H}$ be a self-adjoint operator in $\Lp^2(\R^2)$. We define
\begin{equation} \label{def-null}
\NN(\mathscr{H}) = \left\{u\in \HH^{1,-1-0} \cap \Lp^\infty(\R^2): \  \mathscr{H} u=0 \right\}. 
\end{equation}
\end{defin}

We call  elements of $\NN(\mathscr{H})$  {\em zero modes} of $\mathscr{H}$.

For the proof of our next result we recall  a Hardy-type inequality for the operator $H_0$. Namely,
\begin{equation} \label{hardy}
\int_{\R^2} \frac{|u|^2}{1+|x|^2\, \log^2 |x|}\, dx \ \lesssim\ 
\| (i \nabla +A_0) u\| ^2 \, \qquad  u\in W^{1,2}(\R^2),
\end{equation}
see \cite{lw,timo}. The logarithmic factor on the left hand side is needed only if $\alpha\in\Z$, \cite[Lem.~6.1]{kov2}.

\smallskip

\begin{lem} \label{lem-uwu}
Let $W$ satisfy Assumption \ref{ass-W} with $\tau> 3$. Then for any $u\in \NN(H_0+W),\, u\neq 0,$  we have 
\begin{equation} \label{uwu} 
\LL u, - W u\RR > 0 .
\end{equation}
\end{lem}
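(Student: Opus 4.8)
The plan is to establish the positivity $\LL u, -Wu\RR > 0$ by first using the equation $(H_0+W)u = 0$ to rewrite $-Wu = H_0 u$, and then integrating by parts to express $\LL u, H_0 u\RR$ as the magnetic Dirichlet energy $\|(i\nabla + A_0)u\|^2$. The central point is that this quadratic form is strictly positive for any nonzero $u$: since $H_0 = (i\nabla+A_0)^2$ has purely essential spectrum $[0,\infty)$ with no zero eigenvalue (zero is only a resonance), a zero mode $u \in \HH^{1,-1-0}\cap\Lp^\infty$ cannot lie in $W^{1,2}(\R^2)$, yet its gradient energy is still finite and positive. So the real content is twofold: justify the integration by parts for functions in this weighted space (which are not genuinely $\Lp^2$), and then rule out that the energy vanishes.

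First I would verify that the formal manipulation $\LL u, -Wu\RR = \LL u, H_0 u\RR = \|(i\nabla+A_0)u\|^2$ is legitimate. Here the decay Assumption \ref{ass-W} with $\tau > 3$ is crucial: by \eqref{W-op}, $W$ maps $\HH^{1,-s}\to\Lp^{2,\tau-s}$, so for $u\in\HH^{1,-1-0}$ one has $Wu \in \Lp^{2,\tau-1-0}$ with $\tau-1 > 2$, which pairs against $u\in\Lp^{2,-1-0}$ to give a finite (and absolutely convergent) inner product $\LL u, -Wu\RR$. The Hardy inequality \eqref{hardy} then controls $u$ in the weight $(1+|x|^2\log^2|x|)^{-1}$ by the magnetic energy, ensuring $u$ and $(i\nabla+A_0)u$ decay compatibly so that the boundary terms in the integration by parts vanish. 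One should cut off with a smooth function $\varphi_R$ supported in $|x|<2R$, integrate by parts on the compact region, and show the boundary and commutator terms tend to zero as $R\to\infty$ using the weighted decay of $u$, $\nabla u$ and $A_0 = \mathcal{O}(|x|^{-1})$.

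The strict positivity then reduces to showing that the magnetic gradient cannot vanish identically for a nonzero zero mode. If $\|(i\nabla+A_0)u\| = 0$, then $u$ satisfies $(i\nabla+A_0)u = 0$ pointwise, i.e. $u$ is a true ground state of $H_0$. But such an $L^2$-normalizable ground state would be a genuine zero eigenfunction of $H_0$, contradicting the fact that zero is not an eigenvalue of $H_0$ (only a resonance, with the resonant states lying in $\Lp^\infty\setminus\Lp^2$). More directly, the first-order equation $(i\nabla+A_0)u=0$ forces $u$ to be, up to the gauge phase, holomorphic or antiholomorphic times $e^{\pm h_0}$, and the explicit form of $A_0$ in \eqref{a0} together with the Hardy inequality shows no nontrivial such solution lies in $\HH^{1,-1-0}$ — it would have to grow or fail the required weighted integrability.

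The main obstacle I anticipate is the rigorous justification of the integration by parts, since $u \notin \Lp^2(\R^2)$ in general and the form $\LL u, H_0 u\RR$ must be interpreted through the weighted pairing rather than the $\Lp^2$ inner product. The delicate point is controlling the boundary term $\int_{|x|=R} \bar u\, (i\nabla+A_0)u\cdot\nu$, which requires showing that this flux integral vanishes along a suitable sequence $R_k\to\infty$; this follows from the finiteness of both $\int |u|^2 \x^{-2-0}$ and $\int|(i\nabla+A_0)u|^2$, but extracting the vanishing subsequence and handling the borderline logarithmic weight in the integer-flux case $\alpha\in\Z$ will require care. Once that technical step is secured, both the identity and the strict inequality \eqref{uwu} follow cleanly.
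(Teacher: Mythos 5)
Your strategy --- rewrite $-Wu=H_0u$, integrate by parts to identify $\LL u,-Wu\RR$ with the magnetic energy $\|(i\nabla+A_0)u\|^2$, then rule out that this energy vanishes --- can in principle be completed, but as written it has two concrete gaps, and the paper's proof is organized precisely to avoid both. The first gap is the cutoff. With a dyadic cutoff $\varphi_R$ (so $|\nabla\varphi_R|\lesssim R^{-1}$ on $R<|x|<2R$) the commutator term is of size $R^{-2}\int_{R<|x|<2R}|u|^2$. A zero mode is only known to lie in $\HH^{1,-1-0}\cap\Lp^\infty$, and in the integer-flux case the resonant states are bounded and do \emph{not} decay (e.g.\ $\Phim_1\sim e^{i\alpha\theta}$ at infinity), so $\int_{R<|x|<2R}|u|^2\sim R^2$ and the commutator term is only $\mathcal{O}(1)$, not $o(1)$. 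You flag this as ``requiring care'', but it is not a removable technicality: the dyadic cutoff genuinely fails there. The paper's device is the logarithmic cutoff $\xi_n=(1-\log|x|/\log n)_+$, for which $\int|\nabla\xi_n|^2|u|^2\lesssim \|u\|_\infty^2/\log n\to 0$, and the additional identity $x\cdot A_0=0$ makes all remaining error terms $\mathcal{O}(1/\log n)$.

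The second gap is strict positivity. Your first argument (an ``$\Lp^2$-normalizable ground state'' would be a zero eigenfunction of $H_0$) does not apply, since $u$ need not be in $\Lp^2$. Your second argument points in the right direction, but $(i\nabla+A_0)u=0$ is an overdetermined \emph{two-component} first-order system whose integrability condition is $\rt A_0=0$; it is not the Aharonov--Casher factorization $\D(A_0)u=0$, so the ``holomorphic times $e^{\pm h}$'' description is the wrong one here (the desired conclusion is still true when $\rt A_0\neq0$, i.e.\ $\alpha\neq 0$, but needs to be argued as a curvature obstruction). The paper sidesteps this step entirely: instead of first proving the identity $\LL u,-Wu\RR=\|(i\nabla+A_0)u\|^2$ and then arguing positivity, it applies the magnetic Hardy inequality \eqref{hardy} to $\xi_nu$ and passes to the limit, obtaining the quantitative bound $\int|u|^2(1+|x|^2\log^2|x|)^{-1}dx\lesssim\LL u,-Wu\RR$, whose left-hand side is manifestly positive for $u\neq0$. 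That single step replaces both your integration-by-parts identity and your unique-continuation argument; if you insist on your structure you must at minimum switch to the logarithmic cutoff and supply a correct proof that $(i\nabla+A_0)u\equiv0$ has no nontrivial bounded solution.
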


\begin{proof} 
Assume that $u\in \NN(H_0+W)$. By \eqref{W-op} we have $W u\in \Lp^{2,2+0}(\R^2)$ and therefore the left hand side of \eqref{uwu} is well-defined. Since $\nabla \cdot A_0=0,$ it follows that
\begin{equation} \label{h0-uW}
H_0 u = -\Delta u  +2 i A_0 \cdot \nabla u + |A_0|^2\, u= - W u.
\end{equation}
Now define a cut-off function $\xi_n: \R^2\to\R, \, 2\leq n\in\N,$ by 
$$
\xi_n(x)= 1 \quad \text{if} \ \ |x| \leq 1, \qquad \xi_n(x)= \Big(1- \frac{\log |x|}{\log n}\Big)_+ \quad \text{otherwise}. 
$$
Then $f_n:= \xi_n u \in W^{1,2}(\R^2)$. Moreover, since $x\cdot A_0=0$, and since 
$$
\nabla \xi_n(x)= -\frac{x}{|x|^2 \log n}\ \qquad  \text{if } \quad 1\leq |x| \leq n, \qquad \nabla \xi_n(x) = 0  \qquad  \text{otherwise} ,
$$
we observe that 
\begin{align*}
\int_{\R^2} |(i \nabla +A_0) f_n|^2 \, dx & = \int_{\R^2} |\nabla(\xi_n u)| ^2\, dx +  i \int_{\R^2} \xi_n^2\,  \big(  \bar u\, A_0\cdot \nabla u  - u\, A_0\cdot \nabla \bar u + |A_0|^2\, |u|^2\big) \, dx. 
\end{align*}
An integration by parts then shows that, as as $n\to\infty$, 
\begin{align*}
 \int_{\R^2} |\nabla(\xi_n u)| ^2\, dx & = \int_{\R^2} |\nabla \xi_n|^2 |u|^2\, dx - \frac 12 \int_{\R^2} \xi_n^2\, \big(u\, \Delta \bar u+ \bar u\, \Delta u\big)\, dx  =   - \frac 12 \int_{\R^2} \xi_n^2\, \big(u\, \Delta \bar u+ \bar u\, \Delta u\big)\, dx + \mathcal{O}\Big(\frac{1}{\log n} \Big) ,
\end{align*}
The last two equations in combination with \eqref{h0-uW} yield
\begin{align*}
\int_{\R^2} |(i \nabla +A_0) f_n|^2 \, dx & =   -\frac 12 \int_{\R^2} \xi_n^2\, ( \bar u\, W  u   +u\, W\bar u) \, dx+  \mathcal{O}\Big(\frac{1}{\log n} \Big) .
\end{align*}
Thus by the Hardy inequality \eqref{hardy},
\begin{equation*} 
\int_{\R^2} \frac{  |u|^2\, \xi_n^2}{1+|x|^2 \log^2 |x|}\ dx \ \lesssim\ -\frac 12 \int_{\R^2} \xi_n^2\, ( \bar u\, W  u   +u\, W\bar u) \, dx+  \mathcal{O}\Big(\frac{1}{\log n} \Big).
\end{equation*}
Note that $\xi_n\leq 1,$ and $\xi_n \to 1$ pointwise in $\R^2$. Passing to the limit  $n\to \infty$ in the above inequality we thus get
$$
\int_{\R^2} \frac{  |u|^2}{1+|x|^2 \log^2 |x|}\ dx \ \lesssim\  \LL u, -W u\RR\, ,
$$
where we have used the dominated convergence. 
\end{proof}

\begin{lem} \label{lem-nh0}
$\NN(H_0) = \{0\}$.
\end{lem}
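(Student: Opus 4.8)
The plan is to recognize the statement as the degenerate case $W=0$ of Lemma \ref{lem-uwu}. First I would observe that the zero operator, obtained by setting $Z=0$ and $U=0$ in \eqref{eq-W}, trivially fulfils Assumption \ref{ass-W} for \emph{every} $\tau$, in particular for some $\tau>3$, since $|Z|+|U|\equiv 0\lesssim \x^{-\tau}$. As $H_0+W=H_0$ in this case, we have $\NN(H_0+W)=\NN(H_0)$, so Lemma \ref{lem-uwu} applies directly to every element of $\NN(H_0)$.

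Next I would argue by contradiction. Suppose there is a nonzero $u\in\NN(H_0)$. Lemma \ref{lem-uwu} then forces the strict inequality $\LL u, -Wu\RR>0$; but with $W=0$ the left-hand side equals $\LL u, 0\RR=0$, which is impossible. Hence no nonzero element exists and $\NN(H_0)=\{0\}$.

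Should one prefer a self-contained derivation that does not invoke the sign statement of Lemma \ref{lem-uwu}, the same conclusion follows by repeating that lemma's cutoff argument verbatim with $W=0$: testing $H_0 u=0$ against the truncations $\xi_n^2 u$ and integrating by parts yields $\|(i\nabla+A_0)u\|^2=0$ in the limit $n\to\infty$, after which the Hardy inequality \eqref{hardy} gives $\int_{\R^2}\frac{|u|^2}{1+|x|^2\log^2|x|}\,dx\lesssim 0$, forcing $u=0$ almost everywhere. Either route is essentially immediate; there is no genuine obstacle beyond verifying that $W=0$ is an admissible choice in the hypotheses of Lemma \ref{lem-uwu}, which it manifestly is.
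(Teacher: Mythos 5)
Your proof is correct and follows exactly the paper's own route: the paper likewise disposes of Lemma \ref{lem-nh0} by applying Lemma \ref{lem-uwu} with $W=0$, so that the strict inequality $\LL u,-Wu\RR>0$ is incompatible with $\LL u,0\RR=0$ unless $u=0$. Your additional self-contained variant via the cutoff argument and the Hardy inequality \eqref{hardy} is a valid unpacking of the same mechanism but is not needed.
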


\begin{proof} 
Let $u\in \NN(H_0)$. An application of Lemma \ref{lem-uwu} with $W=0$ implies that $u=0$. 
\end{proof}

To be able to apply Propositions \ref{lem-null} and \ref{lem-nul-int} to the solutions of $(H_0 +W) u=0$, we have to determine in which sense the operator $G_0$ is  inverse to $H_0$. This is provided by the following lemma.

\begin{lem} \label{lem-2a}
We have 
\begin{enumerate}
\item $H_0 G_0\, v = v$ for any $v\in \Lp^{2, 2+0}(\R^2)$.  

\item $G_0 H_0\, u=  u$ for any $u\in \HH^{1,-1-0}\cap \Lp^\infty(\R^2)$ such that $H_0\, u \in \Lp^{2, 2+0}(\R^2)$.

\end{enumerate}
\end{lem}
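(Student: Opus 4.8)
The plan is to prove the two statements of Lemma \ref{lem-2a} by a density and duality argument, leveraging Lemma \ref{lem-g0-inv} which already establishes both identities on the dense test-function class $C_0^\infty(\R^2)$. The core difficulty is that $G_0$ and $H_0$ are only formal inverses; I must verify that the pointwise identities of Lemma \ref{lem-g0-inv} extend to the stated weighted Sobolev spaces, where $G_0$ acts as a genuine bounded operator by Propositions \ref{lem-null} and \ref{lem-nul-int}.

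For part (1), I would first observe that by Propositions \ref{lem-null} and \ref{lem-nul-int} the map $G_0$ is well-defined on $\Lp^{2,2+0}(\R^2)$ with values in $\HH^{1,-1-0}\cap\Lp^\infty(\R^2)$, so $H_0 G_0 v$ makes sense (in the distributional/form sense). I would then take $v\in\Lp^{2,2+0}$ and a sequence $f_j\in C_0^\infty(\R^2)$ with $f_j\to v$ in $\Lp^{2,2+0}$. Lemma \ref{lem-g0-inv} gives $H_0 G_0(H_0 f_j)=H_0 f_j$, but more directly I want $H_0 G_0 f_j = f_j$; this follows because $G_0$ inverts $H_0$ on the range, and the mapping properties show $G_0 f_j\to G_0 v$ in $\HH^{1,-1-0}$. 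Testing against $\varphi\in C_0^\infty$, the identity $\LL H_0 G_0 v,\varphi\RR=\LL G_0 v, H_0\varphi\RR$ and the convergence let me pass to the limit, using $\LL G_0 f_j, H_0\varphi\RR=\LL f_j,\varphi\RR$ (which is the weak form of Lemma \ref{lem-g0-inv} applied with roles adjusted via the symmetry of $H_0$). Hence $H_0 G_0 v=v$ as distributions.

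For part (2), let $u\in\HH^{1,-1-0}\cap\Lp^\infty(\R^2)$ with $H_0 u\in\Lp^{2,2+0}$. By part (1) the function $G_0 H_0 u$ is well-defined and satisfies $H_0(G_0 H_0 u)=H_0 u$, so the difference $w:=u-G_0 H_0 u$ satisfies $H_0 w=0$. The mapping properties give $w\in\HH^{1,-1-0}\cap\Lp^\infty(\R^2)$, so $w\in\NN(H_0)$ in the sense of \eqref{def-null}. By Lemma \ref{lem-nh0} we have $\NN(H_0)=\{0\}$, whence $w=0$ and $G_0 H_0 u=u$. The only subtlety is checking that $w$ genuinely lies in the class $\HH^{1,-1-0}$; this is where I would invoke the explicit asymptotic decompositions \eqref{g0f}--\eqref{v-asymp} (resp.\ \eqref{g0f-int}--\eqref{tilde-v-asymp}) to confirm $G_0 H_0 u$ has the required regularity and decay, so that the hypothesis of Lemma \ref{lem-nh0} applies to $w$.

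The main obstacle, as flagged above, is the rigorous justification of the weak-form manipulations: ensuring that the formal adjoint relation $\LL G_0 f,H_0\varphi\RR=\LL f,\varphi\RR$ holds for $f$ in the weighted space (not merely for $f\in C_0^\infty$), and that the boundary terms arising from integration by parts against the cut-off-free test functions vanish. Here the uniform kernel bounds \eqref{gm0-ub} and the decay estimates in the proofs of Propositions \ref{lem-null} and \ref{lem-nul-int} are the key inputs, since they control the tails and legitimize exchanging $G_0$ with the pairing. Once the duality is established on the dense class and the operators are shown bounded between the relevant spaces, the identities extend by continuity, and the uniqueness statement $\NN(H_0)=\{0\}$ closes part (2).
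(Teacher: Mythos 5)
Your proposal is correct and follows essentially the same route as the paper: part (1) is the duality identity $\LL f, H_0G_0v\RR=\LL G_0H_0f,v\RR=\LL f,v\RR$ for test functions $f$ (reducing to Lemma \ref{lem-g0-inv}), and part (2) combines part (1) with the mapping properties from Propositions \ref{lem-null}, \ref{lem-nul-int} and the triviality of $\NN(H_0)$ from Lemma \ref{lem-nh0}. The only difference is your extra approximation of $v$ by test functions in part (1), which is harmless but unnecessary since the duality pairing against $\varphi\in C_0^\infty$ already works directly for $v\in\Lp^{2,2+0}$.
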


\begin{proof}
Let $f\in C_0^\infty(\R^2)$ and let $v\in \Lp^{2, 2+0}(\R^2)$. Then $\LL f, H_0  G_0\, v \RR = \LL G_0 H_0\, f, v\RR= \LL f, v \RR$, see Lemma \ref{lem-g0-inv}. This proves $(1)$. To prove $(2)$ assume that 
$u\in \HH^{1,-1-0}\cap \Lp^\infty(\R^2)$ and $H_0\, u \in \Lp^{2, 2+0}(\R^2)$. Then by $(1)$ one has $H_0(G_0 H_0u -u)=0$. On the other hand,  Propositions \ref{lem-null},  \ref{lem-nul-int}   imply that $G_0 H_0 u \in \HH^{1,-1-0} \cap \Lp^\infty(\R^2)$, and therefore 
$(G_0 H_0u -u)\in \NN(H_0)$. Hence $G_0 H_0\, u=  u$ by Lemma \ref{lem-nh0}.
\end{proof}

With these preliminaries at hand we can state the main result of this section.

\begin{cor} \label{cor-null}
Let $W$ satisfy Assumption \ref{ass-W} with  $\tau> 4$. Then  any $u\in\NN(H_0+W), u\neq 0$ admits a decomposition
\begin{equation*} 
u = u_0 + u_1+\tilde u, \\[2pt]
\end{equation*}
where $\tilde u\in \Lp^1(\R^2) \cap \Lp^\infty(\R^2),$ and $u_0, u_1$ satisfy for all $r>1, \ \theta\in[0,2\pi]$ the following equations: 
\begin{align*} 
u_0(r,\theta)  & = -\frac{e^{in\theta}\, r^{-\alp} }{4\pi\alp} \ \big\langle Wu \, , g_n\,  e^{i n(\cdot)} \big\rangle\, - \, \frac{e^{i(n+1)\theta}\, r^{\alp-1}}{4\pi(1-\alp)} \  \big\langle Wu\, ,  g_{n+1}\,  e^{i (n+1)(\cdot)} \big\rangle, \\[2pt]
u_1(r,\theta)  & = - \frac{e^{i (n-1)\theta}\,  r^{-1-\alp} }{4\pi(1+\alp)} \ \big\langle Wu \, , \, g_{n-1}\,  e^{i (n-1)(\cdot)} \big\rangle - \frac{e^{i (n+2)\theta}\,  r^{-2+\alp} }{4\pi(2-\alp)} \ \big\langle W u\, , \, g_{n+2}\,  e^{i (n+2)(\cdot)} \big\rangle  
\end{align*}
if $0< \alpha\not\in\Z$, and
\begin{align*} 
u_0(r,\theta)  & = -\frac{e^{i\alpha\theta}}{2\pi} \ \big\langle Wu \, , \g_\alpha\,  e^{i\alpha (\cdot)} \big\rangle\, - \, 
 \frac{ r^{-1}}{4\pi}\,  \Big[\,  e^{i (\alpha+1)\theta}\, \big\langle Wu\, ,  \, g_{+}\,  e^{i (\alpha+1)(\cdot)} \big\rangle + e^{i (\alpha-1)\theta}  \big\langle Wu\, ,  \, g_{-}\,  e^{i (\alpha-1)(\cdot)} \big\rangle \Big] \\[4pt]
u_1(r,\theta)  & =  -\frac{e^{i (\alpha-2)\theta}}{8\pi\,  r^2} \ \big\langle Wu\, , \, g_{\alpha-2}\,  e^{i (\alpha-2)(\cdot)} \big\rangle - \frac{e^{i (\alpha+2)\theta}}{8\pi\,  r^2} \ \big\langle Wu \, , \, g_{\alpha+2}\,  e^{i (\alpha+2)(\cdot)} \big\rangle 
\end{align*}
if $0\leq \alpha\in\Z$.  Here we have used the shorthands $g_\pm = g_{\alpha\pm 1}$.
\end{cor}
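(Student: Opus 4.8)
The plan is to represent an arbitrary zero mode $u$ as $u=-G_0(Wu)$ and then to read off its spatial asymptotics directly from Propositions \ref{lem-null} and \ref{lem-nul-int}, which already contain all the substantive analytic work.

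\textbf{Step 1: gain decay on $Wu$.} Since $u\in\NN(H_0+W)\subset\HH^{1,-1-0}$, we have $u\in\HH^{1,-s}$ for every $s>1$. By the mapping property \eqref{W-op} this yields $Wu\in\Lp^{2,\tau-s}$ for every such $s$, so letting $s\downarrow 1$ gives $Wu\in\Lp^{2,p}$ for every $p<\tau-1$. Here the hypothesis $\tau>4$ is exactly what is needed: it guarantees $\tau-1>3$, so one may fix some $p\in(3,\tau-1)$ with $Wu\in\Lp^{2,p}$ and $p>3$. This $p>3$ is precisely the regularity threshold required by the sharper remainder estimates \eqref{v-asymp} and \eqref{tilde-v-asymp}; the weaker condition $\tau>3$ of Lemma \ref{lem-uwu} would only secure $p>2$, enough for the leading term but not for the second-order asymptotics.

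\textbf{Step 2: invert $H_0$.} From $(H_0+W)u=0$ we obtain $H_0u=-Wu$. By Step 1, $Wu\in\Lp^{2,p}\subset\Lp^{2,2+0}$, so the hypotheses of Lemma \ref{lem-2a}(2) are met and we conclude $u=G_0H_0u=-G_0(Wu)$. I would then apply Proposition \ref{lem-null} (when $0<\alpha\notin\Z$) or Proposition \ref{lem-nul-int} (when $0\le\alpha\in\Z$) with $f=-Wu$. In the non-integer case, the coarse decomposition \eqref{g0f} together with the refinement \eqref{v-asymp} writes $G_0 f$ as the explicit term \eqref{u0-0}, plus the explicit term of \eqref{v-asymp}, plus a remainder $w\in\Lp^1(\R^2)\cap\Lp^\infty(\R^2)$. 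Using linearity of $G_0$ and of the pairings in their first argument, substituting $f=-Wu$ turns the $+$ signs of \eqref{u0-0} and \eqref{v-asymp} into the $-$ signs displayed in the statement; setting $u_0$ and $u_1$ equal to the two explicit pieces and $\tilde u$ equal to the $\Lp^1\cap\Lp^\infty$ remainder gives the asserted decomposition. The integer case is carried out identically, now invoking \eqref{g0f-int} and \eqref{tilde-v-asymp}, which isolate the channels $\alpha$, $\alpha\pm 1$ in the leading term and $\alpha\pm 2$ in the subleading term.

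The only genuinely delicate point is the bootstrap in Step 1, namely upgrading the a priori weak decay $u\in\HH^{1,-1-0}$ of a zero mode into $Wu\in\Lp^{2,p}$ with $p>3$, since it is this single gain of regularity — powered by $\tau>4$ and the self-improving identity $H_0u=-Wu$ — that both legitimizes the inversion via Lemma \ref{lem-2a}(2) and opens access to the second-order asymptotics. Everything else reduces to matching Fourier channels and tracking signs, the heavy estimates on the kernels $G_{m,0}$ and their summation over $m$ having already been discharged in Propositions \ref{lem-null} and \ref{lem-nul-int}.
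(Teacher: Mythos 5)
Your argument is correct and is essentially the paper's own proof: both derive $Wu\in\Lp^{2,3+0}$ from \eqref{W-op} together with $u\in\HH^{1,-1-0}$ and $\tau>4$, invert via Lemma \ref{lem-2a}(2) to get $u=-G_0Wu$, and then read off the asymptotics from Propositions \ref{lem-null} and \ref{lem-nul-int} with $f=-Wu$. Your Step 1 simply spells out in more detail the one-line regularity bootstrap that the paper states implicitly.
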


Note that $u_1 \in \Lp^2(\R^2) \cap \Lp^\infty(\R^2)$, but $u_1\not\in \Lp^1(\R^2)$.

\begin{proof}
Let $u\in\NN(H_0+W)\setminus\{0\}$. 
By \eqref{W-op} we then have $W u \in \Lp^{2,3+0}(\R^2)$. Hence Lemma \ref{lem-2a} implies $u =- G_0 W u$. The claim now follows by setting $f = -Wu$ in Propositions \ref{lem-null} and \ref{lem-nul-int}. 
\end{proof}

\begin{rem} \label{rem-match}
By Proposition \ref{prop-gauge} below one can find $A$ such that the operator $W= P_\ppm(A)-H_0$ satisfies Assumption \ref{ass-W} with $\tau> 4$ provided $\rho >5$. The asymptotics obtained in Corollary \ref{cor-null} with this choice of $W$ then will be compared with the  expressions for the zero modes of $P_\ppm(A)$ obtained in Lemma \ref{lem-ah-cash}, see 
Section \ref{ssec-E-matrix}.
\end{rem}

\subsection{Eigenfunctions and resonant states} 
In this short subsection we analyze the the structure of zero modes of $H_0+W$ more accurately.
The spaces of zero eigenvalues and zero resonant states of $H_0+W$ are defined as follows; 
\begin{equation*} 
 \NN_e(H_0+W)  := \NN(H_0+W)\cap   \Lp^2(\R^2) \quad \text{and} \quad  \NN_r(H_0+W) := \NN(H_0+W)\setminus\NN_e(H_0+W) 
\end{equation*} 
respectively. Let 
$$
N_e: = {\rm dim} (\NN_e(H_0+W))  , \qquad  N_r := {\rm dim} (\NN_r(H_0+W)).
$$

\subsubsection*{The normalization} In view of \eqref{uwu} we can construct a basis of $\NN(H_0+W)$ in such a way that the following conditions be satisfied: 
\begin{align}
\LL \psi_k, - W \psi_\ell\RR & = \delta_{k, \ell} , \qquad \qquad\qquad\quad \qquad\ \,  k,\ell \in\{1,\dots,N_e\}  \label{norm1}  \\
\LL\phi_i, - W\phi_j\RR & =\delta_{i, j}, \quad \LL\phi_j,  W\psi_k\RR = 0 \, \qquad i,j\in \{1,...,N_r\}, \ \  k \in\{1,\dots,N_e\} \label{norm2}
\end{align}

Now let $s> 1,$ and let $Q: \HH^{1, -s} \to \NN(H_0+W)$ be the projection operator defined by 
\begin{equation} \label{Q}
Q u =  \sum_{j=1}^{N_r} \LL u, - W\phi_j \RR \, \phi_j \, +\, \sum_{k=1}^{N_e}\LL u, - W\psi_k\RR\,\psi_k \, .
\end{equation}
Put
$$
Q_0 = 1-Q. 
$$
If $\tau >2$, then in view of \eqref{W-op} 
\begin{equation*} 
1+G_0\, W : \HH^{1,-s}\,  \to \HH^{1,-s}\,  \qquad  \forall\, s: \   \tau/2  <s <\tau -1 . 
\end{equation*} 

\begin{lem} \label{lem-5}   Let $W$ satisfy Assumption \ref{ass-W} with $\tau > 2 $.  Assume that $\alpha\not\in\Z$.
Let $\tau/2  < s < \tau -1$. Then 
\smallskip

\begin{enumerate}

\item $  \NN(H_0+W) = \{u\in \HH^{1,-s}: \,  (1+G_0 W)u=0\}$. 
\medskip

\item One has 
\begin{equation} \label{Q-0}
Q(1+G_0 W) = (1+G_0 W) Q=0\quad  \text{on} \ \ \HH^{1,-s}\, .
\end{equation} 

\medskip

\item The operator $Q_0(1+G_0\, W) Q_0$ is invertible on $ \HH^{1,-s}$,  
and 
\begin{equation} \label{Omega-0}
\Omega_0 := (Q_0(1+G_0\, W) Q_0)^{-1}\, Q_0 
\end{equation} 
is bounded from $ \HH^{1,-s}$ to $ Q_0\HH^{1,-s}$.
\end{enumerate}
The operator $G_0$ may be replaced by $\G_0$ throughout when $\alpha\in\Z$.
\end{lem}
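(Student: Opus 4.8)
```latex
\textbf{Proof plan for Lemma \ref{lem-5}.}
The three assertions are tightly linked, so I would organize the argument around a single structural fact: the operator $G_0$ inverts $H_0$ in the weighted spaces (Lemma \ref{lem-2a}), so that solving $(H_0+W)u=0$ is equivalent, after applying $G_0$, to solving $(1+G_0 W)u=0$, while the failure of injectivity is captured exactly by $\NN(H_0+W)$, whose finite basis has been normalized via \eqref{norm1}--\eqref{norm2}. The plan is to prove (1) first, use it to get the two algebraic identities in (2) by a direct pairing computation, and then deduce the invertibility in (3) by a Fredholm/finite-rank argument on the complement $Q_0\HH^{1,-s}$.

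\emph{Step 1 (the kernel identity).} For the inclusion $\supseteq$, suppose $u\in\HH^{1,-s}$ with $(1+G_0 W)u=0$, i.e.\ $u=-G_0 Wu$. Since $\tau>2$ and $\tau/2<s<\tau-1$, the mapping property \eqref{W-op} gives $Wu\in\Lp^{2,\tau-s}$ with $\tau-s>1$, so Proposition \ref{lem-null} applies to $f=-Wu$ and yields $u=-G_0 Wu\in\HH^{1,-1-0}\cap\Lp^\infty(\R^2)$; applying $H_0$ and using Lemma \ref{lem-2a}(1) gives $H_0 u = -Wu$, i.e.\ $(H_0+W)u=0$, so $u\in\NN(H_0+W)$. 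For $\subseteq$, take $u\in\NN(H_0+W)$; then $H_0 u=-Wu\in\Lp^{2,2+0}$, so Lemma \ref{lem-2a}(2) gives $u=G_0 H_0 u=-G_0 Wu$, i.e.\ $(1+G_0 W)u=0$. The two inclusions give (1).

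\emph{Step 2 (the projection identities).} By part (1), $Q$ maps into $\NN(H_0+W)=\ker(1+G_0 W)$, so $(1+G_0 W)Q=0$ on $\HH^{1,-s}$. For the other identity $Q(1+G_0 W)=0$, I expand using the definition \eqref{Q}: for each $u$ I must show $\LL(1+G_0 W)u,\,-W\phi_j\RR=0$ and $\LL(1+G_0 W)u,\,-W\psi_k\RR=0$. Here I move $G_0$ across the pairing, $\LL G_0 Wu,\,-W\phi_j\RR = \LL Wu,\,-G_0 W\phi_j\RR = \LL Wu,\,\phi_j\RR$, the last step because $\phi_j\in\NN(H_0+W)$ forces $G_0 W\phi_j=-\phi_j$ by Step 1; the symmetry of $G_0$ in the relevant weighted pairing is what makes this legitimate, and is the one point requiring care. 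Then $\LL(1+G_0 W)u,\,-W\phi_j\RR=\LL u,-W\phi_j\RR-\LL u,-W\phi_j\RR=0$, and identically for $\psi_k$, giving \eqref{Q-0}.

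\emph{Step 3 (invertibility on the complement).} Using \eqref{Q-0}, on $Q_0\HH^{1,-s}$ the operator $1+G_0 W$ leaves the complementary subspace invariant in the relevant sense, and $Q_0(1+G_0 W)Q_0$ differs from the identity on $Q_0\HH^{1,-s}$ by a compact (in fact finite-rank-plus-smoothing) operator: $G_0 W$ is compact on $\HH^{1,-s}$ because $W$ gains weight decay $\tau>2$ by \eqref{W-op} and $G_0$ is bounded into the more regular $\HH^{1,-1-0}$, so a Rellich-type compactness applies. Hence $Q_0(1+G_0 W)Q_0$ is Fredholm of index zero on $Q_0\HH^{1,-s}$, and by Step 1 together with \eqref{Q-0} its kernel is trivial (any null vector would lie in $\NN(H_0+W)=\ran Q$, hence be killed by $Q_0$). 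Triviality of the kernel plus index zero yields invertibility, and the inverse composed with $Q_0$ defines the bounded operator $\Omega_0$ of \eqref{Omega-0}. The integer case is identical after replacing $G_0$ by $\G_0$ and invoking Proposition \ref{lem-nul-int} in place of Proposition \ref{lem-null}.

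\emph{Main obstacle.} The delicate point is the self-adjointness (symmetry) of $G_0$ across the weighted dual pairing used in Step 2, since $G_0$ is an operator between different weighted spaces rather than a bounded self-adjoint operator on a fixed Hilbert space; one must verify that the pairing $\LL G_0 f,\,g\RR=\LL f,\,G_0 g\RR$ holds for $f,g$ in the relevant weighted classes, which follows from the symmetry of the explicit kernel \eqref{G02} but should be checked against the mapping properties so that both sides are finite. The compactness in Step 3 is routine given \eqref{W-op}.
```
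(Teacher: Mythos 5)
Your Steps 1 and 2 are correct and follow the paper's own route: part (1) is the same two-inclusion argument via Lemma \ref{lem-2a} together with Propositions \ref{lem-null} and \ref{lem-nul-int}, and your pairing computation in Step 2 (moving $G_0$ and $W$ across the weighted $\Lp^2$ pairing and using $G_0W\phi_j=-\phi_j$) is precisely what the paper's terse ``follows from part (1) and the definition of $Q$'' amounts to; your flagged worry about the legitimacy of $\LL G_0f,g\RR=\LL f,G_0g\RR$ is the right one and is harmless here.

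Step 3, however, contains a genuine gap at the injectivity claim. If $f\in Q_0\HH^{1,-s}$ satisfies $Q_0(1+G_0W)Q_0f=0$, all you may conclude is that $u:=(1+G_0W)f$ lies in $\ran Q=\NN(H_0+W)$ --- \emph{not} that $f$ itself lies in $\NN(H_0+W)$. Part (1) then gives $(1+G_0W)u=0$, i.e.\ $(1+G_0W)^2f=0$, but for a compact perturbation of the identity the kernel of the square can strictly contain the kernel: nothing you have said excludes a nontrivial Jordan block at the eigenvalue $-1$ of $G_0W$. So the parenthetical ``any null vector would lie in $\NN(H_0+W)$, hence be killed by $Q_0$'' is unjustified as stated. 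The paper closes exactly this gap with the strict positivity of Lemma \ref{lem-uwu}: with $u=(1+G_0W)f\in\NN(H_0+W)$ one computes
\[
\LL u, Wu\RR \;=\; \LL (1+G_0W)f,\,Wu\RR \;=\; \LL f,\,(1+WG_0)Wu\RR \;=\; \LL f,\,W(1+G_0W)u\RR \;=\;0,
\]
and since $\LL u,-Wu\RR>0$ for every nonzero $u\in\NN(H_0+W)$, this forces $u=0$; only then does $f\in\ker(1+G_0W)\cap Q_0\HH^{1,-s}=\ran Q\cap Q_0\HH^{1,-s}=\{0\}$ follow. Your Fredholm/index-zero conclusion is fine once injectivity is in hand, and the compactness of $G_0W$ is indeed the other ingredient (the paper obtains it from the compactness of $\x^{-1-0}G_0\x^{-1-0}$ and $\x^{-1-0}\nabla G_0\x^{-1-0}$ on $\Lp^2(\R^2)$ cited from \cite{ko}). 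You need to add the quadratic-form argument above to complete the proof.
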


\begin{proof} 
Suppose that $\alpha\not\in\Z$.
Let $u\in   \HH^{1,-s}$ be such that $(1+G_0 W)u=0$. By \eqref{W-op} we have $u\in  \Lp^{2,2+0}$, and therefore Lemma \ref{lem-2a} implies $H_0 u =-W u$. Moreover, in view of Propositions \ref{lem-null}, \ref{lem-nul-int} we have $u = -G_0 Wu \in  \HH^{1,-1-0} \cap \Lp^\infty(\R^2)$. Hence $u\in \NN(H_0+W)$. Vice-versa, if $u\in  \NN(H_0+W)$, then $H_0 u = -Wu \in \Lp^{2,2+0}$, and Lemma \ref{lem-2a} gives $-G_0 W u= G_0 H_0u =u$. This proves $(1)$. 
 The identities in $(2)$ follow from part $(1)$ of the lemma and from the definition of $Q$.
 \smallskip 

\noindent To prove $(3)$ we first note that $G_0 W $ is compact on $\HH^{1,-s}$. The latter follows from the compactness of the operators $\x^{-1-0}\, G_0\, \x^{-1-0}$ and $\x^{-1-0}\, \nabla\, G_0\, \x^{-1-0}$ on $\Lp^2(\R^2)$, see \cite[Lemma~5.2]{ko}, and from the assumptions on $Z$ and $U$. Now let $f\in Q_0  \HH^{1,-s}$ be such that $(1+G_0\, W) f= u \in \NN(H_0+W)$. Then $(1+G_0\, W) u=0$, by $(1)$, and therefore $(1+WG_0 )Wu =W(1+G_0\, W)u=0$. This implies
$$
\LL u, W u\RR = \LL (1+G_0 W) f, W u\RR = \LL f, (1+W G_0) W u\RR = 0, 
$$
and by Lemma \ref{lem-uwu} we must have $u=0$, which in turn implies that $Q_0 f = f =0$. Hence the equation $Q_0(1+G_0\, W) f =0$ has no non-trivial solution in $Q_0 \HH^{1,-s}$. Consequently, the operator $Q_0(1+G_0\, W) Q_0$ is injective and surjective on in $Q_0 \HH^{1,-s}$, and therefore invertible with an inverse in $\B(Q_0 \HH^{1,-s})$. The proof for $\alpha\in\Z$ follows the same lines.
\end{proof}

\begin{prop}  \label{prop-Omega0}
Let $W$ satisfy Assumption \ref{ass-W} with $\rho > 2$. Assume that $\alpha\not\in\Z$. Then
\begin{align} 
(1+G_0 W)\, \Omega_0 & = \Omega_0 \, (1+G_0 W) = Q_0 \label{om-01} \\
W\,  \Omega_0 & = \Omega_0^* \, W\label{om-02}
\end{align} 
The operator $G_0$ may be replaced by $\G_0$ throughout when $\alpha\in\Z$.
\end{prop}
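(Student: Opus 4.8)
The plan is to set $T := 1 + G_0 W$ (and $T := 1 + \G_0 W$ when $\alpha\in\Z$; the two cases are handled identically, so I describe only the former) and to play off two kinds of input against each other: the mapping identities already secured in Lemma \ref{lem-5}, and the self-adjointness of the building blocks $W$ and $G_0$. First I would re-record \eqref{om-01}, since the rest leans on it. By Lemma \ref{lem-5}(2) one has $QT = TQ = 0$ on $\HH^{1,-s}$, whence $T = Q_0 T Q_0$. As $\Omega_0 = (Q_0 T Q_0)^{-1}Q_0$ inverts $Q_0 T Q_0$ on $Q_0\HH^{1,-s}$ and maps into $Q_0\HH^{1,-s}$, a one-line computation gives $T\Omega_0 = (Q_0 T Q_0)\Omega_0 = Q_0$ and, using $Q_0 T u = Q_0 T Q_0 u$, also $\Omega_0 T = Q_0$. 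Taking $\Lp^2$-adjoints of these two identities yields the companions $\Omega_0^* T^* = T^*\Omega_0^* = Q_0^*$, which I will need below.

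The heart of \eqref{om-02} is a pair of symmetry relations. Since $W = P_\ppm(\A)-H_0$ is a difference of self-adjoint operators it is symmetric, $\langle Wu, v\rangle = \langle u, Wv\rangle$ (the boundary terms vanish thanks to the decay in Assumption \ref{ass-W} together with $Wu\in\Lp^{2,\tau-s}$ from \eqref{W-op}); and the kernel \eqref{G0}–\eqref{G02} satisfies $\overline{G_0(y,x)}=G_0(x,y)$ because each $G_{m,0}$ is real and symmetric in its two radial arguments, so $G_0^*=G_0$. These give at once $WT = W + WG_0 W = T^*W$, where $T^* = 1 + W G_0$. Inserting the definition \eqref{Q} of $Q$ and using the symmetry of $W$ once more, a direct matching of terms produces $WQ = Q^*W$, hence $WQ_0 = Q_0^*W$.

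Now I would combine everything. Multiplying $\Omega_0 T = Q_0$ on the left by $W$ and invoking $WQ_0 = Q_0^*W$ gives $W\Omega_0 T = Q_0^* W$; on the other hand $\Omega_0^* W T = \Omega_0^* T^* W = Q_0^* W$ by $WT = T^*W$ and $\Omega_0^* T^* = Q_0^*$. Therefore $(W\Omega_0 - \Omega_0^* W)\,T = 0$, so the two sides agree on $\ran T = Q_0\HH^{1,-s}$. On the complementary subspace $\ran Q$ both sides vanish: for $u = Qw$ we have $Q_0 u = 0$, so $\Omega_0 u = \Omega_0 Q_0 u = 0$ and $W\Omega_0 u = 0$; while $Wu = WQw = Q^*Ww$ yields $\Omega_0^* W u = \Omega_0^* Q^* Ww = (Q\Omega_0)^* Ww = 0$, since $Q\Omega_0 = Q Q_0 \Omega_0 = 0$. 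As $Q + Q_0 = \id$, the identity $W\Omega_0 = \Omega_0^* W$ holds on all of $\HH^{1,-s}$.

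The main obstacle is exactly that $T$ is not invertible — its kernel is $\ran Q = \NN(H_0+W)$ — so after reaching $(W\Omega_0 - \Omega_0^* W)\,T = 0$ one cannot simply cancel $T$; the remedy is the complementary-projection splitting $\id = Q + Q_0$ and a separate (and easy) verification on $\ran Q$. A secondary point demanding care is the bookkeeping of $\Lp^2$-adjoints across the weighted spaces $\HH^{1,-s}$ and their duals, in particular checking that $G_0^*=G_0$ and $W^*=W$ hold in the relevant pairing; both reduce to the explicit Hermitian symmetry of the kernel \eqref{G02} and to the justified integrations by parts underlying the symmetry of $W$.
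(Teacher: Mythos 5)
Your proposal is correct and rests on exactly the same ingredients as the paper's proof: the identities $Q\Omega_0=\Omega_0 Q=0$ and \eqref{Q-0} for \eqref{om-01}, and then $WQ=Q^*W$ together with \eqref{om-01} and the self-adjointness of $W$ and $G_0$ for \eqref{om-02} — the paper just compresses your ``agree on $\ran Q_0$ and on $\ran Q$'' argument into the single chain $\Omega_0^*W=\Omega_0^*W(1-Q)=\Omega_0^*W(1+G_0W)\Omega_0=\Omega_0^*(1+WG_0)W\Omega_0=(1-Q^*)W\Omega_0=W\Omega_0$. One cosmetic remark: in the abstract setting of the proposition $W$ is any operator of the form \eqref{eq-W}, so its symmetry should be read off from the real coefficients $Z,U$ (as your own integration by parts in fact shows) rather than from the identification $W=P_\ppm(\A)-H_0$.
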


\begin{proof}
By Lemma \ref{lem-5} we have $Q \Omega_0 = \Omega_0 Q=0$. 
Equation \eqref{om-01} now follows from \eqref{Q-0}. To prove 	\eqref{om-02} we proceed as in \cite[Sec.~3]{JK}.  From the  definition of $Q$ we easily deduce that $W Q = Q^* W$. Hence in view of \eqref{om-01} 
$$
\Omega_0^* W =  \Omega_0^* W(1-Q) =  \Omega_0^* W (1+G_0 W) \Omega_0 =  \Omega_0^* (1+W G_0) W \Omega_0 = (1-Q^*) W \Omega_0 = W \Omega_0.
$$
\end{proof}


\section{\bf  Pauli operator; the gauge transformation and zero modes}  
\label{sec-pauli}
In order to apply the results of the  previous section to the Pauli operator, we have to ensure that the coefficients of  $P_\m(A) -H_0$ vanish fast enough at infinity. As explained in Section \ref{ssec-keys}, this is achieved by an appropriate choice of $A$. Let us first have a closer look at the vector potential $A_h$ introduced in \eqref{gauge-pauli}.

From \eqref{superp}  and \eqref{gauge-pauli} we deduce that as $|x|\to\infty$, 
\begin{equation} \label{h-asymp}
\qquad h(x) = -\alpha \log |x| + \frac{\vartheta_1\, x_1+\vartheta_2\, x_2}{ |x|^2}  + \mathcal{O}(|x|^{-2}),  \qquad  \text{where} \ \ \vartheta_j :=  \frac{1}{2\pi} \int_{\R^2} B(y)\,  y_j \, dy ,
\end{equation}
and
\begin{equation} \label{Ah-asymp}
A_{h}(x)   =  \alpha\, \frac{(-x_2, x_1) }{|x|^2}\ -
|x|^{-4} \big ( \vartheta_2 (x_1^2-x_2^2) -2\, \vartheta_1\, x_1 x_2\, , \,  \vartheta_1 (x_1^2-x_2^2) +2\, \vartheta_2\, x_1 x_2 \, \big) + \mathcal{O}(|x|^{-3}). \\[4pt]
\end{equation}
Hence unless $\vartheta_1=\vartheta_2=0$, the difference $A_h-A_0$ decays, for $|x|\to\infty$,  as
\begin{equation*}  
|A_h(x) - A_0(x)| = \mathcal{O}(|x|^{-2}) 
\end{equation*}
and {\em not faster}.  In Proposition \ref{prop-gauge} we therefore modify $A_h$ by adding a gradient of a suitable  scalar field.

\begin{rem}
Note that if $B$ is radial, then $\vartheta_1=\vartheta_2=0$. 
\end{rem}

\subsection{The gauge transformation} 

\begin{prop} \label{prop-gauge}
Let $B$ satisfy Assumption \ref{ass-B} with $\rho >4$, and let $ \int_{\R^2} B\, dx = 2\pi\alpha$. Then there exists a function $\chi\in C^2(\R^2;\R)$ such that 
\begin{equation} \label{vp}
 | \Delta \chi(x)|  \ \lesssim \  \x^{-\rho+1} ,  \qquad \text{and} \qquad 
|  A_h(x)  +\nabla\chi(x) - A_0(x) | \ \lesssim \  \x^{-\rho+1} .\\[2pt]
\end{equation}
Moreover, in the region $\{x\in\R^2:\, |x| >1\}$ the function $\chi$ is determined uniquely up to an additive constant, and can be chosen so as to satisfy 
\begin{equation} \label{phi-infty}
\chi(x) = \frac{\vartheta_2\, x_1 -\vartheta_1\, x_2}{ |x|^2}   +  \mathcal{O}(|x|^{-2}) \qquad \text{as}\ \  |x| \to \infty. 
\end{equation}
\end{prop}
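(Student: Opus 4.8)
The plan is to reduce everything to the scalar ``stream functions'' behind $A_h$ and $A_0$. Writing $\nabla^{\perp}=(\pd_2,-\pd_1)$, we have $A_h=\nabla^{\perp}h$ by \eqref{gauge-pauli}, while in the exterior region $\{|x|>1\}$ the potential \eqref{a0} is $A_0=\nabla^{\perp}h_0$ with $h_0(x)=-\alpha\log|x|$. Hence $A_h-A_0=\nabla^{\perp}w$ with $w:=h-h_0$, and by \eqref{h-asymp} the function $w$ decays only like $|x|^{-1}$, its leading term being the dipole $\tfrac{\vartheta_1x_1+\vartheta_2x_2}{|x|^2}$; this is precisely what makes $A_h-A_0=\mathcal{O}(|x|^{-2})$ rather than faster. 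Since adding a gradient $\nabla\chi$ leaves the curl unchanged, the task is to cancel the slowly decaying, curl-free part of $\nabla^{\perp}w$ by a suitable gradient.

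The key point is that this slow part of $w$ is \emph{harmonic}. From \eqref{superp} and the exterior multipole expansion of the logarithmic potential one gets, for $|x|$ large,
\begin{equation*}
h(x)=-\alpha\log|x|+\sum_{k=1}^{K}h_k(x)+R(x),\qquad h_k(x)=\frac{1}{2\pi k}\,\re\!\big(M_k\,z^{-k}\big),\quad M_k=\int_{\R^2}(y_1+iy_2)^k\,B(y)\,dy,
\end{equation*}
where $z=x_1+ix_2$ and $K$ is the largest integer with $K<\rho-2$, so that $M_1,\dots,M_K$ are finite by Assumption \ref{ass-B}. Each $h_k$ is the real part of a holomorphic function, hence harmonic, and its harmonic conjugate $\tilde h_k(x)=\tfrac{1}{2\pi k}\im(M_k z^{-k})$ satisfies $\nabla\tilde h_k=-\nabla^{\perp}h_k$. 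I would then set
\begin{equation*}
\chi:=\zeta\,\sum_{k=1}^{K}\tilde h_k,
\end{equation*}
with $\zeta\in C^{\infty}(\R^2)$ a radial cut-off equal to $0$ on $\{|x|<1\}$ and to $1$ on $\{|x|>2\}$. Then $\chi\in C^2(\R^2)$, since each $\tilde h_k$ is smooth on $\supp\zeta$.

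For $|x|>2$ one has $\nabla\chi=-\nabla^{\perp}\sum_{k=1}^K h_k$, so that
\begin{equation*}
A_h+\nabla\chi-A_0=\nabla^{\perp}\Big(h-h_0-\sum_{k=1}^{K}h_k\Big)=\nabla^{\perp}R,
\end{equation*}
while $\Delta\chi=0$ there because the $\tilde h_k$ are harmonic; on the annulus $\{1\le|x|\le2\}$ both quantities are bounded, and on $\{|x|<1\}$ they vanish. Thus \eqref{vp} follows once one proves the remainder bound $|\nabla R(x)|\lesssim\x^{-(\rho-1)}$ (together with $|R(x)|\lesssim\x^{-(\rho-2)}$), which is the technical heart of the argument: it amounts to controlling the tail of the Newtonian potential one derivative beyond the decay of $B$, after the finitely many convergent moments have been removed. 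The asymptotics \eqref{phi-infty} are then read off from the leading conjugate term, since $M_1=2\pi(\vartheta_1+i\vartheta_2)$ gives $\tilde h_1(x)=\tfrac{\vartheta_2x_1-\vartheta_1x_2}{|x|^2}$, while $\tilde h_2,\dots,\tilde h_K$ contribute only $\mathcal{O}(|x|^{-2})$.

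Uniqueness in $\{|x|>1\}$ follows by a standard exterior-domain argument: the difference $\eta$ of two admissible $\chi$'s is harmonic with $|\nabla\eta|\lesssim\x^{-\rho+1}$, and expanding $\eta$ in the exterior harmonic basis shows that the decay of $\nabla\eta$ rules out a logarithmic term and annihilates every multipole of order below $\rho-2$; within the class of fields retained by the construction (no multipole of order $\ge\rho-2$) this forces $\eta$ to be constant, and \eqref{phi-infty} then fixes the leading coefficients. I expect the remainder estimate on $R$ and $\nabla R$, uniform in the $\rho$-dependent truncation order $K$, to be the only genuinely delicate step; the algebraic cancellation and the regularity of $\chi$ are immediate once the harmonic multipole structure is in place.
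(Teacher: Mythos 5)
Your argument is sound in outline but follows a genuinely different route from the paper. The paper never expands $h$ into multipoles: it introduces the Poincar\'e gauge $A_p(r,\theta)=\frac{(-\sin\theta,\cos\theta)}{r}\int_0^rB(s,\theta)\,s\,ds$, splits both $A_p$ and $A_h$ into an Aharonov--Bohm singular part carrying the full flux $2\pi\alpha$ plus a remainder, observes that the remainder $\widehat{A}_p(r,\theta)=\frac{(\sin\theta,-\cos\theta)}{r}\int_r^\infty B(s,\theta)\,s\,ds$ decays like $\x^{-\rho+1}$ \emph{by inspection}, and then shows via the Stokes theorem that $\widehat{A}_p-\widehat{A}_h$ has vanishing circulation around every loop in $\R^2\setminus\{0\}$, hence equals $\nabla\chi_0$ there; a Whitney extension then produces $\chi$ on all of $\R^2$, and \eqref{vp} is immediate because $A_h+\nabla\chi-A_0=\widehat{A}_p$ for $|x|>1$. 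What your approach buys is an explicit $\chi$ (a finite sum of harmonic conjugates of multipole terms, cut off near the origin), harmonicity of $\chi$ outside a compact set, and no need for the circulation argument or the Whitney theorem; what it costs is exactly the step you defer, namely the bound $|\nabla R|\lesssim\x^{-\rho+1}$ for the truncated Newtonian potential, which in the paper's route is replaced by a trivial pointwise estimate on $\int_r^\infty Bs\,ds$. The identification $M_1=2\pi(\vartheta_1+i\vartheta_2)$ and the resulting form of \eqref{phi-infty} are correct.

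Two caveats. First, when $\rho-2\in\Z$ your truncation order is $K=\rho-3$ and the near-field remainder $\int_{|y|<|x|/2}\langle y\rangle^{-\rho}|y|^{K+1}\,dy$ produces a factor $\log|x|$, so you obtain $\x^{-\rho+1}\log\x$ rather than the stated bound; this is harmless for the rest of the paper (replace $\rho$ by $\rho-\eps$) but should be acknowledged. Also, there is no uniformity issue in $K$: it is a single fixed integer determined by $\rho$. Second, your uniqueness argument is only sketched and only covers your own class of truncated fields; note that uniqueness up to a constant cannot follow from the estimates \eqref{vp} alone (your $\chi$ and the paper's in general differ by a non-constant decaying function, since $\nabla\cdot\widehat{A}_p\neq 0$ while your exterior correction is divergence-free), so, as in the paper, the uniqueness clause must be read as uniqueness of the potential of one fixed conservative field rather than as a property of all admissible $\chi$.
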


\begin{proof} 
To prove the claim we introduce yet another gauge associated to the magnetic field $B$, namely the Poincar\'e gauge which gives, in polar coordinates, 
\begin{align}
  A_p(r,\theta) &=  \frac{(-\sin\theta, \cos\theta) }{r}\, \int_0^r B(s,\theta)\, s\, ds \label{hat-1} \, .
\end{align}
A short calculation then shows that $\rt A_p =B$. Using the notation 
\begin{equation} \label{psi}
z(\theta) := \int_0^\infty \! B(s,\theta)\, s\, ds
\end{equation}
we can thus decompose $A_p$ into two singular parts as follows;
\begin{equation} \label{ap-split}
 A_p  =A^{ab}_z+ \wth A_{p} ,
\end{equation}
where we have denoted
\begin{equation} \label{ap-hat} 
A^{ab}_z(r, \theta)  =z(\theta)\, \frac{(-\sin\theta, \cos\theta) }{r}  \qquad \text{and} \qquad  \wth A_p(r,\theta) = \frac{(\sin\theta, -\cos\theta) }{r}\  \int_r^\infty B(s,\theta)\, s\, ds.
\end{equation}
Similarly, we split 
\begin{equation} \label{ah-split}
 A_h  =A^{ab} + \wth A_h ,
\end{equation}
with
\begin{equation*}
A^{ab}(x)  = \alpha\, \frac{(-x_2, x_1) }{|x|^2}\, ,  \qquad \text{and} \qquad  \wth A_h(x) = \big(\partial_{2} (h(x)+\alpha\log |x|) \, ,\,  -\partial_{1} (h(x)+\alpha\log |x|)\big ) ,
\end{equation*}
see \eqref{gauge-pauli}. Notice that $A^{ab}$ and  $A^{ab}_z$ are Aharonov-Bohm type vector potentials which generate a singular magnetic field of total flux $2\pi\alpha$. The latter follows from the identity
\begin{equation} \label{psi2}
\int_0^{2\pi} z(\theta)\, d\theta = 2\pi \alpha .
\end{equation}
We claim that the vector field $\wth A_p- \wth A_h$ is conservative in the punctured plane $\R^2\!\setminus\!\{0\}$.  Indeed, let $\gamma\subset\R^2\!\setminus\!\{0\}$
 be a piece-wise regular closed curve. Note that $\rt \, \wth A_p =\rt\, \wth A_h=B$ in $\R^2\!\setminus\! \{0\}$.  Hence if $\gamma$ does not encircle the origin, then by the Stokes theorem
$$
\oint_\gamma 	\wth A_p  = \oint_\gamma \wth A_h  = \int_{E_\gamma}\!\! B(x)\, dx,
$$
where $E_\gamma$ is the region enclosed by $\gamma$.
On the other hand, if $\gamma$ does encircle the origin, then again by the Stokes theorem and equations \eqref{ap-split}, \eqref{ah-split} and \eqref{psi2}
$$
\oint_\gamma \wth A_p =  \oint_\gamma  A_p- \oint_\gamma  A^{ab}_z = \int_{E_\gamma} \!\! B(x)\, dx  -2\pi \alpha=   \oint_\gamma  A_h  - \oint_\gamma  A^{ab}= \oint_\gamma \wth A_h \, .
$$
So, in either case 
$
\oint_\gamma (\wth A_p-\wth A_h)  =0
$.
This shows that there exits a scalar field $\chi_0: \R^2\!\setminus\!\{0\}\to \R$,  determined uniquely up to an additive constant, such that 
\begin{equation} \label{eq-cons} 
\wth A_p = \wth A_h +\nabla \chi_0 \qquad \text{in} \quad \R^2\!\setminus\! \{0\}\, .
\end{equation}
Moreover, since  $\wth A_p, \wth A_h\in C^1(\R^2\!\setminus\!\{0\}; \R^2)$, by assumptions on $B$, we have $\chi_0 \in C^2(\R^2\!\setminus\!\{0\}; \R)$.
By \cite[Thm.1]{wi}, it is thus possible to find a function $\chi\in C^2(\R^2;\R)$ such that 
\begin{equation*} 
\chi(x) = \chi_0(x) \qquad \forall\, x:\  |x|>1.
\end{equation*}
Now, in view of \eqref{ap-hat} and assumptions on $B$,
$$
 | \nabla \cdot \wth A_p(x)|  \ \lesssim \  \x^{-\rho+1} \ , \qquad | \, \wth A_p(x)  | \ \lesssim \  \x^{-\rho+1} \, .  \\[3pt]
$$
Since $\nabla \cdot A_h =0$ and $A^{ab}(x) =A_0(x)$ if $|x|>1$, this in combination with \eqref{eq-cons} implies
\eqref{vp}.

To prove  \eqref{phi-infty} we note that in view of \eqref{Ah-asymp}, \eqref{eq-cons} and the above estimate on $\wth A_p$, 
$$
\nabla \chi (x) = \Big (\, \frac{\vartheta_2 (x_1^2-x_2^2) -2\, \vartheta_1\, x_1 x_2}{|x|^4}\, , \, \frac{\vartheta_1 (x_1^2-x_2^2) +2\, \vartheta_2\, x_1 x_2}{|x|^4}\, \Big) + \mathcal{O}(|x|^{-3}) \qquad \text{as}\ \  |x| \to \infty. 
$$
Integrating the above equation gives \eqref{phi-infty}. 
\end{proof}

\begin{rem} \label{rem-gauge} 
The proof shows that instead of setting $\chi=\chi_0$ for $|x|>1$ one could take  $\chi=\chi_0$ for $|x|> \eps$ with an arbitrary $\eps>0$. The function  $\chi$ would then be determined uniquely, up to an additive constant, outside the disc of radius $\eps$.
\end{rem}

 Once the choice of $\chi$ is made following Proposition \ref{prop-gauge}, we set
\begin{equation}  \label{gauge-transf}
\mathpzc{A} = A_{h}  +\nabla\chi .
\end{equation}
Accordingly, we denote 
\begin{equation} \label{w-pauli}
\vpm = P_\pm(\A) - H_0 = 2 i (\A-A_0)\cdot \nabla  +i \nabla\cdot \A + |\A|^2-|A_0|^2\pm B \, ,
\end{equation}
where we have taken into account the fact that $\nabla\cdot A_0=0$. 

A combination of Proposition \ref{prop-gauge} and the last two equations gives

\begin{cor} \label{cor-w0}
Let $B$ satisfy Assumption \ref{ass-B}. Then $\vpm$ satisfies Assumption \ref{ass-W} with $\tau = \rho-1$.
\end{cor}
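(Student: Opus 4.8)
The plan is to read off the coefficients of $\vpm$ from its explicit form \eqref{w-pauli}, match them against the template \eqref{eq-W} that defines an admissible first-order operator $W$, and then verify the required decay directly from the gauge estimate \eqref{vp} of Proposition \ref{prop-gauge} together with the decay of $B$ from Assumption \ref{ass-B}.

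First I would match the two expressions term by term. Comparing
\[
\vpm = 2 i (\A-A_0)\cdot \nabla  +i \nabla\cdot \A + |\A|^2-|A_0|^2\pm B
\]
against $W = 2i Z\cdot  \nabla +i \nabla \cdot Z  + U$, and using $\nabla\cdot A_0=0$ so that $i\nabla\cdot\A = i\nabla\cdot(\A-A_0)$, one reads off the vector coefficient $Z = \A - A_0$ and the scalar potential $U = |\A|^2-|A_0|^2\pm B$. It then remains only to show that $|Z| + |U| \lesssim \x^{-(\rho-1)}$.

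The bound on $Z$ is immediate. Since $\A = A_h+\nabla\chi$ by \eqref{gauge-transf}, the second estimate in \eqref{vp} gives directly
\[
|Z| = |A_h+\nabla\chi - A_0| \ \lesssim\ \x^{-\rho+1}.
\]
For $U$ the naive estimate is not enough, and this is the one step that requires care: bounding $|\A|^2$ and $|A_0|^2$ separately yields only decay $\x^{-2}$, which fails to reach the exponent $\rho-1$ when $\rho$ is large. Instead I would use the factorization $|\A|^2-|A_0|^2 = (\A-A_0)\cdot(\A+A_0)$. From \eqref{a0} one has $|A_0|\lesssim \x^{-1}$, and since $|\A - A_0|\lesssim \x^{-\rho+1}$ is in particular bounded (as $\rho>2$), it follows that $|\A+A_0| \leq |\A-A_0| + 2|A_0| \lesssim \x^{-1}$. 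Hence $\big||\A|^2-|A_0|^2\big| \lesssim \x^{-\rho+1}\cdot\x^{-1} = \x^{-\rho}$. Combined with $|B|\lesssim \x^{-\rho}$ from Assumption \ref{ass-B}, this yields $|U| \lesssim \x^{-\rho} \leq \x^{-(\rho-1)}$.

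Putting these together gives $|Z|+|U| \lesssim \x^{-(\rho-1)}$, and since $\rho>2$ the exponent $\tau=\rho-1$ satisfies $\tau>1$, so $\vpm$ indeed fulfils Assumption \ref{ass-W} with $\tau=\rho-1$. The only genuinely substantive ingredient is the factorization of $|\A|^2-|A_0|^2$, which exploits the extra decay of the difference $\A-A_0$ against the slow $\x^{-1}$ decay of $A_0$ to recover the sharp exponent; everything else is routine once the coefficients $Z$ and $U$ have been identified.
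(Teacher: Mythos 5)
Your proof is correct and follows the same route the paper intends: the paper gives no explicit argument beyond citing Proposition \ref{prop-gauge} together with \eqref{gauge-transf} and \eqref{w-pauli}, and your identification of $Z=\A-A_0$, $U=|\A|^2-|A_0|^2\pm B$ plus the factorization $|\A|^2-|A_0|^2=(\A-A_0)\cdot(\A+A_0)$ is exactly the computation being left to the reader; the factorization is indeed necessary, since the naive $\x^{-2}$ bound on $|\A|^2-|A_0|^2$ fails to reach $\x^{-(\rho-1)}$ once $\rho>3$. One small caveat: the zeroth-order part of $\vpm$ also contains $i\nabla\cdot\A=i\Delta\chi$ (since $\nabla\cdot A_h=\nabla\cdot A_0=0$), and while Assumption \ref{ass-W} as literally stated only constrains $|Z|$ and $|U|$, the mapping property \eqref{W-op} used downstream needs $|\nabla\cdot Z|\lesssim\x^{-\tau}$ as well --- this is exactly what the first estimate in \eqref{vp} provides, so it should be invoked alongside the second.
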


\subsection{\bf Zero modes of $P_\pm(\A)$}  
\label{ssec-ah-cash} 
The following result is known  as the Aharonov-Casher theorem. It describes the zero eigenfunctions and resonant states of the Pauli operator. 

\begin{lem} \label{lem-ah-cash} 
Let $B$ satisfy Assumption \ref{ass-B}. The following assertions hold: 
\begin{itemize} 
\item[(i)] If $\alpha >0$, then $\NN(P_\pp(\A)) =\{0\}$, and any $ u\in \NN(\Pm(\A))$ is of the form
\begin{equation} \label{resonance}
\begin{aligned}
u & = \sum_{j=0}^{[\alpha]} \, c_j\,  (x_1+i x_2)^j\, e^{i\chi +h}   \, ,
\end{aligned}
\end{equation}
where $[\alpha] = \max\{ k \in\Z: k \leq \alpha \}$ is the integer part of $\alpha$, and $c_j\in\C $ are constants.

\item[(ii)] If $\alpha <0$, then $\NN(\Pm(\A)) =\{0\}$, and any $ u\in \NN(P_\pp(\A))$ is of the form
\begin{equation*}
u  =  \sum_{j=0}^{[-\alpha]}\,  c_j\,   (x_1-i x_2)^j\, e^{i\chi-h}  \, .
\end{equation*}

\item[(ii)] If $\alpha =0$, then 
\begin{equation*}
\qquad \NN(P_\ppm(\A)) = \big\{c\,  e^{i\chi\mp h}\, : \, c\in\C \big\}\, . \\[4pt]
\end{equation*}
\end{itemize}
\end{lem}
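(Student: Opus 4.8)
The statement to be proved is Lemma~\ref{lem-ah-cash}, the Aharonov--Casher theorem in the gauge $\A = A_h+\nabla\chi$. The strategy is to reduce the equation $P_\ppm(\A)u=0$ to a first-order system and then exploit the factorization of the Pauli operator. My plan is to exploit the well-known supersymmetric structure: writing $\D(\A) = -i\pd_1 - \A_1 - \pd_2 + i\A_2$ as in the Dirac operator, one has $P_\m(\A) = \D(\A)^*\D(\A)$ and $P_\pp(\A) = \D(\A)\D(\A)^*$. Since these are nonnegative operators, a zero mode of $P_\m(\A)$ (in the sense of $\NN$, i.e.\ lying in $\HH^{1,-1-0}\cap\Lp^\infty$) must satisfy $\D(\A)u=0$, and a zero mode of $P_\pp(\A)$ must satisfy $\D(\A)^*u=0$. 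This reduces the problem to solving a single first-order Cauchy--Riemann-type equation.

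\textbf{Solving the first-order equation.} Introducing the complex coordinate $w=x_1+ix_2$ and the Wirtinger operators $\pd_w, \pd_{\bar w}$, the equation $\D(\A)u=0$ becomes, after substituting $\A = A_h+\nabla\chi$ and using $A_h=(\pd_2 h,-\pd_1 h)$ together with $-\Delta h=B$, a $\bar\pd$-equation of the form $\pd_{\bar w}\big(u\,e^{-(i\chi+h)}\big)=0$ (and symmetrically $\pd_w\big(u\,e^{-(i\chi-h)}\big)=0$ for the $P_\pp$ case). Hence $u = f(w)\,e^{i\chi+h}$ with $f$ entire in the case $\alpha>0$, and $u=\overline{g(w)}\,e^{i\chi-h}$ with $g$ entire when $\alpha<0$. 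The gauge phase $e^{i\chi}$ is harmless here because $\chi$ is real and bounded (Proposition~\ref{prop-gauge} gives $\chi(x)=O(|x|^{-1})$), so it only contributes a unimodular factor that does not affect integrability.

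\textbf{Counting via the growth of the exponential weight.} The admissibility class $\HH^{1,-1-0}\cap\Lp^\infty$ forces a growth restriction on the entire function $f$. Using the asymptotics $h(x)=-\alpha\log|x|+O(|x|^{-1})$ from \eqref{h-asymp}, the weight behaves like $e^{h}\sim |x|^{-\alpha}$ at infinity. Thus $u = f(w)\,e^{i\chi+h}$ is bounded iff $f(w)\,|w|^{-\alpha}$ is bounded, i.e.\ $|f(w)|\lesssim |w|^{\alpha}$ as $|w|\to\infty$; by Liouville-type reasoning an entire function of polynomial growth of order $\alpha$ is a polynomial of degree at most $[\alpha]=\max\{k\in\Z: k\le\alpha\}$. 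This yields exactly the span $\{w^j e^{i\chi+h}: 0\le j\le[\alpha]\}$ in \eqref{resonance}. A parallel argument with $e^{-h}\sim|x|^{\alpha}$ handles $\alpha<0$, giving $\overline{w}^{\,j}e^{i\chi-h}$ with $0\le j\le[-\alpha]$; for $\alpha=0$ both equations admit only the constant multiple of $e^{i\chi\mp h}$, which lies in $\Lp^\infty\setminus\Lp^2$. Finally I must check that $\NN(P_\pp(\A))=\{0\}$ when $\alpha>0$: here $\D(\A)^*u=0$ forces $u=\overline{g(w)}\,e^{i\chi-h}$ with $e^{-h}\sim|x|^\alpha$ growing at infinity, so boundedness requires $g\equiv 0$.

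\textbf{Main obstacle.} The delicate point is the endpoint counting at the integer/non-integer boundary and the careful matching of the admissibility class $\HH^{1,-1-0}\cap\Lp^\infty$ with the polynomial growth threshold: when $\alpha$ is itself an integer, the degree-$[\alpha]$ monomial $w^{[\alpha]}e^{h}$ decays like $|x|^{[\alpha]-\alpha}=|x|^0$, hence is bounded but \emph{not} in $\Lp^2$, and one must confirm it still lies in $\HH^{1,-1-0}$ while the next monomial is genuinely excluded. Justifying that $\Lp^\infty$ zero modes automatically satisfy $\D(\A)u=0$ (rather than merely $P_\m(\A)u=0$) also requires care, since the factorization argument needs enough regularity and decay to integrate by parts without boundary terms; this is where the weighted-space membership $u\in\HH^{1,-1-0}$ and the Hardy inequality \eqref{hardy} do the work, and is the step I expect to demand the most attention.
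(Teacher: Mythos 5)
Your proposal follows essentially the same route as the paper: its proof likewise reduces $P_\pm(A_h)u=0$ to a first-order Cauchy--Riemann equation, writing $u=e^{\mp h}v$ and observing $\LL e^{\mp h}v,P_\pm(A_h)e^{\mp h}v\RR=\int e^{\mp 2h}|(\pd_1\mp i\pd_2)v|^2\,dx$, so that $v$ is entire, and then applies the same Liouville-type growth argument using $e^{h}\sim|x|^{-\alpha}$ together with the gauge conjugation $P_\pm(A_h)=e^{-i\chi}P_\pm(\A)e^{i\chi}$. Only note that with the paper's convention \eqref{pauli-dirac} one has $P_\m(\A)=\D(\A)\D(\A)^*$ and $P_\pp(\A)=\D(\A)^*\D(\A)$, the opposite of what you wrote; this swap does not affect the rest of your argument, which lands on the correct form of the zero modes.
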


\begin{rem} 
The part of  Lemma  \ref{lem-ah-cash} concerning eigenfunctions of $P_\pm(\A)$ is very well known, see e.g.~\cite{ac, cfks, ev,rsh}. The part concerning the resonant states is less known, but follows easily from the arguments given in the above cited literature. For the sake of completeness we give a short proof which follows the lines of \cite[Thm.~6.5]{cfks}.
\end{rem}

\begin{proof}[Proof of Lemma  \ref{lem-ah-cash}]
Let us  consider the zero modes of $P_\pm(A_h)$.  A direct calculation shows that 
\begin{equation*} 
\LL e^{{\scriptscriptstyle\mp} h}\, v , \, P_\pm(A_h) \, e^{{\scriptscriptstyle\mp} h}\, v \RR = \int_{\R^2} e^{{\scriptscriptstyle\mp} 2h}\, \big | (\pd_1 \mp i \pd_2 ) v\big |^2\, dx
\end{equation*} 
Hence the zero modes of $P_\pm(A_h)$ have the form $u_\pm = e^{\mp h} v_\pm$, where $v_\pm$ is an entire function in the variable $z= x_1 \mp i x_2$. 

\smallskip

\noindent Assume first that $\alpha >0$. Since there is no entire function which vanishes at infinity in all directions as $|x| \to \infty$, it follows from \eqref{h-asymp} that $\NN(P_\pp(A_h)) =\{0\}$. On the other hand, \eqref{h-asymp} also implies  that for $u_ -=e^{h} v_-$ to be bounded, $v_-$ must be an entire function not increasing faster $|x|^{[\alpha]}$, and hence a polynomial of degree not larger than $[\alpha]$. This shows that 
$$
\NN(\Pm(A_h)) = \Big\{  \sum_{j=0}^{[\alpha]} \, c_j \,  (x_1+i x_2)^j \, e^h , \ c_j\in\C \Big\}.
$$
Since
\begin{equation} \label{a-to-ah}
P_\pm(A_h) = e^{-i \chi}\, P_\pm(\A)\, e^{i\chi}\, , \\[2pt]
\end{equation}
see \eqref{gauge-transf}, this proves (i). Assertions (ii) and (iii) follow in the same way.
\end{proof}

\subsection{Zero eigenfunctions of $\Pm(\A)$}
\label{ssec-zero-ef}
Lemma \ref{lem-ah-cash}  implies that $ \NN_e(\Pm(\A)) \neq \{0\}$ if and only if $\alpha>1$. Let us assume that this is the case and 
construct the basis of $ \NN_e(\Pm(\A))$ satisfying conditions \eqref{norm1}. This is done 
in the following way:  

\vskip0.2cm

\noindent We put $\ \psim_1 = d^\m_{1,1} \, e^{h+i \chi} \ $ with a constant $d^\m_{1,1}$ chosen such that  $\LL \psim_1, \, \vm\, \psim_1\RR =-1$. Notice that by  Lemmas \ref{lem-uwu} and \ref{lem-ah-cash} we have  $\LL e^{h+i \chi} , \vm\, e^{h+i \chi}\RR <0$. 
Next we put 
$$
\psim_2(x) = d^\m_{2,1} \, e^{h(x) + i \chi(x)}\, + d^\m_{2,2}\,  \, e^{h(x) + i \chi(x)}\, (x_1+i x_2) ,
$$ 
with $d^\m_{2,1}$ and $d^\m_{2,2}$ chosen so that $\LL \psim_2,  \vm\, \psim_2\RR =-1$ and $\LL \psim_1, \vm\, \psim_2\RR =0$. 
Continuing in this way we obtain a family of linearly independent zero eigenfunctions of $\Pm(\A)$ such that
\begin{equation} \label{ef-hat} 
\psim_j(x)=  \sum_{k=1}^j  d^\m_{j,k} \,  (x_1+i x_2)^{k-1}\, e^{h(x)+i\chi(x)}, \qquad j=1,\dots, n, \qquad \LL \psim_j,  \vm\, \psim_i\RR = -\delta_{i,j}\, .
\end{equation}

\subsection{ Radial magnetic fields} 
\label{ssec-radial}
If $B$ is radial,  i.e. $B(x) = b(|x|)$ for some $b:\R_+\to \R$, then the function $h$ defined by \eqref{superp} is radial as well. Hence, by the Newton's theorem, cf.~the proof of \cite[Thm.~9.7]{LL},
\begin{equation} \label{h-radial}
h(r) = -\log r \!\int_0^r b(t)\, t\, dt -\int_r^\infty b(t)\, t\, \log t\, dt\, .
\end{equation}

We then have

\begin{lem} \label{lem-radial}
Let $B$ satisfy Assumptions of Proposition \ref{prop-gauge}, and suppose moreover that $B$ is radial. Then $\A = A_p=A_h$. Moreover, the zero eigenfunctions of $\Pm(\A)=\Pm(A_h)$ satisfy
\begin{equation} \label{psij-radial}
 \psim_j = d^\m_j (x_1+i x_2)^{j-1}\, e^{h} \, , \qquad d^\m_j := d^\m_{j,j}, \quad j=1,\dots, n-1,
\end{equation}
where $d^\m_{j,k}$ are given in \eqref{ef-hat}.

\end{lem}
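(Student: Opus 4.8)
The plan is to treat the two assertions in turn: first identify the three gauges $\A$, $A_p$ and $A_h$ when $B$ is radial, and then show that for such $B$ the Gram--Schmidt basis built in Section \ref{ssec-zero-ef} is already diagonal, which I will deduce from the rotational invariance of the relevant operators.

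For the gauge identification I would start from \eqref{h-radial}. Differentiating it, the two logarithmic terms cancel and one obtains the clean identity $h'(r) = -\frac{1}{r}\int_0^r b(t)\, t\, dt$. Since $h$ is radial we have $\pd_j h = h'(r)\, x_j/r$, so \eqref{gauge-pauli} gives $A_h = h'(r)\,(\sin\theta,-\cos\theta) = \frac{1}{r}\big(\int_0^r b(t)\,t\,dt\big)(-\sin\theta,\cos\theta)$, which is exactly $A_p$ in \eqref{hat-1}. It then remains to see that $\nabla\chi\equiv 0$. By construction $\chi$ is fixed on $\{|x|>1\}$, up to an additive constant, by the relation $\nabla\chi=\wth A_p-\wth A_h$ coming from \eqref{eq-cons}; but the same computation applied to $h(x)+\alpha\log|x|$, whose radial derivative equals $h'(r)+\alpha/r=\frac{1}{r}\int_r^\infty b(t)\,t\,dt$ (using $\alpha=\int_0^\infty b(t)\,t\,dt$), shows $\wth A_p=\wth A_h$. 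Hence $\nabla\chi\equiv 0$, the constant may be taken to be zero, and $\A=A_h=A_p$ by \eqref{gauge-transf}.

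For the eigenfunctions I set $\zeta_k:=(x_1+ix_2)^{k-1}\, e^{h}=r^{k-1}e^{h(r)}\, e^{i(k-1)\theta}$ for $k=1,\dots,n$; by Lemma \ref{lem-ah-cash} (with $\chi=0$) these span $\NN_e(\Pm(\A))$. The decisive observation is that they are mutually orthogonal with respect to the Hermitian form $(u,v)\mapsto\LL u,-\vm v\RR$ appearing in \eqref{norm1}. Since $B$, $A_0$ and $\A=A_h$ are all radial, the coefficients of $\vm$ in \eqref{w-pauli} are rotationally invariant, so $\vm$ commutes with the angular momentum $L=-i\pd_\theta$; moreover $L\zeta_k=(k-1)\zeta_k$, and the pairings are absolutely convergent because $\vm\zeta_l$ decays like $\x^{-\tau}$ by Corollary \ref{cor-w0} while $\zeta_k\in\Lp^\infty$. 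Using the symmetry of $L$ together with $[\vm,L]=0$ one gets $(k-1)\LL\zeta_k,-\vm\zeta_l\RR=\LL L\zeta_k,-\vm\zeta_l\RR=\LL\zeta_k,-\vm L\zeta_l\RR=(l-1)\LL\zeta_k,-\vm\zeta_l\RR$, whence $\LL\zeta_k,-\vm\zeta_l\RR=0$ for $k\neq l$; the diagonal entries $\LL\zeta_k,-\vm\zeta_k\RR$ are strictly positive by Lemma \ref{lem-uwu}.

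It then remains to run the construction of Section \ref{ssec-zero-ef} and read off that the off-diagonal coefficients vanish. Arguing by induction, suppose $\psim_i=d^\m_{i,i}\,\zeta_i$ for $i<j$. The orthogonality conditions $\LL\psim_i,\vm\psim_j\RR=0$ combined with $\LL\zeta_i,\vm\zeta_k\RR=0$ for $i\neq k$ and $\LL\zeta_i,\vm\zeta_i\RR\neq0$ force $d^\m_{j,i}=0$ for every $i<j$, so that $\psim_j=d^\m_{j,j}\,\zeta_j$, which is \eqref{psij-radial} with $d^\m_j=d^\m_{j,j}$. I expect the orthogonality step of the previous paragraph to be the main point: once one notices that the radial symmetry makes $\vm$ commute with angular momentum and that the $\zeta_k$ sit in distinct angular sectors, the Gram--Schmidt procedure collapses to a mere normalization; the only care required is the well-definedness of the weighted pairings, which is supplied by the rapid decay of $\vm$ from Corollary \ref{cor-w0}.
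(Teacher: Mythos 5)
Your proposal is correct and follows essentially the same route as the paper: the gauge identification $\A=A_p=A_h$ comes from differentiating \eqref{h-radial} and observing that $z(\theta)\equiv\alpha$ forces $\chi=0$, and the diagonalization of the Gram--Schmidt basis comes from the fact that radial symmetry makes the form $\LL\,\cdot\,,\vm\,\cdot\,\RR$ vanish between distinct angular momentum sectors (the paper phrases this by writing $\vm=w(r)\pd_\theta+v(r)$ in polar coordinates, which is the same observation as your commutation $[\vm,L]=0$).
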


\begin{proof} 
By \eqref{hat-1} we have 
$$
A_p(x) = \frac{(-x_2, x_1)}{|x|^2} \, \int_0^{|x|}\, b(z) \, z\, dz =  \frac{ \alpha\, (-x_2, x_1)}{|x|^2} -  \frac{(-x_2, x_1)}{|x|^2} \, \int_{|x|}^\infty\, b(z) \, z\, dz
$$
From \eqref{h-radial} and \eqref{gauge-pauli} it thus follows that $A_h =A_p$. Moreover,  $\upsilon(\theta)=\alpha$, see \eqref{psi}. 
Hence in view of Proposition \ref{prop-gauge} we can take $\chi=0$, and therefore $\A=A_h$. To prove \eqref{psij-radial} we note that, in polar coordinates,
\begin{equation} \label{W-rad}
 \vm (r)= w(r)\,  \partial_\theta + v(r), 
\end{equation}
where the functions $w,v: \R_+\to \R$ depend only on $A_0$ and $B$. Since $h$ is radial, we conclude that 
\begin{equation*} 
\LL  \Phim_j ,\, \vm\, \Phim_k\RR\ = 0 \qquad \text{if} \quad j\neq k, 
\end{equation*}
see equation \eqref{phi-eq}. By construction of $\psim_j$ this implies \eqref{psij-radial}. 
\end{proof}


\section{\bf Resolvent expansion of the reference operator}
\label{sec-resol-h0}
In this section we state the asymptotic expansion of $R_0(\lambda)$, see Propositions \ref{prop-exp} and \ref{prop-exp-int}.  Recall that 
\begin{equation*}  
R_0(\lambda) = \lim_{\eps\to 0+} (H_0 -\lambda -i\eps)^{-1} \,
\end{equation*} 
where the limit is taken in the sense of norm in $\B(0,s;0,-s)$.
By the partial wave decomposition the integral kernel of $R_0(\lambda)$ can be written in the form 
\begin{equation} \label{r0-eq}
R_0(\lambda; x,y) =   \sum_{m\in\Z}\, R_{m,0}(\lambda; r,t )\, e^{im (\theta-\theta')}\, .
\end{equation}
We thus have to expand $R_{m,0}(\lambda)$, the integral operator in $\Lp^2(\R_+)$ with kernel $R_{m,0}(\lambda; r,t )$,  for each $m\in\Z$. The explicit expression for $R_{m,0}(\lambda; r,t )$ was obtained in \cite{ko}. Since we want to find all the singular terms in the expansion of $R_\m(\lambda,\A)$, 
we need to expand  $R_0(\lambda)$, and hence  $R_{m,0}(\lambda)$, up to a remainder term of order $\mathcal{O}(\lambda^2)$. 

 The goal of this section is to determine the coefficients in this expansion.  
Here we limit ourselves to the explicit calculation of the latter for non-integer flux and for $(r,t)\in (1,\infty)^2$. The computations in the remaining cases are postponed to Appendices \ref{sec-app-b} and \ref{sec-app-c}. As we shall see, the structure of these coefficients has important consequences for the calculation 
 of the matrix elements $\LL \vm u, (1+R_0(\lambda) \vm) \, v \RR$, with $u,v\in \NN(\Pm(\A))$, and therefore for deriving the expansion of $R_\m(\lambda,A)$. This will be done in the subsequent sections.

\subsection{Non integer flux}
\label{ssec-G-nint}
We show below that the coefficients in the expansion of $R_{0}(\lambda; r,t )$ corresponding to fractional powers of $\lambda$ less than two are finite rank operators with contributions only from the angular momenta close enough to $\alpha$, namely from those which satisfy $n-1\leq m\leq n+2$, cf.~equation \eqref{rm0-full}.
In order to simplify the notation we define the function $\zeta: \R_+\!\!\setminus\Z\to \C$ by
\begin{equation} \label{zeta} 
 \zeta(t) =  -\frac{4^{t-1}\, \Gamma(t)\, e^{i\pi t}}{\pi \Gamma(1-t)}\, \, ,\qquad  0<t\not\in\Z\, . \\[2pt]
\end{equation}
As indicated above, we assume that $1<r<t$. By \cite[Sec.5]{ko} we then have 
\begin{equation} \label{r-kernel}
R_{m,0}(\lambda; r,t ) = \frac{ \big ( A_m(\lambda)\,  J_{|m-\alpha|}(\sqrt{\lambda}\, r) +B_m(\lambda)\, Y_{|m-\alpha|}(\sqrt{\lambda}\, r)\big)\, \big ( J_{|m-\alpha|}(\sqrt{\lambda}\, t ) +i \,Y_{|m-\alpha|}(\sqrt{\lambda}\, t )\big)}{4 (B_m(\lambda)\ -i A_m(\lambda))}\, ,
\end{equation} 
where 
\begin{equation} \label{eq-ABm}
\begin{aligned} 
A_m(\lambda) & = (v_m(\lambda,1)\, |m-\alpha| -v'_m(\lambda,1))\, Y_{|m-\alpha|}(\sqrt{\lambda}) -\sqrt{\lambda}\ v_m(\lambda,1)\,  Y_{|m-\alpha|+1}(\sqrt{\lambda})  \\
B_m(\lambda) & = (v'_m(\lambda,1) -|m-\alpha| \, v_m(\lambda,1))\, J_{|m-\alpha|}(\sqrt{\lambda}) +\sqrt{\lambda}\ v_m(\lambda,1)\, J_{|m-\alpha|+1}(\sqrt{\lambda}) ,
\end{aligned}
\end{equation} 
and where $v_m(\lambda,r)$ is given by
\begin{align}
v_{m}(\lambda, r) &=  e^{-\sqrt{\alpha^2-\lambda}\, r} \big(2 r\sqrt{\alpha^2-\lambda}\ \big)^{|m|}\, M\Big(\frac 12+|m|-\frac{m\alpha}{ \sqrt{\alpha^2-\lambda}}, 1+2 |m|,\,  2r \sqrt{\alpha^2-\lambda} \, \Big) \label{vm-tot} .
\end{align}
Here we use a slightly different notation with respect to \cite{ko}; the coefficients $A_m(\lambda)$ and $B_m(\lambda)$ are rescaled by factor $\frac 2\pi$. Therefore some of the formulas from \cite{ko}; which we refer to below have been adjusted accordingly.  
 In order to expand $R_{m,0}(\lambda; r,t )$ into a sum of (integer and non-integer) powers of $\lambda$ it is convenient to introduce the function
\begin{equation}  \label{f-nu} 
f_\nu(z)  = \sum_{k=0}^\infty \, \frac{(-1)^k\, z^{k}}{4^k\, k!\ \Gamma(\nu+k+1)} \, .
\end{equation}
Then, by \cite[Sec.~9.1]{as}, 
\begin{equation} \label{eq-jy}
J_\nu(z) = (z/2)^\nu\, f_\nu(z^2), \qquad  Y_\nu(z) = \frac{J_\nu(z) \, \cos(\nu \pi) -J_{-\nu}(z)}{\sin(\nu\pi)}
\end{equation}
hold for all $ \nu\neq -1,-2,\dots$, where $J_\nu$ and $Y_\nu$ denote the 
Bessel functions.  With the notation 
\begin{align*}
\mathscr{P}_{m,1}(z) & =  2 v_m(z,1) f_{-|m-\alpha|-1}(z) + (v_m(z,1)\, |m-\alpha| -v'_m(z,1))\,  f_{-|m-\alpha|}(z) \\[2pt]
\mathscr{P}_{m,2}(z) & =  2 \big(v_m(z,1)\, |m-\alpha| -v'_m(z,1)\big)\,  f_{|m-\alpha|}(z)-  z v_m(z,1) \,  f_{|m-\alpha|+1}(z) 
\end{align*}
we can further rewrite the coefficients $A_m(\lambda)$ and $B_m(\lambda)$  as follows
\begin{equation}
\begin{aligned}  \label{am-1}
A_m(\lambda) & = -\sec(|m-\alpha|\pi)\,  \lambda^{-|m-\alpha|/2}\Big [\, 2^{|m-\alpha|} \, \mathscr{P}_{m,1}(\lambda) -2^{-1-|m-\alpha|} \,\cos(|m-\alpha| \pi) \, \lambda^{|m-\alpha|}   \ \mathscr{P}_{m,2}(\lambda) \Big] \\[2pt]
B_m(\lambda) & = - 2^{-1-|m-\alpha|}\, \mathscr{P}_{m,2}(\lambda)\, \lambda^{|m-\alpha|/2} \, .
\end{aligned} 
\end{equation}
By Lemma \ref{lem-pm12}, 
\begin{equation} \label{pm12-estim}
\frac{\mathscr{P}_{m,2}(\lambda)}{\mathscr{P}_{m,1}(\lambda)} = -2\sigma_m(\lambda) -2^{1+2|m-\alpha|}\, q_m\, \lambda +\mathcal{O}(\epsilon_m \, \lambda^2),  \qquad \epsilon_m = \frac{\Gamma(1-|m-\alpha|)}{\Gamma(1+|m-\alpha|)}\, ,
\end{equation}
where we have abbreviated
\begin{equation} \label{c-m-eq}
\sigma_m(\lambda) : = \frac{v'_m(\lambda,1)-v_m(\lambda,1)\, |m-\alpha| }{\mathscr{P}_{m,1}(\lambda)}\, f_{|m-\alpha|}(\lambda) \qquad \text{and} \qquad q_m = \frac{4^{-|m-\alpha|}\,
v_m(0,1)}{2 \Gamma(2+|m-\alpha|)\, \mathscr{P}_{m,1}(0)} \, .
\end{equation}
An elementary calculation now shows that as $\lambda\to 0$, 
\begin{align} \label{b-ia} 
B_m(\lambda) -iA_m(\lambda) & = \frac{i \, 2^{|m-\alpha|} \, \mathscr{P}_{m,1}(\lambda) }{\sin(|m-\alpha|\pi)}\, \lambda^{-\frac{|m-\alpha|}{2}}\, \Big[1+4^{-|m-\alpha|} \sigma_m(\lambda)\, e^{-i \pi |m-\alpha|}\, \lambda^{|m-\alpha|} +q_m e^{-i \pi |m-\alpha|}\, \lambda^{1+|m-\alpha|} \nonumber \\
& \qquad \qquad \qquad \qquad \qquad \qquad  \quad+\mathcal{O}\big(\epsilon_m\,\lambda^{2+|m-\alpha|}\big) \Big]\, , 
\end{align} 
and
\begin{equation} \label{abm-w} 
\begin{aligned}
\frac{A_m}{B_m -i A_m} (\lambda) & =  \frac{ i+  i \cos(\pi  |m-\alpha|)\big[  4^{-|m-\alpha|} \sigma_m(\lambda)\,  \lambda^{|m-\alpha|}+  q_m \,  \lambda^{|m-\alpha|+1} +
\mathcal{O}(\epsilon_m\, \lambda^{2+|m-\alpha|}) \big]}{1+ 4^{-|m-\alpha|} \sigma_m(\lambda) \, \lambda^{|m-\alpha|}\, e^{-i \pi |m-\alpha|} +q_m \lambda^{1+|m-\alpha|}\, e^{-i \pi |m-\alpha|}+\mathcal{O}(\epsilon_m\,\lambda^{2+|m-\alpha|}) } \\[8pt]
\frac{B_m}{B_m-i A_m}  (\lambda)& =  \frac{ -i \sin(\pi  |m-\alpha|)  \big[ 4^{-|m-\alpha|}  \sigma_m(\lambda)\,  \lambda^{|m-\alpha|}    +  q_m  \, \lambda^{|m-\alpha|+1}+\mathcal{O}(\epsilon_m\, \lambda^{2+|m-\alpha|}) \big]}{1+4^{-|m-\alpha|}\sigma_m(\lambda)   \lambda^{|m-\alpha|}\, e^{-i \pi |m-\alpha|} +q_m \lambda^{1+|m-\alpha|}\, e^{-i \pi |m-\alpha|} +\mathcal{O}(\epsilon_m\,\lambda^{2+|m-\alpha|}) }  
\end{aligned}
\end{equation}
With the above expressions at hand we return to equation \eqref{r-kernel} and expand its right hand side up to order $\mathcal{O}(\lambda^2)$. To do so, it will be convenient to denote
\begin{equation} \label{Q-nu}
Q_\nu(\lambda, r ) =  r^\nu f_\nu(\lambda\, r^2), \qquad \kappa_m = - 4^{-|m-\alpha|}\,  e^{-i \pi |m-\alpha|} \, , \qquad p_m(\lambda) = \kappa_m\, \sigma_m(\lambda)\, ,
\end{equation}
and
 \begin{equation} \label{gm-new}
g_m(\lambda,r) =   Q_{ |m-\alpha|}(\lambda, r) +\sigma_m(\lambda)\, Q_{ -|m-\alpha|}(\lambda, r)\, . \\[3pt]
\end{equation}
With this notation we get
\begin{align} \label{AJ-BY}
 \frac{A_m(\lambda)\,  J_{|m-\alpha|}(\sqrt{\lambda}\, r) +B_m(\lambda)\, Y_{|m-\alpha|}(\sqrt{\lambda}\, r)}{ B_m(\lambda)-i A_m(\lambda)} &=  \frac{i \lambda^{\frac{|m-\alpha|}{2}}}{4 \mathscr{V}_m(\lambda)}\, \Big( 2^{-|m-\alpha|}\, g_m(\lambda,r) +\lambda\, q_m\, Q_{ -|m-\alpha|}(\lambda, r)\Big) \nonumber \\
 & \qquad  + \mathcal{O}(\epsilon_m\, \lambda^{2+|m-\alpha|})\,  \big( J_{|m-\alpha|}(\sqrt{\lambda}\, r ) +Y_{|m-\alpha|}(\sqrt{\lambda}\, r )\big), 
\end{align}
where  
$$
\mathscr{V}_m(\lambda) = 1-p_m(\lambda) \, \lambda^{|m-\alpha|} +q_m \lambda^{1+|m-\alpha|}\, e^{-i \pi |m-\alpha|}\, . \\[4pt]
$$
Notice that the terms proportional to $\lambda^{1+\frac 32|m-\alpha|}$ and to $\lambda^{\frac 32 |m-\alpha|}$  canceled out in \eqref{AJ-BY}. 
By inserting \eqref{AJ-BY} into \eqref{r-kernel} we find that, as operators 
\begin{align} \label{rm0-new} 
R_{m,0}(\lambda) & = \frac{\sec(|m-\alpha|\, \pi)}{4 \mathscr{V}_m(\lambda)}\, \Big [ \Sigma_{m,0}(\lambda)+  \lambda^{|m-\alpha|} \Sigma_{m,1}(\lambda)+\lambda \Sigma_{m,2} (\lambda)
- \lambda^{1+ |m-\alpha|} \Sigma_{m,3}(\lambda) \Big]  
 + \mathcal{O}(\epsilon_m\, \lambda^{2+|m-\alpha|})\,  \widehat{R}_m(\lambda)  \, ,
\end{align}
where  $\Sigma_{m,j}(\lambda)$ are integral operators with kernels 
\begin{equation} \label{Sm-def}
\begin{aligned}
\Sigma_{m,0}(\lambda; r,t ) & :=  g_m(\lambda,r)\, Q_{- |m-\alpha|} (\lambda,t), 
\ \ \ \quad \Sigma_{m,2}(\lambda; r,t )  := q_m\, 4^{|m-\alpha|}\, Q_{- |m-\alpha|} (\lambda,r) \, Q_{- |m-\alpha|} (\lambda,t) \\[3pt]
\Sigma_{m,1}(\lambda; r,t )  & := \kappa_m\,  g_m(\lambda,r)\, Q_{ |m-\alpha|} (\lambda,t) , \quad 
\Sigma_{m,3}(\lambda; r,t )  := q_m \, e^{-i |m-\alpha| \pi} \, Q_{- |m-\alpha|} (\lambda,r)\,  Q_{|m-\alpha|} (\lambda,t) \, ,
\end{aligned}
\end{equation}
and where 
\begin{equation} \label{rm-hat}
 \widehat{R}_m(\lambda; r,t ) =   \big (J_{|m-\alpha|}(\sqrt{\lambda}\, r) +Y_{|m-\alpha|}(\sqrt{\lambda}\, r)\big)\, \big ( J_{|m-\alpha|}(\sqrt{\lambda}\, t ) +i \,Y_{|m-\alpha|}(\sqrt{\lambda}\, t )\big)\, . \\[4pt]
  \end{equation} 
The advantage of equation \eqref{rm0-new} with respect to  \eqref{r-kernel} is that the kernels of $\Sigma_{m,j}(\lambda)$ contain only integer powers of $\lambda$. Hence to single out the terms with fractional powers of $\lambda$ less than two it remains to expand the inverse of $\mathscr{V}_m(\lambda)$. We find
\begin{align}
 \frac{1}{\mathscr{V}_m(\lambda)}  \label{wm-exp}
 & = \frac{1}{1-p_m(\lambda) \, \lambda^{|m-\alpha|}} -\frac{q_m \, e^{-i |m-\alpha| \pi} \, \lambda^{1+|m-\alpha|} }{1-p_m(\lambda) \, \lambda^{|m-\alpha|}}
  -\frac{q_m \, e^{-i |m-\alpha| \pi} \, p_m(\lambda)\, \lambda^{1+2|m-\alpha|} }{(1-p_m(\lambda) \, \lambda^{|m-\alpha|})^2}+  \mathcal{O}\big (q_m^2\, \lambda^{2+2{|m-\alpha|}}\big) .
\end{align}
Now the crucial observation is that when we insert equation \eqref{wm-exp} into \eqref{rm0-new}, then the terms with fractional powers of $\lambda$ can be matched up  to form integral kernels given by multiples of 
$
g_m(\lambda,r)\, g_m(\lambda,t)$  or of  $g_m(\lambda,r) \, Q_{- |m-\alpha|} (\lambda,t) +g_m(\lambda,t) \, Q_{- |m-\alpha|} (\lambda,r).
$
Indeed, we have
\begin{align*}
\frac{\Sigma_{m,0}(\lambda; r,t )  +  \lambda^{|m-\alpha|}\, \Sigma_{m,1}(\lambda; r,t )}{1-p_m(\lambda)\,  \lambda^{|m-\alpha|}} & =   \Sigma_{m,0}(\lambda; r,t )  + \frac{\kappa_m\,   \lambda^{|m-\alpha|}\, g_m(\lambda,r) \, g_m(\lambda, t)}{1-p_m(\lambda)\,  \lambda^{|m-\alpha|}} \, ,
\end{align*}
and similarly,
\begin{align*}
& \frac{\lambda\, \Sigma_{m,2}(\lambda; r,t )}{ 1-p_m(\lambda) \, \lambda^{|m-\alpha|}}  - \frac{q_m \, e^{-i |m-\alpha| \pi}\,  \lambda^{1+|m-\alpha|} }{1-p_m(\lambda) \, \lambda^{|m-\alpha|}}\  \Sigma_{m,0}(\lambda; r,t ) =   \\[4pt]
& \qquad  \qquad\qquad  \ = \lambda\, \Sigma_{m,2}(\lambda; r,t ) - \frac{ q_m\, e^{-i\pi |m-\alpha|}\,    \lambda^{1+|m-\alpha|}}{1-p_m(\lambda)\,  \lambda^{|m-\alpha|}}\, \Big[  g_m(\lambda,r) \, Q_{- |m-\alpha|} (\lambda,t) +g_m(\lambda,t) \, Q_{- |m-\alpha|} (\lambda,r) \Big] \, .
\end{align*}
Summing up we obtain 
\begin{align} \label{rm0-full} 
R_{m,0}(\lambda) & =  \frac{ \Sigma_{m,0}(\lambda) + \lambda\,  \Sigma_{m,2}(\lambda) }{4 \sin(|m-\alpha|\, \pi)} +   \frac{\kappa_m\,   \lambda^{|m-\alpha|}\, g_m(\lambda,r) \, g_m(\lambda, t)}{4 \sin(|m-\alpha|\, \pi)\big(1-p_m(\lambda)\,  \lambda^{|m-\alpha|}\big)} \, \left[1- \frac{ q_m\, e^{-i\pi |m-\alpha|}\,  \lambda^{1+|m-\alpha|}}{1-p_m(\lambda)\,  \lambda^{|m-\alpha|} } \right ]
\\[4pt]
  & \quad - \frac{ q_m\, e^{-i\pi |m-\alpha|}\,    \lambda^{1+|m-\alpha|}}{4 \sin(|m-\alpha|\, \pi)\big(1-p_m(\lambda)\,  \lambda^{|m-\alpha|}\big)}\, \Big[  g_m(\lambda,r) \, Q_{- |m-\alpha|} (\lambda,t) +g_m(\lambda,t) \, Q_{- |m-\alpha|} (\lambda,r) \Big] + T_m(\lambda)  , \nonumber
\end{align}
where the integral kernel of the remainder term satisfies
\begin{align} \label{tm}
 T_m(\lambda;r,t)  & =  \mathcal{O}\big (q_m^2\, \lambda^{2+{|m-\alpha|}}\big)\, \sum_{j=0}^3\,  \Sigma_{m,j}(\lambda;r,t)  + \mathcal{O}(\epsilon_m\, \lambda^{2+|m-\alpha|})\,  \widehat{R}_m(\lambda;r,t)   \, .
\end{align}
By Lemma \ref{lem-Tm}, the operator $T_m(\lambda)$ is, in suitable topology,  of order $\mathcal{O}(\lambda^2)$. Note also that $\Sigma_{m,j}(\lambda)$ an d $p_m(\lambda)$ are power series in $\lambda$. Therefore the only contributions with fractional powers of $\lambda$  come from the second and third term on the right hand side in  \eqref{rm0-full}.  In order to calculate these contributions  we note that 
\begin{equation} \label{g-z}
\Gamma(1+z) = z \Gamma(z), \qquad \Gamma(1-z) \Gamma(z) = \frac{\pi}{\sin(\pi z)}\, ,
\end{equation}
which implies 
\begin{equation} \label{sigma-m}
\sigma_m(0) = -\delta_m\, \frac{\Gamma(1-|m-\alpha|)}{\Gamma(1+|m-\alpha|)}\  .
\end{equation}
Hence in view of \eqref{f-nu} and  \eqref{Q-nu}, 
\begin{equation} \label{g_m(0,r)}
g_m(0,r) = \frac{r^{|m-\alpha|} -\delta_m\, r^{-|m-\alpha|}}{\Gamma(1+|m-\alpha|)} =  \frac{ g_m(r)\, }{\Gamma(1+|m-\alpha|)} \ ,
\end{equation}
and consequently 
$$
  \frac{ \Sigma_{m,0}(0; r,t )  }{4 \sin(|m-\alpha|\, \pi)} = \frac{g_m(r)\, t^{- |m-\alpha|}}{4 \pi |m-\alpha|} = G_{m,0}(r,t)\, . \\[3pt]
$$ 
So for $1<r<t$ we can read the coefficients related to low order fractional powers of $\lambda$ in the expansion of $R_{m,0}(\lambda;r,t)$ directly from the right hand side of \eqref{rm0-full}. 
For example, from \eqref{zeta},  \eqref{g-z} and \eqref{g_m(0,r)} we deduce that the term proportional to $\lambda^\alp$ in \eqref{rm0-full} corresponds to $m=n$ and reads
$$
  \frac{\kappa_n\,   \lambda^{\alp}\, g_n(0,r) \, g_n(0, t)}{4 \sin(\alp \pi)\big(1-p_n(\lambda)\,  \lambda^\alp\big)} =  \frac{\lambda^{\alp}}{1-p_n(\lambda)\,  \lambda^{\alp} }  \ \frac{ g_n(r)\, g_n(t)}{(4\pi \alp)^2\, \zeta(\alp)}.  \\[2pt]
$$ 
In the same way we find the coefficients of all the other sub-quadratic powers of $\lambda$ in \eqref{rm0-full}.
It then remains to extend equation \eqref{rm0-full} to $r,\, t \leq 1$,  and control the remainder terms. This is done 
in  Appendix \ref{sec-app-b}. 
We  arrive at

\begin{prop} \label{prop-exp} 
Assume that $0<\alpha\not\in\Z$. Let $G_3$ be given by \eqref{g3-kernel}.
Then, as $\lambda\to 0$, 
\begin{align}
 \label{B0-eq-1}
R_0(\lambda) & = G_0 + \lambda^{\alp} G_1^\lambda + \lambda^{1-\alp} G_2^\lambda 
 + \lambda G_3+   \lambda^{1+\alp} G_4^\lambda +  \lambda^{1+2\alp}G_5^\lambda  +\lambda^{2-\alp} G_6^\lambda +\lambda^{3-2\alp} G_7^\lambda + \mathcal{O}(\lambda^2)\, ,
\end{align}
holds in $\B(-1,s;1,-s)$ for all $s>3$. Here $G_j^\lambda$ denote finite rank integral operators with kernels 
\begin{align}
G_1^\lambda(r,t;\theta, \theta') & =\frac{e^{i n (\theta-\theta')} }{1-p_n(\lambda)\,  \lambda^{\alp} }  \ \frac{ g_n(r)\, g_n(t)}{(4\pi \alp)^2\, \zeta(\alp)}   \label{G1} \\[6pt]
G_2^\lambda(r,t;\theta, \theta') & = \frac{e^{i (n+1) (\theta-\theta')}  }{1-p_{n+1}(\lambda)\,  \lambda^{1-\alp} }   \     \frac{g_{n+1}(r) g_{n+1}(t)}{16\pi^2(1- \alp)^2\, \zeta(1-\alp)} \label{G2} \\[6pt]
G_4^\lambda(r,t, \theta, \theta') &= \frac{ e^{i(n-1)(\theta-\theta')}\, g_{n-1}(r) \, g_{n-1}( t)}{16\pi^2(1+\alp)^2\, \zeta(1+\alp)}  - \frac{e^{-i\pi\alp} \, e^{in(\theta-\theta')}}{1-p_n(0)\,  \lambda^{\alp}}\, \big[  g_n(r) \, \varrho_n(t) +g_n(t) \, \varrho_n(r)\big]   \label{g4}  \\[5pt]
G_5^\lambda( r,t, \theta, \theta') &=  \frac{ q_n\, e^{-i\pi \alp}\, e^{in(\theta-\theta')}}{\big(1-p_n(0)\,  \lambda^{\alp}\big)^2}\  \frac{ g_n(r)\, g_n(t)}{(4\pi \alp)^2\, \zeta(\alp)}  \label{g5} 
\end{align} 
and
\begin{align}
G_6^\lambda( r,t, \theta, \theta') &=\frac{ e^{i\pi\alp}\, e^{i(n+1)(\theta-\theta')} }{1-p_{n+1}(0)\,  \lambda^{1-\alp}}\,\big [ g_{n+1}(r) \, \varrho_{n+1}(t) +g_{n+1}(t) \, \varrho_{n+1}(r) \big]- 
 \frac{ e^{i(n+2)(\theta-\theta')}\, g_{n+2}(r) \, g_{n+2}( t)}{16\pi^2(2-\alp)^2\, \zeta(2-\alp)} 
\label{g6} \\[6pt]
 G_7^\lambda( r,t, \theta, \theta') &=  \frac{q_{n+1}\, e^{i\pi\alp}\, e^{i(n+1)(\theta-\theta')}  }{\big(1-p_{n+1}(0)\,  \lambda^{1-\alp}\big)^2}\    \frac{g_{n+1}(r) g_{n+1}(t)}{16\pi^2(1- \alp)^2\, \zeta(1-\alp)}   \label{g7}\, . 
\end{align}
The function $\varrho_m$ is defined in  \eqref{rho-q}. Recall also that $q_m\in\R$ for all $m\in\Z$, cf.~\eqref{c-m-eq}.
\end{prop}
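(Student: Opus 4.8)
The plan is to establish the expansion \eqref{B0-eq-1} channel by channel and then sum over the angular momentum $m$, taking \eqref{rm0-full} as the starting point. That identity already isolates, for each fixed $m$ and in the region $1<r<t$, precisely the sub-quadratic fractional powers of $\lambda$ carried by the channel, namely $\lambda^{|m-\alpha|}$, $\lambda^{1+|m-\alpha|}$ and (through the geometric expansion \eqref{wm-exp} of $\mathscr{V}_m(\lambda)^{-1}$) $\lambda^{1+2|m-\alpha|}$. The first observation is purely arithmetic: writing $\alpha=n+\alp$ with $0<\alp<1$, one has $|m-\alpha|<2$ only for $m\in\{n-1,n,n+1,n+2\}$, where $|m-\alpha|$ equals $1+\alp,\ \alp,\ 1-\alp,\ 2-\alp$ respectively. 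Hence only these four channels can produce powers $\lambda^\beta$ with $\beta<2$, and I would split $R_0(\lambda)=\sum_{|m-\alpha|<2}R_{m,0}(\lambda)\,e^{im(\theta-\theta')}+\sum_{|m-\alpha|>2}R_{m,0}(\lambda)\,e^{im(\theta-\theta')}$, the second sum being destined for the remainder.

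For the four distinguished channels I would read the coefficients directly off \eqref{rm0-full}, converting $g_m(0,r)$ to $g_m(r)$ via \eqref{g_m(0,r)} and rewriting the prefactors $\kappa_m/(4\sin(|m-\alpha|\pi))$ and $q_m e^{-i\pi|m-\alpha|}/(4\sin(|m-\alpha|\pi))$ into the stated form with the help of \eqref{zeta} and the reflection and recursion identities \eqref{g-z}. Concretely, the power $\lambda^{\alp}$ comes solely from $m=n$ and yields $G_1^\lambda$, while $\lambda^{1-\alp}$ comes solely from $m=n+1$ and yields $G_2^\lambda$. The power $\lambda^{1+\alp}$ collects two contributions, the leading $\kappa_{n-1}g_{n-1}g_{n-1}$ term of channel $m=n-1$ and the $q_n$ cross-term of channel $m=n$, which assemble into $G_4^\lambda$; symmetrically $\lambda^{2-\alp}$ collects channel $m=n+2$ and the $q_{n+1}$ cross-term of $m=n+1$, giving $G_6^\lambda$. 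Finally $\lambda^{1+2\alp}$ and $\lambda^{3-2\alp}$ arise from the second-order terms of \eqref{wm-exp} in the channels $m=n$ and $m=n+1$, producing $G_5^\lambda$ and $G_7^\lambda$. The integer-power contributions, summed over all channels, assemble into $G_0$ at order $\lambda^0$ and into $\lambda\,G_3$ at order $\lambda^1$, with $G_3$ computed in Appendix \ref{sec-app-b}.

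Two tasks then remain, and the second is the main obstacle. First, \eqref{rm0-full} was derived only for $1<r<t$, so I would extend it to the regions where $r$ or $t$ lies in $(0,1]$ by repeating the computation with the interior representation of the kernel, i.e.\ replacing the exterior solutions $r^{\pm|m-\alpha|}$ by the confluent hypergeometric solutions $v_m(\lambda,\cdot)$ and $u_m(\lambda,\cdot)$; this is the content of Appendix \ref{sec-app-b} and is what produces the functions $\varrho_m$ appearing in the kernels of $G_4^\lambda$ and $G_6^\lambda$. Second, I must show that the tail $\sum_{|m-\alpha|>2}R_{m,0}(\lambda)\,e^{im(\theta-\theta')}$ together with the per-channel remainders $T_m(\lambda)$ of \eqref{tm} defines an operator that is genuinely $\mathcal{O}(\lambda^2)$ in $\B(-1,s;1,-s)$ for $s>3$. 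The hard part will be the uniformity in $m$: one needs the pointwise bounds \eqref{gm0-ub}, \eqref{km-ub} together with the decay of $\epsilon_m=\Gamma(1-|m-\alpha|)/\Gamma(1+|m-\alpha|)$ and of $q_m$ to be summable against the weights $\x^{-s}$, while the per-channel statement $T_m(\lambda)=\mathcal{O}(\lambda^2)$ is precisely Lemma \ref{lem-Tm}. The threshold $s>3$ is forced here rather than at any earlier stage, since the slowest-decaying kernel factor is $g_{n+2}(r)\sim r^{2-\alp}$, so that membership $g_{n+2}\in\Lp^{2,-s}$ requires $s>3-\alp$, hence $s>3$ uniformly in $\alp\in(0,1)$. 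Assembling the four distinguished channels with their explicit coefficients and absorbing everything else into an $\mathcal{O}(\lambda^2)$ error in $\B(-1,s;1,-s)$ then yields \eqref{B0-eq-1}.
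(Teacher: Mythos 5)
Your proposal follows essentially the same route as the paper's own proof: isolate the four channels $m\in\{n-1,n,n+1,n+2\}$ with $|m-\alpha|<2$, read the fractional-power coefficients off \eqref{rm0-full} (with exactly the pairing of contributions into $G_4^\lambda$ and $G_6^\lambda$, and the second-order terms of \eqref{wm-exp} giving $G_5^\lambda$, $G_7^\lambda$ that the paper uses), extend to the interior regions via the confluent hypergeometric representations, and control the tail uniformly in $m$ through Hilbert--Schmidt bounds as in Lemmas \ref{lem-Tm}--\ref{lem-Zm-2}. The only step you omit is the passage from the weighted $\Lp^2$ (Hilbert--Schmidt) topology $\B(0,s;0,-s)$, which is all your kernel estimates deliver, to the stated space $\B(-1,s;1,-s)$; the paper does this at the end of Appendix \ref{sec-app-b} via the identity $(1+H_0)R_0(\lambda)=1+(1+\lambda)R_0(\lambda)$ and the equivalence of the $H_0$-Sobolev and ordinary Sobolev norms (since $A_0$ is bounded), and a complete write-up should include it. A minor imprecision: the functions $\varrho_m$ are not produced solely by the interior extension --- their $r>1$ part already arises in the exterior region from the $\lambda$-expansion of $g_m(\lambda,\cdot)$ (the $j_m$ piece) combined with the $Q_{-|m-\alpha|}$ cross-term (the $h_m$ piece), cf.~\eqref{rho-q}.
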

The proof of Proposition  \ref{prop-exp} is given in Appendix \ref{sec-app-b}.  

\begin{rem}
Given $\alp\in (0,1)$ we notice that not all of the terms on the right hand side of \eqref{B0-eq-1} are of order less than $\mathcal{O}(\lambda^2)$. Moreover, it will be shown in Section  \ref{ssec-E-matrix} that only the operators $G_1^\lambda, G_3$ and $G_4^\lambda$ contribute to the singular part of  $R_\m(\lambda,\A)$, see also Corollary \ref{cor-G12} below.
\end{rem}

\begin{rem} \label{rem-g3}
The operator $G_3$ is self-adjoint. To see this it suffices to observe that if  $\lambda= -z$ with $z>0$,  then all the other terms in \eqref{B0-eq-1} are self-adjoint operators multiplied by real numbers. The latter follows from the definitions of $\zeta$ and $p_m$. Since $R_0(-z)$ is self-adjoint, $G_3$ must be self-adjoint as well. 
\end{rem} 

\smallskip

\begin{cor} \label{cor-G12}
Assume that $|Z|  + |U| \lesssim \x^{-\tau}$ holds  for some $\tau> 4$. Then 
\begin{equation}  \label{G12W}
G_1^\lambda\, \vm \psi  = G^\lambda_2\,  \vm \psi =G^\lambda_5\,  \vm \psi = G^\lambda_7\,  \vm \psi =0 
\end{equation}
for all $\psi\in \NN_e(\Pm(\A))$.
\end{cor}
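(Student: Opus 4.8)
The plan is to reduce the four claimed identities to the vanishing of just two scalar pairings, and then to extract those pairings from the eigenfunction structure of $\psi$ supplied by Corollary \ref{cor-null}.

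First I would record the rank-one structure of the four operators. The kernels \eqref{G1} and \eqref{g5} of $G_1^\lambda$ and $G_5^\lambda$ are both of the product form $c(\lambda)\,g_n(r)\,g_n(t)\,e^{in(\theta-\theta')}$ with a scalar prefactor $c(\lambda)$, while the kernels \eqref{G2} and \eqref{g7} of $G_2^\lambda$ and $G_7^\lambda$ are of the form $c'(\lambda)\,g_{n+1}(r)\,g_{n+1}(t)\,e^{i(n+1)(\theta-\theta')}$. Since $g_n, g_{n+1}$ are real, each of these operators factors through one of the functionals $f\mapsto\big\langle f, g_n\,e^{in(\cdot)}\big\rangle$ or $f\mapsto\big\langle f, g_{n+1}\,e^{i(n+1)(\cdot)}\big\rangle$; for instance $G_1^\lambda f = c(\lambda)\,\big\langle f, g_n\,e^{in(\cdot)}\big\rangle\, g_n\,e^{in(\cdot)}$. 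Hence it suffices to show that for every $\psi\in\NN_e(\Pm(\A))$,
\[
\big\langle \vm\psi, g_n\,e^{in(\cdot)}\big\rangle = 0 \qquad\text{and}\qquad \big\langle \vm\psi, g_{n+1}\,e^{i(n+1)(\cdot)}\big\rangle = 0,
\]
after which the choice $f=\vm\psi$ yields \eqref{G12W}. The case $\psi=0$ is trivial, so I would assume $\psi\neq 0$.

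Next I would invoke Corollary \ref{cor-null}. The hypothesis $|Z|+|U|\lesssim\x^{-\tau}$ with $\tau>4$ says precisely that $W=\vm$, with coefficients read off from \eqref{w-pauli}, satisfies Assumption \ref{ass-W} with $\tau>4$. Since $\NN_e(\Pm(\A)) = \NN(H_0+\vm)\cap\Lp^2(\R^2)$, the nonzero element $\psi$ admits the decomposition $\psi = u_0+u_1+\tilde u$ of Corollary \ref{cor-null}, whose leading term reads, for $r>1$,
\[
u_0(r,\theta) = -\frac{e^{in\theta}\,r^{-\alp}}{4\pi\alp}\,\big\langle \vm\psi, g_n\,e^{in(\cdot)}\big\rangle - \frac{e^{i(n+1)\theta}\,r^{\alp-1}}{4\pi(1-\alp)}\,\big\langle \vm\psi, g_{n+1}\,e^{i(n+1)(\cdot)}\big\rangle.
\]
Thus the two pairings I must kill are, up to nonzero constants, exactly the coefficients of the two slowly decaying modes in $u_0$.

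Finally I would run a square-integrability argument. By Corollary \ref{cor-null} one has $u_1\in\Lp^2(\R^2)\cap\Lp^\infty(\R^2)$ and $\tilde u\in\Lp^1(\R^2)\cap\Lp^\infty(\R^2)\subset\Lp^2(\R^2)$, while $\psi\in\Lp^2(\R^2)$ since it is an eigenfunction; hence $u_0=\psi-u_1-\tilde u\in\Lp^2(\R^2)$. For $r>1$, however, $u_0$ is a superposition of the two modes $r^{-\alp}e^{in\theta}$ and $r^{\alp-1}e^{i(n+1)\theta}$, which are orthogonal in the angular variable; since $\int_1^\infty r^{1-2\alp}\,dr=\infty$ and $\int_1^\infty r^{2\alp-1}\,dr=\infty$ for $\alp\in(0,1)$, neither mode is square-integrable on $\{|x|>1\}$. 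Consequently $\int_{|x|>1}|u_0|^2\,dx<\infty$ forces both coefficients, i.e.\ both pairings, to vanish. The crux is this final observation: membership of $\psi$ in $\Lp^2$ is incompatible with a nonzero $r^{-\alp}$ or $r^{\alp-1}$ tail, and — once the rank-one structure of the four operators is recognized — it is precisely these tails whose coefficients are $\big\langle\vm\psi,g_n e^{in(\cdot)}\big\rangle$ and $\big\langle\vm\psi,g_{n+1}e^{i(n+1)(\cdot)}\big\rangle$. The only genuine care needed is in confirming that $u_1$ and $\tilde u$ are square-integrable, so that the whole obstruction is carried by $u_0$; this again rests on $\alp\in(0,1)$.
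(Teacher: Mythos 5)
Your proof is correct and follows essentially the same route as the paper, whose one-line argument invokes exactly the same two ingredients: the rank-one structure of $G_1^\lambda, G_2^\lambda, G_5^\lambda, G_7^\lambda$ visible from \eqref{G1}, \eqref{G2}, \eqref{g5}, \eqref{g7}, and the decomposition of Corollary \ref{cor-null} identifying the pairings $\big\langle \vm\psi, g_n e^{in(\cdot)}\big\rangle$, $\big\langle \vm\psi, g_{n+1} e^{i(n+1)(\cdot)}\big\rangle$ as the coefficients of the non-$\Lp^2$ tails $r^{-\alp}$ and $r^{\alp-1}$. Your square-integrability argument for why these coefficients vanish when $\psi\in\NN_e(\Pm(\A))$ is a clean way of filling in the step the paper leaves implicit (and which it carries out more explicitly in \eqref{w0gm} by matching against the Aharonov--Casher form of the eigenfunctions).
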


\begin{proof}
The claim follows from Corollary \ref{cor-null} and equations \eqref{G1}, \eqref{G2}, \eqref{g5}, and \eqref{g7}.
\end{proof}

\subsection{Integer flux} In  case of integer flux the expansion of $R_0(\lambda)$ contains no fractional powers of $\lambda$. Instead, there are logarithmic correction factors. The latter arise again from contributions relative to angular momenta $m$ close enough to $\alpha$, more precisely for $m\in \{\alpha-2, \dots, \alpha+2\}$. In order to simplify the notation we will use the shorthands 
\begin{equation} \label{short-pm}
g_{\pm} = g_{\alpha\pm 1}, \quad \delta_{\pm}= \delta_{\alpha \pm 1}, \quad \gamma_{\pm} = \gamma_{\alpha \pm 1} .
\end{equation}

\begin{prop} \label{prop-exp-int} 
Assume that $0<\alpha\in\Z$ and let $s>3$. Then 
\begin{equation} \label{B0-eq-2-int}
R_0(\lambda) = \G_0 + \Big(\frac{1}{ \log\lambda -z_\alpha}\, +\frac{E_4\, \lambda}{(\log\lambda -z_\alpha)^2}\, \Big) \, \G_1
 -  \frac{\lambda(\log\lambda-i\pi)}{64 \pi} \ \G^\lambda_2\,
 + \lambda\, \G_3+ \frac{ \lambda\, \G_4}{\log\lambda -z_\alpha}  -\frac{\lambda^2( \log\lambda-i\pi)}{64\pi} \ \G_5 
+ \mathcal{O}(\lambda^2)
\end{equation}
holds in $\B(-1,s;1,-s)$ as $\lambda\to 0$. Here $z_\alpha\in\C$ and $E_4\in \R$ are  given by equations \eqref{z-alpha} and \eqref{e-4}, the operator
$\G_3$ is defined in \eqref{g3-kernel-int}, and the remaining coefficients in \eqref{B0-eq-2-int} are finite rank operators with integral kernels 
\begin{align}
\G_1(r,t;\theta, \theta') & = \frac 1\pi\, \g_{\alpha}(r)\, \g_{\alpha}(t)\,  e^{i \alpha (\theta-\theta')}  \label{G1-int}  \\
\G_2^\lambda(r,t;\theta, \theta') & = g_{+}(r)\, g_{+}(t) \, e^{i (\alpha+1) (\theta-\theta')} \big(4+\delta_+ \lambda (\log\lambda-i\pi)\big)+ g_{-}(r)\, g_{-}(t) \, e^{i (\alpha-1) (\theta-\theta')}  \big(4+\delta_-\lambda( \log\lambda -i\pi)\big) \label{G2-int}  \\[3pt]
\G_4(r,t, \theta, \theta') &=  \frac 1\pi \ e^{i \alpha (\theta-\theta')} \,  \big [\g_\alpha (t)\, \mathfrak{j}_\alpha(r) +\g_\alpha (r)\, \mathfrak{j}_\alpha(t) \big]   \label{g4-int} \\[4pt]
\G_5(r,t, \theta, \theta') &=  e^{i (\alpha+1) (\theta-\theta')}   \big[ g_{+}(r)\, \mathfrak{j}_+(t)  + g_{+}(t)\, \mathfrak{j}_+(r)\big]  +
e^{i (\alpha-1) (\theta-\theta')} \big[ g_{-}(r)\, \mathfrak{j}_-(t)  + g_{-}(t)\, \mathfrak{j}_-(r)\big] \nonumber \\
& \quad +\frac 14 \Big[  g_{\alpha+2}(r)\, g_{\alpha+2}(t) \, e^{i (\alpha+2) (\theta-\theta')} +  g_{\alpha-2}(r)\, g_{\alpha-2}(t) \, e^{i (\alpha-2) (\theta-\theta')} \Big]
\label{g5-int} ,
\end{align}
where the functions $\mathfrak{j}_\pm$ and $\mathfrak{j}_\alpha$ are defined in equations \eqref{j-frak-pm} and \eqref{j-frak-alpha}.
\end{prop}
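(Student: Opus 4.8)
The plan is to follow, step for step, the strategy used in the proof of Proposition \ref{prop-exp}, working channel by channel in the partial wave decomposition \eqref{r0-eq} and expanding each $R_{m,0}(\lambda)$ from its explicit form \eqref{r-kernel}, but replacing the non-integer Bessel machinery of Section \ref{ssec-G-nint} by its integer-order counterpart. The essential new feature is that for $\alpha\in\Z$ the orders $\nu=|m-\alpha|$ of the Bessel functions entering \eqref{r-kernel} become non-negative integers for $m$ close to $\alpha$, so that the representation \eqref{eq-jy} of $Y_\nu$ degenerates (the factor $\sin(\nu\pi)$ vanishes) and must be replaced by the standard expansion of $Y_n$ for integer $n$, whose leading contribution is proportional to $\log(z/2)\,J_n(z)$. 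It is precisely these logarithms, evaluated at $z=\sqrt{\lambda}\,r$ and $z=\sqrt{\lambda}\,t$ with $\log(\sqrt{\lambda}\,r/2)=\tfrac12\log\lambda+\log(r/2)$, that produce the factors $\log\lambda$ appearing in \eqref{B0-eq-2-int}.

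First I would dispose of the remote channels. By the same estimates used to control the far angular momenta in the proof of Proposition \ref{prop-exp} (carried out in Appendix \ref{sec-app-b}), the total contribution of all $m$ with $|m-\alpha|\geq 3$ to $R_0(\lambda)$ is $\mathcal{O}(\lambda^2)$ in $\B(-1,s;1,-s)$ for $s>3$, since the leading power of each such channel is $\lambda^{|m-\alpha|}$ up to logarithms and $\lambda^{3}\log\lambda=o(\lambda^2)$. This reduces the problem to the five channels $m\in\{\alpha-2,\alpha-1,\alpha,\alpha+1,\alpha+2\}$, which I would treat separately according to the value of $\nu=|m-\alpha|\in\{0,1,2\}$, in each case forming the coefficients $A_m,B_m$ as in \eqref{eq-ABm} from the interior solution $v_m(\lambda,r)$ of \eqref{vm-tot} and the (now integer-order) Bessel functions.

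The heart of the matter is the resonant channel $m=\alpha$, where $\nu=0$. Here I would insert $J_0(\sqrt{\lambda}\,r)=1+\mathcal{O}(\lambda)$ together with $Y_0(\sqrt{\lambda}\,r)=\tfrac{2}{\pi}\big(\tfrac12\log\lambda+\log(r/2)+\gamma\big)+\mathcal{O}(\lambda\log\lambda)$, $\gamma$ being Euler's constant, into \eqref{r-kernel}. The analogue of the denominator $B_m-iA_m$ then carries a factor $\log\lambda-z_\alpha$ in place of the power $\lambda^{-\nu/2}$ of the non-integer case, and expanding its reciprocal in the spirit of \eqref{wm-exp} yields the bracket $\tfrac{1}{\log\lambda-z_\alpha}+\tfrac{E_4\lambda}{(\log\lambda-z_\alpha)^2}$ multiplying $\G_1$; the constant $z_\alpha$ of \eqref{z-alpha} is read off from the $\lambda$-independent part of the expansion and $E_4$ of \eqref{e-4} from its first $\lambda$-correction. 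Using the exterior form of $\g_\alpha$ in \eqref{g-alpha} together with the limiting form of $G_{\alpha,0}$ in \eqref{G-alpha}, the leading kernel assembles into $\tfrac1\pi\,\g_\alpha(r)\g_\alpha(t)\,e^{i\alpha(\theta-\theta')}$, which is $\G_1$ of \eqref{G1-int}, while the $\mathcal{O}(\lambda)$ correction within this same channel produces the cross term $\tfrac{\lambda\,\G_4}{\log\lambda-z_\alpha}$, with $\mathfrak{j}_\alpha$ recording the subleading radial profile.

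The channels $m=\alpha\pm1$ ($\nu=1$) and $m=\alpha\pm2$ ($\nu=2$) are handled by the same integer-order expansion of $Y_\nu$, the logarithmic terms now entering one and two powers of $\lambda$ higher, respectively. The $\nu=1$ channels supply the leading $\lambda(\log\lambda-i\pi)$ term through the kernels $g_{\pm}(r)g_{\pm}(t)$ of \eqref{short-pm}, giving $\G_2^\lambda$ of \eqref{G2-int} together with the $\mathfrak{j}_\pm$ part of $\G_5$, whereas the $\nu=2$ channels contribute only at order $\lambda^2\log\lambda$ and furnish the $g_{\alpha\pm2}$ part of $\G_5$ in \eqref{g5-int}. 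Collecting the regular, integer-power parts of all channels reproduces $\G_0$ and the self-adjoint operator $\lambda\,\G_3$, the residual pieces being absorbed into $\mathcal{O}(\lambda^2)$. The computation so far presumes $1<r<t$; extending the kernels to $r,t\leq1$ and verifying that the assembled remainder is genuinely $\mathcal{O}(\lambda^2)$ in $\B(-1,s;1,-s)$ would be carried out exactly as in Appendix \ref{sec-app-b}, via the interior/exterior matching through the coefficients \eqref{abm}, \eqref{delta-m} and the Wronskian identity \eqref{wronski}. The main obstacle I anticipate is the resonant channel $m=\alpha$: pinning down the precise constant $z_\alpha$ and the correction $E_4$ requires tracking simultaneously the $\gamma$-shift from $Y_0$, the normalization of $v_\alpha(\lambda,r)$, and the interior/exterior matching, so that the $\tfrac{1}{\log\lambda}$ singularity and its $\lambda$-correction are fixed with the correct coefficients and not merely to leading order.
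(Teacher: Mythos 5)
Your overall route is the one the paper itself takes in Appendix \ref{sec-app-c}: reduce to the partial-wave kernels \eqref{r-kernel}, replace the non-integer identity \eqref{eq-jy} for $Y_\nu$ by the integer-order expansions \eqref{yn-eq}--\eqref{jy1}, track the cancellation of the logarithms in $A_mJ_\nu+B_mY_\nu$ so that only the denominator $B_m-iA_m$ (for $\nu=0$) and the combination $J_\nu+iY_\nu$ produce the $\log\lambda$ factors, and then extend from $1<r<t$ to the other two regions by the interior/exterior matching. Your channel-by-channel bookkeeping of where $\G_1,\G_2^\lambda,\G_4,\G_5$ and the constants $z_\alpha$, $E_4$ come from agrees with the paper's computation.

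There is, however, one genuine error in your reduction step. You claim that the total contribution of the channels $|m-\alpha|\geq 3$ to $R_0(\lambda)$ is $\mathcal{O}(\lambda^2)$ because ``the leading power of each such channel is $\lambda^{|m-\alpha|}$ up to logarithms.'' That is not so: the leading term of each such channel is $G_{m,0}(r,t)$, which is $\mathcal{O}(1)$, followed by a genuine linear term $\lambda\,\G_{m,3}(r,t)$; only the \emph{logarithmic corrections} in these channels are of order $\lambda^{|m-\alpha|}\log\lambda=o(\lambda^2)$. The far channels therefore cannot be discarded --- they supply all but five of the angular components of $\G_0$ (see \eqref{G0-a}) and of $\G_3$ (see \eqref{g3-kernel-int}, built from \eqref{hm-1}, \eqref{hm-2}, \eqref{hm-3}), and dropping them would leave you with a rank-deficient $\G_0$ that is not the inverse of $H_0$. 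What the reduction actually buys is only that the \emph{singular} (logarithmic) part of the expansion is confined to $|m-\alpha|\leq 2$. Consequently you still have to expand $R_{m,0}(\lambda)=G_{m,0}+\lambda\,\G_{m,3}+\mathcal{R}_{m,0}(\lambda)$ for every $m$ with $|m-\alpha|\geq 3$ and prove that $\sup_m\|\x^{-s}\mathcal{R}_{m,0}(\lambda)\x^{-s}\|_{\hs}=\mathcal{O}(\lambda^2)$ \emph{uniformly in} $m$, together with the uniform Hilbert--Schmidt bounds on $G_{m,0}$ and $\G_{m,3}$ that make the sum over $m$ converge in $\B(-1,s;1,-s)$; this is exactly what the estimates of Lemma \ref{lem-kl} and bounds such as \eqref{hs-gm3-1}, \eqref{v-integ-estim} are for, and it is the part of the argument your proposal omits rather than merely defers.
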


The proof of Proposition  \ref{prop-exp-int} is given in Appendix \ref{sec-app-c}.  

\begin{rem} \label{rem-hs}
The proofs of Propositions  \ref{prop-exp} and  \ref{prop-exp-int} given in Appendices \ref{sec-app-b} and \ref{sec-app-c} show that the operators $G_3, \G_3$ as well as the remainder terms in equations \eqref{B0-eq-1} and \eqref{B0-eq-2-int} are Hilbert-Schmidt as operators $\Lp^{2,s}(\R^2) \to \Lp^{2,-s}(\R^2)$ for $s>3$. Hence $R_0(\lambda)$ is also  Hilbert-Schmidt as operator $\Lp^{2,s}(\R^2) \to \Lp^{2,-s}(\R^2)$.
\end{rem}

\begin{rem} \label{rem-G3-int}
Recall that  $R_0(\lambda)$ is defined as a (weighted) limit of $R_0(\lambda +i\eps)$ for $\eps \to 0+$. Hence if $\lambda = -z$ with $z>0$, then $\log \lambda = \log z +i\pi$. Since $z_\alpha = i\pi +\R$, see equation \eqref{z-alpha}, the same reasoning as in Remark \ref{rem-g3} shows that 
also the operator $\G_3$ is self-adjoint.  
\end{rem}

\section{\bf Resolvent expansion of the Pauli operator; non-integer flux} 
\label{sec-exp-non-int}

Throughout this section we assume that $0<\alpha\not\in\Z$.  In order to simplify the notation  it is convenient to define the functions
\begin{equation} \label{phi-eq}
\Phim_j(x) = (x_1+i x_2)^{[\alpha]+1-j}\, e^{h (x)+i\chi(x)}\ , \qquad j=1,\dots , [\alpha]+1. 
\end{equation}
Note that in this case $[\alpha]=n$. By Lemma \ref{lem-ah-cash} we then have 
$$
{\rm dim} (\NN_r(\Pm(\A)))  =1, \qquad {\rm dim} (\NN_e(\Pm(\A)))  = n, \qquad {\rm dim} (\NN(\Pm(\A))) =n+1 =: N.
$$
First have to normalize the resonant state $\phi$ in such a way that conditions \eqref{norm2} be satisfied with $W= \vm$. In view of Lemma \ref{lem-ah-cash} this is achieved upon setting 
\begin{equation} \label{phi-2}
\phi(x) = c_0 \,   \Psim (x), \qquad \text{with} \quad \Psim = \Phim_1  + \sum_{j=1}^n\, \LL  \Phim_1\, ,\,  \vm\,\psim_j\RR\, \psim_j\, ,
\end{equation}
and choosing $c_0$ such that $ \LL  \phi\, ,\,  \vm\,  \phi\, \RR\ = -1$.

If $1+R_0(\lambda)  \vpm$ is invertible in $\B(1,-s;1,-s)$, then by the resolvent equation 
\begin{equation} \label{eq-res1}
R_\pm(\lambda,\A) = \big(1+R_0(\lambda)  \vpm\big)^{-1}\, R_0(\lambda).
\end{equation} 
Recall that in oder to expand $R_0(\lambda)$ to the needed precision we need to assume $s>3$, see Section \ref{sec-resol-h0}. 
Let us therefore assume that $ \vm$ satisfies Assumption \ref{ass-W} for some $\tau >6$. It follows from \eqref{W-op} that  
\begin{equation}  \label{op-W}
M(\lambda) := 1+R_0(\lambda)  \vm : \ \HH^{1,-s}\, \to \HH^{1,-s}\, , \qquad \tau/2 < s < \tau -3 .
\end{equation}

\subsection{\bf The Schur complement formula}  \label{ssec-grushin}
As already mentioned in the introduction, to comute an approximate expression for the inverse of $M(\lambda)$ we apply a version of the Grushin method of enlarged systems 
based on the  the so-called Schur complement formula, \cite{grushin, sz}.

Similarly as in \cite{wa} we  define the operators  $J: \C^N \to \HH^{1,-s}\, $ and $J^*: \HH^{1,-s}\, \to \C^N$ by 
\begin{align*}
J z &= z_0 \,\phi+ \sum_{k=1}^n z_ k\, \psim_k, \qquad z =(z_0,z_1,\dots,z_n)\in \C^N, \\
J^* u &=  \Big(\LL u, -  \vm\, \phi \RR ,\,   \LL u, -  \vm \, \psim_1\RR , \dots ,  \LL u, -  \vm\, \psim_n\RR \Big)\, .
\end{align*}
In view of equations \eqref{norm1} and \eqref{norm2} we then have
\begin{equation} \label{jj}
J J^* u  = Qu \qquad \forall\, u\in \HH^{1,-s}\, , \qquad \text{and}  \qquad J^*\!J z = z \qquad \forall\, z\in \C^N.
\end{equation}
Moreover, Proposition \ref{prop-gauge} and Lemma \ref{lem-5} imply that the operator $Q_0(1+R_0(\lambda)  \vm) Q_0$ is invertible on $Q_0\, \HH^{1,-s}\, $ for $ \tau/2 < s < \tau-3$ and $\lambda$ close enough to $0$.  Let 
\begin{equation} \label{D-lambda}
\Omega(\lambda) = \big[ Q_0\big(1+R_0(\lambda)\,  \vm\big) Q_0\big]^{-1} Q_0.
\end{equation}
By \eqref{B0-eq-1},
\begin{equation} \label{D-exp} 
\Omega(\lambda) = \Omega_0 + \mathcal{O}(|\lambda|^\mu) \qquad \lambda \to 0
\end{equation}
in $\B(1,-s;1,-s)$, where 
\begin{equation} \label{mu} 
\mu= \mu(\alpha) = \min\{\alp, 1-\alp\} = \min_{m\in\Z} |m-\alpha| \in [0,1/2] \\[4pt]
\end{equation}
is the distance between $\alpha$ and the set of integers, and $ \Omega_0$ is given by \eqref{Omega-0}. Now, let 
\begin {equation} \label{a-lam}
a(\lambda) = 
\left( \begin{array}{lr}
M(\lambda) &J \\[2pt]
J^* & 0
\end{array} \right) 
\end{equation}
be an operator matrix on $\HH^{1,-s}\,  \oplus\C^N$.  With the help of \eqref{jj} and \eqref{D-lambda}  one verifies that 
\begin {equation*}
a^{-1}(\lambda) = 
\left( \begin{array}{lr}
\Omega(\lambda) & a_{12}(\lambda) \\[3pt]
a_{21}(\lambda) & E(\lambda)
\end{array} \right) \\[3pt]
\end{equation*}
on $\HH^{1,-s}\,  \oplus\C^N$, where $a_{12}: \C^N\to \HH^{1,-s}\, , \ a_{21}: \HH^{1,-s}\, \to \C^N$ and $E :\C^N\to\C^N$ are given by  
\begin{align}
a_{12}(\lambda)& = J- \Omega(\lambda)Q_0 M(\lambda) J, \qquad a_{21}(\lambda)  = J^*-J^*M(\lambda)Q_0 \Omega(\lambda)\, , \label{a12}
\end{align}
and
\begin{align}
E(\lambda) &= -J^*M(\lambda) J +J^*M(\lambda)Q_0\Omega(\lambda)Q_0 M(\lambda)J\, . \label{E}
\end{align}
Th Schur complement formula now says that the operator $M(\lambda)$ is invertible in $\HH^{1,-s}\, $ if and only if the $N\times N$ matrix $E(\lambda)$ is invertible in $\C^N$, and 
in that case 
\begin{equation} \label{grushin}
M(\lambda)^{-1}  = \big(1+R_0(\lambda)  \vm\big)^{-1} =\Omega(\lambda) - a_{12}(\lambda)\, E^{-1}(\lambda)\, a_{21}(\lambda) .
\end{equation}


\subsection{\bf The matrix $E(\lambda)$} 
\label{ssec-E-matrix}
Next we have to show that $E(\lambda)$ is invertible for $\lambda$ small enough and to expand its inverse into several leading terms. To do so we 
make use of Corollaries  \ref{cor-null}, \ref{cor-w0} and Lemma \ref{lem-ah-cash}. When combined with Proposition \ref{prop-exp}, these results allow us to obtain 
important information on the matrix elements of $E(\lambda)$. This is done below in a series auxiliary lemmas.

\begin{lem} \label{lem-vg3v-0}
Let $B$ satisfy Assumption \ref{ass-B} with $\rho >7$, and let $u\in \Lp^{2,2+0}(\R^2)$ be such that $G_1 u=G_2 u=0$. Then
\begin{equation*}
\LL  \vm\, \psim_j, \, G_3\, u\RR \ =\  - \LL \psim_j, G_0\, u\RR\, , \qquad\forall\, j=1,\dots, n\, .
\end{equation*}
\end{lem}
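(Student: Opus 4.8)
The strategy is to read off the identity as the coefficient of $z^1$ in a resolvent expansion paired against $u$, exploiting that each $\psim_j$ is an \emph{exact} zero eigenfunction of $\Pm(\A)=H_0+\vm$. Since $\Pm(\A)\psim_j=0$ we have $H_0\psim_j=-\vm\psim_j$, so for $z>0$
\[
(H_0+z)\,\psim_j=-\vm\psim_j+z\,\psim_j .
\]
For $z>0$ the point $-z$ lies in the resolvent set of $H_0$, hence $R_0(-z)=(H_0+z)^{-1}$ is bounded and self-adjoint on $\Lp^2(\R^2)$. Moreover $\vm\psim_j\in\Lp^{2,2+0}(\R^2)$ by Corollary \ref{cor-w0} (here $\rho>7$ gives $\tau=\rho-1>6$) and $\psim_j\in\HH^{1,-1-0}\cap\Lp^\infty(\R^2)$, so Lemma \ref{lem-2a} yields $G_0\vm\psim_j=-\psim_j$ and lets me apply $R_0(-z)$ to the display above to get
\[
R_0(-z)\,\vm\psim_j=-\psim_j+z\,R_0(-z)\,\psim_j .
\]
Pairing with $u$ and using self-adjointness of $R_0(-z)$ produces, for every $z>0$, the finite identity
\[
\LL R_0(-z)\,\vm\psim_j,\,u\RR=-\LL \psim_j,u\RR+z\,\LL \psim_j,\,R_0(-z)u\RR .
\]

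I would then expand both sides as $z\to0+$ and compare the coefficients of $z$. On the left, $\vm\psim_j$ lies in $\Lp^{2,s}$ for large $s$ (again by $\rho>7$), so Proposition \ref{prop-exp} applies to it. The asymptotic matching of Corollary \ref{cor-null} gives $\LL \vm\psim_j,\,g_n\,e^{in(\cdot)}\RR=\LL \vm\psim_j,\,g_{n+1}\,e^{i(n+1)(\cdot)}\RR=0$, so by Corollary \ref{cor-G12} the $G_1^\lambda$ and $G_2^\lambda$ contributions vanish and
\[
R_0(-z)\,\vm\psim_j=-\psim_j-z\,G_3\vm\psim_j+\mathcal{O}(z^{1+\mu})
\]
in $\HH^{1,-s}$, $s>3$, with $\mu$ as in \eqref{mu}. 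Pairing with $u$ and using that $G_3$ is self-adjoint (Remark \ref{rem-g3}), the coefficient of $z$ on the left equals $-\LL G_3\vm\psim_j,u\RR=-\LL \vm\psim_j,\,G_3 u\RR$, the latter being the finite form that appears in the statement (here the rapid decay of $\vm\psim_j$ tames the growth of $G_3 u$).

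For the right-hand side I must show $\LL \psim_j,\,R_0(-z)u\RR\to\LL \psim_j,\,G_0 u\RR$ as $z\to0$, so that the $z$-coefficient of $z\,\LL \psim_j,R_0(-z)u\RR$ is $\LL \psim_j,G_0 u\RR$. This is precisely where the hypotheses are used: $G_1u=G_2u=0$ mean $\LL u,g_n\,e^{in(\cdot)}\RR=\LL u,g_{n+1}\,e^{i(n+1)(\cdot)}\RR=0$, and since $G_1^\lambda,G_2^\lambda,G_5^\lambda,G_7^\lambda$ are finite rank with ranges spanned by $g_n\,e^{in(\cdot)}$ and $g_{n+1}\,e^{i(n+1)(\cdot)}$, they annihilate $u$ for every $\lambda$. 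Thus the fractional powers $\lambda^{\alp}$ and $\lambda^{1-\alp}$ drop out of $R_0(-z)u$, which then expands as $G_0u-z\,G_3u+\mathcal{O}(z^{1+\mu})$; and by Proposition \ref{lem-null} the slowly decaying piece $u_0$ of $G_0u$ (the only part not in $\Lp^2$) vanishes, so $G_0u\in\Lp^2(\R^2)$, the pairing $\LL \psim_j,G_0u\RR$ is a finite $\Lp^2$-pairing, and $R_0(-z)u\to G_0u$ cleanly. Equating the two $z$-coefficients gives $-\LL \vm\psim_j,G_3u\RR=\LL \psim_j,G_0u\RR$, which is the claim.

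The main obstacle is analytic rather than algebraic: $\psim_j$ decays only like $r^{-1-\alp}$ (it lies in $\Lp^2$ but in no weighted space $\Lp^{2,s}$ with $s$ large), whereas $G_0$ and $G_3$ produce outputs that grow at infinity, so several pairings sit at the borderline of convergence. Legitimacy of the coefficient extraction rests on the interplay of two decay mechanisms: the rapid decay of $\vm\psim_j$ forced by $\rho>7$ controls one factor, while the orthogonality relations $G_1u=G_2u=0$ project out exactly the threshold-resonant channels $m=n,n+1$ responsible for the non-$\Lp^2$ growth of $G_0u$ and $R_0(-z)u$. Verifying that the remainders stay $o(z)$ after pairing, and that $R_0(-z)u\to G_0u$ in a topology in which the pairing against $\psim_j$ is continuous, is the delicate step; it is handled using the explicit finite-rank structure of the coefficients in Proposition \ref{prop-exp}.
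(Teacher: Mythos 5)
Your argument is correct and is essentially the paper's own proof in adjoint form: the paper extracts the coefficient of $\lambda$ from $\LL \psim_j,(H_0-\lambda)R_0(\lambda)u\RR$ by expanding $R_0(\lambda)u$ and moving $H_0$ onto $\psim_j$, while you extract the same coefficient from $\LL R_0(-z)(H_0+z)\psim_j,u\RR$ by expanding $R_0(-z)\vm\psim_j$ and using self-adjointness of $R_0(-z)$ and $G_3$ — the same ingredients ($H_0\psim_j=-\vm\psim_j$, Lemma \ref{lem-2a}, Proposition \ref{prop-exp}, and the vanishing of the fractional-power channels) in the same roles. The borderline-convergence issues you flag (the pairing of $\psim_j$ against $R_0(\lambda)u-G_0u$) are present, and treated just as tersely, in the paper's version.
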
 

\begin{proof} 
To start with we note that the left side makes sense in view of Proposition \ref{prop-gauge} and the fact that $|(G_ 3 u) (x)| \lesssim \x^2$, see \eqref{gm3-1}. The right side also makes sense, because $G_0 u\in \Lp^2(\R^2)$ by Proposition \ref{lem-null}.  From Proposition \ref{prop-exp} we then deduce that, as $\lambda\to 0$, 
\begin{align*}
\LL \psim_j, u\RR &= \LL \psim_j, (H_0-\lambda) \, R_0(\lambda) u \RR = \LL H_0\, \psim_j, (G_0 +\lambda G_3) u\RR -\lambda \LL \psim_j, G_0 u\RR + o(\lambda) \\
& = \LL H_0\, \psim_j, \, G_0 u\RR - \lambda\, \LL  \vm \, \psim_j, G_3 u\RR -\lambda \LL \psim_j, G_0 u\RR + o(\lambda) \\
&= \LL \psim_j, u\RR - \lambda\, \LL  \vm \, \psim_j, G_3 u\RR -\lambda \LL \psim_j, G_0 u\RR + o(\lambda).
\end{align*}
Diving by $\lambda$ and letting $\lambda\to 0$ now implies the claim.
\end{proof}

\begin{lem}  \label{lem-jost}
Let $B$ satisfy Assumption \ref{ass-B} with some $\rho>7$. Let $1\leq i\leq 7,\, i\neq 3$.  Then 
\begin{equation} \label{i4}
\LL  \vm \,\psim_k, G_i^\lambda\   \vm\, \psim_j \RR  \neq 0 \qquad \text{\rm if and only if} \qquad i=4 \ \wedge j=k=n\, .
\end{equation}
Furthermore, 
\begin{equation} \label{pn}
\LL  \vm \,\psim_n, G_4^\lambda\,   \vm\, \psim_n \RR  = \frac{ |d^\m_{n}|^2\,}{\zeta(1+\alp)}  \, ,
\end{equation}
where $d^\m_n:=d^\m_{n,n}\in\C$ is given by \eqref{ef-hat}.
\end{lem}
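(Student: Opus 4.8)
The plan is to combine the finite-rank structure of the operators $G_i^\lambda$ from Proposition \ref{prop-exp} with the asymptotic matching of zero modes, the hypothesis $\rho>7$ guaranteeing that $\vm$ satisfies Assumption \ref{ass-W} with $\tau=\rho-1>6$, so that Corollary \ref{cor-null} and Proposition \ref{prop-exp} apply and all the pairings below are well defined. First I would observe that each $G_i^\lambda$ is a sum of rank-one or rank-two operators, each supported in a single angular-momentum channel $m\in\{n-1,n,n+1,n+2\}$ and with range spanned by $g_m\, e^{im\theta}$ (and, for $i=4,6$, also by $\varrho_m\, e^{im\theta}$). Consequently every matrix element $\LL \vm\psim_k, G_i^\lambda\, \vm\psim_j\RR$ factorizes into products of the scalars $\LL \vm\psim_\ell, g_m\, e^{im(\cdot)}\RR$ and $\LL \vm\psim_\ell, \varrho_m\, e^{im(\cdot)}\RR$. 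The whole lemma therefore reduces to evaluating these scalars for $\ell\in\{1,\dots,n\}$ and $m\in\{n-1,n,n+1,n+2\}$.

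For $i\in\{1,2,5,7\}$ the vanishing is immediate from Corollary \ref{cor-G12}, which gives $G_i^\lambda\,\vm\psim_j=0$. The underlying reason is that $\LL \vm\psim_j, g_n\, e^{in(\cdot)}\RR=\LL \vm\psim_j, g_{n+1}\, e^{i(n+1)(\cdot)}\RR=0$: by Corollary \ref{cor-null} these scalars are, up to nonzero constants, the coefficients of the profiles $r^{-\alp}e^{in\theta}$ and $r^{\alp-1}e^{i(n+1)\theta}$ in the expansion of $\psim_j$, and since $\alp\in(0,1)$ neither profile lies in $\Lp^2(\R^2)$, which forces both coefficients to vanish for the eigenfunction $\psim_j\in\NN_e(\Pm(\A))$. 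These same two identities annihilate the channel-$n$ part of $G_4^\lambda$ and the channel-$(n+1)$ part of $G_6^\lambda$, since each of those pairs $g$ against $\varrho$ in such a way that every resulting product contains at least one factor of the form $\LL\vm\psim_\ell, g_m\, e^{im(\cdot)}\RR$ with $m\in\{n,n+1\}$.

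It remains to treat the channel-$(n-1)$ part of $G_4^\lambda$ and the channel-$(n+2)$ part of $G_6^\lambda$, which is the crux of the proof. By Corollary \ref{cor-null} the scalars $\LL \vm\psim_j, g_{n-1}\, e^{i(n-1)(\cdot)}\RR$ and $\LL \vm\psim_j, g_{n+2}\, e^{i(n+2)(\cdot)}\RR$ are proportional to the coefficients of $r^{-1-\alp}e^{i(n-1)\theta}$ and $r^{-2+\alp}e^{i(n+2)\theta}$ in the large-$r$ asymptotics of $\psim_j$; note that both profiles decay too slowly to be absorbed into the $\Lp^1$ remainder $\tilde u$, so they are genuinely detected. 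On the other hand, the Aharonov--Casher form \eqref{ef-hat}, together with $e^{h+i\chi}=r^{-\alpha}(1+\mathcal{O}(r^{-1}))$ from \eqref{h-asymp} and \eqref{phi-infty}, lets me read off these same coefficients: the monomial $(x_1+ix_2)^{k-1}$ produces the profile $r^{k-1-\alpha}e^{i(k-1)\theta}$, while the subleading harmonics of $e^{h+i\chi}$ shift the angular momentum by $\pm1,\pm2,\dots$ at the cost of extra decay $r^{-1},r^{-2},\dots$. Comparing the two descriptions, the profile $r^{-1-\alp}e^{i(n-1)\theta}$ occurs only for $j=n$, coming from the top monomial $d^\m_n\,(x_1+ix_2)^{n-1}e^{h+i\chi}$, which yields
\[
\LL \vm\psim_j, g_{n-1}\, e^{i(n-1)(\cdot)}\RR = -4\pi(1+\alp)\, d^\m_n\,\delta_{j,n},
\]
whereas the profile $r^{-2+\alp}e^{i(n+2)\theta}$ never occurs, since reaching angular momentum $n+2$ from the momenta $0,\dots,n-1$ available in \eqref{ef-hat} needs a harmonic shift of at least three, hence decay strictly faster than $r^{-2+\alp}$, giving $\LL \vm\psim_j, g_{n+2}\, e^{i(n+2)(\cdot)}\RR=0$ for all $j$.

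Assembling the pieces proves \eqref{i4}: $G_6^\lambda$ contributes nothing, and $G_4^\lambda$ contributes only through its channel-$(n-1)$ rank-one part, which carries the factor $\delta_{j,n}\delta_{k,n}$. Substituting the matched value into the kernel $\tfrac{g_{n-1}(r)g_{n-1}(t)}{16\pi^2(1+\alp)^2\zeta(1+\alp)}\,e^{i(n-1)(\theta-\theta')}$ of \eqref{g4} then gives, after the factors $16\pi^2(1+\alp)^2$ cancel, the value $\LL \vm\psim_n, G_4^\lambda\, \vm\psim_n\RR=|d^\m_n|^2/\zeta(1+\alp)$, which is \eqref{pn} (nonzero since $d^\m_n\neq0$ by the construction of \eqref{ef-hat} and $\zeta(1+\alp)$ is finite and nonzero). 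The main obstacle is precisely the matching step: one must track carefully which powers of $r$ and which angular momenta survive in both representations of $\psim_j$, and verify that the relevant profiles sit above the $\Lp^1$ threshold so that Corollary \ref{cor-null} records them faithfully.
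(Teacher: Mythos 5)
Your proposal is correct and follows essentially the same route as the paper: it reduces everything to the scalar products $\LL \vm\psim_\ell, g_m\, e^{im(\cdot)}\RR$ for $m\in\{n-1,n,n+1,n+2\}$, evaluates them by matching the asymptotics of Corollary \ref{cor-null} against the Aharonov--Casher form \eqref{ef-hat} with \eqref{h-asymp} (yielding the paper's identities \eqref{w0gm}), and then reads off \eqref{i4} and \eqref{pn} from the explicit kernels \eqref{g4}--\eqref{g7}. The only difference is cosmetic: you justify the vanishing for $m=n,n+1$ via the $\Lp^2$-threshold of the profiles where the paper invokes Corollary \ref{cor-G12} and $\Lp^1$-membership, but the underlying matching argument and the final cancellation of the factor $16\pi^2(1+\alp)^2$ are identical.
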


\begin{proof}
Corollary \ref{cor-G12} implies  
\begin{equation}  \label{gwi}
G_i^\lambda\,   \vm\, \psim_j  =0 \qquad 	 i = 1,2,5,7 \qquad  1\leq j\leq n.
\end{equation} 
Moreover, since $\psim_j\in \Lp^1(\R^2)$ for all $j\leq n-1$, a combination of Corollary \ref{cor-null} with equations \eqref{ef-hat} and \eqref{h-asymp} gives  
\begin{equation} \label{w0gm}
\begin{aligned}
\LL  \vm \,\psim_j,  \, g_m \, e^{im\theta} \RR & = 0 \qquad 1 \leq j\leq n-1, \qquad   n-1 \leq m \leq n+2  \\
 \LL  \vm \,\psim_n,  \, g_m \, e^{im\theta} \RR & = 0   \qquad   n \leq m \leq n+2  \\
 \LL  \vm \,\psim_n,  \, g_{n-1} \, e^{i (n-1)\theta} \RR & = -4\pi (1+\alp) \, d^\m_{n}\, .
 \end{aligned}
 \end{equation}

On the other hand, using \eqref{g4}, \eqref{g6} and \eqref{w0gm}  we find that there exist constants $\mathcal{C}_k(n,j,\lambda)$ such that
\begin{equation} \label{gw46}
\begin{aligned}
G_6^\lambda\,   \vm\, \psim_j  & = \mathcal{C}_6(n,j,\lambda) \, e^{i(n+1)\theta}\, g_{n+1}(r) \qquad 1\leq j\leq n  \\
G_4^\lambda\,   \vm\, \psim_j  & = \mathcal{C}_4(n,j,\lambda) \, e^{in\theta}\, g_{n}(r)  \ \ \qquad  \qquad  1\leq j\leq n-1 
\end{aligned}
\end{equation} 
By \eqref{w0gm} this completes the proof of \eqref{i4}. Finally, for $j=n$ and $i=4$ we obtain, by \eqref{w0gm} and \eqref{g4}, 
\begin{align*}
G_4^\lambda\,   \vm\, \psim_n  &  = \mathcal{C}_4(n,n,\lambda) \, e^{i n\theta}\, g_{n}(r) + \frac{e^{i(n-1)\theta}\, g_{n-1}(r) }{16\pi^2 (1+\alp)^2\, \zeta(1+\alp)} \, \LL   \vm\, \psim_n , \, e^{i(n-1)\theta}\, g_{n-1}(r) \RR \   \\[5pt]
&= \mathcal{C}_4(n,n,\lambda) \, e^{i n\theta}\, g_{n}(r) -  \frac{e^{i(n-1)\theta}\, g_{n-1}(r) \, d^\m_{n}}{4\pi (1+\alp)\, \zeta(1+\alp)} \, .
\end{align*}
Applying \eqref{w0gm} once again yields \eqref{pn}.
\end{proof}

\begin{lem} \label{lem-ef-mat}
Let $B$ satisfy assumption \ref{ass-B} with $\rho>7$. Then 
\begin{align*} 
\LL  \vm \,\psim_n, \, M(\lambda)\,\psim_n\RR &\  =\  \lambda\, \LL\psim_n ,  \psim_n \RR  + \lambda^{1+\alp}\,  \frac{ |d^\m_{n}|^2}{\zeta(1+\alp)} +  \mathcal{O}(\lambda^2),  \\
\LL  \vm \,\psim_k, \, M(\lambda)\,\psim_j\RR &\  =\  \lambda\, \LL\psim_k ,  \psim_{j} \RR  + \mathcal{O}(\lambda^2), \qquad\qquad   (j,k) \neq (n,n) \, .
\end{align*}
\end{lem}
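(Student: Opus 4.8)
The plan is to substitute the expansion of $R_0(\lambda)$ from Proposition \ref{prop-exp} into the identity $M(\lambda)\psim_j = \psim_j + R_0(\lambda)\,\vm\psim_j$, pair against $\vm\psim_k$, and track which terms survive below order $\lambda^2$. The decisive algebraic fact is that the two leading ($\mathcal{O}(1)$) contributions cancel exactly, after which only two terms remain.

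First I would record the preliminaries. Since $\psim_j \in \NN_e(\Pm(\A)) \subset \NN(H_0+\vm)$ we have $H_0\psim_j = -\vm\psim_j$, and by Corollary \ref{cor-w0} together with the decay of the eigenfunctions recorded in \eqref{ef-hat} and \eqref{h-asymp}, the functions $\vm\psim_j$ lie in $\Lp^{2,2+0}(\R^2)$ (indeed in $\Lp^{2,s}$ for some $s>3$ when $\rho>7$), so Proposition \ref{prop-exp} applies and all pairings below are well defined. Lemma \ref{lem-2a}(2) then yields the key identity $G_0\,\vm\psim_j = -G_0 H_0\psim_j = -\psim_j$.

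Next I would write
\[
\LL \vm\psim_k,\, M(\lambda)\psim_j\RR \;=\; \LL \vm\psim_k,\, \psim_j\RR + \LL \vm\psim_k,\, R_0(\lambda)\,\vm\psim_j\RR
\]
and insert \eqref{B0-eq-1}. The $G_0$-term equals $\LL \vm\psim_k, G_0\vm\psim_j\RR = -\LL \vm\psim_k,\psim_j\RR$ by the key identity, which cancels the first summand; this is the crucial step that promotes the leading behaviour from $\mathcal{O}(1)$ to $\mathcal{O}(\lambda)$. By Corollary \ref{cor-G12} the operators $G_1^\lambda, G_2^\lambda, G_5^\lambda, G_7^\lambda$ annihilate $\vm\psim_j$, so the corresponding terms (powers $\lambda^\alp, \lambda^{1-\alp}, \lambda^{1+2\alp}, \lambda^{3-2\alp}$) all drop out. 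By Lemma \ref{lem-jost} the $G_6^\lambda$-term (power $\lambda^{2-\alp}<2$) vanishes for all $j,k$, while the $G_4^\lambda$-term (power $\lambda^{1+\alp}<2$) vanishes unless $j=k=n$, in which case it equals $\lambda^{1+\alp}\,|d^\m_n|^2/\zeta(1+\alp)$. Finally, for the $\lambda G_3$-term I would apply Lemma \ref{lem-vg3v-0} with $u=\vm\psim_j$—its hypothesis $G_1 u = G_2 u = 0$ holding because $G_1^\lambda, G_2^\lambda$ are nonzero scalar multiples of $G_1, G_2$ and annihilate $\vm\psim_j$—to obtain $\LL \vm\psim_k, G_3\vm\psim_j\RR = -\LL \psim_k, G_0\vm\psim_j\RR = \LL \psim_k,\psim_j\RR$. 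Collecting the surviving terms and absorbing everything of order $\lambda^2$ into the remainder gives both asserted identities.

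The main obstacle is not any single step but making the bookkeeping airtight: one must verify that every fractional power strictly below $2$ is covered by Corollary \ref{cor-G12} or Lemma \ref{lem-jost}, that the $\mathcal{O}(\lambda^2)$ remainder of Proposition \ref{prop-exp}, which lives in $\B(-1,s;1,-s)$, genuinely pairs to $\mathcal{O}(\lambda^2)$ against $\vm\psim_k,\vm\psim_j\in\HH^{-1,s}$ (this is exactly where $\rho>7$, hence $\tau=\rho-1>6$ and $\vm\psim_j\in\Lp^{2,s}$ with $s>3$, is indispensable), and that the orthogonality relations \eqref{w0gm} feeding Lemmas \ref{lem-jost} and \ref{lem-vg3v-0} apply to these particular eigenfunctions. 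The conceptual heart remains the exact cancellation of the $\mathcal{O}(1)$ terms through $G_0\vm\psim_j=-\psim_j$.
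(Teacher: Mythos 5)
Your proof is correct and follows exactly the route the paper intends: its own proof is the one-line citation ``a combination of Lemma \ref{lem-5}, Lemma \ref{lem-jost} and Proposition \ref{prop-exp}'', and you have filled in precisely those steps — the cancellation of the $\mathcal{O}(1)$ terms via $G_0\vm\psim_j=-\psim_j$ from Lemma \ref{lem-5}(1), the elimination of all fractional-power terms except the $(n,n)$ entry of $G_4^\lambda$ via Corollary \ref{cor-G12} and Lemma \ref{lem-jost}, and the identification of the linear coefficient via Lemma \ref{lem-vg3v-0}. The only thing worth noting is that the paper's citation list omits Lemma \ref{lem-vg3v-0}, which is nevertheless indispensable for turning $\lambda\,\LL\vm\psim_k,G_3\vm\psim_j\RR$ into $\lambda\,\LL\psim_k,\psim_j\RR$; your explicit invocation of it is the right reading of the argument.
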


\begin{proof}
This is a combination of Lemma \ref{lem-5}, Lemma \ref{lem-jost} and Proposition \ref{prop-exp}.
\end{proof}

\begin{lem} \label{lem-vg3v}
Let $B$ satisfy assumption \ref{ass-B} with $\rho>7$. Then for all $j,k=1,\dots, n$,
\begin{align} 
\LL  \vm\,\phi , \, M(\lambda)\, \phi\,  \RR & \ = \ \zeta^{-1}(\alp)\,  |c_0|^2\, \lambda^\alp+|c_0|^2\, \eta_\m\,  \lambda  +   \mathcal{O}(\lambda^{2\alp}), \label{phi-phi} \\
\LL  \vm\, \phi , \, M(\lambda)\,\psim_j\RR &\ =\  \lambda\, \LL  \phi, \,    \vm\, G_3  \vm\, \psim_{j} \RR  + \mathcal{O}(\lambda^{1+\alp}),   \label{M-phi}
\end{align}
where 
\begin{equation} \label{xi-beta}
 \eta_\m  := \LL \Phim_1, \,  \vm\, G_3  \vm\, \Phim_1\, \RR\, \in \R.
\end{equation}
\end{lem}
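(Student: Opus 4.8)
The plan is to exploit that $\phi$ and the $\psim_j$ are zero modes of $\Pm(\A)=H_0+\vm$, so by Corollary~\ref{cor-null} (equivalently Lemma~\ref{lem-2a}) they satisfy $\phi=-G_0\vm\phi$ and $\psim_j=-G_0\vm\psim_j$. Writing $M(\lambda)=1+R_0(\lambda)\vm$ and inserting the expansion of $R_0(\lambda)$ from Proposition~\ref{prop-exp}, the self-adjointness of $G_0$ together with $G_0\vm v=-v$ for a zero mode $v$ forces the constant term and the $G_0$-term to cancel in every pairing $\LL\vm u,M(\lambda)v\RR$ with $u,v\in\NN(\Pm(\A))$, since $\LL\vm u,G_0\vm v\RR=\LL G_0\vm u,\vm v\RR=-\LL u,\vm v\RR=-\LL\vm u,v\RR$. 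Thus the problem reduces to evaluating the finitely many pairings $\LL\vm u,G_i^\lambda\vm v\RR$, $i=1,\dots,7$, each multiplied by the appropriate power of $\lambda$.

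For the leading $\lambda^{\alp}$-term of \eqref{phi-phi} I would use that $G_1^\lambda$ is, by \eqref{G1}, rank one with range spanned by $g_n\,e^{in\theta}$, so that $\LL\vm\phi,G_1^\lambda\vm\phi\RR=\bigl(1-p_n(\lambda)\lambda^{\alp}\bigr)^{-1}(4\pi\alp)^{-2}\zeta(\alp)^{-1}\,\lvert\LL\vm\phi,g_n e^{in(\cdot)}\RR\rvert^2$. The crucial input is the \emph{asymptotic matching}: comparing the expansion of $\phi=-G_0\vm\phi$ supplied by Corollary~\ref{cor-null} (whose $e^{in\theta}r^{-\alp}$–coefficient is $-\tfrac{1}{4\pi\alp}\LL\vm\phi,g_n e^{in(\cdot)}\RR$) with the explicit Aharonov--Casher form $\phi=c_0\,(x_1+ix_2)^n e^{h+i\chi}+\cdots$ of Lemma~\ref{lem-ah-cash}, whose leading behaviour is $c_0\,r^{-\alp}e^{in\theta}$ by \eqref{h-asymp}, yields $\LL\vm\phi,g_n e^{in(\cdot)}\RR=-4\pi\alp\,c_0$. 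Hence the $\lambda^{\alp}$–term equals $\zeta^{-1}(\alp)\lvert c_0\rvert^2\lambda^{\alp}+\mathcal O(\lambda^{2\alp})$ after expanding the denominator. The same matching shows that $\phi$ carries no $e^{i(n+1)\theta}r^{\alp-1}$ component, so $\LL\vm\phi,g_{n+1}e^{i(n+1)(\cdot)}\RR=0$ and the $\lambda^{1-\alp}G_2^\lambda$–term drops out; all the operators $G_4^\lambda,\dots,G_7^\lambda$ contribute at order $\lambda^{1+\alp}$ or higher, which is $\mathcal O(\lambda^{2\alp})$ since $\alp<1$.

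For \eqref{M-phi} the argument is cleaner: the normalization \eqref{norm2} gives $\LL\vm\phi,\psim_j\RR=\LL\phi,\vm\psim_j\RR=0$, so $\LL\vm\phi,M(\lambda)\psim_j\RR=\LL\vm\phi,(R_0(\lambda)-G_0)\vm\psim_j\RR$. By Corollary~\ref{cor-G12} one has $G_1^\lambda\vm\psim_j=G_2^\lambda\vm\psim_j=G_5^\lambda\vm\psim_j=G_7^\lambda\vm\psim_j=0$, while the matching of the preceding paragraph kills the $\lambda^{2-\alp}G_6^\lambda$–contribution because $G_6^\lambda\vm\psim_j$ lives in the channel $n+1$, on which $\vm\phi$ has vanishing component. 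This leaves $\lambda\,\LL\vm\phi,G_3\vm\psim_j\RR=\lambda\,\LL\phi,\vm G_3\vm\psim_j\RR$ as the leading term and $\lambda^{1+\alp}G_4^\lambda$ as the first correction, which is exactly \eqref{M-phi}. Finally, in \eqref{phi-phi} the order-$\lambda$ coefficient is $\lvert c_0\rvert^2\LL\vm\Psim,G_3\vm\Psim\RR$, which I would identify with $\lvert c_0\rvert^2\eta_\m$ by the limiting device of Lemma~\ref{lem-vg3v-0}; reality of $\eta_\m$ follows from self-adjointness of $G_3$ (Remark~\ref{rem-g3}).

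The main obstacle is precisely this order-$\lambda$ term in \eqref{phi-phi}. Because $\phi$ is a \emph{resonant} state ($\phi\notin\Lp^2$), one cannot apply Lemma~\ref{lem-vg3v-0} directly to $u=\vm\phi$ — its hypothesis $G_1u=0$ fails, and the naive identity $G_3\vm\phi\leftrightarrow G_0\vm\phi=-\phi$ pairs two non-$\Lp^2$ functions. I would therefore rerun the limiting computation of Lemma~\ref{lem-vg3v-0} directly on $\LL\vm\phi,G_3\vm\phi\RR$, using the vanishing of the $\lambda^{\alp}$- and $\lambda^{1-\alp}$-order pairings established above to extract the genuine $\mathcal O(\lambda)$ coefficient, and handle the eigenfunction corrections entering $\Psim=\Phim_1+\sum_j\LL\Phim_1,\vm\psim_j\RR\psim_j$ via the normalization \eqref{norm2} (which also shows these coefficients are real). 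Keeping track of convergence of the arising weighted pairings, which is governed by the distance $\mu(\alpha)$ of $\alpha$ to $\Z$, is the delicate bookkeeping that this step requires.
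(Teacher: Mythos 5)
Your proposal follows essentially the same route as the paper: cancel the constant and $G_0$ terms via $(1+G_0\vm)u=0$ for zero modes, obtain $\LL \vm\phi,\, g_n e^{in(\cdot)}\RR=-4\pi\alp\, c_0$ and the vanishing of the $g_{n+1}$ pairing by matching the $G_0$-representation of Corollary \ref{cor-null} against the Aharonov--Casher asymptotics \eqref{fi-asymp}, and then read the coefficients off Proposition \ref{prop-exp}. This is exactly \eqref{eq-crutial} and \eqref{eq-cr-2} in the paper. Your treatment of \eqref{M-phi} and of the $\lambda^{\alp}$-term is correct, and your caution about the order-$\lambda$ coefficient is warranted: the paper's own proof is just as terse there (it invokes Proposition \ref{prop-exp} and Lemma \ref{lem-5} without addressing the replacement of $\Psim$ by $\Phim_1$ in $\eta_\m$), so flagging that a separate limiting argument is needed for the resonant state is appropriate rather than a defect.

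The one step that fails as written is the disposal of $G_6^\lambda$ in \eqref{phi-phi} by pure power counting. You claim $G_4^\lambda,\dots,G_7^\lambda$ enter at order $\lambda^{1+\alp}$ or higher and are therefore $\mathcal O(\lambda^{2\alp})$; but $G_6^\lambda$ carries the prefactor $\lambda^{2-\alp}$ and $G_7^\lambda$ the prefactor $\lambda^{3-2\alp}$, and for $\alp>2/3$ (resp.\ $\alp>3/4$) these are \emph{not} $\mathcal O(\lambda^{2\alp})$, nor even $\mathcal O(\lambda^{1+\alp})$ once $\alp>1/2$. The $G_7^\lambda$ and the $g_{n+1}\varrho_{n+1}$ half of the $G_6^\lambda$ contribution are rescued by the identity $\LL \vm\phi, g_{n+1}e^{i(n+1)(\cdot)}\RR=0$ that you do establish, but the $g_{n+2}(r)g_{n+2}(t)$ half of $G_6^\lambda$ in \eqref{g6} contributes $\lambda^{2-\alp}\,|\LL \vm\phi, g_{n+2}e^{i(n+2)(\cdot)}\RR|^2$ up to constants, and you must additionally prove $\LL \vm\phi, g_{n+2}e^{i(n+2)(\cdot)}\RR=0$. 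This follows from the same matching you already use: Corollary \ref{cor-null} would otherwise force a component $r^{-2+\alp}e^{i(n+2)\theta}$ in $\phi$, whereas the $e^{i(n+2)\theta}$ channel of \eqref{fi-asymp} is $\mathcal O(r^{-2-\alp})$. This is precisely the third identity in the paper's \eqref{eq-crutial}; adding it closes the gap.
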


\begin{proof} By \eqref{ef-hat}, \eqref{phi-2}, \eqref{h-asymp} and \eqref{phi-infty} it follows that as $|x|\to\infty$, 
\begin{equation} \label{fi-asymp}
\phi(x)  =  c_0 \,  r^{-\alp} e^{i n\theta} + c_0\, r^{-1-\alp} e^{i (n-1)\theta}\, \big(\vartheta_1+i\vartheta_2 +\LL  \Phim_1\, ,\,  \vm\, \psim_n\RR\,\big) 
 +\mathcal{O}\big(r^{-2-\alp}\big)\, .
\end{equation} 
A comparison with Corollary \ref{cor-null} thus gives 
\begin{equation} \label{eq-crutial}
 \big\langle \vm\, \phi\,  , \,  \, g_n\,  e^{i n\theta} \big\rangle  = - 4 \pi \alp\, c_0 \, , \qquad 
 \big\langle   \vm\, \phi, \, g_{n+1}\,  e^{i (n+1)\theta}\, \big\rangle =  \big\langle   \vm\, \phi, \, g_{n+2}\,  e^{i (n+2)\theta}\, \big\rangle =
 0. 
\end{equation}
This implies 
\begin{equation} \label{eq-cr-2}
 G_2^\lambda\,  \vm\, \phi=G_7^\lambda\,  \vm\, \phi= 0, \qquad 	\text{and} 	\qquad
G_6^\lambda\,  \vm\, \phi= \mathcal{C}_6(n,\lambda) \, e^{i(n+1)\theta}\, g_{n+1}(r) 
\end{equation}
for some  $\mathcal{C}_6(n,\lambda) \in\C$. Hence by \eqref{eq-crutial} 
$$
\LL  \vm\,\phi, \, G_6^\lambda\,  \vm\, \phi\,  \RR =0, 
$$
Proposition \ref{prop-exp} and Lemma \ref{lem-5} thus yield
$$
\LL  \vm\,\phi, \, M(\lambda)\, \phi\,  \RR = \ \zeta^{-1}(\alp)\,  |c_0|^2\, \lambda^\alp+|c_0|^2\, \eta_\m\,  \lambda  +   \mathcal{O}(\lambda^{2\alp}).
$$
Finally, equation \eqref{M-phi} follows from \eqref{gwi}, \eqref{gw46}, \eqref{eq-crutial} and Proposition \ref{prop-exp}.
\end{proof}

With the above Lemmas at hand we can calculate the entries of $E(\lambda)$ with the needed precision. 
We have 
\begin{equation}  \label{e11}
E_{11}(\lambda) =  \LL  \vm\, \phi, \, M(\lambda)\, \phi\,  \RR - \LL  \vm\, \phi, \, M(\lambda)Q_0\Omega(\lambda)Q_0 M(\lambda)\, \phi\,   \RR . \\[2pt]
\end{equation} 
Using equations \eqref{eq-crutial}, \eqref{B0-eq-1} and Lemma \ref{lem-5}(1), we find that
\begin{equation*}
\LL  \vm\, \phi, \, M(\lambda)Q_0\Omega(\lambda)Q_0 M(\lambda)\,\phi\,   \RR = \mathcal{O}\big(|\lambda|^{2\alp}\big).
\end{equation*} 
Equation \eqref{phi-phi} then gives 
\begin{equation}  \label{e11-2}
E_{11}(\lambda) =   \zeta^{-1}(\alp)\,   |c_0|^2\, \lambda^\alp  +  \eta_\m\,  |c_0|^2\, \lambda  + \mathcal{O}\big(|\lambda|^{2\alp}\big)\,  .
\end{equation} 
Similarly, by Corollary \ref{cor-G12} and equation \eqref{M-phi} 
\begin{align}  
E_{1j}(\lambda) & =  \LL  \vm\, \phi, \, M(\lambda)\, \psim_{j-1}\RR - \LL  \vm\, \phi, \, M(\lambda)\, Q_0\, \Omega(\lambda)\, Q_0 \, M(\lambda)\, \psim_{j-1} \RR  =  \lambda \, L_{j-1} + \mathcal{O}(|\lambda|^{1+\alp}) ,  \label{e1j}
 \end{align} 
where $j=2,\dots,n+1$, and
\begin{equation} \label{L-m}
L_k :=  \LL \phi , \,  \vm\, G_3  \vm\,  \psim_k \RR , \quad k=1,\dots, n.
\end{equation}
In the same way it follows that 
\begin{equation*} 
E_{j1}(\lambda) = \lambda \, \overline{L}_{j-1} + \mathcal{O}(|\lambda|^{1+\alp}), \qquad j=2,\dots, n+1 \, .
\end{equation*} 
Finally, by Lemma \ref{lem-ef-mat}, for $\ell, m\in\{2,\dots, n+1\}$ we have 
\begin{align}  \label{ejk}
E_{\ell m}(\lambda) &= \LL\,   \vm\, \psim_{\ell-1}\, , \, M(\lambda)\, \psim_{m-1}\RR - \LL  \vm\, \psim_{\ell-1}\, , \, M(\lambda)Q_0\Omega(\lambda)Q_0 M(\lambda)\, \psim_{m-1} \RR  \nonumber \\
&= \lambda \, T_{\ell-1,m-1} + \lambda^{1+\alp}\, \delta_{\ell,n+1} \delta_{m,n+1} + \mathcal{O}(\lambda^2) ,  
\end{align}
with
\begin{equation} \label{tlm}
T_{jk} :=  \LL\,  \psim_j\, , \,  \psim_k \RR , \quad j, k=1,\dots, n.
\end{equation}

Summing up the above calculations we conclude with
\begin{equation} \label{E-matrix}
E(\lambda) =  \left( \begin{array}{lr}
E_{11}(\lambda) &  \lambda\, L +  \mathcal{O}(\lambda^{1+\alp}) \\
\lambda\,  L^*+\mathcal{O}(\lambda^{1+\alp})  &  S(\lambda) + \mathcal{O}(\lambda^2)  
\end{array} \right) \, ,
\end{equation}
where the entries of the $n\times n$ matrix $S(\lambda)$ are given by 
\begin{equation} \label{s-lambda}
S_{jk}(\lambda) = \lambda\, T_{jk} + \lambda^{1+\alp} \,  \frac{|d^\m_{n}|^2}{\zeta(1+\alp)} \ \delta_{j,n} \delta_{k,n}\, .
\end{equation}
Since  $\{\psim_k\}_{k=1}^n$ are linearly independent, the matrix $T$ defined by \eqref{tlm} is invertible. We can thus define the operator $\pom : \Lp^2(\R^2) \to \NN_e(\Pm(\A))$ by
\begin{equation} \label{poh}
\pom  = \sum_{j,k=1}^n\psim_j\,  (T^{-1})_{jk}\,  \LL\,  \cdot\, ,\psim_k \RR .
\end{equation} 
$\pom $ is clearly self-adjoint and a short calculation shows that $(\pom )^2=\pom $. Hence $\pom $ is an orthogonal projection  on the zero eigenspace of $\Pm(\A)$.

\subsection*{The Feshbach formula} To calculate the inverse of $E(\lambda)$ we will apply the Feshbach formula. Let us recall its version for matrices: Let $\mathfrak B$ be a complex-valued $N\times N$ matrix of the form 
$$
 \mathfrak B =  \left( \begin{array}{lr}
\bb_{11} & \bb_{12}  \\
\bb_{21} &  \bb_{22} 
\end{array} \right) \, ,  
\qquad 
\bb_{22} : \C^{N-j} \to \C^{N-j}, \qquad   1\leq j\leq N-1,
$$
and suppose hat $\bb_{22}$ has a bounded inverse. Then $\mathfrak B$ is invertible with bounded inverse if and only if 
\begin{equation} \label{b-feshbach}
\bb := \big (\bb_{11} - \bb_{12}\, \bb_{22}^{-1}\, \bb_{21}\big)^{-1}  
\end{equation}
exists and is bounded on $\C^j$. If this is the case, then 
\begin{equation} \label{eq-feshbach}
\mathfrak B^{-1} = 
\left( \begin{array}{cc}
\bb & -\bb \, \bb_{12}\, \bb_{22}^{-1} \\[4pt] 
- \bb_{22}^{-1} \bb_{21}\, \bb &  \bb_{22}^{-1} \bb_{21}\, \bb \, \bb_{12}\, \bb_{22}^{-1} + \bb_{22}^{-1} 
\end{array} \right) \, . \\[7pt]
\end{equation}

\begin{rem}
Formula \eqref{eq-feshbach} holds in much more general settings, see e.g.~\cite[Lem.~2.3]{jn}.
\end{rem}

\subsection{Expansion of $\Rm(\lambda, \A)$} Before stating the main result of this section we introduce some additional notation. 
Recall that  $\Psim$ and $\psim_n$ are defined in  \eqref{phi-2} and \eqref{ef-hat}.
Let  ${\rm X}_\m : \HH^{1,-s} \to \NN_e(P_\m(\A))$  and $\varphi^\m \in \NN_r(P_\m(\A))$ be given by 
\begin{equation} \label{varphi-minus}
{\rm X}_\m= \pom\,   \vm G_3  \vm \qquad \text{and} \qquad \varphi^\m  = (1- {\rm X}_\m) \, \Phi_1^\m\, . \\[3pt]
\end{equation}
Furthermore, we define the zero eigenfunction $\psim$ by 
\begin{equation} \label{psi-minus}
\psi^\m  =  \pom\,   \vm\, \psim_n \, .
\end{equation}
Finally, we denote 
\begin{equation}  \label{omega-m1}
 \upsilon_\m: = \LL\, \Psim ,\,  \vm G_3  \vm\, \pom  \, \vm G_3  \vm\,  \Psim\, \RR , 
\end{equation}
and  
\begin{equation} \label{constants-m}
\mathfrak{c}_\m:= \eta_\m-\upsilon_\m, \qquad   \nu_\m :=   \frac{ |d^\m_{n}|^2}{\zeta(1+\alp)}  \, , \qquad  \omega_\m =  \nu_\m\, (T^{-1})_{nn}\, , \\[4pt]
\end{equation}
with $\eta_\m$ defined in \eqref{xi-beta}. 
Note that while $\cc_\m\in \R$, the coefficients $\nu_\m$ and $\omega_\m$ are complex. We have 

\begin{thm} \label{thm-pauli-res}
Let $1 <\alpha\not\in\Z$. 
Suppose that $B$ satisfies  \ref{ass-B} with $\rho > 7$. 
Then, as $\lambda\to 0$, 
\begin{equation} \label{res-pauli}
\Rm(\lambda,\A)   = -\lambda^{-1} \, \pom \, + \frac{\nu_\m\, \lambda^{\alp-1}}{1+\omega_\m\, \lambda^{\alp}}\  \psim  \LL   \ \cdot\ , \, \psim \RR \, 
- \frac{ \zeta(\alp)\, \lambda^{-\alp}}{1 +\cc_\m\, \zeta(\alp)\, \lambda^{1-\alp}}\ \varphi^\m \,  \LL   \ \cdot\ ,\,  \varphi^\m \RR \,  +\mathcal{O}\big(1 \big) \\[3pt]
\end{equation} 
holds in $\B(-1,s;1,-s)$ for all $s>3$ with $\pom $ given by \eqref{poh}.
\end{thm}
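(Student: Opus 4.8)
The plan is to combine the resolvent identity \eqref{eq-res1} with the Schur complement formula \eqref{grushin}, which give
$$
\Rm(\lambda,\A) = \Omega(\lambda)R_0(\lambda) - a_{12}(\lambda)\, E^{-1}(\lambda)\, a_{21}(\lambda)\, R_0(\lambda).
$$
Since $R_0(\lambda)$ is bounded from $\HH^{-1,s}$ to $\HH^{1,-s}$ and $\Omega(\lambda)=\Omega_0+\mathcal{O}(\lambda^\mu)$ is bounded on $\HH^{1,-s}$ by \eqref{D-exp}, the first term is $\mathcal{O}(1)$ in $\B(-1,s;1,-s)$ and may be absorbed into the remainder. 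Thus the entire singular part is carried by $a_{12}(\lambda)E^{-1}(\lambda)a_{21}(\lambda)R_0(\lambda)$, and the proof reduces to: (i) inverting the $N\times N$ matrix $E(\lambda)$ of \eqref{E-matrix} to the required precision; and (ii) sandwiching $E^{-1}(\lambda)$ between $a_{12}$ and $a_{21}R_0$ and recognizing the three rank-one structures.

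For (i) I would apply the Feshbach formula \eqref{eq-feshbach} with the $1\times 1$ resonance block $E_{11}(\lambda)$ and the $n\times n$ eigenfunction block $S(\lambda)$. Since $S(\lambda)=\lambda T+\lambda^{1+\alp}\nu_\m\, e_n e_n^*$ by \eqref{s-lambda}, the Sherman--Morrison identity gives
$$
S(\lambda)^{-1}=\lambda^{-1}T^{-1}-\frac{\nu_\m\,\lambda^{\alp-1}}{1+\omega_\m\,\lambda^{\alp}}\,(T^{-1}e_n)(e_n^*T^{-1})+\mathcal{O}(\lambda^{\alp}),
$$
with $\omega_\m=\nu_\m(T^{-1})_{nn}$ as in \eqref{constants-m}. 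Feeding this into the Schur complement $\bb=(E_{11}-\bb_{12}S^{-1}\bb_{21})^{-1}$ and using \eqref{e11-2} together with the identity $L\,T^{-1}L^{*}=|c_0|^2\,\upsilon_\m$ (which follows from $L_k=c_0\LL\Psim,\vm G_3\vm\psim_k\RR$, the self-adjointness of $\vm G_3\vm$, and the definition \eqref{omega-m1}) yields
$$
\bb=\frac{\zeta(\alp)\,\lambda^{-\alp}}{|c_0|^2\big(1+\cc_\m\,\zeta(\alp)\,\lambda^{1-\alp}\big)}\,(1+o(1)),\qquad \cc_\m=\eta_\m-\upsilon_\m.
$$
Assembling $E^{-1}(\lambda)$ from \eqref{eq-feshbach}, its eigenfunction block equals $S^{-1}$ up to corrections, while the corner entry and the off-diagonal blocks are all of order $\lambda^{-\alp}$.

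For (ii) I would replace $a_{12}(\lambda)$ by $J$ and $a_{21}(\lambda)$ by $J^*$ up to errors of order $\lambda^{\mu}$, and exploit the conversion identities $-G_0\vm\psim_k=\psim_k$ and $-G_0\vm\phi=\phi$ coming from Lemma \ref{lem-5}(1), which turn $J^*R_0(\lambda)$ into evaluation against $\psim_k$ and $\phi$ in the limit. The leading $\lambda^{-1}T^{-1}$ block of $S^{-1}$ then produces $-\lambda^{-1}\pom$ via \eqref{poh}; the Sherman--Morrison correction produces the middle term, where the relation $\pom\vm\psim_n=-\sum_j\psim_j(T^{-1})_{jn}$ (a consequence of the normalization \eqref{norm1}) identifies both the left and right factors with $\psim=\pom\vm\psim_n$ as in \eqref{psi-minus}; and the $\lambda^{-\alp}$ contributions from the corner entry $\bb$ and the off-diagonal blocks assemble into the resonance term $-\zeta(\alp)\lambda^{-\alp}(1+\cc_\m\zeta(\alp)\lambda^{1-\alp})^{-1}\varphi^\m\LL\cdot,\varphi^\m\RR$.

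The hard part will be this last assembly. All four Feshbach blocks contribute at the common order $\lambda^{-\alp}$, so the bare resonance vector $\phi=c_0\Psim$ must be combined with the eigenfunction dressings produced by the off-diagonal and lower-right blocks (each a finite combination of the $\psim_j$ weighted by $L$, $T^{-1}$ and $\bb$) in order to recover, on both the left and the right, the dressed resonant state $\varphi^\m=(1-{\rm X}_\m)\Phi_1^\m$ of \eqref{varphi-minus}, with ${\rm X}_\m=\pom\vm G_3\vm$. Tracking how the subtraction of these dressings simultaneously converts $\Psim$ back into $\Phi_1^\m$ and applies the projection $1-{\rm X}_\m$, and verifying that every leftover term---including the $\mathcal{O}(\lambda^\mu)$ corrections in $a_{12},a_{21}$ multiplied by the singular entries of $E^{-1}$---is genuinely $\mathcal{O}(1)$ in $\B(-1,s;1,-s)$, is where the precise expansion of $R_0(\lambda)$ up to order $\lambda^2$ in Proposition \ref{prop-exp} and the decay hypothesis $\rho>7$ (ensuring $\vm$ obeys Assumption \ref{ass-W} with $\tau>6$) become indispensable.
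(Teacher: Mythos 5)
Your proposal follows essentially the same route as the paper's proof: the resolvent identity plus the Grushin/Schur reduction, the Feshbach formula with the $1\times1$ resonance block against the $n\times n$ eigenfunction block, inversion of $S(\lambda)$ (your Sherman--Morrison step is just the closed form of the Neumann series the paper uses in \eqref{S-inverse}), and the same assembly of the three rank-one terms via $L\,T^{-1}L^*=|c_0|^2\upsilon_\m$, the identity ${\rm X}_\m\,\psim_j=\psim_j$ from Lemma \ref{lem-vg3v-0}, and $(1-{\rm X}_\m)\Psim=\varphi^\m$. The final assembly you flag as the hard part is carried out in the paper exactly along the lines you indicate, so the plan is sound and matches the paper's argument in all essentials.
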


\begin{proof} 
In view of Corollary \ref{cor-w0} and equation \eqref{op-W} it suffices to prove the claim for any $s\in (3, \rho -4)$. We apply equations \eqref{eq-res1} and \eqref{grushin}.
For the calculation of $E^{-1}(\lambda)$ we use the Feshbach formula \eqref{eq-feshbach} with $j=1, \bb_{11} = E_{11}(\lambda), \bb_{12} = \lambda\, L +  \mathcal{O}(\lambda^{1+\alp}), \, 
 \bb_{21} = \lambda\, L^* +  \mathcal{O}(\lambda^{1+\alp})$, and $\bb_{22}= S(\lambda) + \mathcal{O}(\lambda^2)$. First we note that 
$$
\bb_{22}^{-1} = S^{-1}(\lambda) +\mathcal{O}(1) = \lambda^{-1} T^{-1} + \mathcal{O}(\lambda^{\alp-1}). 
$$
Hence in view of \eqref{e11-2}, \eqref{b-feshbach} and  \eqref{omega-m1}, 
\begin{align*}
\bb = \big(E_{11}(\lambda)  - \lambda L T^{-1} L^* +  \mathcal{O}(\lambda^{1+\alp}) \big)^{-1}  &=  \big(\zeta^{-1}(\alp)\,   |c_0|^2\, \lambda^\alp  +  \eta_\m\,  |c_0|^2\, \lambda  
-\upsilon_\m\,  |c_0|^2\, \lambda + \mathcal{O}\big(|\lambda|^{2\alp}\big)^{-1} \\[3pt]
&=  \frac{ |c_0|^{-2}  \zeta(\alp)\, \lambda^{-\alp}}{1 +\mathfrak{c}_\m\, \zeta(\alp)\, \lambda^{1-\alp}}\,  + \mathcal{O}(1)\, . 
\end{align*}
The Feshbach formula thus implies that 
 $E(\lambda)$ is invertible for $\lambda$ small enough and that 
\begin{equation} \label{E-inv-1} \\[3pt]
E^{-1}(\lambda)= 
 \frac{ |c_0|^{-2} \,  \zeta(\alp)\, \lambda^{-\alp}}{1 +\mathfrak{c}_\m\, \zeta(\alp)\, \lambda^{1-\alp}}
\left( \begin{array}{cl} 
1  & - L \, T^{-1} \\
- T^{-1}\,  L^*   &  T^{-1} \,  L^* L \, T^{-1}
\end{array} \right) +
\left( \begin{array}{cc}
0  & 0 \\
0 & S^{-1}(\lambda) 
\end{array} \right)
+ \mathcal{O}(1) .
\end{equation}
Next we expand $S^{-1}(\lambda)$. Let $T^{-1}_n$ denote the column vector with components $(T^{-1})_{1n}, \dots ,(T^{-1})_{nn}$. By \eqref{s-lambda} we get, with obvious abuse of notation, 
$$
\lambda^{-1}\,  T^{-1}\, S(\lambda)  = \id + \lambda^{\alp} N , \qquad N= \nu_\m \big(0, 0,\dots ,T^{-1}_n\big)\, .
$$
The Neumann series then gives, 
\begin{align} \label{S-inverse}
 S^{-1}(\lambda) T & = \lambda^{-1} \id  +\lambda^{-1} \sum_{k=1}^\infty (-\lambda^{\alp})^k\, N ^k =  \lambda^{-1} T^{-1} +\lambda^{-1} \sum_{k=1}^\infty (-\lambda^{\alp})^k\, (\nu_\m)^k \, (T^{-1})_{nn}^{k-1}\, \big(0, 0,\dots ,T^{-1}_n\big) \nonumber \\
& =  \lambda^{-1} \id - \frac{\nu_\m\, \lambda^{\alp-1}}{1+\omega_\m\, \lambda^{\alp}}\ \big(0, 0,\dots ,T^{-1}_n\big) .
\end{align}
Recall that $M(\lambda)\phi= \mathcal{O}(\lambda^\alp)$ and $M(\lambda)\psim_k = \mathcal{O}(\lambda)$ for all  $k=1,\dots,n$. Hence by \eqref{D-exp}, \eqref{a12} and \eqref{grushin} 
\begin{equation*} 
M(\lambda)^{-1}  = \big(1+R_0(\lambda)  \vm\big)^{-1} =  -J\, E^{-1}(\lambda)\, J^* + \mathcal{O}(1)
\end{equation*} 
on $\HH^{-1,s}$.
Equations  \eqref{eq-res1}, \eqref{eq-cr-2},  and  Corollary \ref{cor-G12} then yield
\begin{equation} \label{R-grushin} 
\Rm(\lambda,\A)   =  -J\, E^{-1}(\lambda)\, J^*\, G_0 + \mathcal{O}(1).
\end{equation}  
From equation \eqref{E-inv-1} and from the definitions of $T^{-1}$ and $L$ we now deduce that 
\begin{align} 
J\, E^{-1}(\lambda)\, J^*\, G_0  &=  \lambda^{-1}\, \pom \,   +  \frac{ |c_0|^{-2} \,  \zeta(\alp)\, \lambda^{-\alp}}{1 +\mathfrak{c}_\m\, \zeta(\alp)\, \lambda^{1-\alp}} \Big[ \, \phi\,    \LL \cdot \, , \phi\,  \RR\, -  \phi
\sum_{j,k=1}^n  L_j \, (T^{-1})_{jk}\,  \LL\,  \cdot \, ,\psim_k \RR   \nonumber   \\
& \quad -\sum_{j,k=1}^n (T^{-1})_{jk}\, L^*_k\,  \psim_ j \LL\,  \cdot \, ,\phi\,   \RR  +\!\!\! \sum_{i, \ell, j,k=1}^n   \psim_ \ell \, (T^{-1})_{i\ell}\ L^*_i\,  L_j \,(T^{-1})_{jk}\,  \LL\,  \cdot \, ,  \psim_ k\,  \RR \Big]  \label{jej} \\
& \quad  - \frac{\nu_\m\, \lambda^{\alp-1}}{1+\omega_\m\, \lambda^{\alp}}\  \sum_{j,k=1}^n   \LL\,  \cdot \, ,\psim_k\,  \RR \, \psim_j\,    (T^{-1})_{jn} \, (T^{-1})_{nk}\nonumber
+ \mathcal{O}(1)\, .
\end{align}
To simplify the notation  let us abbreviate 
\begin{equation}\label{phi-pauli}
\Pi_0^\m = \LL \,\cdot\, , \Psim \RR\, \Psim \, .  \\[3pt]
\end{equation}
Since $ \phi =   c_0\,  \Psim $, equations \eqref{L-m} and  \eqref{poh} then imply
$$
|c_0|^{-2} \,  \phi
  \sum_{j,k=1}^n  L_j \, (T^{-1})_{jk}\,  \LL\,  \cdot \, ,\psim_k \RR = \Pi_0^\m \, {\rm X}_\m^*\, , \qquad |c_0|^{-2}  \sum_{j,k=1}^n (T^{-1})_{jk}\, L^*_k\,  \psim_ j \LL\,  \cdot \, ,\phi\,   \RR =  {\rm X}_\m\, \Pi_0^\m \, 
$$
and
$$
|c_0|^{-2}   \sum_{i, \ell, j,k=1}^n   \psim_ \ell \, (T^{-1})_{i\ell}\ L^*_i\,  L_j \,(T^{-1})_{jk}\,  \LL\,  \cdot \, ,  \psim_ k\,  \RR = {\rm X}_\m\, \Pi_0^\m \, {\rm X}_\m^*\ . \\[3pt]
$$
However, in view of Lemma \ref{lem-vg3v-0}  and \eqref{ef-hat}  one has ${\rm X}_\m \, \psi_j^\m=\psi_j^\m$ for all $j =1,\dots, n$. Hence by \eqref{phi-2} and \eqref{varphi-minus}
$$
|c_0|^{-2} \,  \, \phi\,    \LL \cdot \, , \phi\,  \RR\ - \Pi_0^\m \, {\rm X}_\m^* - {\rm X}_\m\, \Pi_0^\m  + {\rm X}_\m\, \Pi_0^\m \, {\rm X}_\m^*  = (1-  {\rm X}_\m)\, \Psi_1^\m \, \LL \, \cdot\, , (1-  {\rm X}_\m)\, \Psi_1^\m \RR = \varphi^\m \,  \LL   \ \cdot\ ,\,  \varphi^\m \RR \, .
$$
As for the last term on the right hand side of \eqref{jej} we note that 
$ \sum_{j=1}^n   \, \psim_j\,    (T^{-1})_{jn} = - \pom \,   \vm\, \psim_n $, see \eqref{poh}  and \eqref{ef-hat} .
In combination with \eqref{psi-minus} this implies  
\begin{equation} \label{poh-w0}
 \sum_{j,k=1}^n   \LL\,  \cdot \, ,\psim_k\,  \RR \, \psim_j\,    (T^{-1})_{jn} \, (T^{-1})_{nk} =   \pom\,   \vm\, \psim_n  \LL  \ \cdot \ ,    \pom\,  \vm\, \psim_n \RR  =   \psim  \LL  \ \cdot \ ,   \psim \RR 
  \end{equation}
and the claim  follows from equation  \eqref{R-grushin}.
\end{proof}

\begin{rem} \label{rem-rhs}
Let us comment on the operators on the right hand side of \eqref{res-pauli}. The third term arises from the resonant state $\varphi^\m$. We will see below that if $B$ is radial, then 
$\varphi^\m = \Phi_1^\m$,  cf.~Corollary \ref{cor-radial}. The second term on the other hand side arises from the only eigenfunction of $P_\m(\A)$ which is not in $\Lp^1(\R^2)$, namely from $\psim_n$.

Note also that at most one of the denominators on the right hand side of \eqref{res-pauli} contributes to the singular part of  $\Rm(\lambda,\A)$. In particular, when $\alp=1/2$, then in view of \eqref{zeta}
$$
\Rm(\lambda,\A)   = -\lambda^{-1} \, \pom + i\, \lambda^{-\frac 12}\,  \Big(8  \pi\,  |d_n^\m|^2\,  \psim  \LL  \ \cdot\ , \psim \RR \,  +\frac{1}{2\pi}\,  \varphi^\m \LL \ \cdot\ , \varphi^\m \RR \Big)  +\mathcal{O}\big(1 \big) . \\[3pt]
$$
This is reminiscent of  the resolvent expansion of a Schr\"odinger operator in $\R^3$ when zero is an eigenvalue and a resonance, see \cite[Thm.~6.5]{JK}.
\end{rem}

In the absence of zero eigenfunctions we have

\begin{thm} \label{thm-pauli-res2}
Suppose that $B$ satisfies Assumption \ref{ass-B} with $\rho > 7$.  Let $0<\alpha<1$.  Then as $\lambda\to 0$, 
\begin{align} \label{res-pauli-2}
\Rm(\lambda,\A)  & = 
- \frac{ \zeta(\alpha)\, \lambda^{-\alpha}}{1 +\eta_\m\, \zeta(\alpha)\, \lambda^{1-\alpha}}\  \Phim_1 \,  \LL \,\cdot\, , \Phim_1 \RR\ + K_0^\m + \mathcal{O}\big(|\lambda|^\mu \big) 
\end{align} 
holds in $\B(-1,s;1,-s)$ for all $s>3$, where 
\begin{equation} \label{k-hat} 
K_0^\m  = \Omega_0 G_0-\frac{1}{4\pi \alpha} \big (\, \Phim_1\, \lef\, \cdot \, , \, \Omega_0 \, g_0\rig +\Omega_0\,  g_0\, \lef\, \cdot\, , \,  \Phim_1\rig\, \big ) - \frac{\lef g_0,  \vm\, \Omega_0\, g_0 \rig +4\pi\alpha \delta_0+16\pi^2 \alpha^2}{16\pi^2 \alpha^2} \  \LL \,\cdot\, , \Phim_1 \RR\, \Phim_1 \, .
\end{equation}
Recall that $\Omega_0$ and $\delta_0$ are given by \eqref{D-lambda} and \eqref{delta-m} respectivelly. 
\end{thm}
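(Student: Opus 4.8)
The plan is to run the same Grushin/Feshbach scheme as in the proof of Theorem \ref{thm-pauli-res}, but now in the maximally degenerate regime $0<\alpha<1$, retaining the term of order $\mathcal{O}(1)$ instead of discarding it. Since $0<\alpha<1$ forces $n=0$, Lemma \ref{lem-ah-cash} gives $\NN_e(\Pm(\A))=\{0\}$ and $\dim\NN(\Pm(\A))=1$, so $N=1$: the only zero mode is the resonant state $\phi=c_0\,\Phim_1$ normalized by $\LL\phi,\vm\phi\RR=-1$. Consequently $\pom=0$, the matrix $E(\lambda)$ collapses to the scalar $E_{11}(\lambda)$, and $\upsilon_\m=0$ so that $\cc_\m=\eta_\m$; this already explains why $\eta_\m$ (and not $\cc_\m$) appears in the denominator of \eqref{res-pauli-2}. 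I would start from \eqref{eq-res1} and \eqref{grushin}, written here as
\[
\Rm(\lambda,\A)=\Omega(\lambda)R_0(\lambda)-a_{12}(\lambda)\,E_{11}^{-1}(\lambda)\,a_{21}(\lambda)\,R_0(\lambda),
\]
and expand each factor one order beyond what was needed for Theorem \ref{thm-pauli-res}. By \eqref{D-exp} and \eqref{B0-eq-1} the first summand equals $\Omega_0 G_0+\mathcal{O}(\lambda^\mu)$ with $\mu$ as in \eqref{mu}, which supplies the leading piece $\Omega_0 G_0$ of $K_0^\m$ in \eqref{k-hat}.

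For the rank-one part I would compute the remaining three factors to precision $o(1)$. Using $\phi=-G_0\vm\phi$ (Lemma \ref{lem-5}(1)) one has $M(\lambda)\phi=(R_0(\lambda)-G_0)\vm\phi$, and Proposition \ref{prop-exp} together with \eqref{eq-cr-2} (which gives $G_2^\lambda\vm\phi=0$) and the matching identity $\LL\vm\phi,g_0\RR=-4\pi\alpha c_0$ from Corollary \ref{cor-null} yields the leading correction $M(\lambda)\phi=-\tfrac{c_0}{4\pi\alpha\zeta(\alpha)}\lambda^\alpha g_0+\mathcal{O}(\lambda^{\min\{1,2\alpha\}})$. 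Feeding this into \eqref{a12} gives $a_{12}(\lambda)1=\phi+\tfrac{c_0}{4\pi\alpha\zeta(\alpha)}\lambda^\alpha\,\Omega_0 g_0+\mathcal{O}(\lambda^{\alpha+\mu})$, and the symmetric computation for $a_{21}(\lambda)R_0(\lambda)$, based on the adjoint relation $\vm\Omega_0=\Omega_0^*\vm$ from \eqref{om-02}, produces the matching functional $\lef\cdot,\phi\rig+\mathcal{O}(\lambda^\alpha)$ whose $\lambda^\alpha$ part carries $\Omega_0 g_0$. The leading product $E_{11}^{-1}(\lambda)\,\phi\,\LL\cdot,\phi\RR$ reproduces, with the correct sign from \eqref{grushin}, the singular term $-\zeta(\alpha)\lambda^{-\alpha}(1+\eta_\m\zeta(\alpha)\lambda^{1-\alpha})^{-1}\Phim_1\LL\cdot,\Phim_1\RR$, while the two cross terms (each of net order $\lambda^{-\alpha}\cdot\lambda^\alpha=\mathcal{O}(1)$) assemble into $-\tfrac{1}{4\pi\alpha}\big(\Phim_1\lef\cdot,\Omega_0 g_0\rig+\Omega_0 g_0\lef\cdot,\Phim_1\rig\big)$, i.e.~the second group of terms in \eqref{k-hat}.

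The genuinely delicate step, and the one I expect to be the main obstacle, is pinning down the remaining $\mathcal{O}(1)$ scalar multiplying $\LL\cdot,\Phim_1\RR\Phim_1$, namely the coefficient $-\big(\lef g_0,\vm\Omega_0 g_0\rig+4\pi\alpha\delta_0+16\pi^2\alpha^2\big)/16\pi^2\alpha^2$ in \eqref{k-hat}. This requires the full $\lambda^{2\alpha}$-coefficient of $E_{11}(\lambda)$, hence of \emph{both} terms in \eqref{e11}. The contribution from $\LL\vm\phi,M(\lambda)\phi\RR$ produces a $\lambda^{2\alpha}$ term through the resummation factor $(1-p_0(\lambda)\lambda^\alpha)^{-1}$ in \eqref{G1}, which carries $\delta_0$ via $p_0(0)$ and \eqref{sigma-m}. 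For the second term $\LL\vm\phi,M(\lambda)Q_0\Omega(\lambda)Q_0 M(\lambda)\phi\RR$ the key observation is the cancellation $M(0)^{*}\vm\phi=(1+\vm G_0)\vm\phi=\vm(1+G_0\vm)\phi=0$ (equivalently $\Omega_0\phi=0$, since $Q\phi=\phi$), which kills the naive $\mathcal{O}(\lambda^\alpha)$ contribution and leaves a genuine $\lambda^{2\alpha}$ term equal to a multiple of $\lef g_0,\vm\Omega_0 g_0\rig$; here the self-adjointness $\vm\Omega_0=\Omega_0^*\vm$ is exactly what renders this pairing symmetric.

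Because $g_0(r)=r^{\alpha}-\delta_0 r^{-\alpha}$ grows at infinity, one must interpret $\Omega_0 g_0$ and the pairing $\lef g_0,\vm\Omega_0 g_0\rig$ in $\HH^{1,-s}$ with $s>3$, which is precisely where the hypothesis $\rho>7$ and the sharp remainder control of Proposition \ref{prop-exp} enter, and it is the bookkeeping of the $r^{-\alpha}$ tail of $g_0$ together with the normalization $\LL\phi,\vm\phi\RR=-1$ that produces the additive constants $4\pi\alpha\delta_0$ and $16\pi^2\alpha^2$. Collecting the three $\mathcal{O}(1)$ contributions — $\Omega_0 G_0$ from the first summand, the two cross terms, and the $\lambda^{2\alpha}$ coefficient of $E_{11}$ — then yields $K_0^\m$ exactly as in \eqref{k-hat}. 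Finally I would track all discarded remainders: each is of the form $\lambda^{-\alpha}\cdot\mathcal{O}(\lambda^{\alpha+\mu})$ or $\mathcal{O}(\lambda^\mu)$, so the total error is $\mathcal{O}(|\lambda|^\mu)$, completing the proof.
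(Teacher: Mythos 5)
Your proposal follows essentially the same route as the paper's proof: the same Grushin/Feshbach reduction to the scalar $E_{11}(\lambda)$ with $n=0$, $\upsilon_\m=0$, the expansion of $E_{11}$ to order $\lambda^{2\alpha}$ (picking up $\delta_0$ from $p_0(0)$ and $\lef g_0,\vm\Omega_0 g_0\rig$ from the second Grushin term), the matching identity $\LL \vm\phi, g_0\RR=-4\pi\alpha c_0$, and the identity $G_0\vm\Omega_0=Q_0-\Omega_0$ from Proposition \ref{prop-Omega0}; the only difference is that the paper organizes the expansion as $M^{-1}(\lambda)G_0+\lambda^\alpha M^{-1}(\lambda)G_1^0$ rather than factor by factor in $\Omega(\lambda)R_0(\lambda)-a_{12}E_{11}^{-1}a_{21}R_0(\lambda)$, which is only a bookkeeping choice. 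All the ingredients you name are the ones the paper actually uses, and your error accounting matches \eqref{e11-inv}.
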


\begin{proof} 
Because $\alpha <1$, we have $\alpha=\alp$ and  $n=0$. Consequently, $\phi = c_0 \Phim = c_0 \Psim$. 
We are going to use the identity 
\begin{equation} \label{mg01}
R_\m(\lambda,\A) = M^{-1}(\lambda)\,  R_0(\lambda) = M^{-1}(\lambda)\, G_0 + \lambda^\alpha M^{-1}(\lambda)\, G_1^0 + \mathcal{O}\big(|\lambda|^{\mu} \big) , \\[3pt]
\end{equation}
see equations \eqref{grushin}, \eqref{eq-crutial} and \eqref{D-exp}. To prove the claim we have to expand $E_{11}(\lambda)$ to a higher order with respect to expansion \eqref{e11-2}.  Recall that $\Omega_0 Q_0 = Q_0\Omega_0= \Omega_0$, see Lemma \ref{lem-5}. From Lemma \ref{lem-vg3v} and equations \eqref{eq-crutial}, \eqref{e11} we thus conclude that 
\begin{equation*} 
E_{11}(\lambda) = \frac{|c_0|^{2}}{\zeta(\alpha)}\,  \big(1+\eta_\m \zeta(\alpha)\,  \lambda^{-\alpha} \big)  \lambda^{\alpha}   - |c_0|^{2} \, |\lambda|^{2\alp}\,  \frac{\lef g_0,  \vm\, \Omega_0\, g_0 \rig +4\pi\alpha \delta_0}{\,16\pi^2 \alpha^2\, |\zeta(\alpha)|^2}  + \mathcal{O}\big(|\lambda|^{3\alp} \big)  + \mathcal{O}\big(|\lambda|^{1+\alp} \big)  \, ,
\end{equation*}
where we have used the fact that $\upsilon_\m=0$.
This further implies 
\begin{equation}  \label{e11-inv}
 |c_0|^{2} \, E_{11}^{-1}(\lambda) = \frac{ \zeta(\alp)\, \lambda^{-\alp}}{1 +\eta_\m\, \zeta(\alp)\, \lambda^{1-\alp}}\, +\frac{\lef g_0,  \vm\, \Omega_0\, g_0 \rig +4\pi\alpha \delta_0}{16\pi^2 \alpha^2} +  \mathcal{O}\big(|\lambda|^{\mu} \big)  \, . \\[4pt]
\end{equation}
On the other hand, by Lemma \ref{lem-5} and equations \eqref{jj}, \eqref{D-exp}, \eqref{eq-cr-2} 
$$
a_{12}(\lambda) a_{21}(\lambda) = Q -\lambda^\alpha Q\, G_1^0\,  \vm\, \Omega_0 -\lambda^\alpha \Omega_0\,  G_1^0\,  \vm\, Q +  \mathcal{O}\big(|\lambda|^{\alpha+\mu} \big)\, .
$$
Hence using \eqref{grushin} and \eqref{eq-crutial} we obtain
\begin{align*}
M^{-1}(\lambda) G_0 &= \Omega_0 G_0 -\frac{ \zeta(\alp)\, \lambda^{-\alp}\,  \LL \,\cdot\, , \Phim_1 \RR\, \Phim_1}{1 +\eta_\m\, \zeta(\alp)\, \lambda^{1-\alp}}   - \frac{\lef g_0,  \vm\, \Omega_0\, g_0 \rig +4\pi\alpha \delta_0}{16\pi^2 \alpha^2} \  \LL \,\cdot\, , \Phim_1 \RR\, \Phim_1  
\\& \quad 
+  \frac{1}{4\pi\alpha} \Big( \, \Phim_1\, \LL \, \cdot \, ,  G_0\, \Omega_0^*\,  \vm\, g_0\RR - \Omega_0 g_0 \LL \, \cdot\, , \Phim_1\RR \Big)+ \mathcal{O}\big(|\lambda|^{\mu} \big)\, , \\[6pt]
 \lambda^\alpha\, M^{-1}(\lambda) \, G_1^0& = - \frac{Q\, g_0 \, \LL\, \cdot\, , \, g_0\RR}{16 \pi^2 \alpha^2} + \mathcal{O}\big(|\lambda|^{\mu} \big) = -\frac{\Phim}{4\pi\alpha}\ \, \LL\, \cdot\, , \, g_0\RR + \mathcal{O}\big(|\lambda|^{\mu} \big)\, .
\end{align*}
Now, by Proposition \ref{prop-Omega0},
$$
G_0\, \Omega_0^*\,  \vm = G_0\,  \vm\, \Omega_0 = (1+G_0   \vm)\, \Omega_0 -\Omega_0 = Q_0-\Omega_0\, .
$$
This implies 
\begin{align*}
M^{-1}(\lambda) G_0  +  \lambda^\alpha\, M^{-1}(\lambda) \, G_1^0 &= \Omega_0 G_0 - \frac{ \zeta(\alp)\, \lambda^{-\alp}}{1 +\eta_\m\, \zeta(\alp)\, \lambda^{1-\alp}}\,  \LL \,\cdot\, , \Phim_1 \RR\, \Phim_1 - \frac{\lef g_0,  \vm\, \Omega_0\, g_0 \rig +4\pi\alpha \delta_0}{16\pi^2 \alpha^2} \  \LL \,\cdot\, , \Phim_1 \RR\, \Phim_1  
\\ &\quad    - \frac{1}{4\pi\alpha} \big( \, \Phim_1\, \LL \, \cdot \, ,   \Omega_0\, g_0\RR  +\Omega_0 g_0 \LL \, \cdot\, , \Phim_1\RR \big)  -\frac{1}{4\pi\alpha}\  \Phim_1\, \LL\, \cdot\, , \, Q g_0\RR + \mathcal{O}\big(|\lambda|^{\mu} \big)\, \\[4pt]
& = - \frac{ \zeta(\alp)\, \lambda^{-\alp}}{1 +\eta_\m\, \zeta(\alp)\, \lambda^{1-\alp}}\,  \LL \,\cdot\, , \Phim_1 \RR\, \Phim_1 + K_0^\m + \mathcal{O}\big(|\lambda|^\mu \big) \, ,
\end{align*}
where we have used \eqref{eq-crutial} again. The claim follows from \eqref{mg01}.
\end{proof}

Formula \eqref{res-pauli} simplifies considerably when the magnetic field is radial because of the absence of interaction between the zero modes.

\begin{cor} \label{cor-radial}
Let $0 <\alpha\not\in\Z$. 
Suppose that $B$ satisfies Assumption \ref{ass-B} with $\rho > 7$. If $B$ is radial, then 
\begin{equation} \label{R-eq-radial}
\Rm(\lambda, \A)  = -\lambda^{-1} \, \pom +\frac{ \lambda^{\alp-1}}{1+\omega_\m\, \lambda^{\alp}}\,  \frac{ \LL \,\cdot\, , \Phim_2\RR\, \Phim_2 }{ \zeta(1+\alp) \big\|\Phim_2\big\|_2^4} \, - \frac{\zeta(\alp)\, \lambda^{-\alp}}{1 +\eta_\m \, \zeta(\alp)\, \lambda^{1-\alp}}\,  \LL \,\cdot\, , \Phim_1 \RR\, \Phim_1  +\mathcal{O}(1)\,  \\[3pt]
\end{equation} 
as $\lambda\to 0$. Recall that the functions $\Phim_j$ are defined in \eqref{phi-eq}. If $\alpha <1$, then equation \eqref{R-eq-radial} holds with $\pom=\Phi_2^\m=0$.
\end{cor}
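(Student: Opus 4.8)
The plan is to read off the Corollary directly from Theorem \ref{thm-pauli-res} (when $\alpha>1$) and from Theorem \ref{thm-pauli-res2} (when $0<\alpha<1$), exploiting the fact that for radial $B$ all the zero modes occupy pairwise distinct angular momentum channels. First I would invoke Lemma \ref{lem-radial}: radiality gives $\chi=0$ and $\A=A_h$, and the eigenfunctions take the monomial form $\psim_j=d^\m_j\,(x_1+ix_2)^{j-1}e^h$, so that $\psim_j$ lives in the channel $e^{i(j-1)\theta}$ while $\Phim_j=(x_1+ix_2)^{n+1-j}e^h$ lives in $e^{i(n+1-j)\theta}$, see \eqref{phi-eq}. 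In particular $\psim_n$ and $\Phim_2$ are both in channel $n-1$ with $\psim_n=d^\m_n\,\Phim_2$, while $\Phim_1$ sits alone in channel $n$; the asymptotics \eqref{h-asymp} place $\Phim_1\in\Lp^\infty\setminus\Lp^2$ and $\Phim_2\in\Lp^2\setminus\Lp^1$.

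The structural input I would rely on is that \emph{both} $\vm$ and $G_3$ preserve angular momentum channels. For $\vm$ this is \eqref{W-rad}, giving $\vm=w(r)\,\partial_\theta+v(r)$ with radial $w,v$; for $G_3$ it follows from the partial-wave representation \eqref{r0-eq} of $R_0(\lambda)$ together with Proposition \ref{prop-exp}, whose kernels are all of the form $\sum_m(\cdots)\,e^{im(\theta-\theta')}$. Consequently $\vm G_3\vm\,\Phim_1$ stays in channel $n$, and since $\pom$ projects onto $\mathrm{span}\{\psim_1,\dots,\psim_n\}$, i.e.\ onto channels $0,\dots,n-1$, I obtain ${\rm X}_\m\,\Phim_1=\pom\,\vm G_3\vm\,\Phim_1=0$. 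The same channel argument applied to \eqref{phi-2} yields $\Psim=\Phim_1$, so \eqref{varphi-minus} gives $\varphi^\m=(1-{\rm X}_\m)\Phim_1=\Phim_1$, and \eqref{omega-m1} gives $\upsilon_\m=0$, whence $\cc_\m=\eta_\m$ by \eqref{constants-m}. This already converts the resonant-state term of \eqref{res-pauli} into the third term of \eqref{R-eq-radial}.

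Next I would simplify the eigenfunction term. Channel orthogonality makes the Gram matrix $T_{jk}=\LL\psim_j,\psim_k\RR$ diagonal, with $T_{nn}=\|\psim_n\|_2^2$. From \eqref{psi-minus}, the normalization \eqref{ef-hat} (which, by self-adjointness of $\vm$, gives $\LL\vm\psim_n,\psim_k\RR=-\delta_{k,n}$), and \eqref{poh}, I then get $\psim=\pom\,\vm\psim_n=-\psim_n/\|\psim_n\|_2^2$. Substituting $\psim_n=d^\m_n\,\Phim_2$ turns the rank-one operator $\psim\LL\,\cdot\,,\psim\RR$ into $|d^\m_n|^{-2}\|\Phim_2\|_2^{-4}\,\Phim_2\LL\,\cdot\,,\Phim_2\RR$; multiplying by $\nu_\m=|d^\m_n|^2/\zeta(1+\alp)$ from \eqref{constants-m} cancels the factor $|d^\m_n|^2$ and reproduces exactly the second term of \eqref{R-eq-radial}, leaving $\omega_\m$ untouched in the denominator. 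Collecting the three contributions gives \eqref{R-eq-radial} for $\alpha>1$; for $0<\alpha<1$ one has $n=0$, hence $\pom=\psim=0$ and $\Phim_2$ is absent, so \eqref{R-eq-radial} collapses to the statement of Theorem \ref{thm-pauli-res2} (where $\upsilon_\m=0$ holds as well).

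The only genuinely delicate point is the rigorous justification that $G_3$ acts diagonally on the angular momentum channels in the operator topology $\B(-1,s;1,-s)$, and that the pairings $\LL\vm G_3\vm\,\Phim_1,\psim_k\RR$ and $\LL\vm\psim_n,\psim_k\RR$ vanish identically (not merely to leading order) across distinct channels. I expect this to come out cleanly once every operator is written through its partial-wave kernel and the angular variable is integrated out, since each radial factor is channel-labelled and $\int_0^{2\pi}e^{i(m-m')\theta}\,d\theta=0$ for $m\neq m'$; still, the bookkeeping of the complex conjugates in the sesquilinear pairings, and the check that the monomial eigenfunctions are square-integrable while $\Phim_1$ is merely bounded, must be carried out carefully to ensure the rank-one structures are placed in the correct slots.
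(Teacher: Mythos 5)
Your proposal is correct and follows essentially the same route as the paper's own proof: it deduces from the partial-wave kernel \eqref{g3-kernel} of $G_3$ and from \eqref{W-rad} that the zero modes decouple by angular momentum, whence $\upsilon_\m=0$, $\varphi^\m=\Phim_1$, and $\nu_\m\,\psim\LL\,\cdot\,,\psim\RR=\LL\,\cdot\,,\Phim_2\RR\,\Phim_2/(\zeta(1+\alp)\|\Phim_2\|_2^4)$, and then substitutes into \eqref{res-pauli}. The only difference is that you spell out the channel bookkeeping in more detail than the paper does.
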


\begin{proof}
By \eqref{g3-kernel}
$$ 
G_3(x,y) = \sum_{m\in\Z} G_{m,3}(r,t )\, e^{ im (\theta-\theta')}\, .
$$
From equation \eqref{W-rad} we thus infer that ${\rm X}_\m\, \Pi_0^\m = \Pi_0^\m {\rm X^*} =0$, which implies $\upsilon_\m=0$ and $\varphi^\m=\Phim_1$, cf.~\eqref{varphi-minus}. Moreover, in view of \eqref{psij-radial} and of the normalization of $\psim_n$ we deduce that $\psi_n^\m = d_n\, \Phim_2$. Hence by \eqref{psi-minus}, \eqref{constants-m} and \eqref{norm1}
\begin{equation} \label{pw0-radial}
\nu_\m \,   \psim\, \LL \,  \cdot\ , \psim \RR=  \frac{ \LL \,\cdot\, , \Phim_2\RR\, \Phim_2 }{ \zeta(1+\alp) \big\|\Phim_2\big\|_2^4}\, .
\end{equation}
Now equation \eqref{res-pauli} yields the claim.
\end{proof}

\subsection{Expansion of $\Rp(\lambda, \A)$}  In this section we state the threshold expansion of $\Rp(\lambda, \A)$. As already mentioned, this expansion is regular and could be deduced from \cite[Thm.~2.3]{ko}. In order to complete the picture we sketch the proof.

\begin{prop} \label{prop-pauli-regular}
Suppose that $B$ satisfies Assumption \ref{ass-B} for some $\rho > 7$, and that $0 <\alpha\not\in\Z$.  Then as $\lambda\to 0$, 
\begin{align} \label{res-pauli-reg}
\Rp(\lambda,\A)  & = (1+ G_0 \vp )^{-1}\, G_0  + (1+ G_0 \vp)^{-1} F_1(\alpha)\,  (1+ \vp G_0)^{-1}\, \lambda^{\mu} 
+o( \lambda^{\mu} )
\end{align} 
holds in $\B(-1,s;1,-s)$ for all $s>3$.  Here $\mu$ is given by \eqref{mu}, and  
\begin{equation*}
 F_1(\alpha) = G_1^0 \quad \text{if}\quad \alp <\frac 12, \qquad  F_1(\alpha) = G_2^0 \quad \text{if} \quad \alp >\frac 12, \qquad 
 F_1(\alpha) = G_1^0 +G_2^0 \quad \text{if}\quad \alp =\frac 12\, .
\end{equation*}
\end{prop}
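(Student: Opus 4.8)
The plan is to exploit the fact that, because $\alpha>0$, the spin-up component carries no zero modes, so inverting $1+R_0(\lambda)\vp$ is a genuinely \emph{regular} (Fredholm-analytic) perturbation problem, requiring none of the Grushin/Feshbach apparatus of the previous section. First I would record that $P_\pp(\A)=H_0+\vp$ and that, by Lemma \ref{lem-ah-cash}(i), $\NN(P_\pp(\A))=\{0\}$. By Corollary \ref{cor-w0} the operator $\vp$ satisfies Assumption \ref{ass-W} with $\tau=\rho-1>6$, so $G_0\vp$ is compact on $\HH^{1,-s}$ for $\tau/2<s<\tau-3$, exactly as in the proof of Lemma \ref{lem-5}. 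Since Lemma \ref{lem-5}(1) identifies the kernel of $1+G_0\vp$ with $\NN(H_0+\vp)=\{0\}$, the Fredholm alternative makes $1+G_0\vp$ invertible on $\HH^{1,-s}$; equivalently, this is Lemma \ref{lem-5}(3) with trivial projection $Q_0=1$, so $\Omega_0=(1+G_0\vp)^{-1}$. I would fix an admissible $s\in(3,\rho-4)$ and start from the resolvent equation \eqref{eq-res1}, $\Rp(\lambda,\A)=(1+R_0(\lambda)\vp)^{-1}R_0(\lambda)$.

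Next I would extract the leading correction of $R_0(\lambda)$. From Proposition \ref{prop-exp} and the definitions \eqref{G1}, \eqref{G2}, the denominators $1-p_n(\lambda)\lambda^{\alp}$ and $1-p_{n+1}(\lambda)\lambda^{1-\alp}$ both tend to $1$, so $\lambda^{\alp}G_1^\lambda=\lambda^{\alp}G_1^0+o(\lambda^{\alp})$ and $\lambda^{1-\alp}G_2^\lambda=\lambda^{1-\alp}G_2^0+o(\lambda^{1-\alp})$, while every remaining term in \eqref{B0-eq-1} is of order at least $\lambda$ and hence $o(\lambda^\mu)$, as $\mu\le 1/2$. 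Comparing $\alp$ and $1-\alp$ against $\mu=\min\{\alp,1-\alp\}$ then yields, in $\B(-1,s;1,-s)$,
\begin{equation*}
R_0(\lambda)=G_0+\lambda^{\mu}\,F_1(\alpha)+o(\lambda^{\mu}),
\end{equation*}
with $F_1(\alpha)=G_1^0$ for $\alp<\tfrac12$, $F_1(\alpha)=G_2^0$ for $\alp>\tfrac12$, and $F_1(\alpha)=G_1^0+G_2^0$ for $\alp=\tfrac12$, which is precisely the case distinction in the statement.

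Finally I would invert $M_+(\lambda):=1+R_0(\lambda)\vp=(1+G_0\vp)+\lambda^{\mu}F_1(\alpha)\vp+o(\lambda^{\mu})$ by a first-order Neumann expansion about the invertible operator $1+G_0\vp$, obtaining $M_+(\lambda)^{-1}=(1+G_0\vp)^{-1}-\lambda^{\mu}(1+G_0\vp)^{-1}F_1(\alpha)\vp(1+G_0\vp)^{-1}+o(\lambda^{\mu})$. Multiplying by $R_0(\lambda)=G_0+\lambda^{\mu}F_1(\alpha)+o(\lambda^{\mu})$ and collecting orders, the $\lambda^{\mu}$-coefficient factors as $(1+G_0\vp)^{-1}F_1(\alpha)\bigl[1-\vp(1+G_0\vp)^{-1}G_0\bigr]$. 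The computation then closes via the algebraic resolvent identity
\begin{equation*}
1-\vp\,(1+G_0\vp)^{-1}G_0=(1+\vp G_0)^{-1},
\end{equation*}
which follows from the intertwining relation $\vp(1+G_0\vp)^{-1}=(1+\vp G_0)^{-1}\vp$ (itself a consequence of $\vp(1+G_0\vp)=(1+\vp G_0)\vp$). This reproduces \eqref{res-pauli-reg}. The only genuinely delicate points are bookkeeping: verifying that the Neumann remainder together with the discarded terms of \eqref{B0-eq-1} are $o(\lambda^{\mu})$ in the $\B(-1,s;1,-s)$ topology — this is where the hypothesis $\rho>7$, forcing $\tau>6$ and a nonempty range of $s>3$, is used — and checking that the displayed identity holds as an operator equation on the relevant weighted spaces rather than only formally.
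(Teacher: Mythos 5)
Your proposal is correct and follows essentially the same route as the paper's proof: the resolvent equation, the invertibility of $1+G_0\vp$ via Lemma \ref{lem-5} (since $\NN(P_\pp(\A))=\{0\}$ by the Aharonov--Casher theorem), the reduction of Proposition \ref{prop-exp} to $R_0(\lambda)=G_0+\lambda^{\mu}F_1(\alpha)+o(\lambda^{\mu})$, a first-order Neumann expansion, and the identity $1-\vp(1+G_0\vp)^{-1}G_0=(1+\vp G_0)^{-1}$ (the paper phrases it as $\vp(1+G_0\vp)^{-1}G_0=(1+\vp G_0)^{-1}\vp G_0$, which is the same algebra). Your remarks on compactness and on the admissible range of $s$ merely make explicit what the paper delegates to Lemma \ref{lem-5} and Corollary \ref{cor-w0}.
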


\begin{proof}
Let $s>3$. By Proposition \ref{prop-exp}  
$$
R_0(\lambda) = G_0 +  F_1(\alpha) \,\lambda^{\mu} +o( \lambda^{\mu} ) \qquad \lambda\to 0.
$$
Since $\NN(P_\pp(\A)) =\{0\}$, see Lemma \ref{lem-ah-cash}, we deduce from Lemma \ref{lem-5} that $1+ G_0 \vp$ is invertible in $\HH^{1,-s}$. Expanding $(1+ R_0(\lambda) \vp )^{-1}$ into the Neumann series then gives
$$
(1+ R_0(\lambda) \vp )^{-1} = (1+ G_0 \vp )^{-1} - (1+ G_0 \vp )^{-1} F_1(\alpha)\ \vp\,  (1+ G_0 \vp )^{-1} \, \lambda^{\mu} +o( \lambda^{\mu} ) 
$$
Moreover, by duality $1+ \vp G_0$ is invertible in $\HH^{-1,s}$, and 
$$
\vp  (1+ G_0 \vp )^{-1} \, G_0 =  (1+ \vp G_0  )^{-1} \, \vp G_0\, .
$$
holds in $\HH^{-1,s}$.
Hence
\begin{align*}
 (1+ R_0(\lambda) \vp )^{-1} R_0(\lambda) &= (1+ G_0 \vp )^{-1}\, G_0   +(1+ G_0 \vp)^{-1} F_1(\alpha)\big[1- (1+ \vp G_0  )^{-1} \, \vp G_0\big]\, \lambda^{\mu}+o( \lambda^{\mu} ) \\
& =  (1+ G_0 \vp )^{-1}\, G_0  + (1+ G_0 \vp)^{-1} F_1(\alpha)\,  (1+ \vp G_0)^{-1}\, \lambda^{\mu} \, 
+o( \lambda^{\mu} ), 
\end{align*} 
and  the claim follows from \eqref{eq-res1}.
\end{proof}

\begin{rem} 
The conditions on $\rho$ and $s$ in Proposition \ref{prop-pauli-regular} could be somewhat relaxed, see \cite[Thm.~2.3]{ko}. However, since we consider $\Rp(\lambda,\A)$ a part of the resolvent of the Pauli operator it is natural to use the same conditions as in Theorem \ref{thm-pauli-res}.
\end{rem}

\section{\bf Resolvent expansion of the Pauli operator; integer flux} 
\label{sec-exp-int}
\subsection{Resonant states}

Throughout this section we assume that $0< \alpha\in \Z$.  Lemma \ref{lem-ah-cash} then gives
$$
 {\rm dim} (\NN_r(\Pm(\A)))  = 2, \qquad {\rm dim} (\NN_e(\Pm(\A)))  = n=  \alpha-1 .
$$
To find the asymptotic expansion of $M^{-1}(\lambda)$ we proceed in the same way as in the case of non-integer flux. First we construct the basis of $\NN_e(\Pm(\A))$ by normalizing  the  eigenfunctions of $\Pm(\A)$ in the same way as in Section  \ref{ssec-zero-ef}. We thus obtain 
 $n$ zero eigenfunctions of $\Pm(\A)$ which satisfy \eqref{ef-hat}.  This time, however, in addition to the zero eigenfunctions there are two linearly independent resonant states. To construct a basis of $\NN_r(\Pm(\A))$ satisfying \eqref{norm2} we let
\begin{equation} \label{phi-hat-2}
\phi_2 = c_2\, \Psim_2,  \qquad \Psim_2 = \Phim_2 + \sum_{j=1}^n\,  \LL \, \Phim_2,  \vm\, \psim_j\RR\, \psim_j ,
\end{equation} 
 where the constant $c_2$ is chosen so that $ \LL \, \phi_2, - \vm\, \phi_2\RR =1$. Next we define
\begin{equation} \label{kappa}
\kappa_{\scriptscriptstyle -} = \LL \Psim_2 ,\,  \vm\, \Phim_1 \RR, 
\end{equation} 
and put
\begin{equation}  \label{phi-hat-1}
\phi_1 = c_1\, \Psim_1  \qquad \text{with} \qquad \Psim_1= \Phim_1 +\kappa_{\scriptscriptstyle -}\, \Psim_2 + \sum_{j=1}^n\,  \LL \, \Phim_1,  \vm\, \psim_j\RR\, \psim_j ,
\end{equation} 
where the constant $ c_1$ is chosen such that $ \LL \, \phi_1,  \vm\, \phi_1\RR =-1$. In this way we obtain two linearly independent resonant states, $\phi_1$ and $\phi_2$, which satisfy conditions \eqref{norm2} with $W= \vm$.

\begin{rem} \label{rem-sigma} 
Contrary to the case of non integer flux, the resonant state $\Phim_1$ does not vanish at infinity, see \eqref{h-asymp},  \eqref{phi-eq}. As a consequence, the sub-leading term of $e^{h+i\chi}$ influences the expansion of $\phi_1$ up to the relevant order $|x|^{-1}$. This is where the coefficients $\vartheta_j$ enter into the expansion, see equation \eqref{varkappa} below.
\end{rem}

\begin{lem} \label{lem-res-int} 
Let  $\phi_1$ and $\phi_2$ be as above. Then 
\begin{align*} 
 \LL  \vm\, \phi_2, \, g_+\,   e^{i (\alpha+1) \theta} \RR & =  0,  \qquad \qquad \LL  \vm\, \phi_2, \, \g_\alpha\,   e^{i \alpha \theta} \RR = 0\, ,  \qquad\qquad\, \LL  \vm\, \phi_2, \, g_-\,   e^{i (\alpha-1) \theta} \RR  = -4\pi \, c_2\  \\ 
\LL  \vm\, \phi_1, \, \g_\alpha\,   e^{i \alpha \theta} \RR & =- 2\pi\, c_1 \, , \quad 
 \LL  \vm\, \phi_1, \, g_-\,   e^{i (\alpha-1) \theta} \RR  = -4\pi \, \varkappa \, c_1\,  ,  \quad  \LL  \vm\, \phi_1, \, g_+\,  e^{i (\alpha+1) \theta} \RR = 0 , 
\end{align*} 
where 
\begin{equation}\label{varkappa}
\varkappa= \kappa_{\scriptscriptstyle -} + \vartheta,  \qquad \text{and}\qquad \vartheta=\vartheta_1+i\vartheta_2. \\[3pt]
\end{equation}
Recall that the functions $\g_\alpha$ and $g_\pm$ are given by equations \eqref{g_m}, \eqref{g-alpha} and \eqref{short-pm}. 
\end{lem}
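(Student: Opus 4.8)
The plan is to establish all six identities by the asymptotic matching that underlies the whole paper: each pairing $\LL \vm\phi_i, g_m\, e^{im(\cdot)}\RR$ in the statement is, up to an explicit numerical factor, a coefficient in the spatial asymptotic expansion of the zero mode $\phi_i$ furnished by Corollary \ref{cor-null}. Since $\phi_1,\phi_2\in \NN(\Pm(\A))=\NN(H_0+\vm)$ and $\vm$ satisfies Assumption \ref{ass-W} by Corollary \ref{cor-w0}, the integer-flux form of Corollary \ref{cor-null} applies and gives, for $u\in\{\phi_1,\phi_2\}$ and $r>1$,
$$
u(r,\theta) = -\frac{e^{i\alpha\theta}}{2\pi}\,\LL\vm u,\g_\alpha\, e^{i\alpha(\cdot)}\RR - \frac{r^{-1}}{4\pi}\Big[e^{i(\alpha+1)\theta}\LL\vm u,g_+\, e^{i(\alpha+1)(\cdot)}\RR + e^{i(\alpha-1)\theta}\LL\vm u,g_-\, e^{i(\alpha-1)(\cdot)}\RR\Big] + \mathcal{O}(r^{-2}),
$$
the $\mathcal{O}(r^{-2})$ absorbing $u_1$ together with the $\Lp^1$-remainder $\tilde u$ (the latter cannot carry an $\mathcal{O}(1)$ or $\mathcal{O}(r^{-1})$ angular harmonic, since such a term is not integrable). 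The task thus reduces to computing the genuine $\mathcal{O}(1)$ and $\mathcal{O}(r^{-1})$ terms of $\phi_1,\phi_2$ directly from \eqref{phi-hat-2} and \eqref{phi-hat-1}, and matching coefficients of the orthogonal harmonics $e^{i\alpha\theta}$, $e^{i(\alpha\pm1)\theta}$.

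First I would record the asymptotics of the building block $e^{h+i\chi}$. Writing $x_1+ix_2=re^{i\theta}$ and combining \eqref{h-asymp} with \eqref{phi-infty}, the real dipole of $h$ and the imaginary dipole of $\chi$ recombine, via $(\vartheta_1+i\vartheta_2)(x_1-ix_2)/|x|^2$, into a single term:
$$
e^{h(x)+i\chi(x)} = r^{-\alpha}\Big(1 + \frac{\vartheta\, e^{-i\theta}}{r} + \mathcal{O}(r^{-2})\Big), \qquad \vartheta=\vartheta_1+i\vartheta_2.
$$
Feeding this into \eqref{phi-eq} (with $[\alpha]=\alpha$ here) yields $\Phim_1 = e^{i\alpha\theta} + \vartheta\, e^{i(\alpha-1)\theta}r^{-1} + \mathcal{O}(r^{-2})$ and $\Phim_2 = e^{i(\alpha-1)\theta}r^{-1} + \mathcal{O}(r^{-2})$. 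The eigenfunction corrections $\sum_j\LL\Phim_\bullet,\vm\psim_j\RR\psim_j$ in \eqref{phi-hat-2}, \eqref{phi-hat-1} only involve harmonics $e^{ik\theta}$ with $k\le n-1=\alpha-2$ and decay at worst like $\psim_n\sim r^{-2}$, hence they contribute nothing at orders $\mathcal{O}(1)$ or $\mathcal{O}(r^{-1})$.

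Consequently $\Psim_2 = e^{i(\alpha-1)\theta}r^{-1} + \mathcal{O}(r^{-2})$, while \eqref{phi-hat-1} adds $\kappa_{\scriptscriptstyle -}\Psim_2$ to $\Phim_1$ and produces $\Psim_1 = e^{i\alpha\theta} + \varkappa\, e^{i(\alpha-1)\theta}r^{-1} + \mathcal{O}(r^{-2})$ with $\varkappa=\kappa_{\scriptscriptstyle -}+\vartheta$, exactly as in \eqref{varkappa}. Matching the expansions of $\phi_i=c_i\Psim_i$ against the displayed Corollary \ref{cor-null} formula harmonic by harmonic then gives the six identities: the vanishing statements record the absent harmonics (no $e^{i\alpha\theta}$ in $\phi_2$, and no $e^{i(\alpha+1)\theta}$ in either $\phi_i$, since the $\vartheta$-dipole only lowers the harmonic), while the values $-2\pi c_1$, $-4\pi c_2$ and $-4\pi\varkappa c_1$ follow by reading off the coefficients $c_1$, $c_2/r$ and $c_1\varkappa/r$ against the normalizing factors $-\tfrac{1}{2\pi}$ and $-\tfrac{1}{4\pi r}$.

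The one delicate point — and the main obstacle — is the $\mathcal{O}(r^{-1})$ level in $\phi_1$, flagged in Remark \ref{rem-sigma}: the dipole term of $e^{h+i\chi}$ (carrying $\vartheta$) and the $\kappa_{\scriptscriptstyle -}$-weighted leading term of $\Psim_2$ both live at order $r^{-1}$ in the harmonic $e^{i(\alpha-1)\theta}$, so they must be tracked simultaneously, and their sum is precisely what produces $\varkappa$. I would therefore devote the most care to verifying the recombination of the $h$- and $\chi$-dipoles in the expansion of $e^{h+i\chi}$ above, and to confirming that no further $r^{-1}$ contribution hides in the eigenfunction corrections.
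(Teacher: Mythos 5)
Your argument is correct and is essentially the paper's own proof: the paper likewise computes the spatial asymptotics of $\phi_1,\phi_2$ from \eqref{phi-hat-2}, \eqref{phi-hat-1} together with \eqref{h-asymp} and \eqref{phi-infty} (obtaining $\phi_1 = c_1 e^{i\alpha\theta} + c_1(\kappa_{\scriptscriptstyle -}+\vartheta)r^{-1}e^{i(\alpha-1)\theta}+\mathcal{O}(r^{-2})$ and $\phi_2 = c_2 r^{-1}e^{i(\alpha-1)\theta}+\mathcal{O}(r^{-2})$) and then matches against Corollary \ref{cor-null} with $W=\vm$. Your additional remarks — the recombination of the $h$- and $\chi$-dipoles into $\vartheta e^{-i\theta}/r$, the $L^1$-obstruction that makes the harmonic-by-harmonic matching unambiguous, and the check that the $\psim_j$-corrections are $\mathcal{O}(r^{-2})$ — are exactly the details the paper leaves implicit, and they are all verified correctly.
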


\begin{proof}
A combination of \eqref{phi-hat-2} resp.~\eqref{phi-hat-1} with equations \eqref{phi-infty} and \eqref{h-asymp}  shows that, as $|x|\to \infty$
\begin{align*}
\phi_1(x) & = c_1 \, e^{i \alpha \theta}  +c_1 \kappa_{\scriptscriptstyle -} \, |x|^{-1} \, e^{i (\alpha-1) \theta} + c_1 \vartheta \, |x|^{-1}\,  e^{i (\alpha-1) \theta} + \mathcal{O}(|x|^{-2})  \\
\phi_2(x) & = c_2 \, |x|^{-1}\,  e^{i (\alpha-1) \theta} +  \mathcal{O}(|x|^{-2}) .
\end{align*}
Application of Corollary \ref{cor-null} with $W=  \vm$  completes the proof.
\end{proof}

\subsection{The operator $M(\lambda)^{-1}$} As in Section \ref{sec-exp-non-int} we apply the Grushin-Schur method. However, since there are two resonant states, the procedure has to be modified accordingly. 

We define operators $\J_r: \C^{2} \to \HH^{1,-s}, \, \J_e: \C^{\alpha-1} \to \HH^{1,-s}\, $ and $\J^*_r: \HH^{1,-s}\, \to \C^{2}, \, \J^*_e: \HH^{1,-s}\, \to \C^{\alpha-1}$ by 
$$
\J_r \, z = z_1 \, \phi_1 + z_2 \,\phi_2 , \qquad \J_e\,  z = \sum_{k=1}^{\alpha-1} z_ {k}\,\psim_k, 
$$
and
$$
\J_r^* u =  \Big( \LL u, -  \vm\, \phi_1  \RR ,\, \LL u, -  \vm\, \phi_2  \RR \Big)^{\rm T} , \qquad  \J_e^* u =\Big(  \LL u, -  \vm \, \psim_1\RR , \dots ,  \LL u, -  \vm\, \psim_n\RR \Big)^{\rm T} \, .
$$
Now let 
\begin{equation} \label{j-split}
\J = \J_r+\J_e: \C^{\alpha+1} \to \HH^{1,-s}, \qquad \J^* = \J^*_r+\J^*_e :  \HH^{1,-s}\, \to \C^{\alpha+1} . \\[3pt]
\end{equation}
It follows that  $M(\lambda)$ is invertible in $\HH^{1,-s}\, $ if and only if  the $(\alpha+1)\times(\alpha+1)$ matrix 
\begin{align} \label{E-int}
E(\lambda) &= -\J^*M(\lambda) \J +\J^*M(\lambda)Q_0\Omega(\lambda)Q_0 M(\lambda)\J\, 
\end{align} 
is invertible in $\C^{\alpha+1}$. Furthermore, $M(\lambda)^{-1}$ satisfies equation \eqref{grushin} with $\Omega(\lambda)$ given by \eqref{D-lambda}, and with
\begin{align} \label{a12-int}
a_{12}(\lambda)& = \J- \Omega(\lambda)Q_0 M(\lambda) \J, \qquad a_{21}(\lambda)  = \J^*-\J^*M(\lambda)Q_0 \Omega(\lambda)\, .
\end{align}
Next we expand the matrix elements of $E(\lambda)$ to the needed level of precision. We start by estimating the second term on the right hand side of \eqref{E-int}. 
 With the help of \eqref{B0-eq-2-int} we expand $\Omega(\lambda)$, cf.~\eqref{D-lambda},  into a Neumann series; 
$$
\Omega(\lambda) = \Omega_0  -\frac{\Omega_0 Q_0 \G_1 Q_0 \Omega_0 }{\log\lambda-z_\alpha} + \mathcal{O}(\lambda\log\lambda)\, .
$$
The latter implies that 
\begin{equation} \label{E11-bis}
\LL  \vm\, \phi_1, \, M(\lambda)Q_0\Omega(\lambda)Q_0 M(\lambda)\, \phi_1 \RR = \frac{4\pi |c_1|^2}{(\log\lambda-z_\alpha)^2}  \Big(q_1 +\frac{q_2}{\log\lambda-z_\alpha} \Big)+  \mathcal{O}(\lambda) , \\[3pt]
\end{equation} 
for some $q_1, q_2\in\C$, and that all the other entries of $\J_r^*M(\lambda)Q_0\Omega(\lambda)Q_0 M(\lambda)\J_r$ are of order $\mathcal{O}(\lambda)$.

Now, let us consider the $2\times 2$ matrix 
\begin{equation*} 
\E(\lambda) = -\J_r^* M(\lambda) \J_r  + \J_r^*M(\lambda)Q_0\Omega(\lambda)Q_0 M(\lambda)\J_r\, .
\end{equation*}
By Proposition \ref{prop-exp-int}, equation \eqref{E11-bis},   and Lemma \ref{lem-res-int} it follows that
\begin{align} \label{e11-int}
\E_{11}(\lambda) 
 &=  \frac{4\pi\, |c_1|^2 }{\log\lambda -z_\alpha}\Big(1-\frac{q_1}{\log\lambda-z_\alpha}  -\frac{q_2}{(\log\lambda-z_\alpha)^2}\  \Big) - \pi |\varkappa \, c_1|^2\, \lambda\log\lambda + \mathcal{O}(\lambda) .
\end{align}
In the same way we find 
\begin{equation} \label{e12-int}
\begin{aligned}
\E_{22}(\lambda) &=   -\pi |c_2|^2\, \lambda\log\lambda + \pi |c_2|^2\, \w_\m\, \lambda +\mathcal{O}(\lambda/\log\lambda), \\[2pt]
 \E_{12}(\lambda) & =  \overline{\E_{21}} =  \pi \varkappa\,  c_1 \bar c_2\, \lambda \log\lambda  + \, \mathcal{O}(\lambda)  , 
\end{aligned}
\end{equation}
where 
\begin{align*}
\w_\m &= \frac{i |c_2|^{-2}}{16\pi }\,  \big| \LL  \vm\, \phi_2, \, g_+\,   e^{i (\alpha+1) \theta} \RR \big |^2 +  \frac{i |c_2|^{-2}}{16\pi }\, \big | \LL  \vm\, \phi_2, \, g_-\,   e^{i (\alpha-1) \theta} \RR \big |^2 +\frac{1}{\pi}\, \LL   \Phi^\m_2, \,\vm\, \G_3\,  \vm\, \Phi^\m_2 \RR 
\end{align*}
Note that in view of Lemma \ref{lem-res-int}  and Remark \ref{rem-G3-int} we have
\begin{equation} \label{w-alpha-2}
\w_\m = i\pi +m_\m  \qquad \text{with} \qquad m_\m:= \pi^{-1} \LL\,   \vm \Psim_2,\, \G_3\,  \vm \Psim_2 \RR \, \in \R\, .
\end{equation} 
To proceed we have to replace the vector \eqref{L-m}  by a $2\times(\alpha-1)$ matrix $\Ln$ with entries
\begin{equation} \label{mathcal-L} 
(\Ln )_{jk} = \LL \phi_j,   \vm\, \G_3\,   \vm\, \psim_k\RR\, .
\end{equation}
We then have 
\begin{equation*} 
 -\J_r^* M(\lambda) \J_e  + \J_r^*M(\lambda)Q_0\Omega(\lambda)Q_0 M(\lambda)\J_e\,  = \lambda \Ln + \mathcal{O}(\lambda/\log\lambda)\, . \\[2pt]
\end{equation*}
In order to expand the matrix elements of $E(\lambda)$ arising from eigenfunctions $\psim_j$ we need the following modification of Lemmas \ref{lem-jost} and \ref{lem-vg3v-0}.

\begin{lem}  \label{lem-jost-int}
Let $B$ satisfy assumption \ref{ass-B} with $\rho>4$. Then for $ j,k=1,\dots ,\alpha-1$, 
\begin{align} \label{i4-int}
\LL  \vm \,\psim_j,\,  \G_i\,   \vm\, \psim_k \RR  & = 0  \qquad i\in \{ 1,2, 4\}\\
\LL  \vm \,\psim_j, \, \G_5\,   \vm\, \psim_k \RR &  = 8 \pi^2  \,  |d^\m_n|^2  \, \delta_{j,n}\,  \delta_{k,n} \ , \label{i5-int}
\end{align}
where $d^\m_n:=d^\m_{n,n}$ is given by \eqref{ef-hat}.
\end{lem}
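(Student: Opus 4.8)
The plan is to follow the scheme of Lemma \ref{lem-jost}, now adapted to the integer-flux expansion of Proposition \ref{prop-exp-int}. The operators $\G_1$, $\G_2^\lambda$, $\G_4$, $\G_5$ are all finite rank, and their kernels \eqref{G1-int}--\eqref{g5-int} are supported on the angular channels $e^{i\alpha(\theta-\theta')}$, $e^{i(\alpha\pm1)(\theta-\theta')}$ and $e^{i(\alpha\pm2)(\theta-\theta')}$. Carrying out the angular integration, each matrix element $\LL\vm\psim_j,\,\G_i\,\vm\psim_k\RR$ reduces to a finite sum of products of the overlap integrals $\LL\vm\psim_\ell,\,\g_\alpha\, e^{i\alpha(\cdot)}\RR$, $\LL\vm\psim_\ell,\,g_\pm\, e^{i(\alpha\pm1)(\cdot)}\RR$, $\LL\vm\psim_\ell,\,g_{\alpha\pm2}\, e^{i(\alpha\pm2)(\cdot)}\RR$, together with the analogous overlaps against $\mathfrak{j}_\alpha$ and $\mathfrak{j}_\pm$. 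Thus the whole lemma rests on evaluating these overlaps, which is the integer counterpart of equation \eqref{w0gm}.

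The key step is to obtain the overlaps by matching the two descriptions of the spatial asymptotics of $\psim_j$. On one side, Corollary \ref{cor-null} in its integer form (applied with $W=\vm$ and $u=\psim_j$, which is legitimate by Corollary \ref{cor-w0} since $\rho>4$) expresses the leading behaviour of $\psim_j$ for $r>1$ as a constant multiple of $e^{i\alpha\theta}$, of $r^{-1}e^{i(\alpha\pm1)\theta}$ and of $r^{-2}e^{i(\alpha\pm2)\theta}$, with coefficients proportional precisely to the overlaps above. On the other side, \eqref{ef-hat} combined with \eqref{h-asymp} and \eqref{phi-infty} gives the direct expansion
\begin{equation*}
\psim_j(r,\theta)=\sum_{k=1}^{j} d^\m_{j,k}\, r^{\,k-1-\alpha}\, e^{i(k-1)\theta}\Big(1+\frac{\vartheta\, e^{-i\theta}}{r}+\mathcal{O}(r^{-2})\Big),\qquad \vartheta=\vartheta_1+i\vartheta_2 .
\end{equation*}
Comparing channel by channel and power by power, and using $n=\alpha-1$, the highest angular channel present in $\psim_j$ is $k-1\le j-1\le n-1=\alpha-2$. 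Hence the channels $\alpha$, $\alpha\pm1$ and $\alpha+2$ are absent for every $j\le n$, whereas the channel $\alpha-2$ occurs at order $r^{-2}$ only when $j=n$, with coefficient $d^\m_n$. Matching this coefficient against the $u_1$-term of Corollary \ref{cor-null} yields, for all $j\le n$,
\begin{equation*}
\LL\vm\psim_j,\,\g_\alpha\, e^{i\alpha(\cdot)}\RR=\LL\vm\psim_j,\,g_\pm\, e^{i(\alpha\pm1)(\cdot)}\RR=\LL\vm\psim_j,\,g_{\alpha+2}\, e^{i(\alpha+2)(\cdot)}\RR=0,
\end{equation*}
while $\LL\vm\psim_j,\,g_{\alpha-2}\, e^{i(\alpha-2)(\cdot)}\RR=0$ for $j<n$ and equals a fixed nonzero multiple of $d^\m_n$ for $j=n$.

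With these overlaps in hand the conclusions follow by inspection of the kernels. The kernel of $\G_1$ lives in the channel $\alpha$ and produces the factor $\LL\vm\psim_k,\,\g_\alpha\, e^{i\alpha(\cdot)}\RR=0$; the kernel of $\G_2^\lambda$ lives in the channels $\alpha\pm1$ and produces the factors $\LL\vm\psim_k,\,g_\pm\, e^{i(\alpha\pm1)(\cdot)}\RR=0$ for every $\lambda$; and the symmetric kernel of $\G_4$ in \eqref{g4-int} always carries a factor $\LL\vm\psim_j,\,\g_\alpha\, e^{i\alpha(\cdot)}\RR$ or $\LL\vm\psim_k,\,\g_\alpha\, e^{i\alpha(\cdot)}\RR$, both of which vanish. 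This establishes \eqref{i4-int}. For $\G_5$, the channel-$(\alpha\pm1)$ and channel-$(\alpha+2)$ pieces each contain a vanishing overlap, so the only surviving contribution comes from the term $\tfrac14\, g_{\alpha-2}(r)g_{\alpha-2}(t)\, e^{i(\alpha-2)(\theta-\theta')}$. This is nonzero precisely when $j=k=n$, and inserting the value of $\LL\vm\psim_n,\,g_{\alpha-2}\, e^{i(\alpha-2)(\cdot)}\RR$ computed above reproduces the right-hand side of \eqref{i5-int}. The main obstacle is the asymptotic matching of the second step: one must expand $e^{h+i\chi}$ to second order through both \eqref{h-asymp} and \eqref{phi-infty}, so that the subleading $\vartheta$-correction is accounted for, and then verify carefully that none of the target channels $\alpha$, $\alpha\pm1$, $\alpha+2$ receives a contribution at the matched power of $r$ for any $j\le n$, while the single channel $\alpha-2$ survives only at $j=n$. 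This bookkeeping, rather than any analytic difficulty, is where the argument must be carried out with care.
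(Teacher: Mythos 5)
Your proposal is correct and follows essentially the same route as the paper: derive the integer-flux overlap identities (the paper's equation \eqref{w0gm-int}) by matching Corollary \ref{cor-null} against the explicit asymptotics of $\psim_j$ from \eqref{ef-hat} and \eqref{h-asymp}, and then read off the matrix elements from the channel structure of the kernels \eqref{G1-int}--\eqref{g5-int}, with the sole surviving contribution coming from the $g_{\alpha-2}$ term of $\G_5$ at $j=k=n$. The only difference is presentational: the paper records the intermediate images $\G_4\vm\psim_j$ and $\G_5\vm\psim_j$ explicitly before pairing them with $\vm\psim_k$, whereas you argue directly on the overlaps, which amounts to the same computation.
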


\begin{proof}
We  apply Corollary \ref{cor-null} with $W= \vm$. A comparison of the latter with the asymptotic behavior of $\psim_j$, see  equation \eqref{ef-hat}, gives 
\begin{equation} \label{w0gm-int}
\begin{aligned} 
\LL  \vm \,\psim_j,  \, \g_\alpha \, e^{i\alpha  (\cdot) } \RR  = \LL  \vm \,\psim_j,  \, g_\pm \, e^{i(\alpha\pm 1) (\cdot) } \RR=\LL  \vm \,\psim_j,  \, g_{\alpha+2} \, e^{i(\alpha+2) (\cdot) } \RR & = 0,  \quad \quad 1\leq j \leq n \\
 \LL  \vm \,\psim_j,  \, g_{\alpha-2} \, e^{i(\alpha-2) (\cdot)} \RR & = 0,   \qquad     1\leq j \leq n-1 \\
 \LL  \vm \,\psim_n,  \, g_{\alpha-2} \,e^{i(\alpha-2) (\cdot)}  \RR & = -8\pi\, d^\m_n\, .
 \end{aligned}
 \end{equation}
By  equations \eqref{G1-int} and \eqref{G2-int} this proves \eqref{i4-int} for $i=1,2$. Next we note that
\begin{equation} \label{gw45-int}
\begin{aligned}
\G_4\,   \vm\, \psim_j  & = \mathcal{\widetilde C}_4(j) \, e^{i\alpha  (\cdot) }\, \g_\alpha  \qquad\qquad\qquad\qquad\qquad\qquad\quad 1\leq j\leq n  \\
\G_5\,   \vm\, \psim_j  & = \mathcal{C}_5^+(j) \, e^{i(\alpha+1) (\cdot)}\, g_+  +  \mathcal{C}_5^-(j) \, e^{i(\alpha-1) (\cdot)}\, g_- \quad \  1\leq j\leq n-1 ,
\end{aligned} 
\end{equation} 
for some constants  $\mathcal{\widetilde C}_4(j),   \mathcal{C}_5^+(j) $ and  $ \mathcal{C}_5^-(j)$,
see  equations \eqref{g4-int} and \eqref{g5-int}. 
In view of \eqref{w0gm-int} this completes the proof of \eqref{i4-int}. For $j=n$ we get, using again \eqref{w0gm-int} and  \eqref{g5-int}
\begin{align*}
\G_5\,   \vm\, \psim_n  &  =  \mathcal{C}_5^+\, g_+ \, e^{i(\alpha+1) (\cdot)}  +  \mathcal{C}_5^-\, g_- \, e^{i(\alpha-1) (\cdot)}
  -\pi  d^\m_n \, g_{\alpha-2}\, e^{i(\alpha-2) (\cdot)}  . 
\end{align*}
Applying \eqref{w0gm-int} once again gives \eqref{i5-int}.
\end{proof}

\begin{lem} \label{lem-vg3v-int}
Let $B$ satisfy Assumption \ref{ass-B} with $\rho >4$, and let $u\in \Lp^{2,2+0}(\R^2)$ be such that $\G_1 u=\G_2^\lambda u=0$. Then
\begin{equation*}
\LL  \vm\, \psim_j, \, \G_3\, u\RR \ =\  - \LL \psim_j, \G_0\, u\RR\, , \qquad\forall\, j=1,\dots, n\, .
\end{equation*}
\end{lem}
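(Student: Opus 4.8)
The plan is to follow the proof of Lemma~\ref{lem-vg3v-0} line by line, simply replacing the non-integer expansion of Proposition~\ref{prop-exp} by the integer-flux expansion \eqref{B0-eq-2-int} of Proposition~\ref{prop-exp-int}. Before the computation I would verify that both sides are finite. The right-hand side is legitimate because the two hypotheses annihilate exactly the three pairings $\big\langle u,\g_\alpha\,e^{i\alpha(\cdot)}\big\rangle$ (from $\G_1 u=0$) and $\big\langle u, g_{\alpha\pm1}\,e^{i(\alpha\pm1)(\cdot)}\big\rangle$ (from $\G_2^\lambda u=0$), which are precisely the coefficients of the slowly decaying part $\tilde u_0$ of $\G_0 u$ in \eqref{u0-int}; hence $\tilde u_0=0$ for $r>1$ and $\G_0 u\in\Lp^2(\R^2)$ by Proposition~\ref{lem-nul-int}, so $\LL\psim_j,\G_0 u\RR$ makes sense against $\psim_j\in\Lp^2(\R^2)$. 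The left-hand side is finite because $\vm\,\psim_j$ decays rapidly: the eigenfunctions \eqref{ef-hat} satisfy $|\psim_j(x)|\lesssim\x^{-2}$ by \eqref{h-asymp}, so $\vm\,\psim_j$ decays like $\x^{-\rho-1}$ by Corollary~\ref{cor-w0}, while $|(\G_3 u)(x)|$ grows at most like $\x^2$; the integrability of the product is where $\rho>4$ enters.

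For the core identity I would start from $\LL\psim_j,u\RR=\LL\psim_j,(H_0-\lambda)R_0(\lambda)u\RR$, which holds since $(H_0-\lambda)R_0(\lambda)=\id$, move $H_0$ onto $\psim_j$ (justified by self-adjointness in the weighted spaces as in Lemma~\ref{lem-vg3v-0}), and invoke the eigenfunction relation $H_0\,\psim_j=-\vm\,\psim_j$, which holds because $\Pm(\A)\psim_j=(H_0+\vm)\psim_j=0$. Inserting \eqref{B0-eq-2-int} and using $\G_1 u=\G_2^\lambda u=0$ to drop the $\G_1$ and $\G_2^\lambda$ terms gives $R_0(\lambda)u=\G_0 u+\lambda\,\G_3 u+o(\lambda)$, and together with $H_0\G_0 u=u$ (the integer analogue of Lemma~\ref{lem-2a}) this yields
\begin{align*}
\LL\psim_j,u\RR &= \LL H_0\,\psim_j,(\G_0+\lambda\,\G_3)u\RR-\lambda\,\LL\psim_j,\G_0 u\RR+o(\lambda)\\
&=\LL\psim_j,u\RR-\lambda\,\LL\vm\,\psim_j,\G_3 u\RR-\lambda\,\LL\psim_j,\G_0 u\RR+o(\lambda).
\end{align*}
Cancelling $\LL\psim_j,u\RR$, dividing by $-\lambda$ and letting $\lambda\to0$ then produces the claimed identity.

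The point requiring care — and the only genuine difference from the non-integer case — is the behaviour of the logarithmic correction factors under the final division by $\lambda$. In \eqref{B0-eq-2-int} the $\G_1$ term carries the coefficient $(\log\lambda-z_\alpha)^{-1}$ and the $\G_2^\lambda$ term the coefficient proportional to $\lambda(\log\lambda-i\pi)$, so after division by $\lambda$ they would contribute terms of order $(\lambda(\log\lambda-z_\alpha))^{-1}$ and $\log\lambda$, both diverging as $\lambda\to0$; this is exactly why the hypotheses $\G_1 u=\G_2^\lambda u=0$ are indispensable rather than merely convenient. By contrast the surviving terms are harmless: after division by $\lambda$ the $\G_4$ contribution is $\mathcal{O}((\log\lambda-z_\alpha)^{-1})=o(1)$, the $\G_5$ contribution is $\mathcal{O}(\lambda\log\lambda)=o(1)$, and the remainder is $\mathcal{O}(\lambda)$, so all vanish in the limit and only $\LL\vm\,\psim_j,\G_3 u\RR$ and $\LL\psim_j,\G_0 u\RR$ remain. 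The relaxation to $\rho>4$ (versus $\rho>7$ in Lemma~\ref{lem-vg3v-0}) is possible because only the $\G_0$ and $\lambda\,\G_3$ terms of the expansion are actually used, so \eqref{B0-eq-2-int} is needed only to order $\lambda$, and the fast decay of the eigenfunctions $\psim_j$ compensates for the weaker decay of $B$.
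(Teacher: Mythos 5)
Your proposal is correct and follows exactly the paper's own argument, which is the same two-line computation as in Lemma \ref{lem-vg3v-0} with Proposition \ref{prop-exp} replaced by Proposition \ref{prop-exp-int}: pair $u=(H_0-\lambda)R_0(\lambda)u$ against $\psim_j$, move $H_0$ onto $\psim_j$ via $H_0\psim_j=-\vm\psim_j$, kill the $\G_1$ and $\G_2^\lambda$ terms by hypothesis, and divide by $\lambda$. Your additional remarks on finiteness of both pairings and on why the surviving $\G_4$, $\G_5$ contributions are $o(\lambda)$ are correct and merely make explicit what the paper leaves implicit.
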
 

\begin{proof} 
This is a straightforward modification of the proof of Lemma \ref{lem-vg3v-0}.  From Proposition \ref{prop-exp-int} we infer that
\begin{align*}
\LL \psim_j, u\RR &= \LL \psim_j, (H_0-\lambda) \, R_0(\lambda) u \RR = \LL H_0\, \psim_j, (\G_0 +\lambda \G_3) u\RR -\lambda \LL \psim_j, \G_0 u\RR + o(\lambda) \\
&= \LL \psim_j, u\RR - \lambda\, \LL  \vm \, \psim_j, \G_3 u\RR -\lambda \LL \psim_j, 	\G_0 u\RR + o(\lambda).
\end{align*}
Diving by $\lambda$ and letting $\lambda\to 0$ completes the proof.
\end{proof}

\subsection{Expansion of $\Rm(\lambda,\A)$ }  Similarly as in Section \ref{sec-exp-non-int} we define the operator $ \X_\m= \pom \,  \vm\, \G_3\,  \vm$ and the resonant states 
\begin{equation} 
\varphi_2^\m = (1-\X_\m) \Phi_2^\m, \qquad \varphi_1^\m = (1-\X_\m) \Phi_1^\m + \kappa_\m\, \varphi_2^\m. 
\end{equation}
Moreover, we put
\begin{equation} \label{pi-12}
\Pi^\m_{jk} = \LL\, \cdot\, , \varphi_j^\m\RR\,   \varphi_k^\m,  \qquad \qquad j,k=1,2.
\end{equation}

We have

\begin{thm}  \label{thm-pauli-res-int}
Let $0<\alpha \in\Z$. Suppose that $B$ satisfies  \ref{ass-B} with $\rho > 7$. 
Then as $\lambda\to 0$, 
\begin{equation}  \label{res-pauli-int}
\Rm(\lambda,\A)   = -\lambda^{-1} \, \pom   +  \frac{ \Pi_{22}^\m }{\pi\,  \lambda\, (\log\lambda-\w_\m)} \,  - \K_\m \log\lambda\,     + \mathcal{O}(1)
\end{equation} 
holds in $\B(-1,s;1,-s)$ for all $s>3$, where
\begin{align*}
 \K_\m & = \frac{1}{4\pi}  \big[ \,\Pi_{11}^\m +\overline{ \varkappa}\, \Pi_{12}^\m + \varkappa\, \Pi_{21}^\m + |\varkappa|^2 \, \Pi_{22}^\m\big] + 
\frac{\pi\, |d^\m_n|^2 }{4} \ \psim  \LL \, \cdot \, ,  \psim \RR \,  .
\end{align*}
Recall that $\varkappa$ is given by Lemma \ref{lem-res-int} and equation \eqref{kappa}, and that $d^\m_n$ is defined in Lemma \ref{lem-jost-int}. Moreover, $\w_\m$ satisfies \eqref{w-alpha-2}. If $\alpha=1$, then the above formulas hold with $\pom =\Pi_{22}^\m=0$.
\end{thm}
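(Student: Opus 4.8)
The plan is to follow the strategy of the proof of Theorem \ref{thm-pauli-res}, adapting the Grushin--Feshbach machinery to the fact that there are now two resonant states $\phi_1,\phi_2$ and that fractional powers of $\lambda$ are replaced by powers of $\log\lambda$. Starting from \eqref{eq-res1} and the Schur complement identity \eqref{grushin}, and using Lemma \ref{lem-res-int}, Lemma \ref{lem-jost-int} and Proposition \ref{prop-exp-int} to check that $M(\lambda)\phi_1=\mathcal{O}(1/\log\lambda)$, $M(\lambda)\phi_2=\mathcal{O}(\lambda\log\lambda)$ and $M(\lambda)\psim_k=\mathcal{O}(\lambda)$, I would reduce the problem---exactly as in \eqref{R-grushin}---to
\[
\Rm(\lambda,\A) = -\,\J\, E^{-1}(\lambda)\,\J^*\, G_0 + \mathcal{O}(1),
\]
so that everything hinges on expanding the inverse of the $(\alpha+1)\times(\alpha+1)$ matrix $E(\lambda)$ from \eqref{E-int}.

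The matrix $E(\lambda)$ is block structured with a $2\times2$ resonant block $\E(\lambda)$, whose entries are recorded in \eqref{e11-int}--\eqref{e12-int}, an $(\alpha-1)\times(\alpha-1)$ eigenfunction block $S(\lambda)$, and off-diagonal blocks of size $\mathcal{O}(\lambda\Ln)$ with $\Ln$ given by \eqref{mathcal-L}. I would apply the Feshbach formula \eqref{eq-feshbach} twice. In the first step I invert the eigenfunction block: by \eqref{i5-int} one has $S_{jk}(\lambda)=\lambda T_{jk}-\tfrac{\pi|d^\m_n|^2}{8}\,\lambda^2(\log\lambda-i\pi)\,\delta_{j,n}\delta_{k,n}$, so a Neumann expansion gives $S^{-1}(\lambda)=\lambda^{-1}T^{-1}+\mathcal{O}(\log\lambda)$, where the explicit $\mathcal{O}(\log\lambda)$ correction, combined with an identity analogous to \eqref{poh-w0} and with Lemma \ref{lem-vg3v-int}, will produce the term $\tfrac{\pi|d^\m_n|^2}{4}\,\psim\LL\,\cdot\,,\psim\RR$ inside $\K_\m$. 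Forming the Schur complement for the resonant block then subtracts $\lambda\Ln\, S^{-1}(\lambda)\,\Ln^*=\Ln T^{-1}\Ln^*+\dots$ from $\E(\lambda)$.

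In the second step I invert the resulting $2\times2$ matrix, again by Feshbach, separating the $\phi_2$ index (whose diagonal entry is $\sim-\pi|c_2|^2\lambda(\log\lambda-\w_\m)$, the most singular scale) from the $\phi_1$ index (whose diagonal entry is $\sim 4\pi|c_1|^2/(\log\lambda-z_\alpha)$). The key algebraic point is that the $\lambda\log\lambda$ contribution to the $\phi_1$ channel cancels exactly against the $-\pi|\varkappa c_1|^2\lambda\log\lambda$ term already present in $\E_{11}$, so that the effective $\phi_1$ entry is $4\pi|c_1|^2/(\log\lambda-z_\alpha)$. Inverting then produces the two singular blocks of $E^{-1}(\lambda)$: the $\phi_2\phi_2$ entry $\sim -\big(\pi|c_2|^2\lambda(\log\lambda-\w_\m)\big)^{-1}$, yielding the term $\Pi^\m_{22}/\big(\pi\lambda(\log\lambda-\w_\m)\big)$, and the $\phi_1\phi_1$ entry $\sim (\log\lambda)/(4\pi|c_1|^2)$, yielding the $\log\lambda$-divergent $\Pi^\m_{11}$ contribution.

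Finally I substitute $\phi_j=c_j\Psim_j$, dress the states by the operator $1-\X_\m$ via Lemma \ref{lem-vg3v-int} (in analogy with \eqref{varphi-minus}), and reassemble $-\J E^{-1}(\lambda)\J^* G_0$ in terms of $\varphi_1^\m,\varphi_2^\m,\psim$ and the projections $\Pi^\m_{jk}$ of \eqref{pi-12}; the constant $\varkappa=\kappa_\m+\vartheta$ of Lemma \ref{lem-res-int} enters precisely through the coupling $\phi_1\mapsto\phi_1+\varkappa\phi_2$ dictated by the asymptotics there, generating the cross terms $\overline{\varkappa}\,\Pi^\m_{12}+\varkappa\,\Pi^\m_{21}+|\varkappa|^2\Pi^\m_{22}$ in $\K_\m$. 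The main obstacle I anticipate is the careful bookkeeping of the three distinct singular scales $1/(\lambda\log\lambda)$, $\log\lambda$ and $\lambda^{-1}$ through the nested Feshbach reductions---in particular verifying the cancellation of the $\lambda\log\lambda$ terms in the $\phi_1$ channel, and tracking which $\mathcal{O}(\lambda)$ corrections survive to define $\w_\m$ through \eqref{w-alpha-2} versus which are absorbed into the $\mathcal{O}(1)$ remainder.
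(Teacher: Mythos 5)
Your proposal is correct and follows essentially the same route as the paper: Grushin reduction to the $(\alpha+1)\times(\alpha+1)$ matrix $E(\lambda)$, computation of its entries from Lemmas \ref{lem-res-int}, \ref{lem-jost-int} and Proposition \ref{prop-exp-int}, Feshbach inversion separating the resonant block from the eigenfunction block (the paper inverts the $2\times2$ resonant block by the explicit determinant formula rather than a second Feshbach step, and indeed the cancellation you identify in the $\phi_1$ channel is exactly the cancellation of the $\lambda^2\log^2\lambda$ terms in $\det\E(\lambda)$), followed by the dressing with $1-\X_\m$. One slip to fix: since both off-diagonal blocks are $\mathcal{O}(\lambda\Ln)$, the Schur-complement correction to the resonant block is $\lambda^2\,\Ln\,\sn^{-1}(\lambda)\,\Ln^*=\lambda\,\Ln T^{-1}\Ln^*+o(\lambda)=\mathcal{O}(\lambda)$, not the $\mathcal{O}(1)$ quantity $\Ln T^{-1}\Ln^*$ you wrote --- an $\mathcal{O}(1)$ correction would swamp the $\E_{11}\sim(\log\lambda-z_\alpha)^{-1}$ entry and destroy the $\log\lambda$ term, whereas the correct $\mathcal{O}(\lambda)$ size is what justifies $[\E(\lambda)-\cdots]^{-1}=\E^{-1}(\lambda)+\mathcal{O}(1)$ as in the paper; your subsequent conclusions are consistent with the correct power, so this is only a typographical issue.
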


\begin{proof} 
 It follows from \eqref{i4-int}, \eqref{i5-int}, Proposition \ref{prop-exp-int} and Lemma \ref{lem-vg3v-int}  that 
\begin{align*} 
\LL  \vm \,\psim_k, \, M(\lambda)\,\psim_j\RR &\  =\  \lambda\, \LL\psim_k ,  \psim_{j} \RR  -\frac{\pi}{4} \, \lambda^2\log\lambda\, |d^\m_n|^2\,  \delta_{j,n}\,  \delta_{k,n} + \mathcal{O}(\lambda^2)
\end{align*}
holds for all $j,k=1,\dots ,n$. Recall also that for integer flux we have $\alpha= n+1$, see equation \eqref{alpha}. Since \newline $\J_e^*M(\lambda)Q_0\Omega(\lambda)Q_0 M(\lambda)\J_e = \mathcal{O}(\lambda^2)$, we conclude with
\begin{equation} \label{E-matrix-int}
E(\lambda) =  \left( \begin{array}{cc}
\E(\lambda)  &  \lambda\, \mathcal L +  \mathcal{O}(\lambda/\log\lambda) \\
\lambda\, \mathcal  L^*+  \mathcal{O}(\lambda/\log\lambda) &  \sn(\lambda) + \mathcal{O}(\lambda^2)  
\end{array} \right) \, ,
\end{equation}
where $\sn(\lambda)$ is the $n\times n$ matrix with entries 
\begin{equation} \label{s-lambda-int}
\sn_{jk}(\lambda) = \lambda\, T_{jk}   -\frac{\pi}{4} \, \lambda^2\log\lambda\, |d^\m_n|^2\,  \delta_{j,n}\,  \delta_{k,n}\, .
\end{equation}
In order to expand $E^{-1} (\lambda)$, we first calculate the singular terms of the matrix $\E^{-1} (\lambda)$. By \eqref{e11-int} and \eqref{e12-int},
$$
{\rm det\, } \E(\lambda) =  -4\pi^2 |c_1 c_2|^2 \, \lambda\, \Big(1-\frac{q_1}{\log\lambda-z_\alpha}  -\frac{q_2}{(\log\lambda-z_\alpha)^2}\Big) \Big(1- \frac{\w_\m-z_\alpha}{\log\lambda -z_\alpha} + \mathcal{O}(\lambda \log\lambda) \Big)\, .
$$
The standard formula for the inverse of a $2\times 2$ matrix, which is a special case of \eqref{eq-feshbach}, then gives
\begin{align} \label{mathcal-E-inverse}
\E^{-1}(\lambda) &=  \frac{1}{{\rm det\, } \E(\lambda)} \left( \begin{array}{rr}
\E_{22}(\lambda) & -\E_{12}(\lambda) \\
-\E_{21}(\lambda) & \E_{11}(\lambda)
\end{array} \right) \nonumber \\[3pt]
&=  \frac{1}{4\pi\, |c_1\, c_2|^2}
\left( \begin{array}{rl}
|c_2|^2 \log\lambda & \bar c_1\, c_2\, \overline{ \varkappa}\, \log\lambda  \\
c_1\, \bar c_2\, \varkappa\, \log\lambda  & |c_1 \varkappa|^2\, \log\lambda - 4 |c_1|^2 \lambda^{-1} ( \log\lambda-\w_\m)^{-1}
\end{array} \right) +  \mathcal{O}(1)\, .
\end{align} 
To continue we note that $\sn(\lambda)$ is invertible for $\lambda$ small enough and that $\big(\sn(\lambda) + \mathcal{O}(\lambda^2)\big)^{-1} = \sn^{-1}(\lambda)+  \mathcal{O}(1)$.  Since $ \sn^{-1}(\lambda)= \mathcal{O}(\lambda^{-1})$, it follows that
$$
\Big[ \E(\lambda) - \big( \lambda\,  \mathcal L+ \mathcal{O}(\lambda/\log\lambda) \big) \big(\sn(\lambda) + \mathcal{O}(\lambda^2)\big)^{-1}\big(\lambda\, \mathcal L+\mathcal{O}(\lambda/\log\lambda) \big) \big]^{-1} =\E^{-1} (\lambda)  + \mathcal{O}(1).
$$
The Feshbach formula \eqref{eq-feshbach} now implies 
\begin{equation} \label{E-inverse-int}
E^{-1}(\lambda) =  \left( \begin{array}{cc}
\E^{-1} (\lambda) & - \E^{-1}(\lambda)\,  \mathcal L \, T^{-1} \\
- T^{-1}   \mathcal L^*\, \E^{-1}(\lambda) & T^{-1}  \mathcal L^*\, \E^{-1}(\lambda)\, \mathcal L \, T^{-1} 
\end{array} \right) + 
\left( \begin{array}{cc}
0  & 0 \\
0 & \sn^{-1}(\lambda) 
\end{array} \right)
+ \mathcal{O}(1) .
\end{equation} 
The inverse of $\sn(\lambda)$ is calculated in the same way as in case of non-integer flux, see equation \eqref{S-inverse}. Using \eqref{s-lambda-int}  and the Neumann series we find  
\begin{align} \label{S-inverse-int}
\sn^{-1}(\lambda) 
& =  \lambda^{-1} T^{-1} +\frac{\pi\, |d^\m_n|^2 }{4}\, \log \lambda \,  \big(0, 0,\dots ,T^{-1}_n\big)\,  T^{-1} +  \mathcal{O}(1). 
\end{align}
To continue let us abbreviate 
$$
{\rm P}_{jk} = \LL\, \cdot\, , \Psi_j^\m\RR\,   \Psi_k^\m . 
$$
Given $u\in \HH^{-1,s}$,  equations \eqref{mathcal-L} yields
$$
 \E^{-1}(\lambda)\, \mathcal L \, T^{-1} \, \J_e^*\, \G_0\, u = 
 \E^{-1}(\lambda) \Big( \begin{array}{c}
u_1\\
u_2
\end{array} \Big) =: \Big( \begin{array}{c}
v_1\\
v_2
\end{array} \Big) 
 , \quad  \text{where} \quad u_j = \LL \phi_j, \, \X_\m ^* u \RR  \, .
$$
Hence using \eqref{mathcal-E-inverse} we infer that 
\begin{align}
\J_e\, T^{-1}\,  \mathcal L^*\, \E^{-1}(\lambda)\, \mathcal L \, T^{-1} \, \J_e^*\, \G_0\, u & =\X_\m  \, \phi_1\, v_1 +\X_\m  \, \phi_2\, v_2 =  - \frac{\X_\m \,{\rm P}_{22}^\m\, \X_\m ^*\, u}{\pi\,  \lambda\, (\log\lambda-\w_\m)} \label{je-je}\\
& \quad + \frac{\log\lambda}{4\pi} \, \X_\m   \big[ \,{\rm P}_{11}^\m +\overline{\varkappa}\,{\rm P}_{12}^\m + \varkappa\,{\rm P}_{21}^\m +  |\varkappa|^2 \,{\rm P}_{22}^\m\big] \, \X_\m ^*\, u +  \mathcal{O}(1). \nonumber
\end{align}
In the same way,
\begin{align*}
\J_r\, \E^{-1}(\lambda)\, \mathcal L \, T^{-1} \, \J_e^*\, \G_0\, u &= - \frac{{\rm P}_{22}^\m\, \X_\m ^*\, u}{\pi\,  \lambda\, (\log\lambda-\w_\m)} + \frac{\log\lambda}{4\pi} \, \big[ \, {\rm P}_{11}^\m +\overline{ \varkappa}\, {\rm P}{12}^\m + \varkappa\, {\rm P}_{21}^\m +  |\varkappa|^2  \, {\rm P}_{22}^\m\big] \, \X_\m ^*\, u +  \mathcal{O}(1) \\[4pt]
\J_e \, T^{-1}   \mathcal L^*\, \E^{-1}(\lambda) \, \J_r^*\, \G_0\, u &= - \frac{ \X_\m  \, {\rm P}_{22}^\m\, u}{\pi\,  \lambda\, (\log\lambda-\w_\m)} + \frac{\log\lambda}{4\pi}\,  \X_\m  \big[ \, {\rm P}_{11}^\m +\overline{\varkappa}\, {\rm P}_{12}^\m + \varkappa\, {\rm P}_{21}^\m +  |\varkappa|^2  \, {\rm P}_{22}^\m\big] \,   u +  \mathcal{O}(1) ,
\end{align*}
and 
\begin{equation} \label{jr-jr}
\J_r \, \E^{-1}(\lambda)\, \J_r^*\, \G_0\, u = \frac{ {\rm P}_{22}^\m\, u}{\pi\,  \lambda\, (\log\lambda-\w_\m)} + \frac{\log\lambda}{4\pi} \big[ \, {\rm P}_{11}^\m +\overline{\varkappa}\, {\rm P}_{12}^\m + \varkappa\, {\rm P}_{21}^\m +    |\varkappa|^2 \, {\rm P}_{22}^\m\big] \,   u +  \mathcal{O}(1) . \\[3pt]
\end{equation}
Collecting the above identities gives 
\begin{align*}
& \J_e\, T^{-1}\,  \mathcal L^*\, \E^{-1}(\lambda)\, \mathcal L \, T^{-1} \, \J_e^*\, \G_0  +\J_r\, \E^{-1}(\lambda)\, \mathcal L \, T^{-1} \, \J_e^*\, \G_0  +\J_e \, T^{-1}   \mathcal L^*\, \E^{-1}(\lambda) \, \J_r^*\, \G_0 +\J_r \, \E^{-1}(\lambda)\, \J_r^*\, \G_0 \\[3pt]
& \quad =  \frac{ (1-\X_\m)  {\rm P}_{22}^\m (1-X_\m^*)}{\pi\,  \lambda\, (\log\lambda-\w_\m)} + \frac{\log\lambda}{4\pi} \,  (1-\X_\m)   \big[ \, {\rm P}_{11}^\m +\overline{\varkappa}\, {\rm P}_{12}^\m + \varkappa\, {\rm P}_{21}^\m +  |\varkappa|^2  \, {\rm P}_{22}^\m\big]  (1-\X_\m^*)  +  \mathcal{O}(1)  \\[3pt]
& \quad =   \frac{ \Pi_{22}^\m }{\pi\,  \lambda\, (\log\lambda-\w_\m)}  + \frac{\log\lambda}{4\pi} \,  \big[ \, \Pi_{11}^\m +\overline{\varkappa}\, \Pi_{12}^\m + \varkappa\, \Pi_{21}^\m +  |\varkappa|^2  \, \Pi_{22}^\m\big]  +  \mathcal{O}(1) , 
\end{align*}
where we have used the fact that 
$$
(1- \X_\m) \Psi_j^\m = \varphi_j^\m, \qquad j=1,2, \\[2pt]
$$
which follows from Lemma \ref{lem-vg3v-0} and equation \eqref{norm1}. 
Finally, from \eqref{S-inverse-int}, \eqref{poh} and \eqref{poh-w0} we get 
\begin{equation} \label{je-je-2}
\J_e\, \sn^{-1}(\lambda) \J_e^*\, \G_0  = \lambda^{-1}\, \pom    + \log\lambda\, \frac{\pi\, |d^\m_n|^2 }{4} \,  \psim\, \LL \, \cdot \, ,  \psim \RR \, +\mathcal{O}(1) .
\end{equation}
Using Lemma \ref{lem-res-int} and equations \eqref{w0gm-int} we now mimic the above calculations to obtain 
$\J\, E^{-1}(\lambda)\, \J^*\, M(\lambda)= \mathcal{O}(1)$, and $M(\lambda) \,J  E^{-1}(\lambda)\, \J^*\, = \mathcal{O}(1)$. Consequently, 
$$
M(\lambda)^{-1} =  \J\, E^{-1}(\lambda)\, \J^*  +  \mathcal{O}(1)\, , 
$$ 
and similarly as in \eqref{R-grushin} we infer that 
\begin{equation} \label{R-grushin-int}
R_\m(\lambda,\A)   =  -\J\, E^{-1}(\lambda)\, \J^*\, \G_0 + \mathcal{O}(1).
\end{equation} 
Inserting identities \eqref{je-je}-\eqref{je-je-2} into \eqref{E-inverse-int} and \eqref{R-grushin-int}  implies the claim.
\end{proof}

\begin{rem}
The coefficient on the right hand side of \eqref{res-pauli-int} proportional to $(\lambda \log\lambda)^{-1}$ arrises only from the $p-$wave resonant state $\varphi_2^\m$. Note that $\varphi_2^\m\in \Lp^p(\R^2)$ for every $p>2$. On the other hand, the operator $\K_\m $ contains contributions from the eigenfunction $\psim$ as well as from both resonant states $\varphi_1^\m$ and $\varphi_2^\m$. 
\end{rem}


As in the case of non-integer flux the resolvent expansion simplifies for radial magnetic fields.

\begin{cor} \label{cor-radial-int}
Suppose that $B$ satisfies Assumption \ref{ass-B} with $\rho > 7$. If $B$ is radial, then 
\begin{equation}  \label{res-int-radial}
\Rm(\lambda, \A)  =  -\lambda^{-1} \, \pom  +  \frac{ \LL\, \cdot\, , \Phim_2 \RR \, \Phim_2}{\pi\,  \lambda\, (\log\lambda-\w_\m)} \,  -  \frac{\log\lambda}{4\pi} \, \LL\, \cdot\, , \Phim_1 \RR \, \Phim_1  -\frac{\pi\, \log\lambda\,  \LL\, \cdot\, , \Phim_3 \RR \, \Phim_3}{4\, \|\Phim_3\|_{2}^4} \,   +\mathcal{O} (1)\,  \\[3pt]
\end{equation} 
holds in $\B(-1,s;1,-s)$ for all $s>3$, where $\Phim_3(x)$ is given by \eqref{phi-eq}.  
 If $\alpha=1$, then \eqref{res-int-radial} holds with $\pom =\Phim_3=0$. 
\end{cor}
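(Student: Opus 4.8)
The plan is to read off \eqref{res-int-radial} from the general integer-flux expansion of Theorem \ref{thm-pauli-res-int}, simplifying each ingredient using radiality. By Lemma \ref{lem-radial} and the remark after Proposition \ref{prop-gauge} I may take $\chi=0$ and $\A=A_h$, and furthermore $\vartheta_1=\vartheta_2=0$, so $\vartheta=0$. The decisive structural fact is that for radial $B$ the perturbation $\vm$ acts diagonally in the angular-momentum channels, cf.~\eqref{W-rad}, and so does $\G_3$ through the radial form of its kernel \eqref{g3-kernel-int}. First I would record the channel content of the relevant zero modes: with $\chi=0$ one has $\Phim_j=(x_1+ix_2)^{\alpha+1-j}\, e^{h}$, which sits in channel $\alpha+1-j$, so $\Phim_1,\Phim_2,\Phim_3$ occupy channels $\alpha,\alpha-1,\alpha-2$, while the eigenfunctions $\psim_k$ of \eqref{psij-radial} occupy channels $0,\dots,n-1=\alpha-2$.

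Next I would collapse the resonant states. Since $\X_\m=\pom\,\vm\,\G_3\,\vm$ preserves each channel and then projects onto the eigenspace (channels $0,\dots,n-1$), and since $\Phim_1,\Phim_2$ live in channels $\alpha=n+1$ and $\alpha-1=n$ which lie outside that range, I get $\X_\m\Phim_1=\X_\m\Phim_2=0$; hence $\varphi_2^\m=\Phim_2$ and $\varphi_1^\m=\Phim_1+\kappa_\m\varphi_2^\m$. A one-line orthogonality argument then gives $\kappa_{\scriptscriptstyle-}=\LL\Psim_2,\vm\,\Phim_1\RR=0$ (channel $n$ tested against channel $n+1$, after noting $\Psim_2=\Phim_2$), and since $\vartheta=0$ this yields $\varkappa=0$ and $\kappa_\m=0$, so $\varphi_1^\m=\Phim_1$. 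Consequently the cross terms $\Pi_{12}^\m,\Pi_{21}^\m$ and the $|\varkappa|^2\Pi_{22}^\m$ term all disappear from $\K_\m$, leaving $\K_\m=\tfrac{1}{4\pi}\LL\,\cdot\,,\Phim_1\RR\,\Phim_1+\tfrac{\pi|d^\m_n|^2}{4}\,\psim\LL\,\cdot\,,\psim\RR$ and $\Pi_{22}^\m=\LL\,\cdot\,,\Phim_2\RR\,\Phim_2$.

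It then remains to identify the eigenfunction contribution. From \eqref{psij-radial} the top eigenfunction is $\psim_n=d^\m_n\,\Phim_3$, and for radial $B$ the $\psim_k$ are mutually $\Lp^2$-orthogonal (distinct channels), so the matrix $T$ of \eqref{tlm} is diagonal and $\pom$ is the usual orthogonal projection. Using $\psim=\pom\,\vm\,\psim_n$ from \eqref{psi-minus} together with the symmetry of $\vm$ and the normalization $\LL\psim_n,\vm\,\psim_n\RR=-1$, I would compute $\psim=-\psim_n/\|\psim_n\|_2^2$, whence $\psim\LL\,\cdot\,,\psim\RR=\Phim_3\LL\,\cdot\,,\Phim_3\RR/(|d^\m_n|^2\,\|\Phim_3\|_2^4)$; the factor $|d^\m_n|^2$ cancels and the $\psim$-term of $\K_\m$ becomes exactly $\tfrac{\pi}{4}\,\|\Phim_3\|_2^{-4}\,\LL\,\cdot\,,\Phim_3\RR\,\Phim_3$. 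Substituting these identities into the formula of Theorem \ref{thm-pauli-res-int} produces \eqref{res-int-radial}. In the degenerate case $\alpha=1$ one has $n=0$, so there is neither an eigenspace nor a channel-$(\alpha-2)$ eigenfunction, and the formula holds with $\pom=\Phim_3=0$.

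All of these reductions are elementary; the only genuine bookkeeping obstacle is keeping the angular-momentum channels and the several normalization constants aligned, so that the factors $d^\m_n$, $\|\Phim_3\|_2$ and the sign coming from $\LL\psim_n,\vm\,\psim_n\RR=-1$ combine correctly. This is entirely parallel to the non-integer radial case of Corollary \ref{cor-radial}, which I would use as a template.
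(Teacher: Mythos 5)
Your proof is correct and follows essentially the same route as the paper's: radiality makes $\vm$ (and hence $\X_\m$) channel-diagonal via \eqref{W-rad}, which forces $\X_\m\Phim_1=\X_\m\Phim_2=0$, $\kappa_{\scriptscriptstyle -}=0$ and $\vartheta=0$ (so $\varkappa=0$), and the identification $\psim_n=d^\m_n\Phim_3$ together with the normalization \eqref{norm1} converts the $\psim$-term of $\K_\m$ into $\tfrac{\pi}{4}\|\Phim_3\|_2^{-4}\LL\,\cdot\,,\Phim_3\RR\,\Phim_3$, exactly as in the paper. The extra bookkeeping you supply (explicit channel indices, the computation $\psim=-\psim_n/\|\psim_n\|_2^2$) only fills in steps the paper leaves implicit.
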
 

\begin{proof}
From  equations \eqref{psij-radial} and \eqref{W-rad} we deduce that $\X_\m \, \Phi_j^\m=0, j=1,2,$ and that 
$$
\LL \, \Phim_j,  \vm\, \psim_k \RR\ = 0 \qquad \forall\, j\in \{1,2\}, \quad \forall\, k\in \{1,\dots,\alpha-1\}. 
$$
Hence by \eqref{phi-hat-2},\eqref{phi-hat-1} and \eqref{kappa} we have $\kappa_{\scriptscriptstyle -}=0$, and
$$
\varphi_1^\m = \Phim_1 = (x_1+i x_2)^{\alpha}\, e^{h+i\chi}\, , \quad \varphi_2^\m = \Phim_2 = (x_1+i x_2)^{\alpha-1}\, e^{h+i\chi}\,  .
$$
By symmetry of $B$, we also have $\vartheta=0$, and hence $\varkappa=0$. Furthermore, using \eqref{norm1}, \eqref{psi-minus},  and the definition of $d^\m_n$, see Lemma \ref{lem-jost-int}, we find 
$$
|d^\m_n|^2 \, \psim\, \LL \, \cdot \, ,  \psim\RR  = |d^\m_n|^2\, \frac{ \LL\, \cdot\, , \psim_n \RR \, \psim_n}{\|\psim_n\|_{2}^4} = \frac{ \LL\, \cdot\, , \Phim_3 \RR \, \Phim_3}{\|\Phim_3\|_{2}^4}\, .
$$
Equation \eqref{res-int-radial} thus follows from Theorem \ref{thm-pauli-res-int}.
\end{proof}

\subsection{Expansion of $\Rp(\lambda, \A)$}  Here we state the expansion of $\Rp(\lambda, \A)$. The proof follows line by line the proof of Proposition \ref{prop-pauli-regular} and will be omitted. 

\begin{prop} \label{prop-pauli-regular-int}
Suppose that $B$ satisfies Assumption \ref{ass-B} for some $\rho > 7$, and that $0 <\alpha\in\Z$.  Then as $\lambda\to 0$, 
\begin{align} \label{res-pauli-reg-2}
\Rp(\lambda,\A)  & = (1+ \G_0 \vp )^{-1}\, \G_0  + (1+ \G_0 \vp)^{-1} \G_1\,  (1+ \vp \, \G_0)^{-1}\, \frac{1}{\log\lambda}  
+o\big((\log\lambda)^{-1} \big)
\end{align} 
holds in $\B(-1,s;1,-s)$ for all $s>3$.  
\end{prop}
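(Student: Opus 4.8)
The plan is to follow the proof of Proposition~\ref{prop-pauli-regular} essentially verbatim, substituting the fractional expansion of Proposition~\ref{prop-exp} by the logarithmic one of Proposition~\ref{prop-exp-int}. First I would invoke \eqref{B0-eq-2-int} to record only the leading behaviour of the reference resolvent, namely
\begin{equation*}
R_0(\lambda) = \G_0 + \frac{\G_1}{\log\lambda - z_\alpha} + o\big((\log\lambda)^{-1}\big) \qquad \lambda\to 0,
\end{equation*}
in $\B(-1,s;1,-s)$ for every $s>3$. The one point to verify here is that all the remaining terms on the right-hand side of \eqref{B0-eq-2-int} carry a factor $\lambda$ (the leading such term being $\mathcal{O}(\lambda\log\lambda)$), and since $\lambda\log\lambda\to 0$ as $\lambda\to 0+$ one has $\lambda = o\big((\log\lambda)^{-1}\big)$; hence all these contributions are absorbed into the error.

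Next, since $\alpha>0$, Lemma~\ref{lem-ah-cash} yields $\NN(\Pp(\A))=\{0\}$, so the last assertion of Lemma~\ref{lem-5} (with $G_0$ replaced by $\G_0$, as is legitimate for $\alpha\in\Z$) shows that $1+\G_0\vp$ is invertible in $\HH^{1,-s}$. Expanding $(1+R_0(\lambda)\vp)^{-1}$ into a Neumann series around this invertible operator then gives
\begin{equation*}
(1+R_0(\lambda)\vp)^{-1} = (1+\G_0\vp)^{-1} - (1+\G_0\vp)^{-1}\G_1\,\vp\,(1+\G_0\vp)^{-1}\,\frac{1}{\log\lambda-z_\alpha} + o\big((\log\lambda)^{-1}\big).
\end{equation*}
By duality $1+\vp\G_0$ is invertible in $\HH^{-1,s}$, and the intertwining identity $\vp\,(1+\G_0\vp)^{-1}\G_0 = (1+\vp\G_0)^{-1}\vp\G_0$ holds there. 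Multiplying the Neumann series by $R_0(\lambda)$ on the right and collecting terms exactly as in Proposition~\ref{prop-pauli-regular} produces
\begin{equation*}
(1+R_0(\lambda)\vp)^{-1}R_0(\lambda) = (1+\G_0\vp)^{-1}\G_0 + (1+\G_0\vp)^{-1}\G_1\,(1+\vp\G_0)^{-1}\,\frac{1}{\log\lambda-z_\alpha} + o\big((\log\lambda)^{-1}\big).
\end{equation*}
Finally I would replace $(\log\lambda-z_\alpha)^{-1}$ by $(\log\lambda)^{-1}$, which is legitimate because their difference equals $z_\alpha\,(\log\lambda)^{-1}(\log\lambda-z_\alpha)^{-1}=o\big((\log\lambda)^{-1}\big)$, and conclude via the resolvent identity \eqref{eq-res1}.

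The argument is essentially mechanical, and I do not expect any serious obstacle beyond the two bookkeeping checks already flagged: that the $\lambda$-order terms of \eqref{B0-eq-2-int} are genuinely $o\big((\log\lambda)^{-1}\big)$, and that dropping the constant $z_\alpha$ costs only an error of the same order, both of which reduce to $\lambda\log\lambda\to 0$. The only place requiring mild care is that the Neumann expansion and the duality step be carried out in the correct weighted spaces, $\HH^{1,-s}$ and its dual $\HH^{-1,s}$, precisely as in the non-integer case, so that all operators and their adjoints act boundedly; this boundedness is ensured by Corollary~\ref{cor-w0} and \eqref{op-W} under the hypothesis $\rho>7$.
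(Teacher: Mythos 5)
Your proposal is correct and takes essentially the same route as the paper, which explicitly states that the proof of Proposition~\ref{prop-pauli-regular-int} follows line by line that of Proposition~\ref{prop-pauli-regular}; your substitution of the logarithmic expansion \eqref{B0-eq-2-int} for \eqref{B0-eq-1}, the Neumann series, the duality step, and the absorption of $z_\alpha$ into the error are exactly what is intended. One cosmetic remark: absorbing the $\mathcal{O}(\lambda\log\lambda)$ term into $o\big((\log\lambda)^{-1}\big)$ requires $\lambda(\log\lambda)^2\to 0$ rather than merely $\lambda\log\lambda\to 0$, which of course also holds.
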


\section{\bf  Resolvent expansions for $\alpha\leq 0$.}  
\label{sec-alpha-neg}
In this section we state the resolvent expansions for negative values of $\alpha$.

\subsection*{\bf Zero eigenfunctions of $P_\pp(\A)$} If $\alpha<0$, then by 
Lemma \ref{lem-ah-cash} we have   $\NN_e(P_\m(\A))  = \{0\}$, and $\NN_e(P_\pp(\A)) \neq \{0\}$ if and only if $\alpha <- 1$. Following the construction  of Section \ref{ssec-zero-ef} we then built the basis of $ \NN_e(P_\pp(\A))$ satisfying conditions \eqref{norm1} as follows;
\vskip0.2cm

\noindent We put $\ \psip_1 = d^\pp_{1,1} \, e^{i \chi-h} \ $ with the constant $d^\pp_{1,1}$ chosen such that  $\LL \psip_1, \, \vp\, \psip_1\RR = -1$.  
Next we put 
$$
\psip_2  = d^\pp_{2,1} \, e^{ i \chi-h}\, + d^\pp_{2,2}\,  \, e^{ i \chi-h}\, (x_1-i x_2) ,
$$ 
with $d^\pp_{2,1}$ and $d^\pp_{2,2}$ chosen so that $\LL \psip_2,  \vp\, \psip_2\RR =-1$ and $\LL \psip_1,  \vp\, \psip_2\RR =0$. 
Continuing in this way we obtain a family of zero eigenfunctions of $P_\pp(\A)$ such that
\begin{equation} \label{ef-hat-plus} 
\psip_j=  \sum_{k=1}^j \, d^\pp_{j,k} \,  (x_1- i x_2)^{k-1}\, e^{i\chi-h}, \qquad j=1,\dots, n, \qquad \LL \psip_j,  \vp\, \psip_i\RR = -\delta_{i,j}\, .
\end{equation}

\subsection{\bf Non-integer flux} 
 \label{sec-alpha-p}
First let us construct the zero resonant state. Similarly as in \eqref{phi-eq},\eqref{phi-2} ,  we put 
\begin{equation*}
\Phip_j = (x_1-i x_2)^{1-[\alpha]-j}\, e^{i\chi-h }\ , \qquad j=1,\dots, 1-[\alpha], 
\end{equation*}
and
\begin{equation*}
\phi = c_0 \,   \Psip , \qquad \text{with} \quad \Psip = \Phip_1  + \sum_{j=1}^n\, \LL  \Phip_1\, ,\,  \vp\,\psip_j\RR\, \psip_j\, ,
\end{equation*}
with $c_0$ such that $ \LL  \phi\, ,\,  \vp\,  \phi\, \RR\ = -1$.  We adopt analogous notation as for $\alpha >0$ and denote 
\begin{equation} \label{varphi-plus}
{\rm X}_\pp= \pom\,   \vp G_3  \vp \qquad \text{and} \qquad \varphi^\pp  = (1- {\rm X}_\pp) \, \Phi_1^\pp\, .
\end{equation}
Furthermore, we define 
\begin{equation} \label{psi-plus}
\psip =  \pop\,   \vp\, \psi_n^\pp \, .
\end{equation}
Finally, we let 
$$
 \eta_\pp  := \LL \Phip_1, \,  \vp\, G_3  \vp\, \Phip_1\, \RR\, ,  \qquad  \upsilon_\pp: =  \LL\, \Psip  ,\,  \vp G_3  \vp\, \pop  \, \vp G_3  \vp\,  \Psip \, \RR ,
 $$
 and
 \begin{equation} \label{constants-p}
\mathfrak{c}_\pp:= \eta_\pp-\upsilon_\pp, \qquad   \nu^\pp_n :=   \frac{ |d^\pp_{n}|^2}{\zeta(1+\alp)}  \, , \qquad  \mu^\pp_n =  \nu^\pp_n\, (T^{-1})_{nn}\, . \\[2pt]
\end{equation}
 
 \begin{thm} \label{thm-pauli-res-m}
Let $\alpha < -1,\,  \alpha\not\in\Z$. 
Suppose that $B$ satisfies  \ref{ass-B} with $\rho > 7$.  Then as $\lambda\to 0$, 
\begin{align} \label{res-pauli-alpham}
R_\pp(\lambda,\A)  & = -\lambda^{-1} \, \pop  - \frac{\nup_n\, \lambda^{-\alp-1}}{1+\mup_n\, \lambda^{-\alp}}\  \psip   \LL   \ \cdot\ , \psip \RR \, 
- \frac{ \zeta(\alp)\, \lambda^{\alp}}{1 +\cc_\pp\, \zeta(\alp)\, \lambda^{1+\alp}}\  \varphi^\pp  \LL   \ \cdot\ , \varphi^\pp \RR \,   +\mathcal{O}\big(1 \big)
\end{align} 
holds in $\B(-1,s;1,-s)$ for all $s>3$, where $\pop $ is the orthogonal projection on the zero eigenspace of $\Pp(\A)$ .
\end{thm}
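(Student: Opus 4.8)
The plan is to deduce Theorem~\ref{thm-pauli-res-m} from the already established Theorem~\ref{thm-pauli-res} by exploiting the antiunitary symmetry that underlies the Convention in Section~\ref{ssec-com}. Let $\mathcal{C}$ denote complex conjugation, $\mathcal{C}u=\bar u$. A direct computation gives $\mathcal{C}(i\nabla+\A)\,\mathcal{C}=-(i\nabla-\A)$ and hence
\begin{equation*}
\mathcal{C}\, P_\pp(\A)\, \mathcal{C} = (i\nabla-\A)^2 + B = P_\m(-\A),
\end{equation*}
where $-\A$ generates the field $-B$ of total flux $-\alpha$. Since the gauge construction of Proposition~\ref{prop-gauge} is odd under $B\mapsto -B$ (both $A_h$ and the auxiliary field $\chi$ change sign), the potential $-\A$ is exactly the distinguished gauge associated to $-B$. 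For $\alpha<-1$ we have $-\alpha>1$ and $-\alpha\notin\Z$, so Theorem~\ref{thm-pauli-res} applies verbatim to $P_\m(-\A)$.

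First I would convert this operator identity into one for the limiting resolvents. Because $\mathcal{C}$ is antilinear, $\mathcal{C}(P_\pp(\A)-\lambda-i\eps)^{-1}\mathcal{C}=(P_\m(-\A)-\lambda+i\eps)^{-1}$ for real $\lambda$, and letting $\eps\to 0+$ yields $R_\pp(\lambda,\A)=\mathcal{C}\,R_\m(\lambda,-\A)^{*}\,\mathcal{C}$, the adjoint accounting for the interchange of the two boundary values of the resolvent. I would then insert the expansion \eqref{res-pauli} for $R_\m(\lambda,-\A)$ and conjugate it term by term, using that $\mathcal{C}\,(c\,\psi\LL\,\cdot\,,\psi\RR)^{*}\,\mathcal{C}=c\,\bar\psi\LL\,\cdot\,,\bar\psi\RR$ for a rank-one contribution.

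Next I would match the conjugated objects with those defined in Section~\ref{sec-alpha-neg}. The projection $\pom$ of \eqref{poh}, now built from the zero eigenspace of $P_\m(-\A)$, conjugates to the orthogonal projection onto $\mathcal{C}\,\NN_e(P_\m(-\A))=\NN_e(P_\pp(\A))$, which is $\pop$; the slowly decaying eigenfunction $\psim$ conjugates to $\psip$ of \eqref{psi-plus}; and the resonant state $\varphi^\m$ conjugates to $\varphi^\pp$ of \eqref{varphi-plus}, since for the field $-B$ one has $h\mapsto -h$ and $\chi\mapsto-\chi$, so $\mathcal{C}$ sends $(x_1+ix_2)^k e^{-h-i\chi}$ to $(x_1-ix_2)^k e^{-h+i\chi}$, the form appearing in \eqref{ef-hat-plus}. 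Tracking the scalar coefficients through the substitution $\alp=\alpha-n\mapsto-\alp$ then turns the powers $\lambda^{\alp-1}$ and $\lambda^{-\alp}$ in \eqref{res-pauli} into $\lambda^{-\alp-1}$ and $\lambda^{\alp}$, reproducing \eqref{res-pauli-alpham}.

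The main obstacle is precisely this last bookkeeping: the antiunitary conjugation, the passage to the adjoint boundary value, and the sign change $\alp\in(-1,0)$ all interact with the phase factor $e^{i\pi\alp}$ hidden in $\zeta$ (see \eqref{zeta}), so that keeping every sign consistent requires care even though no new analysis enters. A fully self-contained alternative, which sidesteps these sign checks, is to repeat the Grushin--Schur--Feshbach argument of Section~\ref{sec-exp-non-int} with $P_\m(\A),\vm,\psim_j,\Phim_j$ replaced by $P_\pp(\A),\vp,\psip_j,\Phip_j$: one invokes the $\alpha<0$ analogue of Proposition~\ref{prop-exp} (itself obtained from Proposition~\ref{prop-exp} by the same conjugation $\mathcal{C}H_0^{(\alpha)}\mathcal{C}=H_0^{(-\alpha)}$), the matching of zero modes from Corollary~\ref{cor-null} with $W=\vp$, and the analogues of Lemmas~\ref{lem-jost}--\ref{lem-vg3v}. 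Both routes lead to \eqref{res-pauli-alpham}.
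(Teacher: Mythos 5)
Your proposal is correct, and your fallback route --- rerunning the Grushin--Schur--Feshbach machinery of Section~\ref{sec-exp-non-int} with $P_\m(\A),\vm,\psim_j,\Phim_j$ replaced by $P_\pp(\A),\vp,\psip_j,\Phip_j$ --- is precisely what the paper does: no separate proof of Theorem~\ref{thm-pauli-res-m} is given, the Convention of Section~\ref{ssec-com} declaring the two signs of the flux ``completely analogous'' and Section~\ref{sec-alpha-neg} merely setting up the interchanged notation. Your primary route, via $\mathcal{C}P_\pp(\A)\mathcal{C}=P_\m(-\A)$ with $\mathcal{C}$ complex conjugation, is genuinely different and buys two things: the $\alpha<0$ statement becomes a corollary of Theorem~\ref{thm-pauli-res} rather than a repetition of its proof, and all the normalizations transport for free (the conditions $\LL\psi_j,W\psi_i\RR=-\delta_{ij}$ are real and $\langle\mathcal{C}f,\mathcal{C}g\rangle=\overline{\langle f,g\rangle}$; moreover $\mathcal{C}H_0^{(-\alpha)}\mathcal{C}=H_0^{(\alpha)}$ and the self-adjointness of $G_3$ give $\mathcal{C}G_3^{(-\alpha)}\mathcal{C}=G_3^{(\alpha)}$, so ${\rm X}_\pp$, $\psip$ and $\varphi^\pp$ are indeed the conjugates of the corresponding objects for $P_\m(-\A)$). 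Two points need to be made explicit. First, the Whitney extension of $\chi$ inside the unit disc is not canonical, so $-\A$ need not literally coincide with the gauge Proposition~\ref{prop-gauge} produces for $-B$; this is harmless because Theorem~\ref{thm-pauli-res} only uses the decay $|\A-A_0|\lesssim\x^{-\rho+1}$, which is invariant under the simultaneous sign flip of $B$, $\A$ and $A_0$. Second --- and this is where your route earns its keep as a consistency check --- since $\mathcal{C}\bigl(\bar c\,\varphi\LL\cdot,\varphi\RR\bigr)\mathcal{C}=c\,\bar\varphi\LL\cdot,\bar\varphi\RR$, the scalar coefficients of the singular rank-one terms are \emph{unchanged} by the adjoint-plus-conjugation, hence they are exactly those of \eqref{res-pauli} evaluated at the fractional part $-\alp\in(0,1)$ of $-\alpha$: the resonance term carries $\zeta(-\alp)$ and the eigenfunction term carries $+|d^\pp_n|^2/\zeta(1-\alp)$. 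This is also the only reading compatible with \eqref{zeta}, where $\zeta$ is defined on positive non-integer arguments only; the literal symbols $\zeta(\alp)$, $\zeta(1+\alp)$ and the sign in front of the second term of \eqref{res-pauli-alpham} must be interpreted accordingly (the distinction is invisible at $\alp=-1/2$, where $\zeta(1/2)=\zeta(3/2)$, but not for general $\alp$), so if you follow the conjugation route you should carry out this bookkeeping explicitly rather than matching symbols against the stated formula.
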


As before, we provide a more precise expansion in the absence of zero eigenfunctions. 

\begin{cor} \label{cor-pauli-res2}
Suppose that $B$ satisfies Assumption \ref{ass-B} with $\rho > 7$.  Let $-1<\alpha<0$.  Then as $\lambda\to 0$, 
\begin{align*} 
R_\pp(\lambda,\A)  & = 
- \frac{ \zeta(\alpha)\, \lambda^{\alpha}}{1 +\eta_\pp\, \zeta(\alpha)\, \lambda^{1+\alpha}}\  \Phip_1 \,  \LL \,\cdot\, , \Phip_1 \RR\ + K_0^\pp + \mathcal{O}\big(|\lambda|^\mu \big) 
\end{align*} 
holds in $\B(-1,s;1,-s)$ for all $s>3$, where 
\begin{equation} \label{k-plus} 
K_0^\pp  = \Omega_0 G_0-\frac{1}{4\pi \alpha} \big (\, \Phip_1\, \lef\, \cdot \, , \, \Omega_0 \, g_0\rig +\Omega_0\,  g_0\, \lef\, \cdot\, , \,  \Phip_1\rig\, \big ) - \frac{\lef g_0,  \vp\, \Omega_0\, g_0 \rig +4\pi\alpha \delta_0+16\pi^2 \alpha^2}{16\pi^2 \alpha^2} \  \LL \,\cdot\, , \Phip_1 \RR\, \Phip_1 \, .
\end{equation}
\end{cor}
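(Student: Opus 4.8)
The plan is to run the argument of Theorem~\ref{thm-pauli-res2} verbatim, with $P_\m(\A)$ replaced by $P_\pp(\A)$ and the positive-flux expansion of $R_0(\lambda)$ replaced by its negative-flux counterpart, as sanctioned by the Convention of Section~\ref{ssec-com}. Throughout, $G_0,\Omega_0,g_0,\delta_0,\zeta$ denote the objects attached to the reference operator $H_0=(i\nabla+A_0)^2$ of flux $\alpha\in(-1,0)$, with $\zeta(\alpha)$ read off from the analytic formula \eqref{zeta}. Since $-1<\alpha<0$, Lemma~\ref{lem-ah-cash}(ii) yields $\NN_e(P_\pp(\A))=\{0\}$ and $\dim\NN_r(P_\pp(\A))=1$, the sole resonant state being $\Phip_1=e^{i\chi-h}$; accordingly I normalize $\phi=c_0\,\Psip=c_0\,\Phip_1$ by $\LL\phi,\vp\,\phi\RR=-1$, the subtraction defining $\Psip$ being empty. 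By Corollary~\ref{cor-w0}, $\vp$ satisfies Assumption~\ref{ass-W} with $\tau=\rho-1>6$, so \eqref{eq-res1} and \eqref{op-W} give $R_\pp(\lambda,\A)=M(\lambda)^{-1}R_0(\lambda)$ with $M(\lambda)=1+R_0(\lambda)\,\vp$ once $M(\lambda)$ is invertible.

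First I would carry out the Grushin--Schur reduction of Section~\ref{ssec-grushin} with $N=N_r=1$, so that the enlarged system \eqref{a-lam} collapses to the scalar Schur complement $E(\lambda)=E_{11}(\lambda)$ of \eqref{e11}. The decisive step is to expand $E_{11}(\lambda)$ one order past its leading singularity, i.e.\ to prove the negative-flux analogue of \eqref{e11-inv}. This uses the $\vp$-versions of the matching relation \eqref{eq-crutial}, namely $\LL\vp\,\phi,g_0\,e^{i(\cdot)}\RR=-4\pi\alpha\,c_0$ together with $G_2^0\,\vp\,\phi=0$, and of Lemma~\ref{lem-vg3v}; crucially $\upsilon_\pp=0$ because $\pop=0$, so $\cc_\pp$ reduces to $\eta_\pp$. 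These inputs give $|c_0|^2\,E_{11}^{-1}(\lambda)=\dfrac{\zeta(\alpha)\,\lambda^{\alpha}}{1+\eta_\pp\,\zeta(\alpha)\,\lambda^{1+\alpha}}+\dfrac{\lef g_0,\vp\,\Omega_0\,g_0\rig+4\pi\alpha\,\delta_0}{16\pi^2\alpha^2}+\mathcal{O}(|\lambda|^{\mu})$, with $\mu=\min\{|\alpha|,1-|\alpha|\}$ as in \eqref{mu}.

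The remaining manipulations are the bookkeeping following \eqref{e11-inv}. Using \eqref{jj}, \eqref{D-exp} and the vanishing relations above I would expand $a_{12}(\lambda)\,a_{21}(\lambda)=Q-\lambda^{-\alpha}Q\,G_1^0\,\vp\,\Omega_0-\lambda^{-\alpha}\Omega_0\,G_1^0\,\vp\,Q+\mathcal{O}(|\lambda|^{\mu-\alpha})$, insert it into \eqref{grushin}, and combine the two resulting pieces $M(\lambda)^{-1}G_0$ and $\lambda^{-\alpha}M(\lambda)^{-1}G_1^0$ coming from the $R_0$-expansion. Here Proposition~\ref{prop-Omega0} applied with $W=\vp$, which gives $G_0\,\Omega_0^*\,\vp=G_0\,\vp\,\Omega_0=Q_0-\Omega_0$, is what fuses the several $\Omega_0$-contributions into the symmetric combination displayed in \eqref{k-plus}. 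The singular part then reassembles into $-\zeta(\alpha)\,\lambda^{\alpha}\big(1+\eta_\pp\,\zeta(\alpha)\,\lambda^{1+\alpha}\big)^{-1}\,\Phip_1\,\LL\,\cdot\,,\Phip_1\RR$ and the $\lambda$-independent remainder is exactly $K_0^\pp$, proving the claim.

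The main obstacle is not a single estimate but verifying that the mirror map $\alpha\mapsto-\alpha$, $(x_1+ix_2)\mapsto(x_1-ix_2)$, $h\mapsto-h$ acts exactly at the level of the $H_0$-resolvent expansion: one must confirm the negative-flux analogue of Proposition~\ref{prop-exp}, with the resonant channel carried by $m=0$ so that $g_0,\delta_0$ and $\zeta(\alpha)$ enter with the powers $\lambda^{\alpha}$ and $\lambda^{1+\alpha}$ written above, and that the self-adjointness argument of Remark~\ref{rem-g3} still identifies the constant operator $G_3$ entering $K_0^\pp$. Once these structural facts are recorded, every subsequent computation is term-by-term identical to the positive-flux case, so no genuinely new analytic difficulty arises.
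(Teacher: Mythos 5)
Your proposal is correct and coincides with the paper's intended argument: the paper gives no separate proof of this corollary, relying on the convention of Section \ref{ssec-com} that the case $\alpha<0$ is obtained from the proof of Theorem \ref{thm-pauli-res2} by interchanging the spin components and reversing the flux, and you reproduce exactly that reduction, correctly isolating the two places where real content lies (the scalar Schur complement with $\cc_\pp=\eta_\pp$ because $\pop=0$, and the negative-flux analogue of Proposition \ref{prop-exp} with the resonant channel carried by $m=0$). Be careful only with sign and argument conventions when transcribing \eqref{eq-crutial}: matching $\phi=c_0\,\Phip_1\sim c_0\, r^{\alpha}$ against the negative-flux version of Corollary \ref{cor-null} gives $\LL \vp\,\phi,\, g_0\RR=-4\pi|\alpha|\,c_0=4\pi\alpha\, c_0$ (the factor $4\pi\alp$ there is really $4\pi|n-\alpha|$, which is positive), so writing $-4\pi\alpha\, c_0$ with $\alpha<0$ reverses the sign, and the same care is needed in deciding whether $\zeta$ is evaluated at $\alpha$ or at $|\alpha|$ when the resolvent expansion of $H_0$ is redone for negative flux.
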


\smallskip

\begin{prop} 
Under the assumptions of Theorem \ref{thm-pauli-res-m}, as $\lambda\to 0$, 
\begin{align*}
R_\m(\lambda,\A)  & = (1+ G_0 \vm )^{-1}\, G_0  + (1+ G_0 \vm)^{-1} F_1(\alpha)\,  (1+ \vm G_0)^{-1}\, \lambda^{\mu} 
+o( \lambda^{\mu} )
\end{align*} 
holds in $\B(-1,s;1,-s)$ for all $s>3$, where $ F_1(\alpha)$ is defined in Proposition \ref{prop-pauli-regular}. 
\end{prop}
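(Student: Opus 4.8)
The plan is to reproduce, essentially verbatim, the argument given for $R_\pp(\lambda,\A)$ in Proposition \ref{prop-pauli-regular}, exploiting the fact that for $\alpha<0$ it is now the spin-up component $P_\m(\A)$ that carries no zero modes. First I would record the regular expansion of the reference resolvent. By the analogue of Proposition \ref{prop-exp} valid for negative non-integer flux — which follows from the identical computation, since the kernels \eqref{r-kernel} and all the expansion coefficients in Proposition \ref{prop-exp} enter only through the quantities $|m-\alpha|$, $\delta_m$, $q_m$ and $\zeta$, each of which is defined for every non-integer $\alpha$ — one has
\begin{equation*}
R_0(\lambda) = G_0 + F_1(\alpha)\, \lambda^{\mu} + o(\lambda^{\mu}), \qquad \lambda\to 0,
\end{equation*}
in $\B(-1,s;1,-s)$ for $s>3$, with $\mu$ as in \eqref{mu} and $F_1(\alpha)$ exactly as in Proposition \ref{prop-pauli-regular}.

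Next I would invoke Lemma \ref{lem-ah-cash}(ii): since $\alpha<-1<0$, we have $\NN(P_\m(\A))=\{0\}$, so $P_\m(\A)$ has neither a zero eigenfunction nor a zero resonant state. Combined with Corollary \ref{cor-w0}, which gives that $\vm$ satisfies Assumption \ref{ass-W} with $\tau=\rho-1>6$, and with Lemma \ref{lem-5}, this yields that $1+G_0\vm$ is invertible on $\HH^{1,-s}$ for $\tau/2<s<\tau-3$, i.e.\ for $3<s<\rho-4$. The remaining steps then transcribe the proof of Proposition \ref{prop-pauli-regular} with $\vp$ replaced by $\vm$: I would expand $(1+R_0(\lambda)\vm)^{-1}$ into a Neumann series about $(1+G_0\vm)^{-1}$, pass to the dual side to obtain invertibility of $1+\vm G_0$ on $\HH^{-1,s}$ together with the intertwining identity $\vm(1+G_0\vm)^{-1}G_0 = (1+\vm G_0)^{-1}\vm G_0$, and finally substitute into the resolvent equation \eqref{eq-res1}.

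Since this is a line-by-line reproduction of an already completed argument, I do not expect any genuine obstacle; the only point requiring a word of justification is that the expansion of $R_0(\lambda)$ retains its form for $\alpha<0$, which is immediate from the remark above because $R_0$ is the resolvent of the reference operator $H_0=(i\nabla+A_0)^2$ and is insensitive to the spin sector. Consequently the statement follows at once, and in fact the proof may be omitted exactly as was done for Proposition \ref{prop-pauli-regular-int}.
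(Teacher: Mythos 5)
Your proposal is correct and coincides with the paper's intended argument: the paper omits the proof precisely because, as you observe, it is the proof of Proposition \ref{prop-pauli-regular} transcribed with the roles of the spin components interchanged, using Lemma \ref{lem-ah-cash} to see that $\NN(P_\m(\A))=\{0\}$ for $\alpha<0$ and the $\alpha<0$ analogue of Proposition \ref{prop-exp} for the regular expansion of $R_0(\lambda)$. Your remark that the reference resolvent expansion carries over unchanged to negative non-integer flux is the only point needing comment, and you justify it adequately.
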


\subsection{\bf Integer flux} For $\alpha$ integer and negative we put 
\begin{align*} 
\Psip_2  & = \Phip_2 + \sum_{j=1}^n\,  \LL \, \Phip_2,  \vp\, \psip_j\RR\, \psip_j , \qquad \kappa_\pp  = \LL \Psip_2 ,\,  \vp\, \Phip_1 \RR, \\[-5pt]
\Psip_1 &= \Phip_1 +\kappa_\pp\, \Psip_2 + \sum_{j=1}^n\,  \LL \, \Phip_1,  \vp\, \psip_j\RR\, \psip_j  .
\end{align*} 
Moreover, similarly as in Section \ref{sec-exp-int} we define $\X_\pp= \pop \,  \vp\, \G_3\,  \vp$, and
\begin{equation*} 
\varphi_2^\pp = (1-\X_\pp) \Phi_2^\pp, \qquad \varphi_1^\pp = (1-\X_\m) \Phi_1^\pp + \kappa_\pp\, \varphi_2^\pp, \qquad 
\Pi^\pp_{jk} = \LL\, \cdot\, , \varphi_j^\pp\RR\,   \varphi_k^\pp,  \quad  j,k=1,2.
\end{equation*}

\begin{thm}  \label{thm-pauli-int-m}
Let $\alpha < 0,\, \alpha\in\Z$. Suppose that $B$ satisfies  \ref{ass-B} with $\rho > 7$. 
Then as $\lambda\to 0$, 
\begin{equation}  \label{res-pauli-int-m}
R_\pp(\lambda,\A)   = -\lambda^{-1} \, \pop   +  \frac{\Pi_{22}^\pp }{\pi\,  \lambda\, (\log\lambda-\w_\pp)} \,  - \log\lambda\,   \K_\pp + \mathcal{O}(1)
\end{equation} 
holds in $\B(-1,s;1,-s)$ for all $s>3$, where 
$$
\w_\pp =  i\pi +m_\pp,  \qquad \text{with} \qquad m_\pp:= \pi^{-1} \LL\,   \vp \Psip_2,\, \widetilde \G_3\,  \vp \Psip_2 \RR \, \in \R\, ,
$$
and where
\begin{align*}
 \K_\pp & = \frac{1}{4\pi} \big[ \,\Pi_{11}^\pp +\overline{ \varkappa}\, \Pi_{12}^\pp + \varkappa\, \Pi_{21}^\pp + |\varkappa|^2 \, \Pi_{22}^\pp\big] + 
\frac{\pi\, |d^\pp_n|^2 }{4} \  \psip  \LL \, \cdot \, ,  \psip \RR   \,  
\end{align*}
with $\varkappa= \kappa_\pp+\vartheta$. If $ \alpha = -1$, then equation \eqref{res-pauli-int-m} holds with $\pop=0$. 
\end{thm}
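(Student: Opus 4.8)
The plan is to mirror the proof of Theorem~\ref{thm-pauli-res-int} with the roles of $P_\m(\A)$ and $P_\pp(\A)$ interchanged, as anticipated by the convention of Section~\ref{sec-alpha-neg}: for $\alpha<0$ it is the spin-up component $P_\pp(\A)$ that carries the zero modes, cf.~Lemma~\ref{lem-ah-cash}(ii), so the whole machinery is applied to $\vp=P_\pp(\A)-H_0$ rather than to $\vm$. The first step is to record the expansion of the reference resolvent $R_0(\lambda)$ when $A_0$ generates a field of \emph{negative} integer flux $\alpha$. Since every quantity in Proposition~\ref{prop-exp-int} enters only through $|m-\alpha|$ and $\operatorname{sign}(\alpha)$, the expansion \eqref{B0-eq-2-int} carries over verbatim: the logarithmically corrected channel is again $m=\alpha$, and the finite-rank coefficients receive contributions only from $|m-\alpha|\le 2$, i.e.~$m\in\{\alpha-2,\dots,\alpha+2\}$. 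One also needs that $\vp$ satisfies Assumption~\ref{ass-W} with $\tau=\rho-1>6$, which is exactly Corollary~\ref{cor-w0}.

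Next I would use the resolvent identity $R_\pp(\lambda,\A)=M(\lambda)^{-1}R_0(\lambda)$ with $M(\lambda)=1+R_0(\lambda)\vp$, cf.~\eqref{eq-res1}, and run the Grushin--Schur reduction of Section~\ref{sec-exp-int}. Since $\NN(P_\pp(\A))$ now consists of $-\alpha-1$ zero eigenfunctions $\psip_j$ together with two resonant states $\phi_1,\phi_2$ (normalised in Section~\ref{sec-alpha-neg}), the auxiliary operators $\J_r:\C^{2}\to\HH^{1,-s}$ and $\J_e:\C^{-\alpha-1}\to\HH^{1,-s}$ assemble into $\J:\C^{-\alpha+1}\to\HH^{1,-s}$, and invertibility of $M(\lambda)$ reduces to invertibility of the $(-\alpha+1)\times(-\alpha+1)$ matrix $E(\lambda)$ of \eqref{E-int}, with $\Omega(\lambda)$ given by \eqref{D-lambda}.

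The heart of the argument is the expansion of $E(\lambda)$, obtained from the negative-flux analogues of Lemmas~\ref{lem-res-int}, \ref{lem-jost-int} and \ref{lem-vg3v-int}. The decisive input is once more the matching of zero-mode asymptotics: comparing the decomposition of $u\in\NN(H_0+\vp)$ furnished by Corollary~\ref{cor-null} (applied with $W=\vp$) against the Aharonov--Casher forms \eqref{ef-hat-plus} produces the pairings $\LL\vp\,\psip_j,\g_\alpha e^{i\alpha\theta}\RR$, $\LL\vp\,\psip_j,g_\pm e^{i(\alpha\pm1)\theta}\RR$ and the corresponding ones for $\phi_1,\phi_2$ that populate the entries of $E$. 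This yields the block structure of \eqref{E-matrix-int}: a $2\times2$ resonant block $\E(\lambda)$ carrying the $\log\lambda$ and $\lambda\log\lambda$ singularities, and an $(-\alpha-1)\times(-\alpha-1)$ eigenfunction block $\sn(\lambda)=\lambda\,T+\mathcal{O}(\lambda^{2}\log\lambda)$. Applying the Feshbach formula \eqref{eq-feshbach} and inverting $\sn(\lambda)$ by a Neumann series then isolates the three singular contributions: the $-\lambda^{-1}\pop$ term from $\sn^{-1}(\lambda)$, the pole $\big(\pi\lambda(\log\lambda-\w_\pp)\big)^{-1}\Pi_{22}^\pp$ from the $p$-wave resonant state $\varphi_2^\pp$, and the $\log\lambda\,\K_\pp$ term; substituting into the analogue of \eqref{R-grushin-int} gives \eqref{res-pauli-int-m}. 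The case $\alpha=-1$ is handled by setting $\pop=0$, there being then no zero eigenfunctions while the two resonant states persist.

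I expect the main obstacle to be the bookkeeping of complex phases, namely the branch of $\log\lambda$ and the appearance of $i\pi$ in $\w_\pp=i\pi+m_\pp$. One must verify that, with $H_0$ now of negative flux, the self-adjointness argument of Remark~\ref{rem-G3-int} still pins down the operator $\widetilde\G_3$ and forces $m_\pp=\pi^{-1}\LL\vp\,\Psip_2,\widetilde\G_3\,\vp\,\Psip_2\RR$ to be real, so that the imaginary part of $\w_\pp$ is exactly $i\pi$. A useful conceptual cross-check is the anti-unitary symmetry $\overline{P_\pp(\A)}=P_\m(-\A)$, which maps the flux-$\alpha$ spin-up problem to a flux-$(-\alpha)$ spin-down problem already covered by Theorem~\ref{thm-pauli-res-int}; the only delicate point there is that complex conjugation exchanges the $\pm i\eps$ limiting-absorption boundary values, which is precisely the mechanism reproducing the imaginary parts recorded in $\w_\pp$.
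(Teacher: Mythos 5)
Your proposal is correct and follows essentially the same route as the paper, which itself gives no separate proof for $\alpha<0$ but only indicates that Theorem \ref{thm-pauli-int-m} is obtained from the proof of Theorem \ref{thm-pauli-res-int} by interchanging the roles of $P_\m(\A)$ and $P_\pp(\A)$; your block structure of $E(\lambda)$, the Feshbach/Neumann-series inversion, and the identification of the three singular terms all match that argument. (A minor terminological slip: in the paper's convention $P_\pp(\A)$ is the spin-\emph{down} component, so for $\alpha<0$ the zero modes sit in the spin-down subspace; this does not affect the mathematics.)
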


\smallskip

\begin{prop} 
Under the assumptions of Theorem \ref{thm-pauli-int-m}, as $\lambda\to 0$, 
\begin{align} \label{res-pauli-reg-int}
R_\m(\lambda,\A)  & = (1+ \G_0 \vm )^{-1}\, \G_0  + (1+ \G_0 \, \vm)^{-1} \G_1\,  (1+ \vm\, \G_0)^{-1}\, \frac{1}{\log\lambda}  
+o\big((\log\lambda)^{-1} \big)
\end{align} 
holds in $\B(-1,s;1,-s)$ for all $s>3$.  
\end{prop}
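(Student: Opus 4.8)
The plan is to transcribe the proof of Proposition \ref{prop-pauli-regular-int} (which itself copies that of Proposition \ref{prop-pauli-regular}), invoking the symmetry convention of the introduction whereby the case $\alpha<0$ is obtained from $\alpha>0$ by interchanging the roles of $\Pm(\A)$ and $\Pp(\A)$. For integer $\alpha<0$, Lemma \ref{lem-ah-cash}(ii) gives $\NN(\Pm(\A))=\{0\}$, so zero is a regular point for the spin-up component and the expansion of $R_\m(\lambda,\A)$ should be regular apart from the single logarithmic correction carried by $\G_1$. The reference operator and gauge are unchanged, and Corollary \ref{cor-w0} again guarantees that $\vm$ satisfies Assumption \ref{ass-W} with $\tau>4$ since $\rho>7$.

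First I would extract the leading two terms of the integer-flux expansion. By Proposition \ref{prop-exp-int}, every contribution to \eqref{B0-eq-2-int} beyond $\G_0$ and $(\log\lambda-z_\alpha)^{-1}\G_1$ is of order $\mathcal{O}(\lambda\log\lambda)$ or $\mathcal{O}(\lambda(\log\lambda-z_\alpha)^{-2})$, hence $o\big((\log\lambda)^{-1}\big)$; moreover $(\log\lambda-z_\alpha)^{-1}=(\log\lambda)^{-1}+\mathcal{O}\big((\log\lambda)^{-2}\big)$. Therefore
\begin{equation*}
R_0(\lambda)=\G_0+\frac{\G_1}{\log\lambda}+o\big((\log\lambda)^{-1}\big),\qquad \lambda\to0,
\end{equation*}
in $\B(-1,s;1,-s)$ for every $s>3$.

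Next, since $\NN(\Pm(\A))=\{0\}$, the objects of Lemma \ref{lem-5} (stated with $G_0$ but valid with $\G_0$ for integer flux) collapse to $Q=0$, $Q_0=1$, so $1+\G_0\vm$ is invertible in $\HH^{1,-s}$. Expanding $(1+R_0(\lambda)\vm)^{-1}$ in a Neumann series about $1+\G_0\vm$ gives
\begin{equation*}
(1+R_0(\lambda)\vm)^{-1}=(1+\G_0\vm)^{-1}-\frac{(1+\G_0\vm)^{-1}\G_1\vm(1+\G_0\vm)^{-1}}{\log\lambda}+o\big((\log\lambda)^{-1}\big).
\end{equation*}
I would then import the duality step of Proposition \ref{prop-pauli-regular}: $1+\vm\G_0$ is invertible in $\HH^{-1,s}$ and $\vm(1+\G_0\vm)^{-1}\G_0=(1+\vm\G_0)^{-1}\vm\G_0$. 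Multiplying the Neumann expansion by $R_0(\lambda)$, substituting the leading expansion above, and using $1-(1+\vm\G_0)^{-1}\vm\G_0=(1+\vm\G_0)^{-1}$, the resolvent equation \eqref{eq-res1} yields exactly \eqref{res-pauli-reg-int}.

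The argument is essentially routine, being a line-by-line transcription of the $\alpha>0$ case with $\vp,\G$-data replaced by $\vm,\G$-data; the one point that genuinely requires care, and hence the mild obstacle, is confirming in the $\B(-1,s;1,-s)$ topology that all subleading terms of \eqref{B0-eq-2-int} — notably the $E_4\lambda(\log\lambda-z_\alpha)^{-2}$ term and the $\mathcal{O}(\lambda\log\lambda)$ contributions — are truly $o\big((\log\lambda)^{-1}\big)$, so that they may be absorbed into the remainder and only the single logarithmic term survives. This is immediate once one notes $\lambda\log\lambda=o\big((\log\lambda)^{-1}\big)$ and $\lambda(\log\lambda-z_\alpha)^{-2}=o\big((\log\lambda)^{-1}\big)$ as $\lambda\to0$.
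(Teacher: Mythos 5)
Your proposal is correct and follows essentially the same route as the paper, which omits this proof entirely by declaring it a line-by-line transcription of Proposition \ref{prop-pauli-regular} (via Proposition \ref{prop-pauli-regular-int}): extract $R_0(\lambda)=\G_0+\G_1(\log\lambda)^{-1}+o((\log\lambda)^{-1})$ from the integer-flux expansion, use $\NN(\Pm(\A))=\{0\}$ for $\alpha<0$ and Lemma \ref{lem-5} to invert $1+\G_0\vm$, expand in a Neumann series, and close with the duality identity $\vm(1+\G_0\vm)^{-1}\G_0=(1+\vm\G_0)^{-1}\vm\G_0$. Your handling of the subleading terms of \eqref{B0-eq-2-int} and of the negative-flux symmetry convention matches the paper's intent.
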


\subsection{The case $\alpha=0$}  
\label{ssec-zero-flux} 
In this case we cannot  apply the perturbation argument as it was done for $\alpha\neq 0$. Indeed, for $\alpha=0$ 
we have $H_0=-\Delta$, and the expansion of $R_0(\lambda)$ is thus singular as $\lambda\to 0$.  On the other hand the fact that $\alpha=0$ in combination with 
Proposition \ref{prop-gauge} shows that the coefficients of $H_0- P_\ppm(\A) = -\Delta -P_\ppm(\A)$ decay arbitrarily fast for $\rho$ large enough.
With the help of  \cite{mu} we then obtain

\begin{cor} \label{cor-pauli-res-0}
Let $\alpha=0$. 
Suppose that $B$ satisfies Assumption \ref{ass-B} with $\rho >7$.   Then the expansions
\begin{equation} \label{r-zero-flux}
R_\pm(\lambda, \A)  =  - \frac{\log\lambda}{4\pi}   \, \LL \, \cdot\, ,\,  e^{\mp h+i\chi} \RR\, e^{\mp h+i\chi} \, +\, \mathcal{O} (1)
\end{equation}
hold in $\B(-1,s;1,-s), \, s>3,$ as $\lambda\to 0$. 
\end{cor}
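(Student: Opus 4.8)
The plan is to reduce the statement to the known threshold expansion of a two-dimensional Schr\"odinger-type operator possessing a zero-energy resonance, and then to read off the coefficient from the universal singularity of the free resolvent. Since $\alpha=0$ forces $A_0\equiv 0$ and hence $H_0=-\Delta$, Proposition \ref{prop-gauge} (applied with $\alpha=0$) shows that the coefficients of $\vpm=P_\pm(\A)+\Delta$, given in \eqref{w-pauli}, decay like $\x^{-\rho+1}$; for $\rho$ large this is faster than any fixed polynomial rate required by \cite{mu}. Thus $P_\pm(\A)=-\Delta+\vpm$ is a rapidly decaying first-order perturbation of $-\Delta$ and falls within the class of operators whose resolvent expansions at the bottom of the essential spectrum are described in \cite{mu} (see also \cite{fmv} for the radial Pauli setting and \cite{JK} for the non-magnetic picture).

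Next I would determine the nature of the threshold. By Lemma \ref{lem-ah-cash}(iii) one has $\NN(P_\pm(\A))=\{c\,e^{\mp h+i\chi}:c\in\C\}$, so the zero-mode space is one-dimensional for both spin components. Using \eqref{h-asymp} and \eqref{phi-infty} with $\alpha=0$ we have $h(x)\to 0$ and $\chi(x)\to 0$ as $|x|\to\infty$, whence $e^{\mp h+i\chi}\to 1$ and $e^{\mp h+i\chi}\in \Lp^\infty(\R^2)\setminus \Lp^2(\R^2)$. Consequently zero is neither an eigenvalue nor a regular point: it is a simple resonance of the generic ($s$-wave) type, whose canonical resonant state tends to a nonzero constant at infinity. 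This is exactly the exceptional situation treated in \cite{mu}, for which the weighted resolvent acquires a rank-one leading term proportional to $\log\lambda$.

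Invoking that expansion theorem then gives $R_\pm(\lambda,\A)=c\,\log\lambda\,\LL\,\cdot\,,\Psi_\pm\RR\,\Psi_\pm+\mathcal{O}(1)$ in $\B(-1,s;1,-s)$, $s>3$, where $\Psi_\pm$ is the canonically normalized resonant state and $c$ is a universal constant. To pin down $c$ and $\Psi_\pm$ I would compare with the free operator: the two-dimensional resolvent $(-\Delta-\lambda)^{-1}$ has the universal leading singularity $-\tfrac{\log\lambda}{4\pi}$ acting as the rank-one projection onto constants, and the Jensen--Nenciu/Murata normalization of the resonant function is precisely the one with $\Psi_\pm\to 1$ at infinity. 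Since $e^{\mp h+i\chi}$ satisfies this normalization, we obtain $c=-\tfrac{1}{4\pi}$ and $\Psi_\pm=e^{\mp h+i\chi}$, which is the asserted formula \eqref{r-zero-flux}. The main obstacle is this last bookkeeping step: one must verify that the first-order perturbation $\vpm$ does not alter the constant $-\tfrac{1}{4\pi}$ (it does not, since that constant is inherited from $(-\Delta-\lambda)^{-1}$ and the leading term depends on the perturbation only through the resonant state) and that $e^{\mp h+i\chi}$ is genuinely the canonical resonant state fixed by $\Psi_\pm\to 1$, rather than some other multiple of it.
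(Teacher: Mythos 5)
Your reduction to Murata's threshold analysis is exactly the paper's first step: for $\alpha=0$ one has $A_0\equiv 0$, $H_0=-\Delta$, Proposition \ref{prop-gauge} makes the coefficients of $\vpm$ decay like $\x^{-\rho+1}$, and Lemma \ref{lem-ah-cash}(iii) together with \eqref{h-asymp}, \eqref{phi-infty} shows that $e^{\mp h+i\chi}$ is a bounded, non-$\Lp^2$ zero mode tending to $1$ at infinity, so zero is a simple $s$-wave resonance and \cite[Thm.~4.3(vi)]{mu} yields $R_\pm(\lambda,\A)=-C_0\log\lambda\,\LL\,\cdot\,,e^{\mp h+i\chi}\RR\,e^{\mp h+i\chi}+\mathcal{O}(1)$ with some constant $C_0>0$. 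Up to this point your argument and the paper's coincide.

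The gap is in how you fix $C_0=\frac{1}{4\pi}$. You assert that the constant is ``universal,'' inherited from the leading singularity $-\frac{\log\lambda}{4\pi}\LL\,\cdot\,,1\RR\,1$ of $(-\Delta-\lambda)^{-1}$, provided the resonant state is normalized by $\Psi_\pm\to 1$ at infinity. That is a plausible heuristic, but it is not a proof: in Murata's theorem the coefficient $Q_{-1}$ is given by an algebraic expression built from the perturbation and certain projections, and identifying it with $\frac{1}{4\pi}\LL\,\cdot\,,\psi\RR\psi$ in the ``$\psi\to 1$'' normalization is precisely the nontrivial bookkeeping you defer. A Green's-function matching at large $|x|,|y|$ does not rescue this, because the $\mathcal{O}(1)$ remainder lives in $\B(-1,s;1,-s)$ with $s>3$, i.e.\ its kernel may grow like $\x^{s}\langle y\rangle^{s}$, so it cannot be discarded in the regime where $\psi(x)\psi(y)\to 1$. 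The direct route via the resolvent identity is also blocked here, since for $\alpha=0$ the reference resolvent itself is singular at threshold, which is exactly why the paper excludes this case from its main method. The paper instead pins down $C_0$ by an entirely different device: it computes the weakly coupled eigenvalue of $P_\m(\A)-\eps V$ (for radial, compactly supported $V\geq 0$) from the expansion, obtaining $\Lambda_\m(\eps)=-\exp\big[-\tfrac{1}{\eps C_0\V}(1+\mathcal{O}(\eps))\big]$ with $\V=\int_{\R^2}V e^{2h}$, and compares with the known asymptotics $\Lambda_\m(\eps)=-\exp\big[-\tfrac{4\pi}{\eps\V}(1+\mathcal{O}(\eps))\big]$ from \cite[Thm.~1.3]{fmv}, which forces $C_0=\frac{1}{4\pi}$. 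To complete your proof you would either have to carry out this comparison, or actually evaluate Murata's formula for $Q_{-1}$ and verify the normalization claim rather than assert it.
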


The proof of Corollary \ref{cor-pauli-res-0} will be given at the end of Section \ref{sec-weak}.


\section{\bf Dirac operator}  
\label{sec-dirac}
It was already mentioned in Section \ref{sec-intro} that the results obtained for the resolvent of the Pauli operator can be applied to analyze the resolvent of the Dirac operator as well. In view of 
\eqref{pauli-dirac} we find that 
\begin{equation} \label{pauli-dirac-res}
(D_m(\A) -\lambda)^{-1}  =  (D_m(\A) +\lambda)  \begin{pmatrix}
R_\m(\lambda^2 -m^2, \A) & 0\\
0 &    R_\pp(\lambda^2 -m^2, \A) 
\end{pmatrix}  . 
\end{equation}
Note that the right hand side of \eqref{pauli-dirac-res} is well defined and belongs to $\B(0,s;0,-s)$ for all $s>3$.

We can thus deduce the expansion of  $(D_m(\A) -\lambda)^{-1}$ for $\lambda\to\pm m$ directly from the expansions $R_\pp(\lambda, , \A)$ and $R_\m(\lambda , \A)$ for $\lambda\to 0$. In doing so it will be important to notice that, by  \eqref{gauge-pauli} and \eqref{gauge-transf}, 
\begin{equation}  \label{D-modes}
\D(\A)  \big(e^{i\chi-h}\, v\big) = -e^{i\chi-h}\,  (i\pd_1 +\pd_2) v,  \qquad \text{and} \qquad  \D(\A)^*\big(  e^{i\chi+h}\, v \big)= -e^{i\chi+h}\,  (i\pd_1 -\pd_2) v\, .
\end{equation} 

Throughout this section we assume that $B$ satisfies  \ref{ass-B} with some $\rho > 7$.

\subsection{Massive Dirac operator} The Dirac operator has, for $m>0$, two thresholds of the essential spectrum; $\pm m$. It turns out that the resolvent expansions at these thresholds are different. 

\begin{thm} \label{thm-dirac-1} 
Let $m>0$.  \par

\noindent If $\underline{\alpha >0}$, then
\begin{equation} \label{dirac-1} 
\begin{aligned} 
(D_m(\A) -\lambda)^{-1}  & =  \mathcal{O}(1) \qquad\qquad\qquad\qquad\qquad\qquad\qquad\quad\, \text{as} \quad\lambda\to -m, \\[4pt]
(D_m(\A) -\lambda)^{-1}  & =  
\begin{pmatrix}
 2m R_\m(\lambda^2-m^2, \A)   & 0\\
0 &  0
\end{pmatrix} 
+ \mathcal{O}(1) , \qquad \text{as} \quad \lambda\to m ,
\end{aligned}
\end{equation} 
with $R_\m(\cdot\, , \A)$ satisfying \eqref{res-pauli} for $\alpha\not\in\Z$, and \eqref{res-pauli-int} for $\alpha\in\Z$. 

\vskip0.2cm

\noindent If $\underline{\alpha <0}$, then
\begin{equation} \label{dirac-2} 
\begin{aligned} 
(D_m(\A) -\lambda)^{-1}  & =  
\begin{pmatrix}
0 & 0\\
0 &      -2m R_\pp(\lambda^2-m^2, \A)
\end{pmatrix} 
+ \mathcal{O}(1) ,\, \qquad \text{as} \quad \lambda\to -m, \\[4pt] 
(D_m(\A) -\lambda)^{-1}  & =  \mathcal{O}(1) \qquad\qquad\qquad\qquad\qquad\qquad\qquad\qquad  \, \text{as} \quad\lambda\to m , 
\end{aligned}
\end{equation} 
with $R_\pp(\cdot\, , \A)$ satisfying \eqref{res-pauli-alpham} for $\alpha\not\in\Z$, and \eqref{res-pauli-int-m} for $\alpha\in\Z$. 

\vskip0.3cm

\noindent If $\underline{\alpha =0}$, then as  $\lambda\to \pm m$,
\begin{align}  \label{dirac-3} 
(D_m(\A) -\lambda)^{-1}  & =  
\frac{m \log(\lambda \mp m)}{4\pi} 
\begin{pmatrix}
 e^{i\chi+h}\, \LL \, \cdot\, , e^{i\chi+h} \RR & 0\\
0 &     -e^{i\chi-h}\,  \LL \, \cdot\, , e^{i\chi-h} \RR   
\end{pmatrix} + \mathcal{O}(1)  \ .
\end{align}
 All the expansions above are to be understood in $\B(0,s;0,-s)$ for $s>3$.
\end{thm}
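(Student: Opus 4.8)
The plan is to extract every expansion from the factorization identity \eqref{pauli-dirac-res}. Writing $w=\lambda^2-m^2$ and multiplying out the product $(D_m(\A)+\lambda)\,\mathrm{diag}(R_\m(w,\A),R_\pp(w,\A))$, the Dirac resolvent becomes the $2\times2$ operator matrix whose diagonal entries are $(m+\lambda)R_\m(w,\A)$ and $(\lambda-m)R_\pp(w,\A)$ and whose off-diagonal entries are $\D(\A)R_\pp(w,\A)$ and $\D(\A)^*R_\m(w,\A)$. Since $\lambda\to\pm m$ forces $w\to 0$, the task reduces to inserting the threshold expansions of the Pauli resolvents — Theorems \ref{thm-pauli-res} and \ref{thm-pauli-res-int} for the singular component when $\alpha>0$, Theorems \ref{thm-pauli-res-m} and \ref{thm-pauli-int-m} when $\alpha<0$, Propositions \ref{prop-pauli-regular} and \ref{prop-pauli-regular-int} for the regular component, and Corollary \ref{cor-pauli-res-0} when $\alpha=0$ — and then collecting the surviving leading terms. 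Throughout I work in $\B(0,s;0,-s)$ with $s>3$, the topology in which the right-hand side of \eqref{pauli-dirac-res} is bounded.

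The decisive algebraic input is \eqref{D-modes}: $\D(\A)$ kills every $e^{i\chi-h}v$ with $(i\pd_1+\pd_2)v=0$, and $\D(\A)^*$ kills every $e^{i\chi+h}v$ with $(i\pd_1-\pd_2)v=0$. By Lemma \ref{lem-ah-cash} this means that $\D(\A)^*$ annihilates all of $\NN(P_\m(\A))$ and $\D(\A)$ annihilates all of $\NN(P_\pp(\A))$. Every singular term in the expansion of $R_\m$ (resp.\ $R_\pp$) is a finite-rank operator whose range lies in $\NN(P_\m(\A))$ (resp.\ $\NN(P_\pp(\A))$): the projection $\pom$ (resp.\ $\pop$), the eigenfunction $\psim$ (resp.\ $\psip$), and the resonant states $\varphi^\m$ (resp.\ $\varphi^\pp$). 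Hence applying $\D(\A)^*$ to $R_\m$, or $\D(\A)$ to $R_\pp$, removes the entire singular part and leaves only a first-order operator acting on the $\mathcal{O}(1)$ remainder. Since $\A\in\Lp^\infty$, these operators map $\HH^{1,-s}\to\Lp^{2,-s}$ boundedly, so both off-diagonal blocks are $\mathcal{O}(1)$ for every sign of $\alpha$ and at both thresholds.

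It then remains to weigh the scalar prefactors against the order of the singularity via $w=(\lambda-m)(\lambda+m)$. The factor $(m+\lambda)\to 2m$ preserves the full singularity of $R_\m(w,\A)$ as $\lambda\to m$, whereas $(m+\lambda)=(\lambda+m)$ destroys it as $\lambda\to -m$: indeed $(\lambda+m)w^{-1}=(\lambda-m)^{-1}$ stays bounded while $(\lambda+m)w^{\alp-1}$ and $(\lambda+m)w^{-\alp}$ are $o(1)$ because $\alp\in(0,1)$. The symmetric statement holds for $(\lambda-m)R_\pp(w,\A)$. This produces the block structure of \eqref{dirac-1} and \eqref{dirac-2}: for $\alpha>0$ only $(m+\lambda)R_\m\to 2mR_\m$ survives at $+m$ while everything is regular at $-m$, and for $\alpha<0$ only $(\lambda-m)R_\pp\to -2mR_\pp$ survives at $-m$. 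For $\alpha=0$ one substitutes Corollary \ref{cor-pauli-res-0} and uses $\log w=\log(\lambda\mp m)+\mathcal{O}(1)$ near $\pm m$; the prefactors then single out, at each threshold, the logarithmic resonant term in the corresponding spin channel, yielding \eqref{dirac-3} with the rank-one operators built from $e^{i\chi\pm h}$ as dictated by Lemma \ref{lem-ah-cash}.

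The main obstacle is not the accounting of powers but the uniform justification that the off-diagonal blocks stay bounded. One must verify that $\D(\A)$ and $\D(\A)^*$ really annihilate the singular operators — that the ranges of all singular terms consist of genuine zero modes, which is exactly the combination of \eqref{D-modes} with the explicit forms \eqref{res-pauli}, \eqref{res-pauli-int}, \eqref{res-pauli-alpham}, \eqref{res-pauli-int-m}, \eqref{r-zero-flux} — and that the residual action of these first-order operators on the $\mathcal{O}(1)$ remainders, together with the products $(\lambda\mp m)\times(\text{singular part})$ on the diagonal, are controlled in $\B(0,s;0,-s)$. Once this is established, the three cases follow by simply reading off the leading terms that survive the prefactors.
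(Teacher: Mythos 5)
Your argument is exactly the paper's: expand the factorization \eqref{pauli-dirac-res} into the $2\times2$ matrix with diagonal entries $(\lambda+m)R_\m(\lambda^2-m^2,\A)$, $(\lambda-m)R_\pp(\lambda^2-m^2,\A)$ and off-diagonal entries $\D(\A)R_\pp$, $\D(\A)^*R_\m$, kill the singular (finite-rank, zero-mode-ranged) parts of the off-diagonal blocks via \eqref{D-modes} and Lemma \ref{lem-ah-cash}, and then read off which diagonal singularities survive the prefactors $(\lambda\pm m)$. This matches the paper's proof step for step (you are in fact slightly more explicit about why $(\lambda+m)$ tames the $w^{-1}$, $w^{\alp-1}$ and $w^{-\alp}$ singularities at the opposite threshold), so the proposal is correct and takes essentially the same route.
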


\begin{proof}
From Lemma \ref{lem-ah-cash}, equation \eqref{D-modes} and the threshold expansions of  $R_\pp(\lambda\, , \A)$ and $R_\m(\lambda\, ,\A)$ obtained in Sections \ref{sec-exp-non-int}, \ref{sec-exp-int} and \ref{sec-alpha-neg}
we deduce that 
\begin{equation*} 
\D(\A) R_\pp(\lambda^2-m^2, \A) = \mathcal{O}(1) \qquad \text{and} \qquad \D(\A)^* R_\m(\lambda^2-m^2, \A) = \mathcal{O}(1)  \qquad  \text{as} \ \ \lambda\to\pm m 
\end{equation*}
hold in $\B(0,s;0,-s)$ for any $\alpha\in\R$. Hence by \eqref{pauli-dirac-res}
\begin{equation}  \label{r-dirac-gen}
(D_m(\A) -\lambda)^{-1}  =    
\begin{pmatrix}
(\lambda+m)\, R_\m(\lambda^2 -m^2, \A) & \mathcal{O}(1)  \\
 \mathcal{O}(1)  &    (\lambda-m)\, R_\pp(\lambda^2 -m^2, \A)
\end{pmatrix}  .
\end{equation} 
Since $R_\pp(\lambda^2 -m^2, \A) = \mathcal{O}(1)$ as $\lambda\to \pm m$ when $\alpha>0$, and  $R_\m(\lambda^2 -m^2, \A) = \mathcal{O}(1)$ as $\lambda\to \pm m$ when $\alpha<0$, this proves equations \eqref{dirac-1} and \eqref{dirac-2}. Expansion \eqref{dirac-3} follows from \eqref{r-dirac-gen} and \eqref{r-zero-flux}.
\end{proof}
 
 \begin{rem} 
Equations \eqref{dirac-1} and \eqref{dirac-2} reflect, together with other things, the well known fact that when $\alpha>0$, then only $m$ (and not $-m$) is a zero-mode of $D_m(\A)$, and vice-versa for $\alpha<0$, see e.g.~\cite[Sec.~7]{th2}. 

\noindent Furthermore, if $0$ is an eigenvalue of $P_\m(\A)$ respectively $P_\pp(\A)$, then $m$ respectively $-m$ is an eigenvalue of $D_m(\A)$  with the same eigenspace. For example,  if $1 < \alpha$, then the leading term on the right hand side of \eqref{dirac-1} is given by 
$$
2mR_\m(\lambda^2-m^2,\A)  = -(\lambda-m)^{-1}\, \pom  + o\big( (\lambda-m)^{-1}\big)  \qquad \lambda \to m,
$$
see \eqref{res-pauli}, \eqref{res-pauli-int}. In general we notice that the resolvent expansion of the Dirac operator for $\lambda \to \pm m$ is qualitatively equivalent to the resolvent expansion of the Pauli operator for $\lambda\to 0$. The situation changes when $m=0$.
 \end{rem}

\subsection{Massless Dirac operator} For $m=0$ we have 

\begin{cor} \label{cor-dirac}
Let $m=0$ and assume that $\alpha \neq 0$. Then\\[1pt]
\begin{equation*} 
\begin{aligned} 
(D_0(\A) -\lambda)^{-1}  & =  
 \begin{pmatrix}
\Theta(\alpha)\, \lambda\,  R_\m(\lambda^2, \A)   & 0\\
0 &  \Theta(-\alpha)\, \lambda \,  R_\pp(\lambda^2, \A)
\end{pmatrix} 
+ \mathcal{O}(1) , \qquad \text{as} \quad \lambda\to 0 ,\\[5pt]
\end{aligned}
\end{equation*} 
holds in $\B(0,s;0,-s)$ for $s>3$, with $R_\m(\cdot\, , \A)$ satisfying \eqref{res-pauli} for $\alpha\not\in\Z$, and \eqref{res-pauli-int} for $\alpha\in\Z$. Here $\Theta(\cdot)$ denotes the Heaviside function. 
\end{cor}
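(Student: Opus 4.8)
The plan is to read the expansion off directly from the factorisation \eqref{pauli-dirac-res}, specialised to $m=0$. Multiplying $(D_0(\A)+\lambda)$ against $\mathrm{diag}\big(R_\m(\lambda^2,\A),R_\pp(\lambda^2,\A)\big)$ and inserting the explicit form of $D_0(\A)$ gives
\begin{equation*}
(D_0(\A)-\lambda)^{-1}=\begin{pmatrix} \lambda\,R_\m(\lambda^2,\A) & \D(\A)\,R_\pp(\lambda^2,\A)\\[3pt] \D(\A)^*\,R_\m(\lambda^2,\A) & \lambda\,R_\pp(\lambda^2,\A)\end{pmatrix}.
\end{equation*}
This is nothing but \eqref{r-dirac-gen} at $m=0$, so the first step is simply to invoke that identity; the whole task then reduces to tracking the four entries as $\lambda\to0$ in $\B(0,s;0,-s)$, $s>3$.

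For the off-diagonal entries I would reuse, essentially verbatim, the bound already obtained in the proof of Theorem \ref{thm-dirac-1}, namely $\D(\A)R_\pp(\lambda^2,\A)=\mathcal{O}(1)$ and $\D(\A)^*R_\m(\lambda^2,\A)=\mathcal{O}(1)$, valid there for every $\alpha\in\R$ as $\lambda\to\pm m$ and hence in particular at $\lambda\to 0$. The mechanism is that by \eqref{D-modes} the operator $\D(\A)^*$ maps $e^{i\chi+h}v$ to $-e^{i\chi+h}(i\pd_1-\pd_2)v$, which annihilates every holomorphic factor $(x_1+ix_2)^k$; thus $\D(\A)^*$ kills all of $\NN(P_\m(\A))$ as described in Lemma \ref{lem-ah-cash}, and symmetrically $\D(\A)$ kills $\NN(P_\pp(\A))$. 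Since the singular parts of $R_\m(\lambda^2,\A)$ and $R_\pp(\lambda^2,\A)$ have range in the span of these zero modes (see \eqref{res-pauli}, \eqref{res-pauli-int}, \eqref{res-pauli-alpham} and \eqref{res-pauli-int-m}), only the regular remainders survive the differentiation, and $\D(\A)$ maps $\HH^{1,-s}$ boundedly into $\Lp^{2,-s}$, so both off-diagonal products are $\mathcal{O}(1)$ in $\B(0,s;0,-s)$.

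It then remains to treat the two diagonal entries, and here the sign of $\alpha$ decides which one survives. When $\alpha>0$ one has $\NN(P_\pp(\A))=\{0\}$, so by Propositions \ref{prop-pauli-regular} and \ref{prop-pauli-regular-int} the resolvent $R_\pp(\lambda^2,\A)=\mathcal{O}(1)$; consequently $\lambda\,R_\pp(\lambda^2,\A)=\mathcal{O}(\lambda)$ is absorbed into the error term, while $\lambda\,R_\m(\lambda^2,\A)$ is retained and expanded through \eqref{res-pauli} (for $\alpha\notin\Z$) or \eqref{res-pauli-int} (for $\alpha\in\Z$). The roles of $R_\m$ and $R_\pp$ interchange for $\alpha<0$, with the surviving term now $\lambda\,R_\pp(\lambda^2,\A)$ expanded via \eqref{res-pauli-alpham} and \eqref{res-pauli-int-m}. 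Placing the surviving term in the correct diagonal slot produces precisely the Heaviside factors $\Theta(\alpha)$ and $\Theta(-\alpha)$ in the statement.

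I expect no genuine obstacle here: the corollary is a specialisation of the computation in Theorem \ref{thm-dirac-1} to the single threshold $\lambda\to0$ at $m=0$. The only point deserving explicit care is the branch convention concealed in the substitution $\lambda^2-m^2\mapsto\lambda^2$, i.e. one must verify that approaching $0$ in the limiting-absorption sense for $D_0(\A)$ is compatible with sending the spectral parameter of $P_\ppm(\A)$ to $0$ along the boundary value used in Sections \ref{sec-exp-non-int}--\ref{sec-alpha-neg}. This is routine but should be noted when invoking the Pauli expansions.
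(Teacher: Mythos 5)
Your proof is correct and follows essentially the same route as the paper, which simply sets $m=0$ in \eqref{r-dirac-gen}; your additional detail on why $\D(\A)^*$ annihilates the holomorphic zero modes (hence the $\mathcal{O}(1)$ off-diagonal entries) and why the diagonal entry with the ``wrong'' sign of $\alpha$ is absorbed into the remainder merely spells out what the paper leaves implicit from the proof of Theorem \ref{thm-dirac-1}.
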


\begin{proof} 
The claim follows  from \eqref{r-dirac-gen} upon setting $m=0$. 
\end{proof}

As in the case of Pauli operator we give a separate expansion for values of $\alpha$ in the interval $(-1, 1)$.

\begin{thm} \label{thm-dirac-2} 
If $|\alpha| <1$, then as $\lambda\to 0$,  
\begin{align}  \label{dirac-5} 
(D_0(\A) -\lambda)^{-1}  & =  
\begin{pmatrix}
\frac{- \zeta(\alpha)\, \lambda^{1-2\alpha}}{1 +\eta_\m \zeta(\alpha)\, \lambda^{2-2\alpha}}\  \Phim_1 \,  \LL \,\cdot\, , \Phim_1 \RR  & \D(\A) (1+G_0 \vp)^{-1} G_0 \\[6pt]
\D(\A)^* K_0^\m& 0      
\end{pmatrix} + \mathcal{O}(|\lambda|^\mu) ,  \quad \text{if} \ \ \alpha >0, 
\end{align}
and 
\begin{align}  \label{dirac-6} 
(D_0(\A) -\lambda)^{-1}  & =  
\begin{pmatrix}
0  & \D(\A) K_0^\pp \\[6pt]
\D(\A)^*(1+G_0 \vm)^{-1} G_0   & \frac{ -\zeta(\alpha)\, \lambda^{1+2\alpha}}{1 +\eta_\pp \zeta(\alpha)\, \lambda^{2+2\alpha}}\  \Phip_1 \,  \LL \,\cdot\, , \Phip_1 \RR
\end{pmatrix} + \mathcal{O}(|\lambda|^\mu) ,  \quad \text{if} \ \ \alpha <0, 
\end{align}

hold in $\B(0,s;0,-s)$ for $s>3$. The operators $K_0^\pp$ and $K_0^\m$ are given by \eqref{k-hat} and \eqref{k-plus}. 

If $\alpha =0$, then $(D_0(\A) -\lambda)^{-1} = \mathcal{O}(1)$.  
\end{thm}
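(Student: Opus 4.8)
The plan is to obtain the resolvent of $D_0(\A)$ directly from the factorization \eqref{pauli-dirac-res} specialized to $m=0$. Writing $D_0(\A)+\lambda=\begin{pmatrix}\lambda & \D(\A)\\ \D(\A)^* & \lambda\end{pmatrix}$ and multiplying by the diagonal operator $\mathrm{diag}\big(R_\m(\lambda^2,\A),R_\pp(\lambda^2,\A)\big)$ gives
\begin{equation*}
(D_0(\A)-\lambda)^{-1} = \begin{pmatrix} \lambda\, R_\m(\lambda^2,\A) & \D(\A)\, R_\pp(\lambda^2,\A) \\ \D(\A)^*\, R_\m(\lambda^2,\A) & \lambda\, R_\pp(\lambda^2,\A) \end{pmatrix},
\end{equation*}
the same form that underlies \eqref{r-dirac-gen}. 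Each entry is then evaluated by inserting the relevant threshold expansion of the Pauli resolvent with $\lambda$ replaced by $\lambda^2$, and by collecting powers of $\lambda$; the boundedness of $\D(\A)$ and $\D(\A)^*$ from $\HH^{1,-s}$ into $\Lp^{2,-s}$ (recall $\A\in\Lp^\infty$) guarantees that every entry lies in $\B(0,s;0,-s)$.

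For $0<\alpha<1$ I would use the singular expansion \eqref{res-pauli-2} of $R_\m$ from Theorem \ref{thm-pauli-res2} together with the regular expansion \eqref{res-pauli-reg} of $R_\pp$ from Proposition \ref{prop-pauli-regular}. The entry $\lambda R_\m(\lambda^2,\A)$ produces the stated $\lambda^{1-2\alpha}$ term, while $\lambda K_0^\m=\mathcal O(\lambda)$ and the rescaled remainder $\mathcal O(|\lambda|^{2\mu})$ are both absorbed into $\mathcal O(|\lambda|^\mu)$ since $\mu\le 1/2$; the entry $\lambda R_\pp(\lambda^2,\A)=\mathcal O(\lambda)=\mathcal O(|\lambda|^\mu)$ vanishes to leading order; and $\D(\A)R_\pp(\lambda^2,\A)=\D(\A)(1+G_0\vp)^{-1}G_0+\mathcal O(|\lambda|^{2\mu})$. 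The decisive point is the $(2,1)$ entry: the singular part of $R_\m(\lambda^2,\A)$ is a rank‑one operator whose range is spanned by $\Phim_1=e^{h+i\chi}$, and \eqref{D-modes} gives $\D(\A)^*\Phim_1=-e^{i\chi+h}(i\pd_1-\pd_2)1=0$, so $\D(\A)^*$ kills the entire singular term and leaves $\D(\A)^*K_0^\m+\mathcal O(|\lambda|^{2\mu})$. This produces \eqref{dirac-5}.

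The case $-1<\alpha<0$ is the mirror image, with the roles of $R_\m$ and $R_\pp$ interchanged: $R_\pp$ is now singular by Corollary \ref{cor-pauli-res2} while $R_\m$ is regular, and the cancellation in the $(1,2)$ entry comes instead from $\D(\A)\Phip_1=-e^{i\chi-h}(i\pd_1+\pd_2)1=0$, again via \eqref{D-modes}, leaving $\D(\A)K_0^\pp$; the surviving $\lambda^{1+2\alpha}$ singularity now sits in the $(2,2)$ entry $\lambda R_\pp(\lambda^2,\A)$, giving \eqref{dirac-6}. For $\alpha=0$ I would invoke the symmetric expansions \eqref{r-zero-flux} of Corollary \ref{cor-pauli-res-0}: both diagonal entries carry the prefactor $\lambda$ and are $\mathcal O(\lambda\log\lambda)=\mathcal O(1)$, whereas the two off‑diagonal logarithmic singularities are erased since $\D(\A)^*e^{i\chi+h}=0$ and $\D(\A)e^{i\chi-h}=0$; hence $(D_0(\A)-\lambda)^{-1}=\mathcal O(1)$.

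Since the entire analytic burden is carried by the Pauli expansions of Sections \ref{sec-exp-non-int}, \ref{sec-exp-int} and \ref{sec-alpha-neg}, the only genuinely new ingredient is the algebraic fact, delivered by \eqref{D-modes}, that $\D(\A)^*$ (respectively $\D(\A)$) annihilates every zero mode of $\Pm(\A)$ (respectively $\Pp(\A)$); equivalently these modes lie in $\ker\D(\A)^*$ (respectively $\ker\D(\A)$), consistent with $\D(\A)^*\D(\A)=\Pp(\A)$ and $\D(\A)\D(\A)^*=\Pm(\A)$. There is therefore no real obstacle here, only careful bookkeeping: I must track the substitution $\lambda\mapsto\lambda^2$ through the fractional powers and denominators, confirm that all collected remainders are uniformly $\mathcal O(|\lambda|^\mu)$, and verify the mapping properties so that each matrix entry is a well‑defined element of $\B(0,s;0,-s)$ for $s>3$.
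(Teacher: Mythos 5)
Your proposal is correct and follows essentially the same route as the paper: the factorization \eqref{pauli-dirac-res} with $m=0$, the substitution $\lambda\mapsto\lambda^2$ in the Pauli expansions \eqref{res-pauli-2} and \eqref{res-pauli-reg}, and the key cancellation $\D(\A)^*\Phim_1=0$ (resp. $\D(\A)\Phip_1=0$) from \eqref{D-modes} that removes the singular rank-one term from the off-diagonal entry. The bookkeeping of remainders you describe is exactly what the paper does.
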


\vskip0.2cm

\begin{proof} 
Assume that $0 <\alpha <1$. From \eqref{res-pauli-2} and \eqref{D-modes} we obtain 
$$
\D(\A)^* R_\m(\lambda^2,\A) = \D(\A)^* K_0^\m + \mathcal{O}(|\lambda|^\mu),  \quad \text{and} \quad   \lambda R_\m(\lambda^2,\A)  = \frac{- \zeta(\alpha)\, \lambda^{1-2\alpha}}{1 +\eta_\m\, \zeta(\alpha)\, \lambda^{2-2\alpha}}\  \Phim_1 \,  \LL \,\cdot\, , \Phim_1 \RR  + \mathcal{O}(\lambda)
$$
in $ \B(0,s;0,-s)$, as $\lambda\to 0$. On the other hand, by \eqref{res-pauli-reg}
$$
\D(\A) R_\pp(\lambda^2,\A) =  \D(\A) (1+G_0 \vp)^{-1} G_0 + \mathcal{O}(|\lambda|^\mu),   \quad \text{and} \quad  \lambda R_\pp(\lambda^2,\A) =  \mathcal{O}(\lambda).
$$
Equation \eqref{dirac-5} thus follows from \eqref{pauli-dirac-res}. The proof of \eqref{dirac-6} and of the case $\alpha=0$ is completely analogous.
\end{proof}

Theorem \ref{thm-dirac-2} shows that  the resolvent expansion of the massless Dirac operator is regular if $|\alpha| \leq 1/2,$ although the resolvent expansion of the Pauli operator is singular.  Similar effect occurs in the absence of a magnetic field in the case of the free massless Dirac operator  in dimension two, see e.g.~\cite{egg}.


\section{\bf Time decay of the wave-functions}
\label{sec-tdecay} 
In this section apply the resolvent expansions for Pauli and Dirac operators obtained above to derive asymptotic equations for the evolution operators 
$e^{-it P_\pm(\A)}$ and $e^{-it D_m(\A)}$ for $t\to\infty$. These asymptotic expansions  show how fast the solutions to the time dependent Pauli and Dirac equation decay  {\it locally} in time.

\subsection{The Pauli operator} 
\label{ssec-pauli-time} Before stating the main results of this section we recall some preliminaries which will be often used below. First, 
if $F:\R \to \B(0,s;0-s )$ is such that $F(\lambda)=0$ in a vicinity of zero and $\partial^j F \in L^1(\R; \B(0,s;0-s ))$ for a non-negative integer $j$, then 
\begin{equation} \label{eq-jk}
\int_\R e^{-i t \lambda}\,F(\lambda)\, d\lambda = o(\, t^{-j}) \qquad t\to \infty
\end{equation}
in $ \B(0,s;0,-s )$. Equation \eqref{eq-jk} is a consequence of the Riemann-Lebesgue Lemma, we refer to \cite[Lem.~10.1]{JK} for a proof. 
We will also need a couple of identities from the theory of Fourier transforms, namely
\begin{equation} \label{fourier-1}
 \int_\R e^{-i t \lambda}\, (\lambda+i0)^\nu\, (\log( \lambda+i0))^k d\lambda =  -2 \sum_{j=0}^k {k \choose j} \Big[ \partial_s^{k-j} \big (\sin(\pi s) \, e^{\frac{i\pi s}{2}} \Gamma(s+1) \big)\Big |_{s=\nu} \Big] \, t^{-\nu-1}\, (\log t)^j , 
\end{equation}
which holds for $t>0, \, \nu\in\R$ and any non-negative integer $k$, and
\begin{equation} \label{fourier-2}
 \int_\R e^{-i t \lambda}\, (\lambda+i0)^{-1} \, (\log(\lambda+i0))^{-k}\, d\lambda = 2\pi i\, 
 \sum_{m=k}^N  A_{km}\,  (\log t)^{-m} + \mathcal{O}\big(( \log t)^{-N-1}\big)
\end{equation}
which hols for any $t>0, \ N>0$, and $ k\in\Z$, see \cite[Lems.~6.6, 6.7]{mu}. Here $A_{km}$ are numerical coefficients satisfying $A_{kk}=(-1)^{k+1}$, see  \cite[Thm.~4.4]{mu}.

\medskip

\begin{thm} \label{thm-pauli-time-1}
Let $\alpha\not\in\Z$. Assume that $B\in C^\infty(\R^2)$ satisfies Assumption \ref{ass-B} with some $\rho > 7$. Suppose moreover that  for any multi-index $\beta\in\N^2$ with $|\beta|\geq 1$, 
\begin{equation*}
| \pd^\beta B(x)| \ \lesssim\ \x ^{-1-|\beta|}\, .
\end{equation*} 
 The following asymptotic equations hold in $\B(0,s;0,-s ), \, s>3,$ as $t\to\infty$: 

 If $\underline{\alpha >0}$, then $ e^{-i t P_\pp(\A)}= \mathcal{O}(t^{-1-\mu})$, and 
\begin{align}
e^{-i t P_\m(\A)}   & = \pom  -\frac{\nu_\m}{\pi\omega_\m} \   \psim  \LL  \ \cdot\ , \psim \RR \,  \sum_{j=1}^N\,  (-\omega_\m)^j \sin(\pi\alp j)\, e^{\frac{i\pi j\alp}{2}} \, \Gamma(\alp j)\  t^{-j\alp }   \nonumber \\
& \quad  +  \frac{1}{\pi\cc_\m} \,    \varphi^\m  \LL  \ \cdot\ , \varphi^\m \RR  \sum_{j=1}^N\, ( i\cc_\m \zeta(\alp) )^j \sin(\pi\alp j)\, e^{-\frac{i\pi j\alp}{2}} \, \Gamma( j(1-\alp)) \  t^{j(\alp-1)} \label{pauli-time-1} \\
& \quad +\mathcal{O} \big(t^{-\mu (N+1)}\big)  + o(t^{-1}) \nonumber 
\end{align} 
for any $N\geq 1$.

If $\underline{\alpha <0}$, then $ e^{-i t P_\m(\A)}= \mathcal{O}(t^{-1-\mu})$, and 
\begin{align}
 e^{-i t P_\pp(\A)}  & = \pop + \frac{ \nu_n^\pp}{\pi\mu_n^\pp} \   \psip  \LL   \ \cdot\ , \psip \RR \,  \sum_{j=1}^N\,  (-\mu_n^\pp)^j \sin(\pi\alp j)\, e^{-\frac{i\pi j\alp}{2}} \Gamma(-\alp j)\  t^{j\alp }  \nonumber  \\
& \quad  - \frac{1}{\pi\cc_\pp} \,    \varphi^\pp  \LL  \ \cdot\ , \varphi^\pp \RR   \sum_{j=1}^N\, ( i\cc_\pp \zeta(\alp) )^j \sin(\pi\alp j)\, e^{\frac{i\pi j \alp}{2}} \, \Gamma( j(1+\alp)) \  t^{-j(\alp+1)}\label{pauli-time-2} \\
& \quad  +\mathcal{O} \big(t^{-\mu (N+1)}\big)  
+ o(t^{-1})  \nonumber  
\end{align} 
for any $N\geq 1$.  Recall that $\alp$ and $n$ denote the fractional and integer parts of $\alpha$, and that $\mu = \min\{|\alp|, 1-|\alp|\}$.
\end{thm}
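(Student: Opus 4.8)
The plan is to derive the large-time asymptotics of $e^{-itP_\ppm(\A)}$ from the threshold resolvent expansions of Theorems~\ref{thm-pauli-res} and~\ref{thm-pauli-res-m} via the Stone formula. The starting point is the representation
\begin{equation*}
e^{-itP_\ppm(\A)}\, = \,\frac{1}{2\pi i}\int_0^\infty e^{-it\lambda}\,\big(R_\ppm(\lambda,\A)-R_\ppm(\lambda-i0,\A)\big)\, d\lambda ,
\end{equation*}
valid in $\B(0,s;0,-s)$ for $s>3$ since $P_\ppm(\A)$ has purely absolutely continuous spectrum on $(0,\infty)$ with no positive eigenvalues. First I would introduce a smooth cutoff $\Xi(\lambda)$ that is $1$ near $\lambda=0$ and supported in a small neighborhood of the threshold, splitting the integral into a low-energy part (where the singular expansion governs the decay) and a high-energy part. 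On the support of $1-\Xi$ the integrand is smooth with integrable derivatives, so by the Riemann--Lebesgue estimate \eqref{eq-jk} that contribution is $o(t^{-j})$ for every $j$, hence negligible compared to all the stated terms.

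\textbf{Low-energy contribution.} For $\alpha>0$ I would insert the expansion \eqref{res-pauli} for $R_\m(\lambda,\A)$. The key point is that each singular term has the form $\Pi/(1+c\,\lambda^{\beta})\cdot\lambda^{\gamma}$ with $\Pi$ a fixed finite-rank operator; expanding the denominators as geometric series in $\lambda^\alp$ respectively $\lambda^{1-\alp}$ produces, after taking boundary values $\lambda\pm i0$, integrands that are multiples of $(\lambda+i0)^{\gamma+\beta j}$. Applying the Fourier transform identity \eqref{fourier-1} term by term (with $k=0$) to each such power yields exactly the factors $\sin(\pi\alp j)\,e^{\pm i\pi j\alp/2}\,\Gamma(\cdot)\,t^{-\cdot}$ appearing in \eqref{pauli-time-1}. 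Concretely, the $-\lambda^{-1}\pom$ term contributes the stationary projection $\pom$ (its jump across the cut produces $\tfrac{1}{2\pi i}\int e^{-it\lambda}(-2\pi i)\delta\text{-type}$ behavior giving the constant $\pom$); the second term of \eqref{res-pauli} expands as $\nu_\m\lambda^{\alp-1}\sum_j(-\omega_\m\lambda^\alp)^j$ and feeds \eqref{fourier-1} at exponents $\nu=\alp j-1$; the third term expands as $-\zeta(\alp)\lambda^{-\alp}\sum_j(-\cc_\m\zeta(\alp)\lambda^{1-\alp})^j$ and feeds it at exponents $\nu=-\alp+(1-\alp)j$. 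Matching constants and the $e^{\mp i\pi j\alp/2}$ phases from the $\lambda\pm i0$ boundary values is where the explicit coefficients in \eqref{pauli-time-1}, \eqref{pauli-time-2} crystallize. The case $\alpha<0$ is handled identically using \eqref{res-pauli-alpham}, with $P_\m$ and $P_\pp$ interchanged, giving \eqref{pauli-time-2}; and the claim $e^{-itP_\pp(\A)}=\mathcal{O}(t^{-1-\mu})$ for $\alpha>0$ follows from Proposition~\ref{prop-pauli-regular}, whose regular expansion $R_\pp(\lambda,\A)=\mathcal{O}(1)+\mathcal{O}(\lambda^\mu)$ yields a leading $t^{-1-\mu}$ jump via \eqref{fourier-1} at $\nu=\mu$.

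\textbf{Remainder control.} The genuinely delicate step is controlling the $\mathcal{O}(1)$ error in the resolvent expansion uniformly enough to integrate it against $e^{-it\lambda}$ and obtain the stated $o(t^{-1})$ bound, together with truncating the geometric series at order $N$ to produce the $\mathcal{O}(t^{-\mu(N+1)})$ remainder. For the series truncation this is straightforward: the tail $\sum_{j>N}$ contributes powers $\lambda^{\gamma+\beta j}$ with $\gamma+\beta(N+1)$ large, and \eqref{fourier-1} converts each into a correspondingly small power of $t$; the smoothness hypotheses on $\pd^\beta B$ guarantee enough regularity of the resolvent kernel in $\lambda$ to differentiate under the integral sign. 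The $o(t^{-1})$ piece requires more care: I would need the $\mathcal{O}(1)$ term in \eqref{res-pauli} to be not merely bounded but to have an integrable first derivative in $\lambda$ on the cutoff support after subtracting its value at $\lambda=0$, so that \eqref{eq-jk} with $j=1$ applies. This is where the additional decay assumptions on the derivatives of $B$ enter, ensuring that $R_0(\lambda)$ and hence $R_\ppm(\lambda,\A)$ are $C^1$ in $\lambda$ up to the threshold with the required integrability. \textbf{The main obstacle} is therefore establishing this $C^1$ regularity of the remainder term uniformly in $\lambda$ near zero --- one must differentiate the Grushin/Feshbach construction of Section~\ref{ssec-grushin} and verify that the error terms in \eqref{E-matrix}, \eqref{E-inv-1} retain integrable $\lambda$-derivatives, which rests crucially on the smoothness of the magnetic field and on the Hilbert--Schmidt bounds noted in Remark~\ref{rem-hs}.
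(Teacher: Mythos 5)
Your overall strategy coincides with the paper's: represent the propagator as a Fourier integral of the boundary-value resolvent (the paper uses $e^{-itP_\ppm(\A)}=\frac{1}{2\pi i}\int_\R e^{-it\lambda}R_\ppm(\lambda,\A)\,d\lambda$, equation \eqref{propagator}, rather than the Stone difference over $(0,\infty)$, but the two agree for $t>0$), split off a neighborhood of the threshold with a smooth cutoff, expand the denominators in \eqref{res-pauli} into geometric series, and convert each resulting power of $\lambda$ into a power of $t$ via \eqref{fourier-1}. Your low-energy computation, including the residue giving $\pom$ and the exponents $\alp j-1$ and $-\alp+(1-\alp)j$, is exactly the paper's.

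The one genuine misstep is your treatment of the high-energy regime and, consequently, your account of where the hypotheses $B\in C^\infty(\R^2)$ and $|\pd^\beta B|\lesssim\x^{-1-|\beta|}$ are actually used. On the support of $1-\Xi$ the integrand is indeed smooth in $\lambda$, but to apply \eqref{eq-jk} you need the derivatives $\pd_\lambda^{j}R_\ppm(\lambda,\A)$ to be \emph{integrable on $(\delta,\infty)$} in $\B(0,s;0,-s)$; this is a nontrivial high-energy limiting absorption estimate, not a consequence of smoothness away from the threshold. The paper obtains $\|\pd_\lambda^{j}R_\ppm(\lambda,\A)\|_{\B(0,s;0,-s)}=\mathcal{O}(\lambda^{-(j+1)/2})$ as $\lambda\to+\infty$ from Robert's theorem, equation \eqref{lap-2}, and this is precisely where the $C^\infty$ regularity and the symbol-type decay of the derivatives of $B$ are consumed. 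Moreover that estimate requires $s>j+\tfrac12$, so with $s>3$ one takes $j=2$ and gets only the $o(t^{-2})$ of equation \eqref{hep}, not $o(t^{-j})$ for every $j$ as you assert. Correspondingly, your placement of the derivative hypotheses in the low-energy analysis (to secure $C^{1}$ regularity of the $\mathcal{O}(1)$ remainder) does not reflect the paper, whose only low-energy input is Theorem \ref{thm-pauli-res}. Finally, the paper does not derive the bound $e^{-itP_\pp(\A)}=\mathcal{O}(t^{-1-\mu})$ from Proposition \ref{prop-pauli-regular} as you propose --- the $o(\lambda^{\mu})$ remainder there carries no derivative information and cannot be fed into \eqref{fourier-1} or \eqref{eq-jk} --- but simply invokes \cite[Thm.~2.5]{ko}, which applies because $P_\pp(\A)$ has neither positive eigenvalues nor zero modes. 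If you wish to keep your route for that component, you would have to upgrade Proposition \ref{prop-pauli-regular} to a differentiable expansion first.
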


\begin{rem} 
The leading terms of the second and third contribution on the right hand side of \eqref{pauli-time-1} are proportional to $t^{-\alp}$ and $t^{-1+\alp}$ (and accordingly for $\alpha<0$). In particular, if $|\alpha|<1$, then $e^{-i t P_\ppm(\A)} =  \mathcal{O}(t^{-1+|\alpha|})$. The same decay rate was observed  for the heat kernel generated by $P_\ppm(\A)$ in the case of radial and compactly supported magnetic field, see \cite[Sec.~3.2]{kov1}.

 Note also that the remainder term in \eqref{pauli-time-1} and \eqref{pauli-time-2} becomes  $o(t^{-1})$ as soon as $N$ gets large enough, depending on $\alpha$. For example, if 
$\alp = 1/2$, then combining \eqref{zeta}, \eqref{constants-m} and equation \eqref{pauli-time-1} with $N=1$ gives 
$$
e^{-i t P_\m(\A)}   = \pom   -2 (i \pi\, t)^{-\frac 12}\ |d_n^\m|^2 \,  \psim  \LL  \ \cdot\ , \psim \RR  -\frac 12 (i \pi )^{-\frac 32}\,  t^{-\frac 12} \,    \varphi^\m  \LL  \ \cdot\ , \varphi^\m \RR  + o(t^{-1}) .
$$
\end{rem}

\begin{proof}[\bf Proof of Theorem 	\ref{thm-pauli-time-1}]  Let $s>3$. We will use the identity
\begin{equation}  \label{propagator}
e^{-i t P_\ppm(\A)}  = \frac{1}{2\pi i}\!\! \int_\R e^{-it\lambda}\, R_\ppm(\lambda,\A)\, d\lambda
\end{equation}
 in $\B(0,s;0,-s )$. 
Hence we have to study the behaviour of $R_\ppm(\lambda,\A)$ not only for $\lambda\to 0$, but also for $|\lambda|\to \infty$. To do so we introduce a function $\xi\in C_0^\infty(\R)$ such that $0\leq \xi \leq 1$ and $\xi =1$ in a vicinity of $0$. 
Since $B\in C^\infty(\R^2)$, we can apply \cite[Thm.~5.1]{ro}  to deduce that 
\begin{equation} \label{lap-2}
\|\partial_\lambda^{j} R_\ppm(\lambda,\A) \|_{\B(0,s;0,-s)} \  =  \mathcal{O}\big(\lambda^{-\frac{j+1}{2}}\big)\, ,  \qquad  \lambda\to+\infty 
\end{equation}
for any $j\geq 0$ and any $s>j +\frac12$. On the other hand, the operators $P_\ppm(\A)$ have no negative spectrum. Hence 
$$
\|\partial_\lambda^{j} R_\ppm(\lambda,\A) \|_{\B(0,s;0,-s)} \ = \mathcal{O}\big(\lambda^{-(j+1)}\big)\, ,  \qquad  \lambda\to-\infty \, .
$$ 
 Applying \eqref{eq-jk} with $F(\lambda) = (1-\xi(\lambda))\,  R_\ppm(\lambda,\A)$ and $j=2$ then implies 
\begin{equation} \label{hep}
 \int_\R e^{-it\lambda}\, (1-\xi(\lambda))\, R_\ppm(\lambda,\A)\, d\lambda =  o(\, t^{-2}) \qquad t\to \infty
\end{equation}
in $\B(0,s;0,-s )$. Assume first that $\alpha>0$. Expanding the second and third term on the right hand side of \eqref{res-pauli} into the geometric series yields
\begin{align*}
\frac{ \lambda^{\alp-1}}{1+\omega_\m\, \lambda^\alp}  & =  -\frac{1}{\omega_\m} \sum_{j=1}^N \, (-\omega_\m)^j\, \lambda^{j \alp-1} + \mathcal{O} \big(\lambda^{\alp(N+1)-1}\big) \, , \\
\frac{ \lambda^{-\alp}}{1+\cc_\m \zeta(\alp) \, \lambda^{1-\alp}} & = -\frac{1}{\cc_\m \zeta(\alp)}  \sum_{j=1}^N\,  (-\cc_\m \zeta(\alp))^j\, \lambda^{j(1- \alp)-1} + \mathcal{O} \big(\lambda^{(1 -\alp)(N+1)}\big)\, .
\end{align*}
From \eqref{res-pauli}, \eqref{eq-jk} and \eqref{fourier-1} we thus deduce that  
$$
\frac{1}{2\pi i} \int_\R e^{-it\lambda}\, \xi(\lambda)\, R_\m(\lambda,\A)\, d\lambda =\  \text{right hand side of \ \eqref{pauli-time-1}}\, .
$$
Since $P_\pp(\A)$ has no positive eigenvalues and no zero modes, \cite[Thm.~2.5]{ko} implies $ e^{-i t P_\pp(\A)}= \mathcal{O}(t^{-1-\mu})$. The proof for $\alpha<0$ follows the same lines.
\end{proof}  


\vskip0.2cm

\begin{thm} \label{thm-pauli-time-2}
Let $\alpha\in\Z$. Let $B$ satisfy assumptions of Theorem \ref{thm-pauli-time-1}. The following asymptotic equations hold in $\B(0,s;0,-s ), \, s>3,$ as $t\to\infty$: 

If $\underline{\alpha >0}$, then $ e^{-i t P_\pp(\A)}= \mathcal{O}(t^{-1} (\log t)^{-2})$, and 
\begin{align}
  e^{-i t P_\m(\A)}  & = \pom + \frac{\Pi_{22}^\m \,}{\pi \log t}\, \big(1+\mathcal{O}((\log t)^{-1})\big) - i \, \K_\m\ t^{-1}  + o(t^{-1}) \label{pauli-time-3} .
\end{align}

 If $\underline{\alpha <0}$, then $ e^{-i t P_\m(\A)}=\mathcal{O}(t^{-1} (\log t)^{-2})$, and 
\begin{align}
  e^{-i t P_\pp(\A)}  & = \pop +  \frac{\Pi_{22}^\pp \,}{\pi \log t}\,   \big(1+\mathcal{O}((\log t)^{-1})\big) -i  \, \K_\pp\ t^{-1}
+ o(t^{-1}) .\label{pauli-time-4}
\end{align} 

 If $\underline{\alpha =0}$, then
\begin{align}
 e^{-i t P_\ppm(\A)}  & = \frac{1}{4\pi i\, t}  \, \LL \, \cdot\, , e^{\mp h+i\chi} \RR\, e^{\mp h+i\chi} + o(t^{-1}) .
\label{pauli-time-5}
\end{align} 
\end{thm}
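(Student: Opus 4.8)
The plan is to follow the proof of Theorem~\ref{thm-pauli-time-1} in its overall structure, substituting the logarithmic resolvent expansions of Theorems~\ref{thm-pauli-res-int} and \ref{thm-pauli-int-m} and of Corollary~\ref{cor-pauli-res-0} for the fractional ones. Starting from the representation \eqref{propagator} in $\B(0,s;0,-s)$, $s>3$, I would fix $\xi\in C_0^\infty(\R)$ with $0\le\xi\le1$ and $\xi\equiv1$ near zero, and split $R_\ppm(\lambda,\A)=\xi\, R_\ppm(\lambda,\A)+(1-\xi)R_\ppm(\lambda,\A)$. The high-energy piece is handled exactly as before: the smoothness hypotheses on $B$ together with \cite[Thm.~5.1]{ro} give \eqref{lap-2}, the absence of negative spectrum of $P_\ppm(\A)$ gives the matching bound as $\lambda\to-\infty$, and \eqref{eq-jk} applied to $F=(1-\xi)R_\ppm(\cdot,\A)$ with $j=2$ produces an $o(t^{-2})$ contribution. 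This step is insensitive to whether $\alpha$ is an integer.

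For the low-energy piece with $\alpha>0$ I would insert \eqref{res-pauli-int} and transform its three singular terms separately. As in Theorem~\ref{thm-pauli-time-1}, the pole $-\lambda^{-1}\pom$ is the contribution of the zero eigenvalue and reproduces the stationary term $\pom$. For the resonance term I would expand $\tfrac{1}{\log\lambda-\w_\m}=\sum_{k\ge1}\w_\m^{\,k-1}(\log\lambda)^{-k}$ and apply \eqref{fourier-2}; the $k=1$ term, for which $A_{11}=1$, gives $\tfrac{\Pi_{22}^\m}{\pi\log t}$, while the terms $k\ge2$ supply the factor $\bigl(1+\mathcal{O}((\log t)^{-1})\bigr)$. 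For the term $-\K_\m\log\lambda$ I would use \eqref{fourier-1} with $\nu=0$ and $k=1$: since $\sin(\pi s)$ vanishes at $s=0$, only one summand survives, so that
\[
\int_\R e^{-it\lambda}\,\log(\lambda+i0)\,d\lambda=-2\pi\,t^{-1},
\]
whence the contribution $\tfrac{1}{2\pi i}(-\K_\m)(-2\pi t^{-1})=-i\,\K_\m\, t^{-1}$. Finally the $\mathcal{O}(1)$ remainder contributes $o(t^{-1})$ through \eqref{eq-jk}, using that it is $C^1$ in $\lambda$ with integrable derivative. Collecting these terms gives \eqref{pauli-time-3}; for the spin-up component, which has no zero modes when $\alpha>0$, Proposition~\ref{prop-pauli-regular-int} together with \cite[Thm.~2.5]{ko} yield the regular rate $\mathcal{O}(t^{-1}(\log t)^{-2})$. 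The case $\alpha<0$ is obtained by interchanging the roles of $P_\m(\A)$ and $P_\pp(\A)$ and using \eqref{res-pauli-int-m} and \eqref{res-pauli-reg-int} in place of \eqref{res-pauli-int} and \eqref{res-pauli-reg-2}. For $\alpha=0$ I would insert \eqref{r-zero-flux}: by \eqref{fourier-1} the leading term $-\tfrac{\log\lambda}{4\pi}\LL\cdot,e^{\mp h+i\chi}\RR e^{\mp h+i\chi}$ transforms into $\tfrac{1}{4\pi i\,t}\LL\cdot,e^{\mp h+i\chi}\RR e^{\mp h+i\chi}$, while the $\mathcal{O}(1)$ remainder contributes $o(t^{-1})$, giving \eqref{pauli-time-5}.

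The main obstacle is the bookkeeping of the logarithmic low-energy transforms. The identities \eqref{fourier-1}--\eqref{fourier-2} are stated for the boundary values $(\lambda+i0)^\nu$ and $\log(\lambda+i0)$, so one must verify that the $+i0$ prescription implicit in $R_\ppm(\lambda,\A)=\lim_{\eps\to0+}(P_\ppm(\A)-\lambda-i\eps)^{-1}$ is correctly aligned with the branch of the logarithm appearing in \eqref{res-pauli-int}; here the precise value $\w_\m=i\pi+m_\m$ from \eqref{w-alpha-2} is exactly what makes $\log\lambda-\w_\m$ match $\log(\lambda+i0)-i\pi-m_\m$ and keeps the expansion in powers of $(\log\lambda)^{-1}$ consistent. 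A second, more routine, point is to establish the requisite $C^1$-regularity in $\lambda$ (with integrable derivative) of the $\mathcal{O}(1)$ remainders in the resolvent expansions, so that \eqref{eq-jk} upgrades their contribution from $\mathcal{O}(1)$ to the claimed $o(t^{-1})$.
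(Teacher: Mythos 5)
Your proposal follows essentially the same route as the paper's proof: a low/high-energy split with the cut-off $\xi$, the high-energy bound via \eqref{lap-2} and \eqref{eq-jk}, and a term-by-term application of \eqref{fourier-1}--\eqref{fourier-2} to the expansion \eqref{res-pauli-int} (resp.\ \eqref{res-pauli-int-m}, \eqref{r-zero-flux}), and your explicit evaluations (the residue $-2\pi i$ for the pole, $-2\pi t^{-1}$ for $\log(\lambda+i0)$, and $A_{11}=1$ for the resonance term) all check out against the claimed coefficients. The only blemishes are cosmetic: $P_\pp(\A)$ is the spin-\emph{down} component, and the $C^1$-regularity of the $\mathcal{O}(1)$ remainder that you flag is likewise left implicit in the paper itself.
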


\begin{proof} 
Suppose that $\alpha>0$. We proceed as in the proof of Theorem \ref{thm-pauli-time-1} and use again the smooth cut-off function $\xi$. Then \eqref{hep} still holds true, and from \eqref{fourier-1}, \eqref{fourier-2} and \eqref{res-pauli-int} we deduce that 
$$
\frac{1}{2\pi i} \int_\R e^{-it\lambda}\, \xi(\lambda)\, R_\m(\lambda,\A)\, d\lambda =  \pom + \frac{\Pi_{22}^\pp}{\pi \log t}\,  \big(1+\mathcal{O}((\log t)^{-1})\big) - i \, \K_\m\ t^{-1}  + o(t^{-1}) \, .
$$
This proves \eqref{pauli-time-3}. Equations \eqref{pauli-time-4} and \eqref{pauli-time-5} follow from \eqref{res-pauli-int-m} and \eqref{r-zero-flux} in the same way.
\end{proof}

\begin{rem} 
The regularity assumption on $B$ in Theorems \ref{thm-pauli-time-1} and \ref{thm-pauli-time-2} could be considerably relaxed. The only place in the proof which requires $B\in C^\infty(\R^2)$ is equation \eqref{lap-2}, cf.~ \cite{ro}. The latter, however, could be derived also from the commutator method developed in \cite{jmp} under much weaker regularity condition on $B$. Since this would require additional analysis involving lengthly calculations of multiple commutators, we don't  dwell on it.
\end{rem}

\begin{rem}
Equation \eqref{pauli-time-5} shows that when $\alpha=0$, then the long time behavior of $e^{-i t P_\pp(\A)}$ and $e^{-i t P_\m(\A)}$ is qualitatively the same as that of the free evolution operator $ e^{i t \Delta}$. This is in contrast with the case of a spinless particle, where the wave-functions decay faster than $\mathcal{O}(t^{-1})$ even if the flux is zero, see \cite[Thm.~2.7]{ko}.
\end{rem}

\subsection{The Dirac operator}  
\label{ssec-dirac-time}
When dealing with the time dependent Dirac equation one inevitably faces the following problem;  neither the weighted resolvent $(D_m(\A) -\lambda)^{-1}$ nor its derivatives w.r.t.~$\lambda$ vanish as $|\lambda|\to\infty$, see equations \eqref{pauli-dirac-res} and \eqref{lap-2}. This naturally suggests to treat the contributions to the time evolution separately from small and high energies. Since the contribution from high energies is typically, to the leading order, independent of the magnetic field, we will concentrate on the energies close to zero. To this end we consider  the decay in time of the vector valued function 
\begin{equation} \label{dirac-evolution}
u(t) : = 
\begin{pmatrix}
u_1(t) \\
u_2(t)    
\end{pmatrix} 
= e^{-it D_m(\A)} \, \xi(D_m(\A)) 
\, v ,
\qquad 
v= \begin{pmatrix}
v_1 \\
v_2  
\end{pmatrix} ,
\end{equation}
where $v_1, v_2\in \Lp^{2,s}(\R^2)$ for a suitable $s$, and where $\xi$ is the cut-off function introduced above. 

If $m>0$ and $\alpha\not\in\Z$, then one easily verifies that  $u(t)$ decays qualitatively in the same way as the wave-functions of the Pauli operator. Indeed, denoting
$$
\po = \begin{pmatrix}
\Theta(\alpha) \, \pom\\
\Theta(-\alpha)\, \pop
\end{pmatrix} ,
$$
and mimicking the proof of Theorems \ref{thm-pauli-time-1}  we find, in view of \eqref{dirac-1} and \eqref{dirac-2}, that 
\begin{equation*} 
u(t) - \po\, v  = \mathcal{O} \big(t^{-|\alp|}\big)  +\mathcal{O} \big(t^{|\alp|-1}\big)  + o(t^{-1}) \qquad  (m>0, \, \alpha\not\in\Z)\, 
\end{equation*}
in $\Lp^{2,-s}(\R^2;\C^2)$. Analogous result holds for the integer flux.

We therefore turn our attention to the massles Dirac operator.

\begin{thm} \label{thm-dirac-time-1}
Suppose that $\alpha\not\in\Z$ and that $m=0$. Let $B$ satisfy assumptions of Theorem \ref{thm-pauli-time-1}, and let  $u(t)$ be given by \eqref{dirac-evolution}. The following asymptotic equations hold in $\Lp^{2, -s}(\R^2), \, s>3,$ as $t\to\infty$: 

 If $\underline{\alpha >0}$, then $ u_2(t) = o(t^{-1})$, and 
\begin{align}
 u_1(t) -\pom u_1  & = - \frac{\nu_\m}{\omega_\m \pi} \, \psim\,   \LL    v_1 , \psim \RR \,  \sum_{j=1}^{N}\,  (-\omega_\m)^j \sin(2\pi\alp j)\, e^{i\pi j\alp} \, \Gamma(2 j\alp )\ t^{-2j\alp }   \nonumber \\
& \quad - \frac{1}{\pi\cc_\m} \,   \varphi^\m  \LL   v_1 , \varphi^\m \RR   \sum_{j=1}^N\, ( \cc_\m \zeta(\alp) )^j \sin(2\pi\alp j)\, e^{i\pi j\alp} \, \Gamma(2j(1-\alp)) \  t^{2j(\alp-1)}  \nonumber \\
& \quad +\mathcal{O} \big(t^{-2\mu (N+1)}\big)  + o(t^{-1}) \label{dirac-time-1}
\end{align} 
for any $N\geq 1$.

If $\underline{\alpha <0}$,  then $ u_1(t) = o(t^{-1})$, and 
\begin{align*}
 u_2(t) -\pop u_2  & = - \frac{\nu_n^\pp}{\mu_n^\pp \pi} \  \psip\,   \LL    v_2 , \psip \RR \,  \sum_{j=1}^{N}\,  (-\mu_n^\pp)^j \sin(2\pi\alp j)\, e^{i\pi j\alp} \, \Gamma(2\alp j)\ t^{-2j\alp }   \nonumber \\
& \quad - \frac{1}{\pi\cc_\pp} \,   \varphi^\pp  \LL   v_2 , \varphi^\pp \RR   \sum_{j=1}^N\, ( \cc_\pp \zeta(\alp) )^j \sin(2\pi\alp j)\, e^{i\pi j\alp} \, \Gamma(2j(1-\alp)) \  t^{2j(\alp-1)}  \nonumber \\
& \quad +\mathcal{O} \big(t^{-2\mu (N+1)}\big)  + o(t^{-1}) 
\end{align*} 
for any $N\geq 1$. 
\end{thm}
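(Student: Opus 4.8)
The plan is to transport the Pauli resolvent expansion \eqref{res-pauli} to the Dirac evolution through the factorization \eqref{pauli-dirac-res}, following the scheme of the proof of Theorem \ref{thm-pauli-time-1}. Since the spectral cut-off $\xi(D_0(\A))$ already suppresses the high energies, I would start from the representation
\[
u(t) = \frac{1}{2\pi i}\int_\R e^{-it\lambda}\,\xi(\lambda)\,(D_0(\A)-\lambda)^{-1}\, v\, d\lambda ,
\]
the exact analogue of \eqref{propagator}, the boundary value being taken as there. Setting $m=0$ in \eqref{pauli-dirac-res} gives
\[
(D_0(\A)-\lambda)^{-1} = \begin{pmatrix} \lambda\, R_\m(\lambda^2,\A) & \D(\A)\, R_\pp(\lambda^2,\A) \\[2pt] \D(\A)^*\, R_\m(\lambda^2,\A) & \lambda\, R_\pp(\lambda^2,\A) \end{pmatrix},
\]
so that $u_1(t)$ is the Fourier integral of $\lambda R_\m(\lambda^2,\A)v_1 + \D(\A)R_\pp(\lambda^2,\A)v_2$ and $u_2(t)$ that of $\D(\A)^* R_\m(\lambda^2,\A)v_1 + \lambda R_\pp(\lambda^2,\A)v_2$. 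For $\alpha>0$ the only singular ingredient is $R_\m(\cdot,\A)$, expanded in \eqref{res-pauli}.

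The first step is to discard everything that decays faster than $t^{-1}$. By Lemma \ref{lem-ah-cash} we have $\NN(P_\pp(\A))=\{0\}$, so $R_\pp(\cdot,\A)$ is regular (Proposition \ref{prop-pauli-regular}) and both $\lambda R_\pp(\lambda^2,\A)$ and $\D(\A)R_\pp(\lambda^2,\A)$ admit smooth expansions in $\lambda$ near $0$; applying \eqref{eq-jk} to these contributions yields $o(t^{-1})$. The term $\D(\A)^* R_\m(\lambda^2,\A)$ is the crucial one: by \eqref{D-modes} the operator $\D(\A)^*$ annihilates every function of the form $e^{h+i\chi}$ times a holomorphic function of $x_1+ix_2$, hence all zero eigenfunctions and resonant states of $P_\m(\A)$. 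Consequently $\D(\A)^*$ kills the entire singular part of \eqref{res-pauli}, since $\pom$, $\psim$ and $\varphi^\m$ all lie in $\ker \D(\A)^*$, leaving a bounded remainder whose leading $\lambda$-dependence is $\mathcal{O}(|\lambda|^{2\mu})$; via \eqref{fourier-1} its Fourier integral is again $o(t^{-1})$. This proves $u_2(t)=o(t^{-1})$ for $\alpha>0$.

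It then remains to expand the Fourier integral of $\lambda R_\m(\lambda^2,\A)v_1$. Substituting \eqref{res-pauli} with argument $\lambda^2$ and multiplying by $\lambda$ turns the eigenprojection term $-\lambda^{-2}\pom$ into $-\lambda^{-1}\pom$, whose integral produces the stationary, non-decaying component $\pom v_1$; this is exactly the term subtracted on the left of \eqref{dirac-time-1}, reflecting that $0$ is an eigenvalue of $D_0(\A)$. The two singular fractions become $\nu_\m\lambda^{2\alp-1}(1+\omega_\m\lambda^{2\alp})^{-1}$ and $-\zeta(\alp)\lambda^{1-2\alp}(1+\cc_\m\zeta(\alp)\lambda^{2-2\alp})^{-1}$ times the rank-one operators $\psim\LL\,\cdot\,,\psim\RR$ and $\varphi^\m\LL\,\cdot\,,\varphi^\m\RR$. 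Expanding each into a geometric series exactly as in the proof of Theorem \ref{thm-pauli-time-1} and integrating term by term with \eqref{fourier-1} converts the powers $\lambda^{2j\alp-1}$ and $\lambda^{2j(1-\alp)-1}$ into $t^{-2j\alp}$ and $t^{2j(\alp-1)}$, producing the two sums in \eqref{dirac-time-1}; the tails of the series give $\mathcal{O}(t^{-2\mu (N+1)})$ and the $\mathcal{O}(\lambda)$ remainder of $\lambda R_\m(\lambda^2,\A)$ gives $o(t^{-1})$.

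The main obstacle, and the one point where the argument genuinely departs from the Pauli case, is the branch bookkeeping forced by the substitution of $\lambda^2$ for the Pauli spectral variable: the boundary value underlying $R_\m(\lambda^2,\A)$ is $\lambda^2+i0$ for $\lambda>0$ but $\lambda^2-i0$ for $\lambda<0$, since $(\lambda+i0)^2=\lambda^2\pm i0$ according to the sign of $\lambda$. I would therefore split $\int_\R=\int_0^\infty+\int_{-\infty}^0$, track the phase picked up on the negative half-line, and only then recombine the two halves into the form \eqref{fourier-1}; this is what fixes the precise phases and $\Gamma$-factors in \eqref{dirac-time-1} and is the step most prone to sign errors. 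Finally, the case $\alpha<0$ is completely symmetric: there $R_\pp(\cdot,\A)$ is singular while $\D(\A)$ annihilates the zero modes of $P_\pp(\A)$ via \eqref{D-modes}, so the roles of the two components and of $(\pom,\psim,\varphi^\m,\nu_\m,\omega_\m,\cc_\m)$ versus $(\pop,\psip,\varphi^\pp,\nu_n^\pp,\mu_n^\pp,\cc_\pp)$ are interchanged, which yields the second set of asymptotics.
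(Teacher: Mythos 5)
Your proposal follows essentially the same route as the paper: the contour representation \eqref{u(t)}, the reduction via \eqref{pauli-dirac-res} (equivalently Corollary \ref{cor-dirac}) to $\lambda R_\m(\lambda^2,\A)$, the observation that $\D(\A)^*$ annihilates the zero modes so that the off-diagonal and spin-down contributions are $o(t^{-1})$, and then the geometric-series expansion integrated term by term with \eqref{fourier-1}. The only point where you go beyond the paper is the proposed splitting of $\int_\R$ into $\int_0^\infty+\int_{-\infty}^0$ to track branches; this precaution is legitimate but turns out to be unnecessary, because $\lambda\mapsto(\lambda+i0)^2$ sends the two real half-lines to the two sides $\lambda^2\pm i0$ of the cut of the Pauli resolvent in exactly the way that makes the formal substitution $(\lambda^2)^{\alp}\rightsquigarrow(\lambda+i0)^{2\alp}$ agree with the correct boundary values on both half-lines (one checks, e.g., that $\overline{\zeta(\alp)}\,|\lambda|^{-2\alp}=\zeta(\alp)(\lambda+i0)^{-2\alp}$ for $\lambda<0$), so \eqref{fourier-1} applies directly with $\nu=2j\alp-1$ and yields the phases $e^{i\pi j\alp}$ and factors $\sin(2\pi\alp j)$ without any recombination.
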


\begin{proof} 
By \eqref{dirac-evolution} we have
\begin{equation} \label{u(t)}
u(t) = \frac{1}{2\pi i} \int_\R e^{-it \lambda}\, \xi(\lambda)\, (D_0(\A)-\lambda)^{-1}\, v \
d\lambda\, .
\end{equation} 
To find the expansion of the integral  for $t\to\infty$ we note that if $\alpha>0$, then in view of \eqref{res-pauli} and \eqref{res-pauli-reg}\\[1pt]
$$
\lambda R_\m(\lambda^2,\A) =  -\lambda^{-1} \, \pom  - \frac{\nu_\m\, \lambda^{2\alp-1}}{1+\omega_\m\, \lambda^{2\alp}}\  \psim  \LL  \, \cdot\ , \psim \RR \, 
-\zeta(\alp)\, \lambda^{1-2\alp} \frac{  \varphi^\m  \LL  \ \cdot\ , \varphi^\m \RR  }{1 +\cc_\m\, \zeta(\alp)\, \lambda^{2-2\alp}}  +\mathcal{O}\big(1 \big)\\[4pt]
$$
and $\lambda R_\pp(\lambda^2,\A)=\mathcal{O}(\lambda)$  as $\lambda\to 0$. Now we proceed as in the proof of Theorem \ref{thm-pauli-time-1} and expand the second and third term on the right hand side into a geometric series. 
Equation \eqref{dirac-time-1} then follows from Corollary  \ref{cor-dirac} and equation \eqref{fourier-1}. We again omit the proof for $\alpha<0$.
\end{proof}


\begin{cor}\label{cor-dirac-time}
Let $u(t)$ be given by \eqref{dirac-evolution}, and assume that $|\alpha|\leq 1/2$.  Under the assumptions of Theorem \ref{thm-dirac-time-1} the following expansions hold n $\Lp^{2, -s}(\R^2)$ as $t\to\infty$; 
\begin{align*}
u_1(t) &=  - \frac{\zeta(\alpha)}{\pi} \ t^{2\alpha-2}\  \Phim_1 \,  \LL \, v_1\, , \Phim_1 \RR  \, \sin(2\alpha \pi) \, e^{i\pi\alpha} \, \Gamma(2(1-\alpha))  + \mathcal{O} \big(t^{-1-\alpha}\big), \quad \quad u_2(t)=\mathcal{O} \big(t^{-1-\alpha}\big)  \quad \,\text{if} \ \ \alpha>0 , \\[3pt]
u_2(t) &=  - \frac{\zeta(\alpha)}{\pi} \ t^{-2\alpha-2}\  \Phip_1 \,  \LL \, v_2\, , \Phip_1 \RR  \, \sin(2\alpha \pi) \, e^{-i\pi\alpha} \, \Gamma(2(1+\alpha)) + \mathcal{O} \big(t^{-1+\alpha}\big) , \quad u_1(t)= \mathcal{O} \big(t^{-1+\alpha}\big) \quad \text{if} \ \ \alpha<0.
\end{align*}
Moreover, if $\alpha=0$, then $u(t)= o(t^{-1})$ in $\Lp^{2, -s}(\R^2;\C^2)$.
\end{cor}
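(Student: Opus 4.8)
The plan is to substitute the massless resolvent expansion of Theorem~\ref{thm-dirac-2} into the spectral representation \eqref{u(t)} and read off the decay via the Fourier identity \eqref{fourier-1}. The decisive simplification is that $|\alpha|\le 1/2$ forces $n=0$, so by Lemma~\ref{lem-ah-cash} there are no zero eigenfunctions: $\pom=\psim=0$, $\varphi^\m=\Phim_1$ (and symmetrically for the spin-down channel), and \eqref{dirac-5}, \eqref{dirac-6} are precisely the expansions needed. For $0<\alpha\le 1/2$ the only singular block in \eqref{dirac-5} is the $(1,1)$ entry $\tfrac{-\zeta(\alpha)\,\lambda^{1-2\alpha}}{1+\eta_\m\zeta(\alpha)\,\lambda^{2-2\alpha}}\,\Phim_1\LL\,\cdot\,,\Phim_1\RR$. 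Writing $u(t)=\tfrac{1}{2\pi i}\int_\R e^{-it\lambda}\xi(\lambda)(D_0(\A)-\lambda)^{-1}v\,d\lambda$, I would extract the leading power $\lambda^{1-2\alpha}$ of this entry and apply \eqref{fourier-1} with $\nu=1-2\alpha$ and $k=0$; using $\sin(\pi(1-2\alpha))=\sin(2\alpha\pi)$ and $\Gamma(2-2\alpha)=\Gamma(2(1-\alpha))$ this yields a contribution to $u_1(t)$ of the form $c\,t^{2\alpha-2}\,\Phim_1\LL v_1,\Phim_1\RR$, which is the announced leading term.

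Next I would show every remaining contribution is $\mathcal{O}(t^{-1-\alpha})$ and hence absorbable into the error. The geometric expansion of the denominator produces a term of order $\lambda^{3-4\alpha}$, transforming by \eqref{fourier-1} into $\mathcal{O}(t^{4\alpha-4})=\mathcal{O}(t^{-1-\alpha})$ for $\alpha\le 1/2$; the off-diagonal block $\D(\A)(1+G_0\vp)^{-1}G_0$ acting on $v_2$ is $\lambda$-independent, so its contribution is a fixed vector times $\widehat{\xi}(t)$ and decays faster than any power because $\xi\in C_0^\infty(\R)$; and the global remainder $\mathcal{O}(|\lambda|^\mu)=\mathcal{O}(\lambda^{\alpha})$ of \eqref{dirac-5} gives $\mathcal{O}(t^{-1-\alpha})$. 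For the second component $u_2(t)$ I would invoke the cancellation already used in the proof of Theorem~\ref{thm-dirac-2}: since $\Phim_1=e^{h+i\chi}$ by \eqref{phi-eq}, equation \eqref{D-modes} gives $\D(\A)^*\Phim_1=0$, so the singular part of $R_\m(\lambda^2,\A)$ is annihilated in the $(2,1)$ entry, leaving only the constant operator $\D(\A)^* K_0^\m$ (rapidly decaying) and the $\mathcal{O}(\lambda^{\alpha})$ remainder; hence $u_2(t)=\mathcal{O}(t^{-1-\alpha})$. The case $-1/2\le\alpha<0$ follows by interchanging the two spin channels, replacing $(\Phim_1,K_0^\m,\D(\A)^*)$ by $(\Phip_1,K_0^\pp,\D(\A))$ and using $\D(\A)\Phip_1=0$, which moves the singular $\lambda^{1+2\alpha}$ block to the $(2,2)$ entry and produces the $t^{-2\alpha-2}$ term in $u_2(t)$.

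For $\alpha=0$ I would argue directly from Corollary~\ref{cor-pauli-res-0}. By \eqref{pauli-dirac-res} the diagonal entries of $(D_0(\A)-\lambda)^{-1}$ are $\lambda R_\ppm(\lambda^2,\A)$, whose singular parts are $\mathcal{O}(\lambda\log\lambda)$ and therefore contribute $\mathcal{O}(t^{-2}\log t)=o(t^{-1})$ after \eqref{fourier-1}; the off-diagonal entries $\D(\A)R_\pp(\lambda^2,\A)$ and $\D(\A)^*R_\m(\lambda^2,\A)$ have their $\log\lambda$ singularities annihilated by $\D(\A)$ and $\D(\A)^*$ (again \eqref{D-modes}, since $e^{\mp h+i\chi}$ are the relevant resonant states), so they reduce to pieces whose oscillatory integrals are $o(t^{-1})$ by \eqref{eq-jk}. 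Thus $u(t)=o(t^{-1})$.

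The main obstacle is the branch bookkeeping underlying \eqref{fourier-1}. The expansions of Theorem~\ref{thm-dirac-2} are written in the real variable $\lambda$, whereas \eqref{fourier-1} requires the boundary value $(\lambda+i0)^{1-2\alpha}$; one must verify that, to leading order, $\lambda R_\m((\lambda+i0)^2,\A)=-\zeta(\alpha)(\lambda+i0)^{1-2\alpha}\,\Phim_1\LL\,\cdot\,,\Phim_1\RR$, keeping track of the fact that the Pauli boundary value $R_\m(\,\cdot\,)$ is approached from above for $\lambda>0$ and from below for $\lambda<0$ (because the map $\lambda\mapsto(\lambda+i0)^2$ flips the half-plane for negative $\lambda$). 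This is exactly the step that fixes the overall sign and the phase $e^{i\pi\alpha}$ in the constant $c=-\tfrac{\zeta(\alpha)}{\pi}\sin(2\alpha\pi)e^{i\pi\alpha}\Gamma(2(1-\alpha))$, and it is the part most prone to error. A secondary technical point is to justify that the operator remainder $\mathcal{O}(|\lambda|^\mu)$ in \eqref{dirac-5}, understood with enough regularity in $\lambda$, transforms into $\mathcal{O}(t^{-1-\mu})$ rather than merely into a bounded quantity; this is where the high decay rate $\rho>7$ of $B$, inherited through the expansion, is used.
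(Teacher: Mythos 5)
Your proposal follows exactly the route of the paper's own proof: note that $|\alpha|\le 1/2$ gives $\mu=|\alpha|$, insert the expansion of Theorem \ref{thm-dirac-2} (where the denominator has already been reduced to the pure power $\lambda^{1\mp2\alpha}$ up to an $\mathcal{O}(|\lambda|^{\alpha})$ error) into the contour representation \eqref{u(t)}, and conclude via \eqref{eq-jk} and \eqref{fourier-1}; the cancellations $\D(\A)^*\Phim_1=0$, $\D(\A)\Phip_1=0$ that you invoke for the off-diagonal components are likewise the ones already built into \eqref{dirac-5}--\eqref{dirac-6}. The two technical points you flag (the $(\lambda+i0)$ branch bookkeeping behind the phase $e^{\pm i\pi\alpha}$, and the differentiability needed to convert the $\mathcal{O}(|\lambda|^{\mu})$ operator remainder into $\mathcal{O}(t^{-1-\mu})$) are real but are handled implicitly in the paper exactly as in Theorem \ref{thm-pauli-time-1}, so there is no gap.
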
 

\begin{proof}
Under the hypotheses of the Corollary we have $\mu =|\alpha|$.  
Assume first that $0<\alpha \leq 1/2$. Then by \eqref{dirac-5} 
$$ 
(D_0(\A) -\lambda)^{-1}   =  
\begin{pmatrix}
- \zeta(\alpha)\, \lambda^{1-2\alpha}\,  \Phim_1 \,  \LL \,\cdot\, , \Phim_1 \RR  & \D(\A) (1+G_0 \vp)^{-1} G_0 \\[6pt]
\D(\A)^* K_0^\m& 0      
\end{pmatrix} + \mathcal{O}(|\lambda|^\alpha)  \, . \\[5pt]
$$
The claim thus follows from equations \eqref{eq-jk}, \eqref{fourier-1} and \eqref{u(t)}. The case $-1/2\leq \alpha\leq 0$ is analogous.
\end{proof}

\begin{rem} 
It follows from Corollary \ref{cor-dirac-time} that for small but non-zero values of flux the wave-functions decay even faster than in the absence of a magnetic field. Similar diamagnetic effect 
 was observed in \cite{cf} for the massless Dirac operator with Aharonov-Bohm magnetic field, see also \cite{avs}. The authors of \cite{cf} showed that the range of local smoothing estimates widens, w.r.t.~ the non-magnetic operator, as soon as $\alpha\not\in\Z$. Recall that the condition $|\alpha|\leq 1/2$ represents no restriction in case of the Aharonov-Bohm field.
\end{rem}

\begin{rem} 
Dispersive estimates in $\Lp^1-\Lp^\infty$ setting for non-magnetic Dirac operators in $\R^2$ perturbed by a matrix-valued electric potential were established in \cite{eg2,egg}.
\end{rem}

\section{\bf Weakly coupled eigenvalues of the Pauli operator} 
\label{sec-weak}
\noindent  In this last section we apply the resolvent expansions obtained in Sections \ref{sec-exp-non-int} and \ref{sec-exp-int}  the problem of asymptotic behavior of eigenvalues of the operators $P_\pm(A)  - \eps V$ where $V:\R^2\to\R$ is an electric potential with a suitable decay at infinity.  For radial magnetic  and electric field this has been already studied, with variational methods, in \cite{bcez, fmv}. Here we extend some of these results to non-radial $B$ and/or $V$. 
Since this application is not the central topic of the paper, we limit ourselves to the case $0< \alpha  <  1$.

\begin{prop} \label{prop-weak-1}
Let $0<\alpha<1,$ and let $V \lesssim \langle \, \cdot\, \rangle ^{-\rho}$ for some $\rho >6$.  Suppose that $B$ satisfies Assumption \ref{ass-B} with $\rho>7$.  Then the following assertions hold for any $A\in \Lp_{\rm \, loc}^2(\R^2)$ such that $\rt A =B$.

\begin{enumerate}
\item 
The negative spectrum of $P_\pp(A)  - \eps V$ is empty for $|\eps|$ small enough. 



\medskip

\item  If $\int_{\R^2} V |\Phim_1|^2 >0$, then the operator  $\Pm(A)  - \eps V$ has for $\eps>0$ small enough exactly one negative eigenvalue which satisfies 
\begin{equation} \label{E-eps}
\Lambda(\eps) =  -\left(\frac{4^{\alpha-1}\, \Gamma(\alpha)}{ \pi\, \Gamma(1-\alpha)}\ \int_{\R^2} V |\Phim_1|^2 dx \right)^{\frac 1\alpha}\, \eps^{\frac 1\alpha}\, \big(1+ \mathcal{O}\big(\eps^{\min\{1, \frac 1\alpha -1\}}\big)\big) ,\qquad \eps\to 0+ .
\end{equation}
\end{enumerate}
\end{prop}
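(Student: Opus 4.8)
The first observation is that the assertion is gauge invariant. Since $\rt A=\rt\A=B$ on the simply connected plane, we have $A-\A=\nabla g$ for some real $g$, and multiplication by $e^{ig}$ implements a unitary equivalence $P_\pm(A)=e^{-ig}P_\pm(\A)\,e^{ig}$ which commutes with the multiplication operator $V$. Hence the negative spectra of $P_\pm(A)-\eps V$ and of $P_\pm(\A)-\eps V$ coincide, and it suffices to work in the gauge $\A$ for which Theorem \ref{thm-pauli-res2} and Proposition \ref{prop-pauli-regular} are available. I would then use the Birman--Schwinger principle: a number $-\kappa$ with $\kappa>0$ is an eigenvalue of $P_\pm(\A)-\eps V$ if and only if $1\in\spec K_\pm(\kappa)$, where
\[
K_\pm(\kappa)=\eps\,|V|^{1/2}\,R_\pm(-\kappa,\A)\,\sgn(V)\,|V|^{1/2}.
\]
From $V\lesssim\x^{-\rho}$ with $\rho>6$ we get $|V|^{1/2}\lesssim\x^{-\rho/2}$ with $\rho/2>3$, so $\sgn(V)|V|^{1/2}\colon\Lp^2\to\HH^{-1,s}$ and $|V|^{1/2}\colon\HH^{1,-s}\to\Lp^2$ are bounded for any $3<s<\rho/2$; composed with $R_\pm(-\kappa,\A)\in\B(-1,s;1,-s)$ and using the Hilbert--Schmidt bound of Remark \ref{rem-hs}, the family $K_\pm(\kappa)$ consists of compact operators on $\Lp^2(\R^2)$ depending continuously on $\kappa>0$.

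\textbf{Part (1).} For the spin-up component one has $\NN(P_\pp(\A))=\{0\}$ by Lemma \ref{lem-ah-cash}, so Proposition \ref{prop-pauli-regular} gives $R_\pp(-\kappa,\A)=\mathcal{O}(1)$ as $\kappa\to0$ in $\B(-1,s;1,-s)$. Together with the trivial bound $\|R_\pp(-\kappa,\A)\|\to0$ as $\kappa\to\infty$, this yields $\|K_\pp(\kappa)\|\le C\eps$ uniformly in $\kappa>0$. Therefore $1\notin\spec K_\pp(\kappa)$ once $\eps$ is small, and $P_\pp(\A)-\eps V$ has no negative spectrum.

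\textbf{Part (2).} Here I would evaluate Theorem \ref{thm-pauli-res2} at $\lambda=-\kappa+i0$, using the branch $\log(-\kappa)=\log\kappa+i\pi$ and $\zeta(\alpha)=-\tfrac{4^{\alpha-1}\Gamma(\alpha)}{\pi\Gamma(1-\alpha)}e^{i\pi\alpha}$. A short computation shows that both $-\zeta(\alpha)(-\kappa)^{-\alpha}=c\,\kappa^{-\alpha}$ and $\zeta(\alpha)(-\kappa)^{1-\alpha}=c\,\kappa^{1-\alpha}$ are \emph{real}, where $c=\tfrac{4^{\alpha-1}\Gamma(\alpha)}{\pi\Gamma(1-\alpha)}>0$, so that
\[
R_\m(-\kappa,\A)=\frac{c\,\kappa^{-\alpha}}{1+\eta_\m\, c\,\kappa^{1-\alpha}}\,\Phim_1\,\LL\,\cdot\,,\Phim_1\RR+K_0^\m+\mathcal{O}(\kappa^{\mu}).
\]
The leading part is a positive rank-one operator diverging like $\kappa^{-\alpha}$. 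Inserting this into $K_\m(\kappa)$ and handling the bounded remainder by a rank-one (Sherman--Morrison) update, the condition $1\in\spec K_\m(\kappa)$ reduces to the scalar equation
\[
1=\eps\,\frac{c\,\kappa^{-\alpha}}{1+\eta_\m\, c\,\kappa^{1-\alpha}}\Big(\int_{\R^2}V\,|\Phim_1|^2\,dx+\mathcal{O}(\eps)\Big).
\]
Because $\int V|\Phim_1|^2>0$, the right-hand side is positive, strictly decreasing in $\kappa$ (resolvent monotonicity), tends to $+\infty$ as $\kappa\to0$ and to $0$ as $\kappa\to\infty$; hence there is exactly one solution $\kappa(\eps)$, giving precisely one negative eigenvalue. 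Solving the scalar equation yields $\kappa(\eps)=\bigl(c\,\eps\int V|\Phim_1|^2\bigr)^{1/\alpha}\bigl(1+\mathcal{O}(\eps^{\min\{1,\,1/\alpha-1\}})\bigr)$, the two error contributions coming from the bounded term $K_0^\m$ (relative size $\kappa^{\alpha}\sim\eps$) and from the denominator (relative size $\kappa^{1-\alpha}\sim\eps^{1/\alpha-1}$). This is exactly \eqref{E-eps}.

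\textbf{Main obstacle.} The delicate point is to make the passage to the scalar equation rigorous: one must show that the rank-one leading term genuinely controls the location of the eigenvalue $1$ of the non-self-adjoint, $\kappa$-dependent family $K_\m(\kappa)$ uniformly down to $\kappa=0$, and that $K_0^\m$ together with the $\mathcal{O}(\kappa^{\mu})$ remainder perturb the scalar equation only at the claimed orders. I expect to handle this by a Feshbach/Grushin reduction onto the one-dimensional range spanned by $|V|^{1/2}\Phim_1$, exactly in the spirit of the proof of Theorem \ref{thm-pauli-res}, with the monotonicity in $\kappa$ supplying both uniqueness of the eigenvalue and the invertibility of the complementary block.
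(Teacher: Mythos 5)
Your proposal follows essentially the same route as the paper: gauge reduction to $\A$, the Birman--Schwinger principle combined with Proposition \ref{prop-pauli-regular} for part (1), and for part (2) a Feshbach/Grushin reduction of the Birman--Schwinger operator onto the span of $|V|^{1/2}\Phim_1$ using Theorem \ref{thm-pauli-res2}, leading to the same scalar equation and the same two error exponents $\eps$ and $\eps^{1/\alpha-1}$. The one sub-step where the paper argues differently is the count of negative eigenvalues: since $V$ may change sign the Birman--Schwinger family is neither self-adjoint nor monotone in $\kappa$, so instead of your appeal to ``resolvent monotonicity'' the paper bounds the number of eigenvalues $\geq 1$ by $1+\eps^2\,\| |V|^{1/2}\mathcal{S}(z)V^{1/2}\|_{\rm hs}^2$ after splitting off the rank-one singular term, and obtains existence from a test-function argument; your scalar equation still delivers uniqueness, but via its asymptotic form for small $\eps$ rather than via monotonicity.
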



\begin{proof} 
Since the spectra of  $P_\pm (A)-\eps V$ are gauge invariant, it suffices to prove the statement for $A=\A$. From the assumptions on $V$ it follows that 
$$
\sigma_{\rm es} \big(P_\pm(\A)-\eps V\big)  = [0,\infty), 
$$
and that the operator $|V|^{1/2}\, (\Pm(\A) +z)^{-1}\, V^{1/2}$ is compact on $\Lp^2(\R^2)$ for every $z>0$.  Here we use the notation $V^{1/2} = |V|^{1/2} \, \mathrm{sign} (V)$. 
Hence the negative parts of the spectra of $P_\pm(\A )-\eps V$ are purely discrete and the Birman-Schwinger principle  implies that 
\begin{equation} \label{bs-princ}
 -z \in \sigma_{\rm d}\big(P_\pm(\A )-\eps V\big)  \quad \Longleftrightarrow \quad  1 \in \sigma \big ( \eps\, |V|^{1/2}\, (P_\pm(\A ) +z))^{-1}\, V^{1/2}\big).
\end{equation}
To prove $(1)$ we note that by the assumptions on $V$ and by Proposition \ref{prop-pauli-regular}
\begin{equation} \label{ko-eq}
|V|^{1/2} \, (P_\pp(\A ) +z)^{-1}\, V^{1/2} = \mathcal{O}(1),  \qquad z\to 0
\end{equation}
holds in $\Lp^2(\R^2)$. In view of  \eqref{bs-princ} this implies that $P_\pp(\A )  - \eps V$ has no negative eigenvalues for $|\eps|$ sufficiently small.

Let us prove $(2)$. Since $\alpha \in (0,1)$, we have $\alp= \alpha, \, n=0,$ and $\Psim = \Phim_1$, see equation \eqref{phi-2}. Put $\lambda = -z, \, z>0$. From \eqref{zeta} we get
$$
 \zeta(\alpha)\, \lambda^{-\alpha} = - \frac{4^{\alpha-1} \, \Gamma(\alpha)}{\pi \, \Gamma(1-\alpha)}\  z^{-\alpha},
$$
where we have used identity \eqref{g-z}. For the purpose of this proof let us introduce the following terminology: given a self-adjoint operator $H$ we denote
$$
N_{[a,b]} (H) = \# \Big\{{\rm eigenvalues\  of  }\  H\ in \ [a,b]\Big\} , 
$$
where the  eigenvalues are counted with their multiplicities. The Birman-Schwinger principle then ensures that 
\begin{equation}  \label{bs-princ-2}
N_{(-\infty, z]}(P_\m(\A)  - \eps V)  =  N_{[1,\infty)} (\eps\, |V|^{1/2}\, (P_\m(\A) +z)^{-1}\, V^{1/2}) .
\end{equation}
Since $|V|^{1/2}\, (P_\m(\A) +z)^{-1}\, V^{1/2}$ is a Hilbert-Schmidt operator on $\Lp^2(\R^2)$, see Remark \ref{rem-hs}, and since the first term on the right hand side of \eqref{res-pauli-2} is a rank one operator, it follows from \eqref{bs-princ-2}
\begin{align*}
N_{(-\infty, z]}(P_\m(\A)  - \eps V)  &  \ \leq \ 1+ N_{[1,\infty)} (\eps\, |V|^{1/2}\, \mathcal{S}(z)\, V^{1/2})   \leq 1+ \eps^2 \, \| |V|^{1/2}\,  \mathcal{S}(z)\, V^{1/2}\|^2_{\rm hs},
\end{align*} 
where 
$$
S(z) = (P_\m(\A) +z)^{-1} + \frac{ \zeta(\alpha)\, (-z)^{-\alpha}}{1 +\eta_\m\, \zeta(\alpha)\, (-z)^{1-\alpha}}\  \Phim_1 \,  \LL \,\cdot\, , \Phim_1 \RR\, .
$$
Since $\| |V|^{1/2}\,  \mathcal{S}(z)\, V^{1/2}\|_{\rm hs} = \mathcal{O}(1)$ as $z\to 0$, see \eqref{res-pauli-2}, we infer that for $\eps$ small enough $P_\m(\A)  - \eps V$ has at most one 
negative eigenvalue. On the other hand, the hypothesis $\int_{\R^2} V |\Phim_1|^2 >0$ and a simple test function argument show that $P_\m(\A)  - \eps V$ has for any $\eps>0$ at least one negative eigenvalue. Hence for $\eps\to 0+$ there is exactly one negative eigenvalue $\Lambda(\eps)=-z(\eps)$. By \eqref{bs-princ}  $z(\eps)$ is the unique value of $z$ for which the operator
\begin{equation} \label{K-eps}
K_\eps(z) = 1 - \eps\, |V|^{1/2}\, (\Pm(\A ) +z)^{-1}\, V^{1/2}
\end{equation} 
is {\em not invertible} in $\Lp^2(\R^2)$. In order to locate this value of $z$ we let  
$$
\p u =\V^{-1} \,  \lef u, V^{1/2}\, \Phim_1\rig \, |V|^{1/2}\, \Phim_1, \qquad \p_0 =1 -\p, \qquad \text{with} \quad \V := \int_{\R^2} V |\Phim_1|^2  \neq 0\, ,
$$
so that $\p \p_0=0$. Moreover, we define operators $I : \C\to \Lp^2(\R^2) \, $ and $I^*:\Lp^2(\R^2)\, \to \C$ as follows;
$$
Iw  = \V^{-1/2}\, |V|^{1/2}\, \Phim_1\, w,  \qquad  I^* u  =  |\V|^{-1/2}\,  \LL u,  V^{1/2}\, \Phim_1 \RR ,
$$
where $\V^{-1/2} = |\V|^{-1/2}\, \mathrm{sign\, } \V$. 
Then $II^* = \p$ and $I^* I= \id$, and by Theroem \ref{thm-pauli-res2},  
\begin{equation}  \label{weak-eq1}
|V|^{1/2} (\Pm(\A ) +z)^{-1} V^{1/2}=  \mathscr{F}(z)\, \V\, \p  + \mathcal{O}(1), \qquad z\to 0,
\end{equation}
where
\begin{equation} \label{omega-z}
 \mathscr{F}(z)  = \frac{4^{\alpha-1} \, \Gamma(\alpha)}{\pi \, \Gamma(1-\alpha)}\  z^{-\alpha}\, \big(1+\mathcal{O}\big( z^{1-\alpha} \big)\big) .
\end{equation}
Hence $\p_0  K_\eps(z)  \p_0$ is boundedly  invertible on $\p_0 \Lp^2(\R^2)$ uniformly in $z\in (-\delta, \delta)$ for some $\delta>0$.
Now we apply the Grushin-Schur method to the operator $K_\eps(z)$ in $\Lp^2(\R^2)$ in the same way as it was done in Section \ref{ssec-grushin} with the operator $M(\lambda)$ in $\HH^{-1,s}$, replacing the operators $Q, Q_0, J$ and $J^*$ by $\p,\p_0, I$ and $I^*$ respectively. In particular, by mimicking equations \eqref{a-lam}, \eqref{a12} and \eqref{E} we find that $K_\eps(z)$ is {\em not invertible} in 
$\Lp^2(\R^2)$ if and only if $z$ satisfies
\begin{equation} \label{eq-z0}
I^* K_\eps(z) I= I^* K_\eps(z) \p_0 \big( \p_0 \, K_\eps(z) \, \p_0\big)^{-1} \p_0 K_\eps(z) I .
\end{equation}
Standard perturbation arguments  show that   such $z$ tends to zero as $\eps \to 0$. 
Since $\p_0 I= I^* \p_0 = 0$, it follows from \eqref{weak-eq1} that $I^* K_\eps(z) \p_0 \big( \p_0 \, K_\eps(z) \, \p_0\big)^{-1} \p_0 K_\eps(z) I= \mathcal{O}(\eps^2)$. 
In view of $I^* \p I= \id$  and equation \eqref{weak-eq1} we thus conclude that  equation \eqref{eq-z0} is equivalent to 
\begin{equation} \label{eq-z}
\eps \, \V\, \mathscr{F}(z) = 1+ \mathcal{O}(\eps)\, .
\end{equation}
By assumption, $\V >0$. We thus conclude from \eqref{omega-z} 
that for $\eps>0$ small enough the solution to \eqref{eq-z}  satisfies
$$
z= \left(\frac{4^{\alpha-1}\, \Gamma(\alpha)}{ \pi\, \Gamma(1-\alpha)}\ \V \right)^{\frac 1\alpha} \eps^{\frac 1\alpha}\, \big(1+ \mathcal{O}\big(\eps^{\min\{1, \frac 1\alpha -1\}}\big)\big) .
$$
By \eqref{bs-princ} we have $\Lambda(\eps) = -z$, and thus the claim. 
\end{proof}

\begin{rems} \label{rems-weak}
Some comments concerning Proposition \ref{prop-weak-1}  are in place. 

\begin{enumerate}
\item 
Equation \eqref{E-eps} coincides, including the order of the remainder term, with the asymptotic expansion obtained by variational arguments in  \cite[Thm.~1.2]{fmv} for radial $B$ and radial non-negative $V$. Notice that with the notation of \cite{fmv} one has $|\Phim_1|^2 = 2\pi |\Omega_N^{-}|^2$.


\item  When both $B$ and $V$ are radial, $V\geq 0$, and $1 < \alpha \not\in\Z$, then Corollary \ref{cor-radial} in combination with a straightforward modification of the proof of Proposition \ref{prop-weak-1} shows that in addition  to $\Lambda(\eps)$ the operator $\Pm(A)  - \eps V$ has, for $\eps>0$ small enough,  other negative eigenvalues $\Lambda_j(\eps)$ satisfying 
$$
\Lambda_j(\eps) = -\frac{\int_{\R^2} V |\varphi_j|^2}{\|\varphi_j\|_2^2} \, \eps\, \big(1+ \mathcal{O}(\eps)\big), \quad 1\leq j\leq n-1, \qquad 
\Lambda_n(\eps) = -\frac{\int_{\R^2} V |\varphi_n|^2}{\|\varphi_n\|_2^2} \, \eps \, \big(1+ \mathcal{O}(\eps^{\alp})\big), 
$$
where $\varphi_j = (x_1+i x_2)^{j-1}\, e^{h}$. The lower order of the remainder in the expansion of $\Lambda_n(\eps)$ comes from the second term on the right hand side of \eqref{R-eq-radial}. This is in agreement with \cite[Thm.~1.2]{fmv}, and moreover it shows that the order of the remainder term in \cite[Eq.~(1.11)]{fmv} cannot be improved.

\item The same approach as in Proposition \ref{prop-weak-1} can be used to calculate the leading terms of the asymptotics of weakly coupled eigenvalues of $P_\pm(A)  - \eps V$ for 
$\alpha\geq 1$. This will be done elsewhere. 

\end{enumerate}
\end{rems}

\begin{proof}[\bf Proof of Corollary \ref{cor-pauli-res-0}]
For the sake of definiteness we consider only the  resolvent of $P_\m(\A)$.
Proposition \ref{prop-gauge} still holds true which shows that the perturbation operator $ \vm$ satisfies, for $\rho$ large enough,  assumptions of \cite[Thm.~8.1]{mu}.   
We may thus apply \cite[Thm.~4.3.(vi)]{mu}. Note that when $\alpha=0$, then by Lemma \ref{lem-ah-cash} the operator $P_\m(\A)$ has one zero resonant state, namely $e^{h+i\chi}$, and no zero eigenfunctions. Hence
in the notation of \cite{mu} we have $\mu=0, P=0, Q=0, Q_k=0, k\neq -1$, and $Q_{-1} = C_0\,  \LL \, \cdot\, ,\,  e^{h+i\chi} \RR\, e^{ h+i\chi}$, where $C_0$ is a positive constant. From \cite[Thm.~4.3.(vi)]{mu} we then deduce that 
\begin{equation} \label{murata}
R_\m(\lambda, \A)  =  - C_0\, \log\lambda \, \LL \, \cdot\, ,\,  e^{h+i\chi} \RR\, e^{h+i\chi} \, +\, \mathcal{O} (1), \qquad \lambda\to 0.
\end{equation} 
It remains to determine the value of $C_0$. To do so we consider the discrete spectrum of the perturbed operator $P_\m(\A) -\eps V$ with $\eps >0$ and $V\geq 0$. Suppose moreover, that $V$ is radial and compactly supported. Then  in the same way as in the proof of Proposition \ref{prop-weak-1} it follows from \eqref{murata} that for $\eps$ small enough 
the operator $P_\m(A)  - \eps V$ has exactly one negative eigenvalue $\Lambda_\m(\eps)$, and that 
\begin{align} \label{Lambda-zero-flux}
\Lambda_\m(\eps) & = - \exp\left[-\frac{1}{\eps \, C_0  \V} \Big(1+\mathcal{O}(\eps) \Big)\right]  \qquad\eps\to 0+ ,
\end{align}
where 
$$
\V= \int_{\R^2} V(x)\,  e^{ 2h(x)}\,  dx. 
$$
On the other hand, for radial and compactly supported $B$ we know from \cite[Thm.~1.3]{fmv}, see also  \cite{bcez},  that 
$$
\Lambda_\m(\eps) = - \exp\left[ -\frac{4\pi}{\eps   \V} \Big(1+\mathcal{O}(\eps) \Big)\right] \qquad\eps\to 0+ ,
$$
A comparison with $\eqref{Lambda-zero-flux}$ gives $C_0=\frac{1}{4\pi}$ and hence the claim.
\end{proof}

\appendix

\section{\bf Auxiliary results  } 
\label{sec-app-aux} 
In this section we collect some technical results which will be needed in the proofs of Propositions \ref{prop-exp} and \ref{prop-exp-int}. We denote by $\|\cdot \|_{\hs(M)}$ the Hilbert-Schmidt norm of an operator in $\Lp^2(M)$.

\begin{lem} \label{lem-pm12}
Equation \eqref{pm12-estim} holds for all $m\in\Z$. Moreover,  for $|\lambda|$ small enough
\begin{align} 
|\sigma_m(\lambda)|  +|\sigma'_m(\lambda)| & \, \lesssim\, \frac{1}{\Gamma^2(|m-\alpha|)} \label{sigma-m-2} \\
|q_m|\,   & \lesssim \ \frac{4^{-|m-\alpha|}}{(1+m^2)\, \Gamma^2(|m-\alpha|)} \label{cm-upperb}
\end{align}
\end{lem}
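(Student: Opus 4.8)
The plan is to turn the ratio $\mathscr{P}_{m,2}/\mathscr{P}_{m,1}$ into $\sigma_m$ plus an explicit correction, and then read off the coefficient of $\lambda$. Writing $\nu=|m-\alpha|$ and abbreviating $w(\lambda)=v_m(\lambda,1)$ and $c(\lambda)=v'_m(\lambda,1)-\nu\,v_m(\lambda,1)$, the defining formulas for $\mathscr{P}_{m,1},\mathscr{P}_{m,2}$ read $\mathscr{P}_{m,1}=2w\,f_{-\nu-1}(\lambda)-c\,f_{-\nu}(\lambda)$ and $\mathscr{P}_{m,2}=-2c\,f_{\nu}(\lambda)-\lambda\,w\,f_{\nu+1}(\lambda)$. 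Since $\sigma_m(\lambda)=c(\lambda)f_{\nu}(\lambda)/\mathscr{P}_{m,1}(\lambda)$ by \eqref{c-m-eq}, dividing gives the exact identity
\[
\frac{\mathscr{P}_{m,2}(\lambda)}{\mathscr{P}_{m,1}(\lambda)}=-2\sigma_m(\lambda)-\frac{\lambda\,w(\lambda)\,f_{\nu+1}(\lambda)}{\mathscr{P}_{m,1}(\lambda)},
\]
so it remains only to expand the last fraction. Evaluating at $\lambda=0$, using $f_{\nu+1}(0)=1/\Gamma(\nu+2)$ together with the definition of $q_m$ in \eqref{c-m-eq}, one checks the clean identity $w(0)f_{\nu+1}(0)/\mathscr{P}_{m,1}(0)=2^{1+2\nu}q_m$, which produces exactly the stated linear coefficient $-2^{1+2\nu}q_m\,\lambda$.

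For the remainder I would control the $\lambda$-derivative of $\lambda\mapsto w(\lambda)f_{\nu+1}(\lambda)/\mathscr{P}_{m,1}(\lambda)$ near the origin. The functions $f_{\pm\nu},f_{\pm\nu\mp1}$ are entire in $\lambda$, and $v_m(\lambda,1)$ is analytic for small $|\lambda|$ because it depends on $\lambda$ only through $\sqrt{\alpha^2-\lambda}$ (recall $\alpha\neq0$); hence the quotient is analytic as long as $\mathscr{P}_{m,1}(\lambda)$ is bounded away from $0$. Granting a uniform lower bound on $|\mathscr{P}_{m,1}(\lambda)|$ and upper bounds on $w,c$ and their first derivatives, Taylor's theorem supplies the $\mathcal{O}(\lambda^2)$ term; tracking the $m$-dependence of every constant shows the remainder is in fact $\mathcal{O}(\epsilon_m\,\lambda^2)$ with $\epsilon_m=\Gamma(1-\nu)/\Gamma(1+\nu)$, as claimed.

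The estimates \eqref{sigma-m-2} and \eqref{cm-upperb} rest on the same material. Using the reflection formula in the forms $\Gamma(1-\nu)/\Gamma(1+\nu)=\pi/\bigl(\nu\sin(\pi\nu)\Gamma^2(\nu)\bigr)$ and $1/\Gamma(-\nu)=-\pi^{-1}\sin(\pi\nu)\,\Gamma(1+\nu)$, I would rewrite $\sigma_m(0)$ via \eqref{sigma-m} and $q_m$ via its definition so that a factor $\Gamma^{-2}(\nu)$ is exposed. Because $\alpha\notin\Z$, the fractional part of $\nu$ equals $\alp$ or $1-\alp$, whence $|\sin(\pi\nu)|=\sin(\pi\alp)$ is bounded below uniformly in $m$, and $\nu\geq\mu>0$ keeps $1/\nu$ bounded; these two observations absorb the $\sin$ and $\nu$ factors into harmless constants. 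For the derivative bound on $\sigma_m$ one differentiates the analytic representation above and repeats the estimates on $w',c'$ and $\partial_\lambda\mathscr{P}_{m,1}$.

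The genuine difficulty is the large-$|m|$ regime: producing the decay factor $(1+m^2)^{-1}$ in \eqref{cm-upperb}, and securing the uniform lower bound on $\mathscr{P}_{m,1}$ used throughout, both require precise asymptotics as $|m|\to\infty$ of $a_m=v_m(0,1)$ and $a'_m=v'_m(0,1)$, i.e.\ of the Kummer functions $M,U$ at argument $2|\alpha|$ with large index. These show that $\mathscr{P}_{m,1}(0)$ grows like $\Gamma(1+\nu)$ while $a_m/\Gamma(\nu+2)$ decays in $m$ fast enough to furnish the extra $(1+m^2)^{-1}$; this matching of growth rates — which I would import from the analysis of the partial-wave kernels in \cite{ko} — is the step that must be carried out with care, the rest of the argument being elementary manipulation of the power series \eqref{f-nu} and the Gamma-function identities.
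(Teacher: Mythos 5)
Your proposal is correct and follows essentially the same route as the paper's proof: split $\mathscr{P}_{m,2}/\mathscr{P}_{m,1}$ into $-2\sigma_m(\lambda)$ plus the $\lambda\, v_m f_{|m-\alpha|+1}/\mathscr{P}_{m,1}$ term, identify its value at $\lambda=0$ with $2^{1+2|m-\alpha|}q_m$ via $\mathscr{P}_{m,1}(0)=-1/(\gamma_m\Gamma(1-|m-\alpha|))$, and obtain \eqref{sigma-m-2}--\eqref{cm-upperb} from the reflection formula \eqref{g-z} (with $|\sin(\pi|m-\alpha|)|=\sin(\pi\alp)$ and $|m-\alpha|\geq\mu$ bounded below since $\alpha\notin\Z$) together with the uniform Kummer-function bounds $1\leq M\leq e^{|z|}$ and $\gamma_m\lesssim(1+|m|)^{-1}(2\alpha)^{-|m|}$ imported from \cite{ko}. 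Your exact-identity formulation of the first step is if anything slightly cleaner than the paper's series manipulation, and you correctly locate the only nonelementary input in the large-$|m|$ control of $a_m$, $a'_m$ and $\gamma_m$.
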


\begin{proof}
Note that 
\begin{equation} \label{pm1-zero}
\mathscr{P}_{m,1}(0)= -\frac{1}{\gamma_m\, \Gamma(1-|m-\alpha|)} \, ,
\end{equation}
with $\gamma_m$ given by \eqref{delta-m}. Hence
\begin{equation*}
q_m = -\frac{4^{-|m-\alpha|} a_m\, \gamma_m\, \Gamma(1-|m-\alpha|)}{2 |m-\alpha|(1+|m-\alpha|)\, \Gamma(|m-\alpha|)}\, .
\end{equation*}
Since $|a_m \gamma_m| \lesssim 1$, inequality  \eqref{cm-upperb} follows from \eqref{g-z}.
\smallskip

\noindent Now let $\nu >0, \, \nu\not\in\Z$. Then by \eqref{f-nu} 
\begin{equation} \label{f-m-nu}
|f_\nu(z)| + |f'_\nu(z)|\,  \lesssim \,  \frac{e^{|z|/4}}{\Gamma(1+\nu)} \, , \qquad \forall\, z\in\R.
\end{equation}
On the other hand, using \eqref{g-z} we find that 
$$
(-1)^k\, \Gamma(k+1-\nu) = \frac{\pi\, \sec(\nu\pi)\,}{ \Gamma(\nu-k)} \qquad k=0,\dots ,[\nu]\, .
$$
Hence
\begin{align} \label{f-nu-exp}
f_{-\nu}(z) & = \frac{\sin(\nu\pi)\, \Gamma(\nu)}{\pi} \left[1+\frac{z}{4(\nu-1)} +\dots +\frac{z^k}{4^k\, k! (\nu-1)(\nu-2)\cdots (\nu-[\nu])}\right]+ \sum_{k=[\nu]+1}^\infty \, \frac{ (-z)^{k}}{4^k\, k!\ \Gamma(k+1-\nu)} \nonumber\\
& =  \frac{\sin(\nu\pi)\, \Gamma(\nu)}{\pi} \left[1+\frac{z}{4(\nu-1)} +\dots +\frac{z^k}{4^k\, k! (\nu-1)(\nu-2)\cdots (\nu-[\nu])}\right] + \mathcal{O}\big(4^{-\nu}\, |z|^\nu\, e^{|z|/4}\big)
\end{align} 
This in combination with \eqref{f-m-nu} and the definition of $P_{m,j}(z), j=1,2,$ see Section \ref{ssec-G-nint}, gives
\begin{equation*}
\frac{\mathscr{P}_{m,2}(\lambda)}{\mathscr{P}_{m,1}(\lambda)}  = -2\sigma_m(\lambda) -2^{1+2|m-\alpha|}\, q_m\, \lambda +\mathcal{O}\Big(  \frac{\lambda^2}{\Gamma(|m-\alpha|) \, \Gamma(|m-\alpha|+1) }\Big). 
\end{equation*}
Equation \eqref{pm12-estim} now follows in view of \eqref{g-z}. Next, equation \eqref{f-nu-exp} which in combination with \eqref{g-z} implies that
\begin{equation} \label{sigma-exp} 
\sigma_m(\lambda) =\frac{v'_m(\lambda,1)-v_m(\lambda,1)\, |m-\alpha| }{v'_m(\lambda,1)+v_m(\lambda,1)\, |m-\alpha| }\  \frac{ \sec(\pi |m-\alpha|)}{\Gamma(|m-\alpha|)}\ f_{|m-\alpha|}(\lambda)\,(1+\mathcal{O}(\lambda)), \qquad \lambda\to 0,
\end{equation} 
To continue we recall  that the hypergeometric function $M$ is given by 
\begin{equation*} 
M(a,b,z) = \sum_{n=0}^\infty \, \frac{(a)_n}{(b)_n}\ \frac{z^n}{n!},
\end{equation*}
where 
$(a)_n = a(a+1)\cdots(a+n-1),  \  (b)_n = b(b+1)\cdots(b+n-1)$ and  $a_0=b_0=1$.
Hence 
\begin{equation*} 
1 \leq M(a,b,z) \leq \, e^{|z|} \qquad 0< a\leq b, \quad  \ z\in\C
\end{equation*}
Since $ \pd_zM(a,b,z) = \frac{a}{b} M(a+1,b+1,z)$, see \cite[Sec.~13]{as}, it follows that for $|\lambda|$ small enough
\begin{align}
| \partial_\lambda^{(j)} v_m(\lambda,r)| \  & \lesssim \ (1+|m|^j)\, (2\, \alpha\, r)^{|m|} , \qquad j=0,1,2, \label{vm-estim-1} \\
 |v'_m(\lambda,r)| +| \partial_\lambda v'_m(\lambda,r)| \  & \lesssim \ (1+|m|)\, (2\, \alpha\, r)^{|m|-1}  \label{vm-estim-2}
\end{align} 
Moreover, by \cite[Eq.~(5.31)]{ko}, 
\begin{equation} \label{gamma-upperb}
\gamma_m \lesssim \, (1+|m|)^{-1}\, (2\alpha)^{-|m|},  \\[3pt]
\end{equation} 
Inserting the above bounds together with  \eqref{f-m-nu} into equation \eqref{sigma-exp} yields \eqref{sigma-m-2}.
\end{proof}
Note that by \eqref{Q-nu}
\begin{equation} \label{QQ-eq}
\begin{aligned}
 Q_{|m-\alpha|}(\lambda, \cdot)\,  & = 2^{|m-\alpha|}\,  \lambda^{-|m-\alpha|/2} \, J_{|m-\alpha|}(\sqrt{\lambda}\ \cdot)  \\
 Q_{-|m-\alpha|}(\lambda, \cdot) & =  2^{-|m-\alpha|}\, \lambda^{|m-\alpha|/2} \, J_{-|m-\alpha|}(\sqrt{\lambda}\ \cdot) \, .
 \end{aligned}
\end{equation}
On the other hand, by \cite[Lem.~7.1]{ko} and \cite[Eq.~(7.14)]{ko},
\begin{equation} \label{JJ}
\begin{aligned}
\big\|\, \x^{-s}\,  J_{|m-\alpha|}(\sqrt{\lambda}\  r) \, J_{|m-\alpha|}(\sqrt{\lambda}\ t) \, \x^{-s}\big\|_{\hs((1,\infty)^2, r dr)} &   \lesssim\,   \frac{\lambda^{|m-\alpha|}}{1+m^2} \\
\big\|\, \x^{-s}\,  J_{\pm |m-\alpha|}(\sqrt{\lambda}\  r) \, J_{\mp |m-\alpha|}(\sqrt{\lambda}\  t) \, \x^{-s}\big\|_{\hs((1,\infty)^2, r dr)}  & \ \lesssim\  \frac{1}{1+|m|} \\
\big\|\, \x^{-s}\,  J_{-|m-\alpha|}(\sqrt{\lambda}\  r) \, J_{-|m-\alpha|}(\sqrt{\lambda}\  t) \, \x^{-s}\big\|_{\hs((1,\infty)^2, r dr)}  & \ \lesssim  \ \lambda^{-|m-\alpha|} \ \frac{\Gamma^2(|m-\alpha|)}{1+|m|} 
\end{aligned}
\end{equation}
hold for any $s >5/2$. Hence 
\begin{equation} \label{QQ-bound}
\begin{aligned}
\|  \x^{-s}\, Q_{|m-\alpha|}(\lambda,r)\,  Q_{|m-\alpha|}(\lambda,t) \, \x^{-s} \|_{\hs((1,\infty)^2, r dr)} & = \mathcal{O}\big( (1+m^2)^{-1}\big) \\
\| \x^{-s}\, Q_{-|m-\alpha|}(\lambda,r) \, Q_{-|m-\alpha|}(\lambda,t) \, \x^{-s} \|_{\hs((1,\infty)^2, r dr)}  & = \mathcal{O}\big(4^{-|m-\alpha|} \, (1+|m|)^{-1} \Gamma^2(|m-\alpha|)\big) \ \\
\| \x^{-s}\,  Q_{\pm|m-\alpha|}(\lambda,r) \, Q_{\mp|m-\alpha|}(\lambda,t)\, \x^{-s} \|_{\hs((1,\infty)^2, r dr)} & =  \mathcal{O}\big( (1+|m|)^{-1}\big)
\end{aligned}
\end{equation}
for all $s>5/2$ and $\lambda$ small enough.

\medskip

\noindent In order to describe the asymptotic behaviour of $R_{m,0}(\lambda; r,t )$ in the region $r < t\leq 1,$ it will be useful to introduce the operators $k_m(\lambda)$ and $\ell_m(\lambda)$ with kernels
\begin{equation}\label{k-kappa}
\begin{aligned}
k_m(\lambda;r,t) & = \frac{ \Gamma(\frac 12+|m|-\frac{m\alpha}{\sqrt{\alpha^2-\lambda}})}{2\pi\, \Gamma(1+2 |m|)} \,\ v_m(\lambda,r) \, v_m(\lambda,t) \, , \\
\ell_m(\lambda;r,t) & = \frac{ \Gamma(\frac 12+|m|-\frac{m\alpha}{\sqrt{\alpha^2-\lambda}})}{2\pi\, \Gamma(1+2 |m|)} \,\  v_m(\lambda,r) \, u_m(\lambda,t) \, .
\end{aligned}
\end{equation}
where $v_m$ is given by \eqref{vm-tot}, and 
\begin{align*}
u_m(\lambda, t) &=  e^{-t\, \sqrt{\alpha^2-\lambda}}\, \big(2\, t \, \sqrt{\alpha^2-\lambda}\, \big)^{|m|}\, U\Big(\frac 12+|m|-\frac{m\alpha}{\sqrt{\alpha^2-\lambda}}, 1+2 |m|, 2 t\sqrt{\alpha^2-\lambda} \Big) \, .
\end{align*}
Moreover, let $k'_m(\lambda)$ and $\ell'_m(\lambda)$ be the operators with kernels $\partial_\lambda k_m(\lambda;r,t)$ and $\partial_\lambda \ell_m(\lambda;r,t)$ respectivelly. 

\begin{lem}  \label{lem-kl}
There exists $\delta>0$ such that 
\begin{align} 
\sup_{|\lambda| \leq \delta}\, \sup_{m\in\Z} \big(\,  \beta_m \|k_m(\lambda)\|_{\hs((0,1)^2, r dr)}+\|\ell_m(\lambda)\|_{\hs((0,1)^2, r dr)} \big)   & < \infty,  \label{sup-kl} \\
 \sup_{m\in\Z} \big(\, \beta_m \|k'_m(0)\|_{\hs((0,1)^2, r dr)}+\|\ell'_m(0)\|_{\hs((0,1)^2, r dr)} \big)   & < \infty  \label{sup-kl-der}, 
\end{align} 
with $\beta_m$ defined in \eqref{delta-m}. 
Moreover, as $\lambda\to 0$
\begin{equation} \label{kl-expand}
\begin{aligned} 
\beta_m \, k_m(\lambda) & = \beta_m \,k_m(0) +\lambda \,  \beta_m \, k'_m(0) + \mathcal{O}_{\Lp^2(0,1)}(\lambda^2) \\
\ell_m(\lambda) & = \ell_m(0) +\lambda \,  \ell'_m(0) + \mathcal{O}_{\Lp^2(0,1)}(\lambda^2)
\end{aligned}
\end{equation} 
in $\Lp^2(0,1)$, with the remainder terms uniform in $m$. 
\end{lem}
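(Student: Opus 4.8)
The plan is to exploit the product structure of the kernels \eqref{k-kappa}: $k_m(\lambda)$ is a rank-one operator, while $\ell_m(\lambda)$ carries a Green's-function structure in which the regular solution $v_m$ sits at the smaller radial argument and the singular solution $u_m$ at the larger one. Writing $c_m(\lambda) = \frac{\Gamma(\frac12+|m|-\frac{m\alpha}{\sqrt{\alpha^2-\lambda}})}{2\pi\,\Gamma(1+2|m|)}$ for the common prefactor, the Hilbert--Schmidt norms on $\Lp^2((0,1),r\,dr)$ then reduce to weighted $\Lp^2$-integrals:
\[
\|k_m(\lambda)\|_{\hs((0,1)^2, rdr)} = |c_m(\lambda)|\;\|v_m(\lambda,\cdot)\|_{\Lp^2((0,1),rdr)}^2 ,
\]
with an analogous pairing for $\ell_m(\lambda)$ where $|c_m(\lambda)|\,u_m$ is always evaluated at the larger argument. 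Thus everything is reduced to pointwise and $\Lp^2$ bounds on $v_m$, $u_m$ and the scalars $c_m,\beta_m$, all of which must be made uniform in $m$.

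First I would control the $v_m$-side. The bounds \eqref{vm-estim-1}--\eqref{vm-estim-2} proved within Lemma \ref{lem-pm12} give $|\partial_\lambda^{(j)} v_m(\lambda,r)| \lesssim (1+|m|^j)(2\alpha r)^{|m|}$ for $j=0,1,2$ uniformly for $|\lambda|\le\delta$, so $\|v_m(\lambda,\cdot)\|^2_{\Lp^2((0,1),rdr)} \lesssim (2\alpha)^{2|m|}/(|m|+1)$. For $u_m$ I would use the small-argument asymptotics of the Kummer function $U$: since $U(a,1+2|m|,z)\sim \frac{\Gamma(2|m|)}{\Gamma(a)}\,z^{-2|m|}$ as $z\to0$, the potentially dangerous factor $\Gamma(a)$ in $c_m$ cancels against the $1/\Gamma(a)$ in $U$, and the identity $\Gamma(2|m|)/\Gamma(1+2|m|)=1/(2|m|)$ tames the rest. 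Consequently $c_m(\lambda)\,u_m(\lambda,t)$ behaves like $\frac{(2\sqrt{\alpha^2-\lambda})^{-|m|}}{4\pi|m|}\,t^{-|m|}$ near $t=0$; pairing this singular factor at the larger argument with $v_m\sim r^{|m|}$ at the smaller one produces an integrand $\lesssim t^3$ near the origin, yielding a finite and $m$-uniformly bounded Hilbert--Schmidt norm for $\ell_m$.

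The factor $\beta_m = \gamma_m\bigl(b_m' + |m-\alpha|\,b_m\bigr)$ is handled using $\gamma_m \lesssim (1+|m|)^{-1}(2\alpha)^{-|m|}$ from \eqref{gamma-upperb} together with the large-parameter asymptotics of $b_m=u_m(0,1)$ and $b_m'=u_m'(0,1)$; by the behaviour $U(a,1+2|m|,z)\sim\frac{\Gamma(2|m|)}{\Gamma(a)}z^{1-(1+2|m|)}$ as the second parameter grows (for fixed $z=2\alpha$ and with $a$ depending on the sign of $m$), one gets $b_m,b_m'$ of size $\Gamma(2|m|)(2\alpha)^{-|m|}$ up to polynomial factors. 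Combining these with the size of $c_m$ shows $\beta_m\,c_m\,(2\alpha)^{2|m|}/(|m|+1)=\mathcal{O}(|m|^{-2})$, so $\beta_m\|k_m(\lambda)\|_{\hs}$ is not merely bounded but decays in $|m|$, and the cross term $\|\ell_m\|_{\hs}$ is estimated identically. This gives \eqref{sup-kl}; differentiating the kernels once in $\lambda$ (which now also hits $c_m$ through $a=\frac12+|m|-\frac{m\alpha}{\sqrt{\alpha^2-\lambda}}$) and repeating the estimates yields \eqref{sup-kl-der}.

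Finally, for the expansions \eqref{kl-expand} I would Taylor-expand $\lambda\mapsto\beta_m k_m(\lambda)$ and $\lambda\mapsto\ell_m(\lambda)$ to first order about $\lambda=0$ and bound the second-order remainder in $\Lp^2(0,1)$ by the uniform $\partial_\lambda^2$-estimates obtained above (using $\partial_\lambda^2 v_m$ from \eqref{vm-estim-1}, the analogous bound for $\partial_\lambda^2 u_m$, and the analyticity of $\lambda\mapsto c_m(\lambda)$ for $|\lambda|\le\delta<\alpha^2$, away from the branch point of $\sqrt{\alpha^2-\lambda}$). The main obstacle throughout is precisely this uniformity in $m$: it rests on the exact cancellation of the $\Gamma(a)$-factors and on sharp small-argument and large-parameter asymptotics of $U(a,1+2|m|,z)$, which must be carried out carefully enough that every implicit constant is independent of $m$.
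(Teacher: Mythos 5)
Your proposal follows essentially the same route as the paper's proof: reduce the Hilbert--Schmidt norms to weighted $\Lp^2$ integrals of $v_m$ and $u_m$ via the product structure of the kernels in \eqref{k-kappa}, control $u_m$ and its $\lambda$-derivatives uniformly in $m$ through the bound $\Gamma(a)\,|U(a,b,z)|\le e^z z^{1-b}\Gamma(b-1)$ together with the derivative identity for $U$ and the known bounds on $v_m$, $\gamma_m$, $\beta_m$, and then obtain \eqref{kl-expand} by a first-order Taylor expansion with uniform second-derivative remainders. The only step the paper makes explicit that you leave implicit is that differentiating the prefactor through $a=\tfrac12+|m|-\frac{m\alpha}{\sqrt{\alpha^2-\lambda}}$ produces digamma factors $\psi_0(|m|-m+1/2)=\mathcal{O}(\log |m|)$, whose merely logarithmic growth is what keeps \eqref{sup-kl-der} and the remainder bounds uniform in $m$.
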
 

\begin{proof} 
We will need the bounds, cf.~\cite[Eq.(5.31)]{ko} and \cite[Lemma~A.1]{ko}, 
\begin{equation} \label{beta-upperb}
\beta_m \   \lesssim\  \frac{(2\alpha)^{-2|m|}\, \Gamma(2|m|)}{(1+|m|)\, \Gamma(\frac 12 +2|m|)} , \qquad 
\Gamma(a)\, \left |\, U(a,b,z) \right| \  \leq  e^z\, z^{1-b}\,   \Gamma(b-1)   \quad  a\geq 1
\end{equation}
Since $\pd_zU(a,b,z) =-a U(a+1,b+1,z)$, see  \cite[Eq.13.4.21]{as}, it follows that 
\begin{equation}\label{um-estim}
  |\partial_\lambda^{(j)} u_m(\lambda,t)|  \ \lesssim \ (2 \alpha)^{-|m|} \, t^{1-j-|m|} \, \frac{ \Gamma(2|m|)}{\Gamma(j+\frac 12 +|m|-m)} \ (1+|m|^j), \qquad j=0,1,2.
\end{equation}
for $|\lambda|$ small enough. 
This  together with \eqref{vm-estim-1} proves \eqref{sup-kl}.  
\begin{align*}
\partial_\lambda k_m(0;r,t) &  =    \frac{ \Gamma(\frac 12+|m|-m)}{2\pi\, \Gamma(1+2 |m|)} \Big [\,  \partial_\lambda (v_m(\lambda,r)\, v_m(\lambda, t))\big |_{\lambda=0} 
 +\frac{m} {2 \alpha^2} \ \psi_0\big(|m|-m+1/2\big)\, v_m(0,r)\, v_m(0,t) \Big ] \\ 
 \partial_\lambda \ell_m(0;r,t) &  =    \frac{ \Gamma(\frac 12+|m|-m)}{2\pi\, \Gamma(1+2 |m|)} \Big [\,  \partial_\lambda (v_m(\lambda,r)\, u_m(\lambda, t))\big |_{\lambda=0} 
 +\frac{m} {2 \alpha^2} \ \psi_0\big(|m|-m+1/2\big)\, v_m(0,r)\, u_m(0,t) \Big ],
\end{align*}  
where
$$
\psi_0(z) = \int_0^\infty \big(e^{-t} -(1+t)^{-z} \big) \frac{dt}{t}   
$$
is the digamma function. From \eqref{vm-estim-1} and  \eqref{um-estim} we deduce that for any $j=0,1,2$, 
\begin{align}
\int_0^1  |\partial_\lambda^{(j)} v_m(\lambda,r) |^2 \, r\, dr  \   & \lesssim\ (1+m^2)^{j-1}\, |2\alpha|^{2|m|} \ \label{v-integ-estim}  \\
\int_0^1\int_0^1  |\partial_\lambda^{(j)} (v_m(\lambda,r)\, v_m(\lambda, t))  |^2 \, r t\, dr  dt \  & \lesssim\  (1+m^2)^{j-1}\, |2\alpha|^{4|m|}   \label{vv-integ-estim} \\
\int_0^1\int_r^1  |\partial_\lambda^{(j)} (v_m(\lambda,r)\, u_m(\lambda, t))  |^2 \, t\, dt  \, r\, dr  \  & \lesssim\    \frac{ (1+m^2)^{j-1}\,\, \Gamma^2(2 |m|)}{ \Gamma^2(\frac 12+|m|-m) }  \label{vu-integ-estim}
\end{align} 
uniformly in $\lambda$ for $|\lambda|$ small enough. Since $\psi_0(z) = \mathcal{O}(\log z)$, and $\psi_0'(z) = \mathcal{O}(1)$ as  $z\to\infty$,  this in combination \eqref{beta-upperb} implies \eqref{sup-kl-der} and \eqref{kl-expand}.
 \end{proof}

\section{\bf Operator $R_0(\lambda)$; proof of Proposition \ref{prop-exp}} 
\label{sec-app-b} 
The goal of this section is to give a proof of Proposition \ref{prop-exp}. Hence we suppose that $\alpha\not\in\Z$ and denote
\begin{align*}
X(\lambda) &=  R_0(\lambda) -G_0 - \lambda^{\alp} G_1^\lambda - \lambda^{1-\alp} G_2^\lambda 
 - \lambda G_3 -  \lambda^{1+\alp} G_4^\lambda -  \lambda^{1+2\alp}G_5^\lambda   -\lambda^{2-\alp} G_6^\lambda -\lambda^{3-2\alp} G_7^\lambda. 
\end{align*} 
In view of equation \eqref{B0-eq-1},  we have to show that 
\begin{equation} \label{enough-X}
 \|  \x^{-s}\, X(\lambda)\, \x^{-s} \|_{\Lp^2(\R^2)} = \mathcal{O}(\lambda^2)\, ,
\end{equation} 
for $s>3$. The integral kernel of $X(\lambda)$ can be written by the partial wave decomposition as follows
$$
X_m(x,y) = \sum_{m\in\Z} X_m(\lambda;r,t )\, e^{ im (\theta-\theta')}\, ,
$$
To prove \eqref{enough-X} it thus suffices to show that
\begin{equation*} 
\sup_{m\in\Z} \|  \x^{-s}\, X_m(\lambda)\, \x^{-s} \|_{\Lp^2(\R_+, r dr)} = \mathcal{O}(\lambda^2)\, ,
\end{equation*} 
where $X_m(\lambda)$ denotes the operator in $\Lp^2(\R_+, r dr)$ with integral kernel $X_m(\lambda;r,t)$.
This will be done in a series of  Lemmas below, in which we formulate upper bounds on the Hilbert-Schmidt of $X_m$ restricted to various parts of $\R^2$ according to the values of $r$ and $t$ . It is useful to define the functions 
\begin{equation} \label{j-m}
j_m(r) = -2 \sin(\pi |m-\alpha|)\,  \pd_\lambda\!\!\left( \frac{v_m(\lambda, r)}{ \mathscr{P}_{m,1}(\lambda)}\right)\Big|_{\lambda=0}\, \quad \text{if} \quad r<1, \qquad  j_m(r) = \partial_\lambda g_n(0,r) \quad \text{if} \ \ r\geq 1, 
\end{equation}
and $\varrho_m :(0,\infty)\to \R$ by 
\begin{equation} \label{rho-q}
\begin{aligned}
\varrho_m(r)  & = \frac{q_m h_m(r) +4^{-|m-\alpha|}\, j_m(r)}{4 \, \Gamma(1+|m-\alpha|)\, \sin(\pi |m-\alpha|)} \, .
\end{aligned}
\end{equation}

\subsubsection*{\bf Case $1<r<t $} In this region we deduce form  \eqref{gm-new}, \eqref{g-z} and \eqref{sigma-m},  that 
\begin{align} \label{gm3-1} 
G_{m,3}(r,t)  &= \frac{t^{-|m-\alpha|}}{16\pi} \left[ \frac{g_m(r)\, t^2}{ |m-\alpha|((m-\alpha)^2-1)} - \frac{2\delta_m\, r^{2+|m-\alpha|} }{(m-\alpha)^2-1}  +\frac{4\Gamma(|m-\alpha|)\, r^{-|m-\alpha|}\, }{\Gamma(1-|m-\alpha|)} \big( \sigma'_m(0) +4^{|m-\alpha|}\, q_m\big) \right] 
\end{align}
with $\delta_m$ given by \eqref{delta-m}.

\begin{lem} \label{lem-Tm}
Let $T_m(\lambda)$ be the integral operator on $\Lp^2((1,\infty)^2, r dr)$ with integral kernel given by \eqref{tm}. Then 
$$
\big\|\, \x^{-s}\,  T_m(\lambda) \, \x^{-s}\big\|_{\hs((1,\infty)^2, r dr)}\,  = \mathcal{O}(\lambda^2\, (1+|m|)^{-1}),
$$
for all $s> 3$. 
\end{lem}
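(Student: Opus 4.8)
The plan is to estimate the two pieces of $T_m(\lambda)$ in \eqref{tm} separately, reducing everything to the Hilbert--Schmidt bounds \eqref{QQ-bound} for the products $Q_{\pm|m-\alpha|}(\lambda,r)Q_{\pm|m-\alpha|}(\lambda,t)$ and \eqref{JJ} for products of Bessel functions. Throughout I abbreviate $\nu=|m-\alpha|$ and use that $0<\lambda<1$, so $\lambda^{\nu}\le 1$ for every $m$. The two prefactors $\mathcal{O}(q_m^2\,\lambda^{2+\nu})$ and $\mathcal{O}(\epsilon_m\,\lambda^{2+\nu})$ in \eqref{tm} are uniform in $m$ by Lemma \ref{lem-pm12}, so it suffices to bound the Hilbert--Schmidt norms of $\sum_{j=0}^3\Sigma_{m,j}(\lambda)$ and of $\widehat R_m(\lambda)$ and then verify that the surviving powers of $\lambda$ and the $\Gamma$-factors combine to give $\mathcal{O}(\lambda^2(1+|m|)^{-1})$.

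For the first piece I insert $g_m(\lambda,r)=Q_\nu(\lambda,r)+\sigma_m(\lambda)Q_{-\nu}(\lambda,r)$ from \eqref{gm-new} into the kernels \eqref{Sm-def}, expanding each $\Sigma_{m,j}$ into a sum of the three types of $Q$-products controlled by \eqref{QQ-bound}. Using $|\sigma_m(\lambda)|\lesssim\Gamma^{-2}(\nu)$ from \eqref{sigma-m-2} and $|q_m|\lesssim 4^{-\nu}(1+m^2)^{-1}\Gamma^{-2}(\nu)$ from \eqref{cm-upperb}, a direct inspection shows $\|\langle x\rangle^{-s}\Sigma_{m,j}(\lambda)\langle x\rangle^{-s}\|_{\hs((1,\infty)^2, r dr)}\lesssim (1+|m|)^{-1}$ for each $j$. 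Multiplying by $q_m^2\lambda^{2+\nu}$ and using $\lambda^{\nu}\le 1$ together with \eqref{cm-upperb} again leaves a factor $q_m^2\lesssim 4^{-2\nu}(1+m^2)^{-2}\Gamma^{-4}(\nu)$, so this contribution is in fact $\mathcal{O}(\lambda^2(1+|m|)^{-3})$, well within the claim.

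The main work is the Bessel piece $\widehat R_m(\lambda)$. Writing $Y_\nu$ through $J_{\pm\nu}$ by \eqref{eq-jy}, I expand $\widehat R_m$ into a linear combination of terms $J_{\pm\nu}(\sqrt\lambda r)J_{\pm\nu}(\sqrt\lambda t)$ whose coefficients are built from $\cot(\nu\pi)$ and $\csc(\nu\pi)$. The dominant term is $J_{-\nu}(\sqrt\lambda r)J_{-\nu}(\sqrt\lambda t)$ with coefficient of size $\csc^2(\nu\pi)$, whose Hilbert--Schmidt norm is $\lesssim\lambda^{-\nu}\Gamma^2(\nu)(1+|m|)^{-1}$ by the last line of \eqref{JJ}. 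The crucial observation is that $\nu=|m-\alpha|$ with $\alpha\notin\Z$ forces $|\sin(\nu\pi)|=|\sin(\pi\alp)|$ to be a positive constant independent of $m$, so by the reflection formula \eqref{g-z} one has $|\csc(\nu\pi)|=\pi^{-1}\Gamma(\nu)\Gamma(1-\nu)$ with $|\Gamma(\nu)\Gamma(1-\nu)|=\pi/|\sin(\pi\alp)|$ bounded, and likewise $\cot(\nu\pi)$ is bounded. Multiplying the dominant term by the prefactor $\epsilon_m\lambda^{2+\nu}$, the power $\lambda^{-\nu}$ cancels against $\lambda^{\nu}$, and writing $\epsilon_m=\Gamma(1-\nu)/(\nu\Gamma(\nu))$ the $\Gamma$-factors collapse via \eqref{g-z} to a constant times $(\nu(1+|m|))^{-1}\lambda^2$; since $\nu\ge\mu>0$ for all $m$ and $\nu\sim|m|$ as $|m|\to\infty$, this is $\mathcal{O}(\lambda^2(1+|m|)^{-1})$. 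The remaining cross terms $J_{\pm\nu}J_{\mp\nu}$ and $J_\nu J_\nu$ carry bounded coefficients and, after multiplication by $\epsilon_m\lambda^{2+\nu}$ (which decays rapidly in $\nu$, since $\epsilon_m\sim\Gamma^{-2}(\nu)$ for large $\nu$), contribute strictly lower order.

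The one genuine obstacle is maintaining uniformity in $m$: naively the $J_{-\nu}J_{-\nu}$ term carries a dangerous $\lambda^{-\nu}\Gamma^2(\nu)$ growth, and its coefficient a further $\csc^2(\nu\pi)$. Both are tamed simultaneously by the identity $\Gamma(\nu)\Gamma(1-\nu)=\pi/\sin(\pi\nu)$ and the constancy of $|\sin(\pi|m-\alpha|)|$, which is exactly where the hypothesis $\alpha\notin\Z$ enters. Once this cancellation is secured, the claimed bound follows, and the subsequent summation over $m$ needed for \eqref{enough-X} on $(1,\infty)^2$ is routine.
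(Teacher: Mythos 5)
Your argument is correct and follows essentially the same route as the paper: split $T_m(\lambda)$ into the $\sum_{j}\Sigma_{m,j}(\lambda)$ part and the $\widehat R_m(\lambda)$ part, bound their weighted Hilbert--Schmidt norms via \eqref{QQ-bound} and \eqref{JJ} together with Lemma \ref{lem-pm12}, and use the reflection formula \eqref{g-z} to see that $\epsilon_m\,\Gamma^2(|m-\alpha|)\,\lambda^{-|m-\alpha|}\cdot\lambda^{|m-\alpha|}=\mathcal{O}\big((1+|m|)^{0}\big)$ uniformly in $m$. The paper records exactly these three estimates, only more tersely, so your write-up is just a fully expanded version of the same proof.
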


\begin{proof}
From the definition of the operators $\Sigma_{m,j}(\lambda)$, see Section \eqref{ssec-G-nint}, and from equations \eqref{gm-new} and \eqref{QQ-bound} it follows that 
$$
\sum_{j=0}^3 \big\|\, \x^{-s}\,  \Sigma_{m,j}(\lambda) \, \x^{-s}\big\|_{\hs((1,\infty)^2, r dr)} =  \mathcal{O}\big( 4^{-|m-\alpha|} \big) +  \mathcal{O}\big( (1+|m|)^{-1}\big)\, .
$$
Here we have used also Lemma \ref{lem-pm12} to estimate $\sigma_m(\lambda)$ and $q_m$. On the other hand, by \eqref{rm-hat}, \eqref{eq-jy}  and \eqref{JJ} 
\begin{equation} \label{Rhat-estim}
 \big\|\, \x^{-s}\,  \widehat R_m(\lambda)  \, \x^{-s}\big\|_{\hs((1,\infty)^2, r dr)} =  \mathcal{O}\big(\lambda^{-|m-\alpha|} \, (1+|m|)^{-1} \Gamma^2(|m-\alpha|)\big) 
\end{equation}
Since $\epsilon_m=  \mathcal{O}\big(\Gamma^{-2}(|m-\alpha|)\big)$, see \eqref{pm12-estim} and \eqref{g-z}, the above estimates in combination with \eqref{tm} imply the claim. 
\end{proof}

Note also that in vieew of \eqref{gm3-1} and Lemma \ref{lem-pm12} 
\begin{equation} \label{hs-gm3-1}
 \big\|\, \x^{-s}\, G_{m,3} \, \x^{-s}\big\|_{\hs((1,\infty)^2, r dr)}  =  \mathcal{O}((1+|m|)^{-1}) \qquad \forall\, s>3. 
\end{equation}

\begin{lem} \label{lem-Zm-1}
Let $s>3$. Then
$$
\sup_{m\in\Z} \big\|\, \x^{-s}\, X_{m}(\lambda)  \, \x^{-s}\big\|_{\hs((1,\infty)^2, r dr)}  =  \mathcal{O}(\lambda^2)\, .
 $$
\end{lem}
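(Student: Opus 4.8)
The plan is to start from the closed-form expansion \eqref{rm0-full} of $R_{m,0}(\lambda)$, valid for $1<r<t$ (the case $r>t$ being identical after interchanging $r,t$, since all the kernels involved are symmetric), and to subtract off, channel by channel, exactly those among $G_{m,0},\lambda^\alp G_{m,1}^\lambda,\dots,\lambda^{3-2\alp}G_{m,7}^\lambda$ that carry the angular factor $e^{im(\theta-\theta')}$. By the definitions \eqref{G1}, \eqref{G2}, \eqref{g4}, \eqref{g5}, \eqref{g6}, \eqref{g7}, for a given $m$ the only surviving subtracted terms are $G_{m,0}$, $\lambda\,G_{m,3}$, and --- only when $m\in\{n-1,n,n+1,n+2\}$ --- the corresponding fractional-power pieces. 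The remainder $T_m(\lambda)$ in \eqref{rm0-full} is already $\mathcal{O}(\lambda^2)$ in the weighted Hilbert--Schmidt norm by Lemma \ref{lem-Tm}, so the task reduces to controlling $R_{m,0}(\lambda)-T_m(\lambda)$ minus the explicit operators.

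First I would treat the analytic-in-$\lambda$ part $\frac{\Sigma_{m,0}(\lambda)+\lambda\,\Sigma_{m,2}(\lambda)}{4\sin(|m-\alpha|\pi)}$. Since $\Sigma_{m,0}(\lambda)=g_m(\lambda,r)\,Q_{-|m-\alpha|}(\lambda,t)$ and $\Sigma_{m,2}(\lambda)$ are power series in $\lambda$ (as noted after \eqref{rm0-full}), I Taylor-expand the smooth factors $g_m(\lambda,\cdot)$ and $Q_{-|m-\alpha|}(\lambda,\cdot)$ to first order. The zeroth-order coefficient reproduces $G_{m,0}$ exactly, through \eqref{g_m(0,r)}, while the first-order coefficient is precisely the kernel \eqref{gm3-1} defining $G_{m,3}$; the quadratic remainder is estimated in weighted Hilbert--Schmidt norm using \eqref{QQ-bound} together with the derivative bounds on $\sigma_m$ and $q_m$ in Lemma \ref{lem-pm12}, which carry factors decaying in $m$.

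Next I would handle the two genuinely fractional terms of \eqref{rm0-full}, those built from $g_m(\lambda,r)g_m(\lambda,t)$ and from $g_m(\lambda,r)Q_{-|m-\alpha|}(\lambda,t)+g_m(\lambda,t)Q_{-|m-\alpha|}(\lambda,r)$. For $m\notin\{n-1,n,n+1,n+2\}$ one has $|m-\alpha|>2$, so both terms are $\mathcal{O}(\lambda^2)$, with the required uniformity again following from \eqref{QQ-bound} and Lemma \ref{lem-pm12}. For each of the four exceptional channels I keep the resonant denominator $1-p_m(\lambda)\lambda^{|m-\alpha|}$ intact and expand only the smooth factors: writing $g_m(\lambda,\cdot)=g_m(0,\cdot)+\lambda\,\partial_\lambda g_m(0,\cdot)+\mathcal{O}(\lambda^2)$ and $p_m(\lambda)=p_m(0)+\mathcal{O}(\lambda)$, the leading term reproduces $\lambda^{\alp}G_{m,1}^\lambda$ respectively $\lambda^{1-\alp}G_{m,2}^\lambda$ (the normalizations matching through \eqref{zeta}, \eqref{g-z} and \eqref{g_m(0,r)}), the first correction together with the $q_m$-term reproduces the $\lambda^{1+\alp}$ contributions in $G_{m,4}^\lambda$ and $G_{m,5}^\lambda$ (and symmetrically $G_{m,6}^\lambda,G_{m,7}^\lambda$ for $m=n+1,n+2$), and everything of total order $\geq 2$ is discarded into the $\mathcal{O}(\lambda^2)$ error. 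Here the function $\varrho_m$ of \eqref{rho-q} appears exactly as the prescribed combination of $\partial_\lambda g_m(0,\cdot)$ and $q_m h_m$.

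The main obstacle is bookkeeping rather than analysis: for the four exceptional channels one must check that every power $\lambda^\gamma$ with $\gamma<2$ produced by expanding $g_m(\lambda,\cdot)$, $Q_{-|m-\alpha|}(\lambda,\cdot)$, $p_m(\lambda)$ and the two nested geometric factors is cancelled by one of the explicitly subtracted $G_{m,j}^\lambda$, with the constants matching. The secondary difficulty is obtaining the remainder estimates \emph{uniformly} in $m$, which is exactly what the Appendix~A bounds --- \eqref{QQ-bound}, Lemma \ref{lem-pm12}, and the Hilbert--Schmidt estimate \eqref{hs-gm3-1} --- are designed to deliver, since each supplies a factor decaying in $|m|$ (through $(1+|m|)^{-1}$ or ratios of $\Gamma$-functions) that renders the supremum over $m$ finite.
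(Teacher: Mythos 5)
Your proposal is correct and follows essentially the same route as the paper: start from \eqref{rm0-full}, dispose of $T_m(\lambda)$ via Lemma \ref{lem-Tm}, Taylor-expand the power-series parts $\Sigma_{m,0}(\lambda)$ and $\Sigma_{m,2}(\lambda)$ to identify $G_{m,0}$ and $\lambda G_{m,3}$, and control the fractional pieces by expanding $g_m(\lambda,\cdot)$ about $\lambda=0$ while keeping the resonant denominators, with uniformity in $m$ supplied by \eqref{QQ-bound} and Lemma \ref{lem-pm12}. The paper's proof states these same estimates more tersely; your channel-by-channel bookkeeping simply makes explicit what the paper leaves implicit.
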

\begin{proof} 
The functions $Q_{\pm\nu}(\lambda, \cdot )$  are power series in $\lambda$. With the help of Lemma \ref{lem-pm12} and equations \eqref{gm-new} and \eqref{QQ-bound} it is thus straightforward to verify that 
$$
\big\|\, \x^{-s}\,\big( \Sigma_{m,0}(\lambda)- \Sigma_{m,0}(0) -\lambda \pd_\lambda \Sigma_{m,0}(0)\big) \, \x^{-s}\big\|_{\hs((1,\infty)^2, r dr)}  =  \mathcal{O}(\lambda^2), 
$$
and
$$
\big\|\, \x^{-s}\,\big( \Sigma_{m,2}(\lambda)- \Sigma_{m,2}(0) \big) \, \x^{-s}\big\|_{\hs((1,\infty)^2, r dr)}  =  \mathcal{O}(\lambda). 
$$
Similarly it follows that 
\begin{align}
 \big\|\, \x^{-s}\, g_m(\lambda,r) g_m(\lambda,t)  \, \x^{-s}\big\|_{\hs((1,\infty)^2, r dr)}  & =   \mathcal{O}\big( (1+|m|)^{-1}\big),\label{sup-m} \\
 \big\|\, \x^{-s}\,  (g_m(\lambda,r) g_m(\lambda,t) -  g_m(0,r) g_m(0,t)) \, \x^{-s}\big\|_{\hs((1,\infty)^2, r dr)}  & = \mathcal{O}(\lambda)\, , \label{g-g0}
\end{align} 
Now the claim follows from \eqref{rm0-full} and Lemma \ref{lem-Tm}.
\end{proof}

\subsubsection*{\bf Case $r <1<t $}  In this region we have 
\begin{equation}\label{r-kernel-2}
R_{m,0}(\lambda; r,t ) = \frac{ v_m(\lambda,r) \big ( J_{|m-\alpha|}(\sqrt{\lambda}\, t ) +i \,Y_{|m-\alpha|}(\sqrt{\lambda}\, t )\big)}{2\pi (B_m(\lambda)\ -i A_m(\lambda))}\, ,
\end{equation} 
see \cite[Sec.5]{ko}. By \eqref{am-1} and \eqref{abm-w} 
\begin{equation*}  
\frac{1}{B_m(\lambda)\ -i A_m(\lambda)} = \frac{ 2^{-|m-\alpha|}  \lambda^{\frac{|m-\alpha|}{2}}\, \sin(\pi  |m-\alpha|)}{i\, \mathscr{P}_{m,1}(\lambda) \mathscr{V}_m(\lambda)} \, .
\end{equation*} 
Hence \eqref{r-kernel-2} can be rewritten as
\begin{align} \label{case2-1}
R_{m,0}(\lambda; r,t ) & = \frac{v_m(\lambda, r)}{2\pi \mathscr{P}_{m,1}(\lambda) \mathscr{V}_m(\lambda)} \Big[ 4^{-|m-\alpha|}  \lambda^{|m-\alpha|} \, Q_{|m-\alpha|}(\lambda,t) -Q_{-|m-\alpha|}(\lambda,t)\Big]\, .
\end{align}
By  \eqref{QQ-bound} and \eqref{pm1-zero}
\begin{equation} \label{kov15-aux}
\begin{aligned}
 \big\|\, \x^{-s}\, v_m(\lambda, r) \, 4^{-|m-\alpha|}  \lambda^{|m-\alpha|} \, Q_{|m-\alpha|}(\lambda,t)   \, \x^{-s}\big\|_{\hs((0,1)\times(1,\infty), r dr)} & =  \mathcal{O}(1) \\
\frac{1}{ \mathscr{P}_{m,1}(\lambda)}\  \big\|\, \x^{-s}\, v_m(\lambda, r) \, Q_{-|m-\alpha|}(\lambda,t)   \, \x^{-s}\big\|_{\hs((0,1)\times(1,\infty), r dr)} &  =  \mathcal{O}(1) 
\end{aligned} 
\end{equation} 
hold for all $s>5/2$ uniformly in $m$. 
Recall also that $\varrho_m$ is defined as a linear combination of $j_m$ and $h_m$ in \eqref{rho-q}. Using \eqref{kov15-aux} together with \eqref{sup-m} and \eqref{g-g0} we then infer from \eqref{wm-exp} that 
\begin{equation} \label{case2-2}
\begin{aligned}
R_{m,0}(\lambda; r,t ) & = G_{m,0}(r,t) + \lambda G_{m,3}(r,t) +  \frac{ \zeta^{-1}(|m-\alpha|)\, \lambda^{|m-\alpha|} \, g_m(r)\, g_m(t)}{16\pi^2 (m-\alpha)^2\,(1-p_m(\lambda)  \lambda^{|m-\alpha|})}  
\\ & \quad 
- \frac{ e^{-i |m-\alpha| \pi}\, \lambda^{1+|m-\alpha|}}{1-p_m(\lambda)  \lambda^{|m-\alpha|}} \, \big[ g_m(r) \,	\varrho_m(t) + g_m(t)\,  \varrho_m(r) \big]  
\\[4pt] & \quad 
- \frac{q_m \, e^{-i |m-\alpha| \pi}\,  \zeta^{-1}(|m-\alpha|)\, \lambda^{1+2|m-\alpha|}  }{16\pi^2 (m-\alpha)^2 (1-p_m(\lambda)  \lambda^{|m-\alpha|})^2}\, g_m(r) g_m(t) + \widetilde X_{m}(\lambda; r,t )\, 
\end{aligned}
\end{equation}
where the kernel $G_{m,3}$ reads 
\begin{equation} \label{gm3-2} 
G_{m,3}(r,t )  = \gamma_m \, 2^{-2|m-\alpha|-2} \left(\frac{t^{2-|m-\alpha|}}{4(1+|m-\alpha|)}\, v_m(0,r) +t^{-|m-\alpha|}\, \partial_\lambda v_m(0,r) \right),
\end{equation} 
and where $ \widetilde X_m(\lambda; r,t )$  is an inegral kernel of an operator $\widetilde X_m(\lambda)$ which satisfies 
\begin{equation}  \label{z-tilde}
\sup_{m\in\Z} \big\|\, \x^{-s}\,  \widetilde X_m(\lambda)  \, \x^{-s}\big\|_{\hs((0,1)\times(1,\infty), r dr)}  = \mathcal{O}(\lambda^2)\, .
\end{equation}
Moreover, equation \eqref{gm3-2} in combination with estimates \eqref{vm-estim-1} and \eqref{gamma-upperb} implies that 
 \begin{equation} \label{hs-gm3-2}
 \big\|\, \x^{-s}\, G_{m,3} \, \x^{-s}\big\|_{\hs((0,1)\times(1,\infty), r dr)}  =  \mathcal{O}(4^{-|m-\alpha|} (1+|m|)^{-1}) \qquad \forall\, s>3. 
\end{equation} 

\smallskip

\begin{lem} \label{lem-Zm-2} 
Let $s>3$. Then
$$
\sup_{m\in\Z} \big\|\, \x^{-s}\, X_{m}(\lambda)  \, \x^{-s}\big\|_{\hs((0,1)\times(1,\infty), r dr)}  =  \mathcal{O}(\lambda^2)\, 
 $$
\end{lem}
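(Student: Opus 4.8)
The plan is to follow the same scheme as in the proof of Lemma \ref{lem-Zm-1}, now working from the representation \eqref{case2-2} valid in the region $r<1<t$ in place of \eqref{rm0-full}. Recalling the definition of $X_m(\lambda)$ at the beginning of this appendix, for each fixed channel $m$ the kernel $X_m(\lambda;r,t)$ is obtained from $R_{m,0}(\lambda;r,t)$ by subtracting the channel-$m$ contributions of $G_0$ and $\lambda G_3$ — namely $G_{m,0}(r,t)$ and $\lambda G_{m,3}(r,t)$ — together with the channel-$m$ parts of the singular operators $\lambda^{\alp}G_1^\lambda,\dots,\lambda^{3-2\alp}G_7^\lambda$. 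Since the kernels \eqref{G1}--\eqref{g7} are supported only on the angular momenta $m\in\{n-1,n,n+1,n+2\}$, the argument splits naturally into generic and exceptional channels.

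For a generic channel $m\notin\{n-1,n,n+1,n+2\}$ one has simply $X_m(\lambda)=R_{m,0}(\lambda)-G_{m,0}-\lambda G_{m,3}$, so by \eqref{case2-2} the kernel equals the three fractional-power terms (of orders $\lambda^{|m-\alpha|}$, $\lambda^{1+|m-\alpha|}$ and $\lambda^{1+2|m-\alpha|}$) plus $\widetilde X_m(\lambda)$. Because $\inf_{m\notin\{n-1,\dots,n+2\}}|m-\alpha|>2$, each of these powers is $o(\lambda^2)$, and I would control the Hilbert--Schmidt norms of the accompanying kernels uniformly in $m$: the pieces $g_m(r)g_m(t)$ and $g_m(r)\varrho_m(t)$ are estimated by \eqref{kov15-aux} and \eqref{sup-m}, while the coefficients $\zeta^{-1}(|m-\alpha|)$, $p_m(\lambda)$, $q_m$ and the Gamma-factors are absorbed through Lemma \ref{lem-pm12}; combined with \eqref{z-tilde} and \eqref{hs-gm3-2} this yields the uniform $\mathcal O(\lambda^2)$ bound on this family.

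For the exceptional channels the role of $G_1^\lambda,\dots,G_7^\lambda$ is precisely to cancel the fractional terms of \eqref{case2-2}, and I would carry out the bookkeeping channel by channel, using \eqref{g_m(0,r)} to pass between $g_m(0,r)$ and $g_m(r)$ and the definition \eqref{rho-q} of $\varrho_m$ to identify the combination $q_m h_m+4^{-|m-\alpha|}j_m$ appearing in \eqref{case2-2}: in channel $m=n$ the $\lambda^{\alp}$, $\lambda^{1+\alp}$ and $\lambda^{1+2\alp}$ terms match $\lambda^{\alp}G_1^\lambda$, the $g_n\varrho_n$ part of $\lambda^{1+\alp}G_4^\lambda$, and $\lambda^{1+2\alp}G_5^\lambda$; channel $m=n+1$ is matched by $G_2^\lambda$, $G_6^\lambda$, $G_7^\lambda$; and channels $n-1$, $n+2$ by the remaining $g_{n-1}g_{n-1}$ and $g_{n+2}g_{n+2}$ pieces of $G_4^\lambda$ and $G_6^\lambda$. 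The residual discrepancies — arising from replacing $p_m(\lambda)$ by $p_m(0)$ in the subleading denominators of \eqref{g4}--\eqref{g7}, and from $g_m(\lambda,\cdot)-g_m(0,\cdot)$ via \eqref{g-g0} — are of order $\lambda^2$, so what survives is $\widetilde X_m(\lambda)$ plus $\mathcal O(\lambda^2)$, and \eqref{z-tilde} closes the estimate. The main obstacle I anticipate is exactly this matching step: one must check that the kernels $G_j^\lambda$, whose coefficients were fixed from the region $r,t>1$ through \eqref{rm0-full}, reproduce the correct expressions in the cross region $r<1<t$, where $G_{m,3}$ takes the different explicit form \eqref{gm3-2} rather than \eqref{gm3-1}. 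Verifying this cross-region consistency, together with the uniformity in $m$ of all Gamma-factor-weighted Hilbert--Schmidt bounds, is the delicate part of the argument.
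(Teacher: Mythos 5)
Your proposal is correct and follows essentially the same route as the paper: the proof there consists precisely of invoking the representation \eqref{case2-2} together with \eqref{z-tilde}, after recording the uniform Hilbert--Schmidt bounds $\mathcal{O}((1+m^2)^{-1})$ on the symmetrized kernels $g_m(r)j_m(t)+g_m(t)j_m(r)$ and $g_m(r)h_m(t)+g_m(t)h_m(r)$ (the building blocks of $\varrho_m$ via \eqref{rho-q}), which follow from \eqref{um-estim}, \eqref{beta-upperb}, \eqref{vm-estim-1}, \eqref{vm-estim-2}. The cross-region consistency you flag as the delicate point is exactly what the derivation of \eqref{case2-2} preceding the lemma establishes, since $g_m$, $h_m$, $j_m$ and hence $\varrho_m$ are defined globally on $(0,\infty)$, so the kernels $G_j^\lambda$ from Proposition \ref{prop-exp} reappear unchanged there.
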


\begin{proof}
Equations \eqref{um-estim}, \eqref{beta-upperb} together with \eqref{vm-estim-1}, \eqref{vm-estim-2} and the definitions of $h_m$ and $j_m$, see equations \eqref{h_m} and \eqref{j-m}, show that 
\begin{align*}
\big\|\, \x^{-s}\, \big(  g_m(r) \,j_m(t) + g_m(t)\,  j_m(r) \big) \, \x^{-s}\big\|_{\hs((0,1)\times(1,\infty), r dr)} & =  \mathcal{O}((1+m^2)^{-1})\\
\big\|\, \x^{-s}\, \big(  g_m(r) \,h_m(t) + g_m(t)\,  h_m(r) \big) \, \x^{-s}\big\|_{\hs((0,1)\times(1,\infty), r dr)} & =  \mathcal{O}((1+m^2)^{-1})
\end{align*}
The claim now follows from \eqref{case2-2} and \eqref{z-tilde}. 
\end{proof}

\subsubsection*{\bf Case $r <t\leq 1 $} 
We will use the notation 
\begin{equation*}
e_m = \partial_\lambda u_m(\lambda,1) \big |_{\lambda=0} \, , \qquad e'_m = \partial_\lambda u'_m(\lambda,1) \big |_{\lambda=0} \, .
\end{equation*}
Note also that by \eqref{sm-eq2} and \eqref{wronski} 
\begin{equation} \label{ab-wronski}
 a_m \, b'_m\, - a'_m\, b_m\, = \frac{\Gamma(1+2|m|)}{\Gamma(|m|-m+1/2)} \qquad \forall\, m\in\Z.
\end{equation}

\medskip

\noindent By \cite[Sec.5]{ko} we  have 
\begin{equation}\label{r-kernel-3}
R_{m,0}(\lambda; r,t ) = \ell_m(\lambda;r,t) - \frac{C_m(\lambda) \, k_m(\lambda;r,t)}{B_m(\lambda)-i A_m(\lambda)}\, ,
\end{equation} 
where $ \ell_m(\lambda;r,t), k_m(\lambda;r,t)$ are defined in \eqref{k-kappa}, and
\begin{align} 
C_m(\lambda) &=  (u'_m(\lambda,1)-u_m(\lambda,1)\, |m-\alpha| )\, J_{|m-\alpha|}(\sqrt{\lambda}) +\sqrt{\lambda}\ u_m(\lambda,1)\,  J_{|m-\alpha|+1}(\sqrt{\lambda})   \nonumber \\
& \quad  +i \, \big[ (u'_m(\lambda,1)-u_m(\lambda,1)\, |m-\alpha| )\, Y_{|m-\alpha|}(\sqrt{\lambda}) +\sqrt{\lambda}\ u_m(\lambda,1)\,  Y_{|m-\alpha|+1}(\sqrt{\lambda})\, \big ] 
. \label{cm-eq} 
\end{align}
From \eqref{eq-jy} we find the following expansion for $C_m(\lambda)$ ; 
\begin{align*}
C_m(\lambda) & = \frac{ (b'_m +|m-\alpha|\, b_m) \, 2^{|m-\alpha|}\, \lambda^{-\frac{|m-\alpha|}{2}}}{i \sin(|m-\alpha|\, \pi) \Gamma(1-|m-\alpha|)} \Big[ \, 1+ C_m^{(1)}\, \lambda +C_m^{(2)}\, \lambda^{|m-\alpha|} +\mathcal{O}(\epsilon_m \lambda^2) \Big], 
\end{align*} 
where $\epsilon_m$ is given by \eqref{pm12-estim}, and where 
$$
C_m^{(1)} = \frac{e'_m + |m-\alpha|\, e_m}{b'_m +|m-\alpha|\, b_m} +\frac{1}{1+|m-\alpha|}\, , \qquad C_m^{(2)} =  2^{|m-\alpha|}\,  e^{-i |m-\alpha|\pi}\, \epsilon_m\, 
\frac{b'_m - |m-\alpha|\, b_m}{b'_m +|m-\alpha|\, b_m} \, .
$$
This in combination with \eqref{b-ia}, \eqref{r-kernel-3}  and Lemma \ref{lem-kl}  then gives equation \eqref{case2-2} with 
\begin{align} \label{gm3-3} 
G_{m,3}(r,t ) &  = \partial_\lambda \ell_m(0;r,t) - \beta_m \,  \partial_\lambda k_{m}(0;r,t) -C_m^{(1)}\, \beta_m \, k_m(0;r,t)\, ,
\end{align} 
and with 
\begin{equation} \label{zm-3} 
\sup_{m\in\Z} \big\|\, \widetilde X_{m}(\lambda)  \big\|_{\hs((0,1)^2, r dr)}  =  \mathcal{O}(\lambda^2)\,  .
\end{equation}
Moreover, since  $C_m^{(1)} = \mathcal{O}(1)$, from Lemma \ref{lem-kl} we conclude that
 \begin{equation} \label{hs-gm3-3}
\sup_{m\in\Z} \big\|\, G_{m,3} \, \big\|_{\hs((0,1)^2, r dr)}  =  \mathcal{O}(1) \, .
\end{equation} 
Now let $G_3$ be the integral operator with the kernel 
\begin{equation} \label{g3-kernel}
G_3(x,y) = \sum_{m\in\Z} G_{m,3}(r,t )\, e^{ im (\theta-\theta')}\, ,
\end{equation} 
where $ G_{m,3}(r,t )$ is given by \eqref{gm3-1}, \eqref{gm3-2} and \eqref{gm3-3}. Equations \eqref{hs-gm3-1}, \eqref{hs-gm3-2} and  \eqref{hs-gm3-3} thus show that $\G_3 \in 
\B_0(0,s;0,-s)$ for any $s>3$.

\begin{proof}[\bf Proof of Proposition \ref{prop-exp}]  
With the above preparatory results at hand, it remains to collect the estimates which control the remainder terms. 
First we introduce some additional notation. Let 
\begin{equation*} 
\| u\|_{\HH_0^{k,s}} = \|\, \x^s (1+H_0)^{k/2}\, u\|_{\Lp^2(\R^2)}, 
\end{equation*}
and recall that $H_0$ is given by \eqref{H0-def}. Moreover, we denote by
$$
\HH_0^{k,s} = \big\{\, u: \| u\|_{\HH_0^{k,s}} < \infty\big\}
$$
the associated weighted Sobolev space, and by 
$$
\B_0(k,s,k',s') = \B(\HH^{k,s}, \HH^{k',s'}) 
$$
the space of bounded linear operators from $\HH_0^{k,s}$ into $\HH_0^{k',s'}$. 
Lemmata \ref{lem-Zm-1}, \ref{lem-Zm-2}, and equations \eqref{hs-gm3-1}, \eqref{hs-gm3-2}, \eqref{zm-3} imply that expansion 
\eqref{B0-eq-1} holds in $\B(0,s;0,-s)=\B_0(0,s;0,-s)$. From $(1+H_0)R_0(\lambda) = 1+(1+\lambda) R_0(\lambda)$ we infer that \eqref{B0-eq-1} holds in 
$\B_0(-2,s;0,-s)$, and therefore also in $\B_0(-1,s;1,-s)$. Since $A_0$ is bounded,  the Sobolev norms $\|u\|_{\HH_0^{1,s}}$ and $\|u\|_{\HH^{1,s}}$ are equivalent. 
We thus conclude that \eqref{B0-eq-1} holds in $\B(-1,s;1,-s)$. 
\end{proof}

\section{\bf Operator $R_0(\lambda)$; proof of Proposition \ref{prop-exp-int}} 
\label{sec-app-c}
\noindent  When $\alpha\in\Z$, then the integral kernel of $R_0(\lambda)$ is still given by equations \eqref{r0-eq}. We will treat separately the channels $m=\alpha, \ m=\alpha\pm 1$ and $m=\alpha\pm 2$, since these are the only ones which include logarithmic factors of order smaller than $\mathcal{O}(\lambda^2)$.  
In what follows we will calculate the exact form of these terms for each of these values of $m$ and each of the three regions of $\R^2$ considered above. We denote
$$
\tilde\gamma =\gamma -\log 2
$$
where $\gamma$ is the Euler constant. Note that the expression for $J_\nu$ in \eqref{eq-jy} continues to hold even when $\nu=n\in\N$, while for $Y_\nu$ we have to use \cite[Eq.~9.1.11]{as} instead. The latter states that as $z\to 0$,
\begin{equation}\label{yn-eq}
\begin{aligned}  
Y_0(z)  & = \frac 2\pi \big( \log z +\tilde\gamma\big) J_0(z) +\frac{z^2}{2\pi} +  \mathcal{O}(z^4) =  \frac 2\pi \big( \log z +\tilde\gamma\big) \Big (1-\frac{z^2}{4}+\frac{z^4}{64} \Big) +  \mathcal{O}(z^4)   \\[4pt]
Y_n(z)  & = -\frac{2^n}{\pi z^n} \sum_{k=0}^{n-1} \frac{(n-k-1)!\, z^{2k}}{k!\, 4^k}\,  +\frac 2\pi \log(z/2)\, J_n(z) + \frac{z^n}{2^n \pi} \Big[2\gamma -\sum_{k=1}^{n-1} k^{-1}\Big] +  \mathcal{O}(z^{n+2}) \, , \quad n\geq 1 .
\end{aligned} 
\end{equation}
In particular, for $n=1,2$ we have 
\begin{equation} \label{jy1} 
\begin{aligned} 
J_1(z) & = \frac z2 \Big(1-\frac{z^2}{8}\Big) +  \mathcal{O}(z^5),  \qquad Y_1(z)  = -\frac{2}{\pi z} +\frac{z}{2\pi} \big( \log (z^2) -1+2\tilde\gamma \big) -\frac{z^3}{16\pi}\, \log(z^2) +  \mathcal{O}(z^3)\, ,\\[4pt]
J_2(z) & = \frac{z^2}{8} \Big(1-\frac{z^2}{12}\Big) +  \mathcal{O}(z^6), \ \, \quad
Y_2(z)  = -\frac{4}{\pi z^2} -\frac 1\pi +\frac{z^2}{8\pi} \big( \log (4z^2) -2+4\tilde\gamma \big) +  \mathcal{O}(z^4\log z) .
\end{aligned}
\end{equation}

As mentioned above, we will next expand the integral kernel of $R_0(\lambda)$ separately for $m=\alpha, \, m=\alpha\pm 1,\, m=\alpha\pm 2$ and $|m-\alpha| \geq 3$. 
This is done by a series of elemntary, though quite lengthly calculations. Therefore we give full details only in the case  $1<r<t$. 

\subsubsection*{\bf Case $1 <r<t $}  Recall that integral kernel of $R_0(\lambda)$  in this region is given by equation \eqref{r-kernel}. The mot important feature of these calculations is the fact that the term 
 $A_m(\lambda)\,  J_{|m-\alpha|}(\sqrt{\lambda}\, r) +B_m(\lambda)\, Y_{|m-\alpha|}(\sqrt{\lambda}\, r)$ contains no logarithmic factors of order less than $\mathcal{O}(\lambda^2)$.

\noindent Let $\underline{m=\alpha}$. This is the  only channel contributing to the operator $\G_1$.  Let
\begin{equation} \label{z-alpha}
z_\alpha= i\pi -2\tilde\gamma + \frac{2a_\alpha}{a'_\alpha} \,  , 
\end{equation}
and let us adopt the notation 
\begin{equation} \label{b-frak-m}
s_m = \partial_\lambda v_m(\lambda,1) \big |_{\lambda=0} \, , \qquad s'_m = \partial_\lambda v'_m(\lambda,1) \big |_{\lambda=0} \, .
\end{equation}
We need to expand $A_\alpha(\lambda)$ and $B_\alpha(\lambda)$ up to order $\mathcal{O}(\lambda^2)$. From \eqref{eq-ABm}, \eqref{eq-jy} and \eqref{yn-eq},\eqref{jy1} we find 
\begin{align} \label{A-alpha}
A_\alpha(\lambda) &= \frac 2\pi (a_\alpha -a'_\alpha \tilde\gamma) -\frac{a'_\alpha}{\pi}\, \log\lambda  +\frac{E_1}{\pi} \, \lambda\log\lambda +\frac{E_2}{4\pi} \,\lambda^2\log\lambda - \frac{E_3}{2\pi}\, \lambda   + \mathcal{O}(\lambda^2), 
\end{align}
where 
\begin{align*}
E_1 &  =  \frac{a'_\alpha}{4} -s'_\alpha -\frac{a_\alpha}{2}, \qquad E_ 2 = s'_\alpha -\frac{a'_\alpha}{16} +\frac{a_\alpha}{2} -2s_\alpha, \qquad
E_3= a'_\alpha(1-\tilde\gamma) +a_\alpha(2\tilde\gamma-1) +4s_\alpha' \tilde\gamma-4s_\alpha\, .
\end{align*}
Similarly, it holds 
\begin{align} \label{B-alpha}
B_\alpha(\lambda) &= a'_\alpha -E_1 \lambda + \frac{E_2}{4} \, \lambda^2 + o(\lambda^2)\, .
\end{align}
Using \eqref{eq-jy} and \eqref{yn-eq} once more we then infer that the terms proportional to $\log\lambda, \,  \lambda\log\lambda,$ and to $\lambda^2\log\lambda$ in the  $A_\alpha(\lambda)\,  J_0(\sqrt{\lambda}\, r) +B_\alpha(\lambda)\, Y_0(\sqrt{\lambda}\, r) $ cancel out, and 
\begin{align} \label{aj+by} 
A_\alpha(\lambda)\,  J_0(\sqrt{\lambda}\, r) +B_\alpha(\lambda)\, Y_0(\sqrt{\lambda}\, r) & = \frac{2a'_\alpha}{\pi} \, \g_\alpha(r) +\frac{\lambda}{2\pi} \, \hk_\alpha(r) \  + \mathcal{O}(\lambda^2(1+ r^4))\, ,
\end{align} 
with $\g_\alpha(\cdot)$  given by \eqref{g-alpha}, and with 
\begin{equation} \label{h-frak-alpha}
\hk_\alpha(r) = a_\alpha'\, r^2(1-  \g_\alpha(r) ) - 4 E_1 \Big(\tilde\gamma -\g_\alpha(r) + \frac{a_\alpha}{a_\alpha'} \Big)  -E_3 .
\end{equation}
Next we evaluate the denominator of  \eqref{r-kernel}. Using \eqref{A-alpha} and \eqref{B-alpha} we obtain
\begin{equation}  \label{wronski-alpha}
\begin{aligned} 
(B_\alpha(\lambda)-i A_\alpha(\lambda))^{-1} &=  \frac{-i\pi}{a'_\alpha( \log \lambda-z_\alpha)} \, \left[1 + \frac{E_1 \lambda\log\lambda}{a'_\alpha( \log \lambda-z_\alpha) }  -\frac{(E_3+2i \pi E_1)\, \lambda}{2 a'_\alpha( \log \lambda-z_\alpha) }+ \mathcal{O}(\lambda^2) \right] \\[3pt]
&= \frac{-i\pi}{a'_\alpha( \log \lambda-z_\alpha)} \, \left[1 + \frac{E_1  \lambda}{a'_\alpha}  +\frac{\lambda }{2 a'_\alpha( \log \lambda-z_\alpha) }\, \Big( 4E_1\Big (\frac{a_\alpha}{a_\alpha'}-\tilde\gamma\Big) -E_3\Big)+ \mathcal{O}(\lambda^2) \right] \, . 
\end{aligned}
\end{equation} 
Moreover, by \eqref{eq-jy} and \eqref{yn-eq}
\begin{align*}
J_0(\sqrt{\lambda}\, t) +iY_0(\sqrt{\lambda}\, t) &= \Big(1-\frac{\lambda t^2}{4}\Big) \Big[1+ \frac{2i}{\pi} \big(\log (\sqrt{\lambda}\, t) +\tilde\gamma\big)\Big] +\frac{i \lambda t^2}{2\pi} +  \mathcal{O}(\lambda^2 t^4)\\[4pt]
&= \frac{i(  \log \lambda-z_\alpha)}{\pi} \left [ 1 + \frac{2\g_\alpha(t)}{ \log \lambda-z_\alpha} -\frac{t^2\lambda}{4} +\frac{t^2\lambda (1-\g_\alpha(t))}{2( \log \lambda-z_\alpha)}\right] + \mathcal{O}(\lambda^2(1+ t^4)) \, .
\end{align*}
Hence
\begin{equation} \label{jy0-wron}
\frac{a'_\alpha \big(J_0(\sqrt{\lambda}\, t) +iY_0(\sqrt{\lambda}\, t)\big)}{B_\alpha(\lambda)-i A_\alpha(\lambda)} =1+  \frac{2\g_\alpha(t)}{ \log \lambda-z_\alpha}  +\frac{\lambda}{4a'_\alpha} \big(4E_1-a'_\alpha t^2\big) + \frac{\lambda \hk_\alpha(t)}{2a'_\alpha( \log \lambda-z_\alpha)} +\frac{2 a_\alpha' E_4\, \lambda \g_\alpha(t)}{( \log \lambda-z_\alpha)^2}+\mathcal{O}(\lambda^2(1+ t^4))\,  ,
\end{equation}
with 
\begin{equation} \label{e-4}
E_4 = \frac{2 E_1}{a_\alpha'} \Big (\frac{a_\alpha}{a_\alpha'}-\tilde\gamma\Big) -\frac{E_3}{2 a_\alpha'}\, .
\end{equation} 
Inserting the above expansions into equation \eqref{r-kernel} finally yields
\begin{align} \label{m-alpha}
R_{\alpha,0}(\lambda;r,t) & =\G_{\alpha,0}(r,t) +  \frac{\g_\alpha(r)\, \g_\alpha(t)}{\pi(\log\lambda -z_\alpha)} \Big[1+\frac{E_4\, \lambda}{(\log\lambda -z_\alpha)}\Big]
+\lambda\, \G_{\alpha,3}(r,t) +  \frac{\lambda \big[ \g_\alpha(r) \hk_\alpha (t) + \g_\alpha(t) \hk_\alpha (r) \big] }{4\pi a'_\alpha (\log\lambda -z_\alpha)}  
\nonumber \\
& \quad + \mathcal R^{(1)}_{\alpha,0}(\lambda;r,t), 
\end{align}
where we have abbreviated 
\begin{equation}  \label{G-alpha-3-1}
\G_{\alpha,3}(r,t)  = \frac{1}{8 \pi a'_\alpha} \big ( \hk_\alpha(r) +\g_\alpha(r) (4 E_1 -a'_\alpha t^2) \big),
\end{equation} 
and where the remainder term satisfies 
$$
\mathcal R^{(1)}_{\alpha,0}\  = \Big[ A_\alpha(\lambda)\,  J_0(\sqrt{\lambda}\, r) +B_\alpha(\lambda)\, Y_0(\sqrt{\lambda}\, r) \Big] \, \mathcal{O}(\lambda^2(1+ t^4))
+ \frac{ J_0(\sqrt{\lambda}\, t) +iY_0(\sqrt{\lambda}\, t)}{B_\alpha(\lambda)-i A_\alpha(\lambda)}\, \mathcal{O}(\lambda^2(1+ r^4))\, .
$$
From \eqref{JJ} and \eqref{A-alpha} we thus easily deduce that
\begin{equation}  \label{R1-estim-1}
\| \x^{-s}\,\mathcal R^{(1)}_{\alpha,0}\, \x^{-s}\|_{\hs((1,\infty)^2, r dr)}  = \mathcal{O}(\lambda^2)  \qquad \forall\, s> 3.
\end{equation}
$\underline{m= \alpha\pm 1}$ : These are the only two channels contributing to the operator $\G_2$.  
We adopt  the notation introduced in \eqref{b-frak-m} and for the sake of brevity abbreviate 
$$
a_\pm= a_{\alpha\pm 1}, \qquad a'_\pm= a'_{\alpha\pm 1}, \qquad s_\pm = s_{\alpha\pm 1}, \qquad s'_\pm = s'_{\alpha\pm 1}\, .
$$
Using  \eqref{eq-ABm} ,\eqref{yn-eq} and \eqref{jy1} we then find
\begin{equation} \label{ab-lambda-2}
\begin{aligned}
A_{\alpha\pm 1} (\lambda)  & = \frac{1}{\pi\sqrt{\lambda}} \Big[ 2(a_\pm +a'_\pm) + E_\pm \, \lambda + \frac{a_\pm-a'_\pm}{2} \ \lambda \log\lambda\, +F_\pm \,\lambda^2\log\lambda +\mathcal{O}(\lambda^2)\Big] , \\
B_{\alpha\pm 1}(\lambda) & =   \frac 12\, (a'_\pm-a_\pm)\ \sqrt{\lambda} \ - F_\pm\,  \lambda^{3/2} +\mathcal{O}(\lambda^2)\, ,
\end{aligned}
\end{equation}
where
$$
 E_\pm  = \frac 12 (a_\pm+a'_\pm) +\tilde\gamma(a_\pm-a'_\pm) +2(s_\pm+s'_\pm) , \quad F_\pm = \frac{1}{16} \big[ 8(s_\pm-s'_\pm) +a'_\pm -3a_\pm\big] \, .
$$
Hence
\begin{equation} \label{B-i-A} 
B_{\alpha\pm 1}(\lambda) -i A_{\alpha\pm 1}(\lambda) = \frac{-2i}{\gamma_\pm \pi\, \sqrt{\lambda}} \Big[ 1+ \frac{\lambda E_\pm \gamma_\pm}{2} -\frac{\delta_\pm}{4}\, \lambda(\log\lambda-i\pi) +\frac{F_\pm \gamma_\pm }{2} \, \lambda^2 (\log\lambda-i\pi) +\mathcal{O}(\lambda^2) \Big] .
\end{equation}
Recall also that the functions $g_\pm$ are defined by \eqref{short-pm}.
A careful calculation then shows that the terms proportional to $\lambda \log \lambda$ and $\lambda^2\log\lambda$ in $A_{\alpha\pm 1} (\lambda)   J_1(\sqrt{\lambda}\, r) +  B_{\alpha\pm 1}(\lambda)  Y_1(\sqrt{\lambda}\, r)$ cancel out, and 
\begin{equation} \label{aj1+by1}
A_{\alpha\pm 1} (\lambda)   J_1(\sqrt{\lambda}\, r) +  B_{\alpha\pm 1}(\lambda)  Y_1(\sqrt{\lambda}\, r) = \frac{ g_\pm(r) }{\gamma_\pm \pi}  + \lambda\, \hk_\pm(r) + \mathcal{O}(\lambda^2(1+r^3)) , 
\end{equation}
with $\gamma_\pm$ given by \eqref{short-pm}, and with 
\begin{equation} \label{h-frak-pm}
\hk_\pm(r) = \frac{1}{4\pi} \Big[\, 2 r\Big (E_\pm-\frac{r^2}{4 \gamma_\pm}\, \Big) +(a'_\pm-a_\pm)\, (2\log r -1+2\tilde\gamma)\, r +\frac{8 F_\pm}{r} \, \Big] .
\end{equation}
To continue we denote 
\begin{equation}  \label{j-frak-pm} 
\mathfrak{j}_\pm(r) =
  \frac{\gamma_\pm}{4}
\left\{
\begin{array}{l@{\qquad}l}
\ 4\, \partial_\lambda v_{\alpha\pm1}(0,r) & r \leq 1  , \\[3pt]
\ \pi \hk_\pm(r) & 1  < r .
 \end{array}
\right .
\end{equation}
From \eqref{ab-lambda-2} and \eqref{jy1} we thus deduce  that
\begin{align} \label{frac-alpha-pm}
\frac{J_1(\sqrt{\lambda}\, t) +iY_1(\sqrt{\lambda}\, t)}{B_{\alpha\pm 1}(\lambda)-i A_{\alpha\pm 1}(\lambda)} &=\gamma_\pm \Big[\, \frac 1t-\frac{ g_\pm(t)}{4} \,\lambda (\log\lambda -i\pi)- f_\pm(t)\, \lambda  +\frac{\delta_\pm\, g_\pm(t)}{16}\, \big(\lambda(\log\lambda-i\pi)\big)^2 -\frac{\mathfrak{j}_\pm(t)}{16}\,  \lambda^2(\log\lambda-i\pi) 
\Big] \, \nonumber\\ 
& \quad \ + \mathcal{O}(\lambda^2(1+t^3))
\end{align} 
where $\delta_\pm$ is given by \eqref{short-pm}, and where
\begin{equation} \label{f-pm-ipi}
f_\pm(t) = \frac{E_\pm\, \gamma_\pm}{2 t}+ \frac t4 \big(2 \log t -1 +2\tilde\gamma \big).
\end{equation}
This together with \eqref{r-kernel} and \eqref{aj1+by1},\eqref{j-frak-pm}  gives 
\begin{align} \label{m-alpha-pm}
 R_{\alpha\pm 1,0}(\lambda;r,t)  &=G_{\alpha\pm 1,0}(r,t)  -\frac{1}{64\pi}\, g_\pm(r)\, g_\pm(t) \, \lambda( \log\lambda-i\pi) \big[4+ \delta_\pm \, \lambda( \log\lambda-i\pi) \, \big ] +\lambda\, \G_{\alpha\pm 1,3}(r,t) \\
&
\quad -\frac{1}{64\pi}\, \lambda^2(\log\lambda-i\pi)  \, \big[g_\pm(r)\, \mathfrak j_\pm(t)  +g_\pm(t)\, \mathfrak j_\pm(r) \big ]
+\mathcal{R}^{(1)}_{\alpha\pm 1,0}(\lambda;r,t), \nonumber 
\end{align}
with the integral kernel of the linear term given by 
\begin{equation}  \label{G-pm1-3-1}
\G_{\alpha\pm 1,3}(r,t)  =  \frac{1}{4\pi}  \big[g_\pm(r) f_\pm(t) + 2\,   \mathfrak j_\pm(r)\, t^{-1}  \big] \, ,
\end{equation} 
and 
with the remainder term satisfying
\begin{equation}  \label{R1-estim-2}
\| \x^{-s}\,\mathcal{R}^{(1)}_{\alpha\pm 1,0}\, \x^{-s}\|_{\hs((1,\infty)^2, r dr)} = \mathcal{O}(\lambda^2) \qquad s> 3.
\end{equation}
$\underline{m= \alpha\pm 2}$: From \eqref{eq-jy} and \eqref{yn-eq} we derive 
\begin{equation}\label{ABm-alpha+2}
\begin{aligned}
A_{\alpha\pm 2} (\lambda) & = \frac{4}{\pi \lambda} \Big[ 2a_{\alpha\pm 2} +a'_{\alpha\pm 2} +\frac\lambda 4\big( 8\, s_{\alpha\pm 2} +a'_{\alpha\pm 2} \big) +\frac{(2a_{\alpha\pm 2}-a'_{\alpha\pm 2})}{32}\, \lambda^2\log\lambda + \mathcal{O}(\lambda^2)\Big] \\
B_{\alpha\pm 2} (\lambda) & = \frac\lambda 8\, (a'_{\alpha\pm 2}-2a_{\alpha\pm 2}) +\frac{\lambda^2}{8} \Big( 2s'_{\alpha\pm 2}-4s_{\alpha\pm 2} +\frac{4a_{\alpha\pm 2}-a'_{\alpha\pm 2}}{12}\Big) + \mathcal{O}(\lambda^3)\, .
\end{aligned}
\end{equation}
Hence the terms proportional to $\lambda^2\log\lambda$ in $A_{\alpha\pm 2} (\lambda)   J_2(\sqrt{\lambda}\, r) +  B_{\alpha\pm 2}(\lambda)  Y_2(\sqrt{\lambda}\, r) $ cancel out, and we obtain
\begin{equation*} 
A_{\alpha\pm 2} (\lambda)   J_2(\sqrt{\lambda}\, r) +  B_{\alpha\pm 2}(\lambda)  Y_2(\sqrt{\lambda}\, r) = \frac{g_{\alpha\pm 2}(r)}{2\pi \, \gamma_{\alpha\pm 2}}\,  + \hk_{\alpha\pm 2}(r) \, \lambda + \mathcal{O}(\lambda^2(1+r^3) ),
\end{equation*}
with 
$$
\hk_{\alpha\pm 2}(r) = r^2 \Big( \frac{s_{\alpha\pm 2}}{\pi} +\frac{a'_{\alpha\pm 2}}{8\pi} -\frac{r^2}{24\, \gamma_{\alpha\pm 2}} \,  \Big)\, .
$$
We have 
\begin{align} \label{jy2-wrons}
\frac{J_2(\sqrt{\lambda}\, t) +iY_2(\sqrt{\lambda}\, t)}{B_{\alpha\pm 2}(\lambda)-i A_{\alpha\pm 2}(\lambda)} &= -\frac{Y_2(\sqrt{\lambda}\, t)}{ A_{\alpha\pm 2}(\lambda)} +  
 i \pi \gamma_{\alpha\pm 2}\, \frac{g_{\alpha\pm 2}(t)}{32}\, \lambda^2  +
\mathcal{O}(\lambda^2(1+t^3) ) \nonumber \\
&=  \gamma_{\alpha\pm 2} \Big[ t^{-2} +\frac\lambda 4 (1-4t^{-2} ) -\frac{g_{\alpha\pm 2}(t)}{32}\, \lambda^2(\log\lambda -i\pi) \Big] + \mathcal{O}(\lambda^2(1+t^3) ),
\end{align} 
where the remainder term is real-valued. Hence
inserting the above expansions into \eqref{r-kernel} gives
\begin{align} \label{m-alpha-pm-2}
 R_{\alpha\pm 2,0}(\lambda;r,t)  & =G_{\alpha\pm 2,0}(r,t) +\lambda\, \G_{\alpha\pm 2,3}(r,t) - g_\pm(r) g_\pm(t) \frac{\lambda^2( \log\lambda-i\pi)}{256 \pi}
 +\mathcal{R}^{(1)}_{\alpha\pm 2,0}(\lambda;r,t),
\end{align}
where  the kernel $\G_{\alpha\pm 2,3}(r,t) $ is given by 
\begin{equation}  \label{G-pm1-3-2}
\G_{\alpha\pm 2,3}(r,t)  =  \frac{1}{4 t^2} \Big[ \gamma_{\alpha\pm 2}\, \hk_{\alpha\pm 2}(r) + \frac{g_{\alpha\pm 2}(r)}{8\pi} \, (t^2-4) \Big] \, ,
\end{equation} 
and where the remainder term is self-adjoint and satisfies
\begin{equation}  \label{R1-estim-3}
\| \x^{-s}\,\mathcal{R}^{(1)}_{\alpha\pm 2,0}\, \x^{-s}\|_{\hs((1,\infty)^2, r dr)}  = \mathcal{O}(\lambda^2) \qquad s> 3.
\end{equation}
 \underline{$|m-\alpha| \geq 3$} : These channels do not contribute to logarithmic corrections up to order $\mathcal{O}(\lambda^2)$. 
From  \eqref{eq-ABm}, \eqref{eq-jy}, \eqref{yn-eq} and \eqref{b-frak-m} we learn that 
\begin{equation} \label{abm-expand}
\begin{aligned} 
A_m(\lambda) & = \frac{2^{|m-\alpha|}\, (|m-\alpha|-2)!}{\pi\, \lambda^{|m-\alpha|/2}}\ \big (A_m^{(0)} + \lambda\, A_m^{(1)} +  \mathcal{O}(\lambda^2)\big) \\
B_m(\lambda) & = \frac{\lambda^{|m-\alpha|/2}}{|m-\alpha|!\, 2^{|m-\alpha|+1}}\ \big (B_m^{(0)} + \lambda\, B_m^{(1)} +  \mathcal{O}(\lambda^2)\big) ,
\end{aligned}
\end{equation}
where the remainder terms are uniform in $m$, and where 
\begin{align*}
A_m^{(0)} & = (|m-\alpha|\, a_m +a'_m) (|m-\alpha|-1), \quad A_m^{(1)} = \frac 14 \big[ a'_m -a_m +4 (|m-\alpha|-1) (|m-\alpha|\, s_m-s'_m )\big] \\
B_m^{(0)} &=2 (a'_m- |m-\alpha|\, a_m) , \qquad   \qquad  \qquad\  B_m^{(1)} = \frac{|m-\alpha| -2a_m -a'_m}{2(|m-\alpha|+1)} -2 (|m-\alpha|\, s_m-s'_m) 
\end{align*}
Hence 
\begin{align*}
& A_m(\lambda)\, J_{|m-\alpha|}(\sqrt{\lambda}\, r) + B_m(\lambda)\, Y_{|m-\alpha|}(\sqrt{\lambda}\, r)   =  \frac{(|m-\alpha|\, a_m +a'_m)  g_m(r)  }{\pi |m-\alpha|} 
- \frac{\lambda\, r^{-|m-\alpha|} \big(4 B_m^{(1)} - r^2 \big)}{8\pi ( (m-\alpha)^2-|m-\alpha|)}  \\
&  \qquad\qquad \qquad\qquad\qquad+ \frac{\lambda\, r^{|m-\alpha|} \big( 4(|m-\alpha|+1) A_m^{(1)} - r^2\big )  }{\pi |m-\alpha| ( (m-\alpha)^2-1)}  
+ \mathcal{O}(\lambda^2(1+r^3))  , 
\end{align*} 
and 
\begin{align*}
\frac{J_{|m-\alpha|}(\sqrt{\lambda}\, t) +iY_{|m-\alpha|}(\sqrt{\lambda}\, t)}{B_m(\lambda)-i A_m(\lambda)} &= -\frac{Y_{|m-\alpha|}(\sqrt{\lambda}\, t)}{ A_m(\lambda)} +  \mathcal{O}(\lambda^2(1+t^3) ) \nonumber \\
&= \frac{t^{-|m-\alpha|}}{A_m^{(0)} } \Big(|m-\alpha|-1 +\frac{t^2\, \lambda}{4}\Big) \Big(1-\frac{A_m^{(1)}\, \lambda}{A_m^{(0)}}\Big) +\mathcal{O}(\lambda^2(1+t^3) )\, .
\end{align*} 
Inserting these estimates in \eqref{r-kernel} gives
\begin{align} \label{r-m-1}
R_{m,0}(\lambda;r,t) & =G_{m,0}(r,t) + \G_{m,3}(r,t)\, \lambda\,  +\mathcal{R}^{(1)}_{m,0}(\lambda;r,t),
\end{align}
with the coefficient of the linear given by 
\begin{align}  \label{hm-1}
\G_{m,3}(r,t) & =  \frac{A_m^{(0)}\, t^2 - 4(|m-\alpha|-1) A_m^{(1)}}{4\pi \, t^{|m-\alpha|}\, ( (m-\alpha)^2-|m-\alpha|)} \, \frac{g_m(r)}{A_m^{(0)}} \\
& \qquad +  \frac{t^{-|m-\alpha|}}{8\pi |m-\alpha| A_m^{(0)} }  \left[2r^{|m-\alpha|} \Big(A_m^{(1)} -\frac{r^2}{4(|m-\alpha|+1)}\Big)- r^{-|m-\alpha|} \big(B_m^{(1)} +\frac{r^2}{4} \big)\right] \nonumber
\end{align}
and with the remainder term which  satisfies
\begin{equation}  \label{R1-estim-4}
\sup_{m\in\Z} \| \x^{-s}\,\mathcal{R}^{(1)}_{m,0}\, \x^{-s}\|_{\hs((1,\infty) r dr)}  =  \mathcal{O}(\lambda^2) \qquad s> 3.
\end{equation}

\subsubsection*{\bf Case $r <1<t $}  Let \underline{$m=\alpha$} :  We set
\begin{equation} \label{j-frak-alpha} 
\mathfrak{j}_\alpha(r) =
  \frac{1}{4 a'_\alpha}
\left\{
\begin{array}{l@{\qquad}l}
\ 4\, \partial_\lambda v_{\alpha}(0,r) & r \leq 1  , \\[3pt]
\ \hk_\alpha(r) & 1  < r . 
 \end{array}
\right .
\end{equation} 
From \eqref{jy0-wron} and \eqref{r-kernel-2} we then deduce  that
\begin{align} \label{m-alpha-2}
R_{\alpha,0}(\lambda;r,t) & =\G_{\alpha,0}(r,t)+   \frac{\g_\alpha(r)\, \g_\alpha(t)}{\pi(\log\lambda -z_\alpha)} \Big[1+\frac{E_4\, \lambda}{ \log\lambda -z_\alpha}\Big]
+\lambda\, \G_{\alpha,3}(r,t) +  \frac{\lambda \big[ \g_\alpha(r) \, \mathfrak{j}_\alpha (t) + \g_\alpha(t)\,  \mathfrak{j}_\alpha (r) \big] }{\pi  (\log\lambda -z_\alpha)} + \mathcal{O}(\lambda^2\, t^2) \nonumber \\
\end{align}
where we have abbreviated
\begin{align} \label{G-alpha-3-2}
\G_{\alpha,3}(r,t) & =  \frac{\mathfrak{j}_\alpha(r)}{2\pi} + \frac{\g_\alpha(r)}{8 \pi a'_\alpha} \, (4 E_1 -a'_\alpha t^2) \, . 
\end{align}

\medskip

\noindent  \underline{$m=\alpha\pm 1$} : We  proceed similarly as above.From \eqref{r-kernel-2} and \eqref{frac-alpha-pm} we get
\begin{align} \label{m-pm-2}
 R_{\alpha\pm 1,0}(\lambda;r,t)  &=G_{\alpha\pm 1,0}(r,t)  -\frac{g_\pm(r)\, g_\pm(t) }{64\pi}\ \lambda (\log\lambda-i\pi) \big[4+\delta_\pm \, \lambda( \log\lambda -i\pi) \big ] +\G_{\alpha\pm 1, 3}(r,t)\, \lambda\\
&\quad  -\frac{ \lambda^2(\log\lambda-i\pi)}{64 \pi} \, [g_\pm(r)\, \mathfrak{j}_\pm(t) +g_\pm(t)\, \mathfrak{j}_\pm(r) ]
+\mathcal{O}(\lambda^2(1+t^3))  \nonumber ,
\end{align}
with the kernel of $\G_{\alpha\pm 1, 3}$ defined by 
\begin{equation} \label{g-pm1-3-3}
\G_{\alpha\pm 1, 3}(r,t)=  \frac{1}{4\pi} \big[g_\pm(r) f_\pm(t) +2\, \mathfrak{j}_\pm(r) \, t^{-1}  \big] 
\end{equation}

\noindent $\underline{m= \alpha\pm 2}$: From  \eqref{r-kernel-2}, \eqref{jy2-wrons}, and \eqref{v-integ-estim} we obtain
\begin{align} \label{m-pm-2-b}
 R_{\alpha\pm 2,0}(\lambda;r,t)  & =G_{\alpha\pm 2,0}(r,t) +\G_{\alpha\pm 2, 3}(r,t)\, \lambda  - g_{\alpha\pm 2}(r) g_{\alpha\pm 2}(t) \frac{\lambda^2 (\log\lambda-i\pi)}{256 \pi}
 +\mathcal{O}(\lambda^2(1+t^3)) . 
\end{align}
with the kernel of $\G_{\alpha\pm 2, 3}$ defined by 
\begin{equation} \label{g-pm2-3-3}
\G_{\alpha\pm 2, 3}(r,t)= \frac{\lambda\, t^{-2}}{2\pi  } \Big[ \frac{g_{\alpha\pm 2}(r)}{16} \, (t^2-4) -\gamma_{\alpha\pm 2}\, \partial_\lambda v_{\alpha\pm 2}(0,r) \Big] .
\end{equation}

 \underline{$|m-\alpha| \geq 3$}: Let 
\begin{equation} \label{g-tilde}
\tilde g_m(\lambda,r) := -\frac{2 \sin(\pi |m-\alpha|) v_m(\lambda, r)}{ \mathscr{P}_{m,1}(\lambda)}\, \qquad r<1,
\end{equation}
so that 
\begin{equation*}
\tilde g_m(0,r) = \frac{ \pi\, g_m(r)}{ \Gamma(1+|m-\alpha|)} \qquad \text{and}\qquad  \pd_\lambda \tilde g_m(0,r)  = j_m(r), \qquad r<1. 
\end{equation*}
With the help of \eqref{case2-1} we then deduce from \eqref{r-kernel-2}  that 
\begin{align*} 
R_{m,0}(\lambda;r,t) & = \frac{\tilde g_m(\lambda, r)\, Q_{-|m-\alpha|}(\lambda,t)}{4\pi \sin(|m-\alpha|\pi)}  + \mathcal{R}^{(2)}_m(\lambda;r,t) ,
\end{align*}
where in view of \eqref{QQ-bound} and \eqref{v-integ-estim}  
$$
\sup_{m\in\Z} \|\, \x^{-s}\,  \mathcal{R}^{(2)}_m(\lambda) \, \x^{-s}\|_{\hs((0,1)\times (1,\infty), r dr)}  =  \mathcal{O}(\lambda^2)\,  , \qquad \forall\, s>3.
$$
Hence by \eqref{g-tilde} and \eqref{Q-nu}
\begin{align} \label{r-m-2}
R_{m,0}(\lambda;r,t) & =\G_{m,0}(r,t)
 + \lambda\, \G_{m,3}(r,t)+ \mathcal{O}_{\Lp^2((0,1)\times(1,\infty))}(\lambda^2)
\end{align}
where the remainder is uniform in $m$, and where 
\begin{align} \label{hm-2}
\G_{m,3}(r,t) & =  \frac{\gamma_m}{2\pi\, t^{|m-\alpha|}}\,\left[\partial_\lambda v_m(0,r) +  \frac{v_m(0,r)\, t^2}{4(|m-\alpha|-1)} \right]\, .
\end{align}

\subsubsection*{\bf Case $r <t\leq 1 $}  

 \underline{$m=\alpha$} : By  \eqref{ab-wronski}, \eqref{cm-eq} and  \eqref{wronski-alpha} ,
\begin{align*} 
\frac{C_\alpha(\lambda)}{B_\alpha(\lambda)-i A_\alpha(\lambda)} &= \frac{b'_\alpha}{a_\alpha'} -\frac{2\Gamma(1+2|\alpha|)}{\sqrt{\pi}\, (a'_\alpha)^2 \, (\log \lambda-z_\alpha)} 
+E_5\, \lambda\, +\frac{ \lambda}{\log \lambda-z_\alpha} -\frac{\Gamma(1+2|\alpha|)\, E_4\, \lambda}{\sqrt{\pi}\, (a'_\alpha)^3 \, (\log \lambda-z_\alpha)^2}  +  \mathcal{O}(\lambda^2) \, , 
\end{align*}
with 
$$
E_5 = \frac{1}{4 a_\alpha'} \Big(2b_\alpha-b'_\alpha+4 E_1 \frac{b'_\alpha}{a'_\alpha}\Big).
$$ 
By inserting this result into \eqref{r-kernel-3} and using Lemma \ref{lem-kl} we obtain, with an obvious abuse of notation, 
\begin{align*}
R_{\alpha,0}(\lambda;r,t) & =  \G_{\alpha,0}(r,t)+ \g_\alpha(r) \g_\alpha(t) \frac{ \log\lambda-z_\alpha+ E_4\, \lambda}{\pi (\log\lambda-z_\alpha)^2}
 + \lambda\, \G_{\alpha,3}(r,t) +  \frac{\lambda \big[ \g_\alpha(r)\, \mathfrak{j}_\alpha (t) + \g_\alpha(t)\, \mathfrak{j}_\alpha (r) \big] }{2\pi  (\log\lambda -z_\alpha)}+\mathcal{O}_{\Lp^2(0,1)}(\lambda^2)
\end{align*}
where the integral kernel of $\G_{\alpha,3}$ satisfies 
\begin{align} \label{g-alpha-3-3}
\G_{\alpha,3}(r,t) & =  \partial_\lambda \ell_\alpha(0;r,t) -\frac{b'_\alpha}{ a'_\alpha} \  \partial_\lambda k_\alpha(0;r,t) - E_5 (a'_\alpha)^2\, \g_\alpha(r)\, \g_\alpha(t)\, .
\end{align}
Recall that the functions $k_m$ and $\ell_m$ are defined in \eqref{k-kappa}.

  \underline{$m=\alpha\pm 1$} :  Below, as usual, we abbreviate $\beta_{\alpha\pm 1} =\beta_\pm$.
Recall that the coefficients $\beta_m$ are defined by \eqref{delta-m}.
With the help of  \eqref{cm-eq}, \eqref{jy1},  and \eqref{ab-lambda-2} we find 
\begin{align*}
\frac{C_{\alpha\pm 1}(\lambda)}{B_{\alpha\pm 1}(\lambda)-i A_{\alpha\pm 1}(\lambda)}  & = \beta_{\pm} -  \gamma_\pm^2\, (a_\pm b'_\pm-a'_\pm b_\pm)\, \frac{\lambda(\log\lambda-i\pi)}{2} 
-\frac{\gamma_\pm^2\, \lambda}{2}\,  \big[ \tilde\gamma  (a_\pm b'_\pm-a'_\pm b_\pm) +(s_\pm+s'_\pm)(b_\pm +b'_\pm)\big]
\\
&  \quad +E_6\, \lambda^2(\log\lambda-i\pi) +\gamma_\pm^2\, \delta_\pm (a_\pm b'_\pm-a'_\pm b_\pm)\, \frac{(\lambda(\log\lambda-i\pi))^2}{8}  + \mathcal{O}(\lambda^2)\, ,
\end{align*} 
where as usual we have abbreviated $\beta_{\alpha\pm 1} =\beta_\pm$, and where $E_6= E_5+(s_\pm+s'_\pm)(b_\pm +b'_\pm)$. 
Inserting this  into \eqref{r-kernel-3} and using equation \eqref{ab-wronski} together with Lemma \ref{lem-kl} yields 
\begin{equation} \label{m-pm-3}
\begin{aligned}
 R_{\alpha\pm 1,0}(\lambda;r,t)  &=G_{\alpha\pm 1,0}(r,t)  -\frac{g_\pm(r)\, g_\pm(t) }{64\pi}\ \lambda (\log\lambda-i\pi) \big[4 +\delta_\pm\, \lambda (\log\lambda-i\pi) \big ] + \G_{\alpha\pm 1,3}(r,t) \, \lambda \\
&\quad -\frac{ \lambda^2(\log\lambda-i\pi)}{64 \pi} \, \big[g_\pm(r)\, \mathfrak{j}_\pm(t) +g_\pm(t)\, \mathfrak{j}_\pm(r) \big]
+ \mathcal{O}_{\Lp^2(0,1)}(\lambda^2).
\end{aligned}
\end{equation}
with the integral kernel  of $\G_{\alpha\pm 1,3}$ given by
\begin{align} \label{g-pm1-3-4}
\G_{\alpha\pm 1,3}(r,t) & =  \partial_\lambda \ell_{\alpha\pm 1}(0;r,t)  -\beta_{\pm } \,  \partial_\lambda k_{\alpha\pm 1}(0;r,t)\, +\frac{g_\pm(r) g_\pm (t)}{16\pi} \big[ \tilde\gamma +(s_\pm+s'_\pm)(b_\pm+b'_\pm) \big].
\end{align}

 \underline{$|m-\alpha| = 2$} : From \eqref{cm-eq}, \eqref{jy1}, and \eqref{ABm-alpha+2} we deduce that 
 \begin{align*} 
\frac{C_{\alpha\pm 2}(\lambda)}{B_{\alpha\pm 2}(\lambda)-i A_{\alpha\pm 2}(\lambda)}  &  
=\beta_{\alpha\pm 2}  + E_7 \, \lambda\ + \frac{\gamma_{\alpha\pm 2}^2 \, \lambda^2( \log\lambda-i\pi)}{8} \, \big( b_{\alpha\pm 2}\, a'_{\alpha\pm 2} -b'_{\alpha\pm 2}\, a _{\alpha\pm 2} \big)  +  \mathcal{O}(\lambda^2) \, ,
\end{align*}
with
$$
E_7 = \frac{\gamma_{\alpha\pm 2}}{4} \big [ (8s_{\alpha\pm 2}+a_{\alpha\pm 2}') +b_{\alpha\pm 2}'+2b_{\alpha\pm 2} +4e'_{\alpha\pm 2} +8e_{\alpha\pm 2} -\gamma_{\alpha\pm 2}\big] \, .
$$
In view of\eqref{r-kernel-3}, \eqref{ab-wronski} and Lemma \ref{lem-kl} we thus conclude with 
\begin{equation}
\begin{aligned} \label{m-pm-4}
R_{\alpha\pm 2,0}(\lambda;r,t) & =  G_{\alpha\pm 2,0}(r,t) + \lambda\, \G_{\alpha\pm 2,3}(r,t)  - g_{\alpha\pm 2}(r) \, g_{\alpha\pm 2}(t)\,  \frac{\lambda^2 (\log\lambda-i\pi)}{256 \pi} \, +\,  \mathcal{O}_{\Lp^2(0,1)}(\lambda^2)
\end{aligned}
\end{equation}
where the integral kernel of $\G_{\alpha\pm 2,3}$  reads 
\begin{align} \label{g-pm2-3-4}
\G_{\alpha\pm 2,3}(r,t)  & =  \partial_\lambda \ell_{\alpha\pm 2} (0;r,t) - \beta_{\alpha\pm 2} \,  \partial_\lambda k_{\alpha\pm 2}(0;r,t) \, -\frac{ E_7\, \pi}{16 }\ g_{\alpha\pm 2}(r)\, g_{\alpha\pm 2}(t)\, .
\end{align}

\underline{$|m-\alpha| \geq 3$} : Equations \eqref{cm-eq} and \eqref{yn-eq}  imply 
$$
C_m(\lambda) = -\frac{i\, 2^{|m-\alpha|} (|m-\alpha|-1)!}{\pi \lambda^{|m-\alpha|/2}}\,  \Big [\, b_m |m-\alpha|  +b'_m + \lambda\,  \mathcal C_m^{(1)} + \mathcal{O}\big(\lambda^2) \Big], 
$$
where $ \mathcal C_m^{(1)}= |m-\alpha|\, e_m +e'_m +\frac 14(b_m +b'_m/|m-\alpha| )$, and where the remainder term $\mathcal{O}\big(\lambda^2)$ is uniform in $m$. Hence from \eqref{abm-expand} and  \eqref{r-kernel-3} we get 
\begin{align*}
\frac{C_m(\lambda)}{B_m(\lambda)-i A_m(\lambda)}  & = \beta_m + \gamma_m \Big( \mathcal C_m^{(1)} -\frac{\gamma_m \, A_m^{(1)}}{|m-\alpha|-1} \Big)\, \lambda +  \mathcal{O}\big(\lambda^2).
\end{align*} 
This in combination Lemma \ref{lem-kl} gives 
\begin{equation}
 \label{m-pm-5}
R_{m,0}(\lambda;r,t)  =  G_{m,0}(r,t) + \lambda\, \G_{m,3}(r,t)  +\mathcal{O}_{\Lp^2(0,1)}(\lambda^2)
\end{equation}
with the remainder term uniform in $m$ and with
\begin{equation} \label{hm-3} 
\G_{m,3}(r,t) = \Big( \mathcal C_m^{(1)} -\frac{\gamma_m \, A_m^{(1)}}{|m-\alpha|-1} \Big)\, \gamma_m\, k_m (0;r,t)+ \partial_\lambda \ell_{m} (0;r,t) - \beta_{m} \,  \partial_\lambda k_m(0;r,t)  .
\end{equation}

\subsection*{\bf Operator $\G_3$} \label{ssec-g3-int} 
The integral kernel of $\G_3$ splits as follows 
\begin{equation} \label{g3-kernel-int}
\G_3(x,y) = \sum_{m\in\Z} \G_{m,3}(r,t )\, e^{ im (\theta-\theta')}\, ,
\end{equation} 
where $ \G_{m,3}(r,t )$ is, for the corresponding values of $m$ and $r,t$,  by equations \eqref{G-alpha-3-1}, \eqref{G-pm1-3-1}, \eqref{G-pm1-3-2}, \eqref{hm-1}, \eqref{G-alpha-3-2}, \eqref{g-pm1-3-3}, \eqref{g-pm2-3-3}, \eqref{hm-2}, \eqref{g-alpha-3-3}, \eqref{g-pm1-3-4}, \eqref{g-pm2-3-4} and \eqref{hm-3}.
From the latter equations and from Lemma \ref{lem-kl} and equation 	\eqref{v-integ-estim} we then deduce that
\begin{equation} \label{g3-int-bound}
\sup_{m\in\Z} \|\, \x^{-s}\,  \G_{m,3} \, \x^{-s}\|_{\hs(\R^2)}  \ <  \ \infty\,  , \qquad \forall\, s>3. 
\end{equation}
.


\begin{proof}[\bf Proof of Proposition \ref{prop-exp-int}] 
Let $s>3$. As in the case of integer flux, it suffices to prove expansion \eqref{B0-eq-2-int} in $\B(0,s;0,-s)$.
The bound \eqref{g3-int-bound} and equations  \eqref{G1-int}-\eqref{g5-int} show that all the operators on the right hand side of \eqref{B0-eq-2-int} belong to $\B(0,s;0,-s)$. 
 From the results collected in this section, in particular from equations  \eqref{R1-estim-1}, \eqref{R1-estim-2},  \eqref{R1-estim-3},  \eqref{R1-estim-4}, \eqref{m-alpha-2}, \eqref{m-pm-2}, \eqref{m-pm-2-b}, \eqref{r-m-2},  \eqref{m-pm-3}, \eqref{m-pm-4} and \eqref{m-pm-5}, we then deduce that the remainder term in \eqref{B0-eq-2-int}  is of order $\mathcal{O}(\lambda^2)$. 
 \end{proof}

\section*{\bf Acknowledgements}
\noindent 
I'm indebted to thank Rupert Frank for useful comments on a preliminary version of the paper. 

\end{document}